\documentclass[11pt,letterpaper]{article}
\usepackage[utf8]{inputenc}
\usepackage{fullpage}
\usepackage{amsthm}
\usepackage{xcolor}
\usepackage{mathtools,amsmath,amssymb}
\usepackage{mathrsfs}
\usepackage{algorithm, algpseudocode}
\usepackage{caption}
\usepackage{subcaption}
\usepackage{authblk}
\usepackage[colorlinks=true, allcolors=blue]{hyperref}
\usepackage[capitalise,nameinlink]{cleveref}
\usepackage[style=alphabetic, backend=biber, minalphanames=3, maxalphanames=5, maxbibnames=99, sorting=nyt]{biblatex}

\usepackage{pifont,mdframed}
\usepackage{tikz}
\usetikzlibrary{quantikz2}%need an updated TeX Live version for that
\usetikzlibrary{calc,positioning,arrows.meta}%,backgrounds,fit,decorations.pathreplacing,quantikz}

\usepackage{multirow}

\linespread{1.05}

\usepackage{bm}
\usepackage{enumitem}
\setlist[itemize]{topsep=3pt}
\setlist[enumerate]{topsep=3pt}
\usepackage{mathtools}
\usepackage{complexity}
\newlang{\ApxSim}{ApxSim}
\newlang{\LHPlusE}{LHPlusE}
\newlang{\LH}{LH}
\newlang{\kSSH}{kSSH}
\usepackage{dsfont}
\usepackage{float}
\usepackage{thm-restate}
\usepackage{bbm}
\usepackage{amsthm}
\usepackage{thmtools,thm-restate}

%% cleveref should be used after amsthm and before theorems are defined.

\usepackage[colorlinks=true, allcolors=blue]{hyperref}
\usepackage[nameinlink,capitalise]{cleveref}
\hypersetup{
    citecolor={violet}
}

\theoremstyle{plain}
\newtheorem{theorem}{Theorem}[section]
\numberwithin{equation}{section}

\newtheorem{corollary}[theorem]{Corollary}

\newtheorem{lemma}[theorem]{Lemma}
\newtheorem{claim}[theorem]{Claim}
\newtheorem{fact}[theorem]{Fact}

\newtheorem{question}[theorem]{Question}

\newtheorem{definition}[theorem]{Definition}
\newtheorem{problem}[theorem]{Decision Problem}

\newtheorem{remark}[theorem]{Remark}
\theoremstyle{plain}

\newcommand{\wval}{\textnormal{WVAL}}

\newcommand{\wmem}{\textnormal{WMEM}}

\newcommand{\cyes}{\textnormal{C}_{\textnormal{YES}}}
\newcommand{\pyes}{\textnormal{P}_{\textnormal{YES}}}
\newcommand{\pno}{\textnormal{P}_{\textnormal{NO}}}

\newcommand{\nul}{\nu_{\text{low}}}
\newcommand{\nuh}{\nu_{\text{high}}}

\DeclarePairedDelimiterXPP{\bigo}[1]{O}{(}{)}{}{#1}
\DeclarePairedDelimiterXPP{\bigomega}[1]{\Omega}{(}{)}{}{#1}

\newcommand{\defeq}{\stackrel{\mathrm{\scriptscriptstyle def}}{=}}

\renewcommand{\poly}{\textnormal{poly}}
\newcommand{\NN}{\mathbb{N}}
\newcommand{\CC}{\mathbb{C}}

\newcommand{\sep}{\mathsf{SEP}}

\newcommand{\density}{\mathsf{Density}}
\newcommand{\semicheck}{\mathsf{MatchCheck}}
\newcommand{\cnot}{\text{CNOT}}

%%%%%%%%%%%%%%%%%%%%%%%%%%%%%%%%%%%%%%%%%%%%%%%%%%%%%%%%%%%%%%%%%%%%%%%%%%%%%%%%%%%%
%%%%%%%%%%%%%%%%%%%%%%%%%%%%%%%%%%%%%%%%%%%%%%%%%%%%%%%%%%%%%%%%%%%%%%%%%%%%%%%%%%%%

% \usepackage{amsthm,amsfonts,framed}
\usepackage[margin=1.in]{geometry}
\usepackage{physics, graphicx}
\linespread{1.05}
\setlength{\parindent}{0cm}
\usepackage[compact]{titlesec}
\titlespacing{\subsection}{0pt}{1.5ex}{0ex}
\titlespacing{\subsubsection}{0pt}{1.5ex}{0ex}
\titlespacing{\subsubsection}{0pt}{1ex}{0ex}
\titlespacing{\paragraph}{0pt}{1.5ex}{1ex}
\setlength{\parskip}{1.5ex plus 0.5ex minus 0.2ex}

\usepackage{tikz}
\usetikzlibrary{quantikz2}
\usetikzlibrary{calc,positioning,arrows.meta}%,backgrounds,fit,decorations.pathreplacing,quantikz}
\renewcommand{\cw}{\mathcal{W}}
\newcommand{\cd}{\mathcal{D}}

\newcommand{\ccc}{{\mathcal{W}_{\ket{\textnormal{comp}}}}}
\newcommand{\ccs}{{\mathcal{W}_{\ket{\textnormal{dup}}}}}
\newcommand{\cpp}{{\mathcal{W}_{\ket{\mathsf{IS}}}}}

\newcommand{\cm}{{\mathcal{W}_{\mathsf{IS}}}}
\newcommand{\cg}{\mathcal{W}_{\ket{\textnormal{multisep}}}}
\newcommand{\cmg}{\mathcal{W}_{{\textnormal{multisep}}}}

\newcommand{\wproper}{\mathcal{W}_{\textnormal{proper}}}
\newcommand{\QMAP}{\QMA_{\mathsf{IS}}}
\newcommand{\QMAPP}{\QMA_{\ket{\mathsf{IS}}}}

\newcommand{\QMACP}{\QMA_{\ket{\textnormal{comp}}}}
\newcommand{\QMAMS}{\QMA_{\ket{\textnormal{multisep}}}}
\newcommand{\QMAMMS}{\QMA_{{\textnormal{multisep}}}}

\newcommand{\purerestrict}{\textnormal{rank1}}

\usepackage{diagbox}

%%%%%%%%%%%%%%%%%%%%%%%%%%%%%%%%%%%%%%%%%%%%%%%%%%%%%%%%%%%%%%%%%%%%%%%%%%%%%%%%%%%%
%%%%%%%%%%%%%%%%%%%%%%%%%%%%%%%%%%%%%%%%%%%%%%%%%%%%%%%%%%%%%%%%%%%%%%%%%%%%%%%%%%%%
\addbibresource{main.bib}
\title{Quantum Merlin-Arthur with an internally separable proof}
 \author[1]{Roozbeh Bassirian\thanks{\href{mailto:roozbeh@uchicago.edu}{roozbeh@uchicago.edu}}}
 \author[1]{Bill Fefferman\thanks{\href{mailto:wjf@uchicago.edu}{wjf@uchicago.edu}}}
 \author[2]{Itai Leigh\thanks{\href{mailto:itai.leigh@mail.huji.ac.il}{itai.leigh@mail.huji.ac.il}}}
 \author[1]{Kunal Marwaha\thanks{\href{mailto:kmarw@uchicago.edu}{kmarw@uchicago.edu}}}
 \author[3]{Pei Wu\thanks{\href{mailto:pei.wu@psu.edu}{pei.wu@psu.edu}}}
 \affil[1]{University of Chicago}
 \affil[2]{Tel Aviv University}
 \affil[3]{Penn State University}
\date{}
\begin{document}
\maketitle
\vspace{-3em}
\begin{abstract}
We find a modification to $\QMA$ where having one quantum proof is strictly less powerful than having two unentangled proofs, assuming $\EXP \ne \NEXP$.
This gives a new route to prove $\QMA(2) = \NEXP$ that overcomes the primary drawback of a recent approach~\cite{jeronimo2023power,bassirian2023qmaplus} (QIP 2024).
Our modification endows
each proof with a form of \emph{multipartite} unentanglement: after tracing out one register, a small number of qubits are separable from the rest of the state.
\end{abstract}

\section{Introduction}
Entanglement is a fundamental property of quantum mechanics. One approach to understand this phenomenon is through the lens of computational complexity.
For example, characterizing the power of $\QMA(2)$ is now one of the central questions in quantum complexity theory, and has been unresolved for two decades (e.g.~\cite{qma2_defn,power_of_unentanglement,harrow2013testing}).
We make progress on this question by finding a nearby model of computation where one quantum proof is strictly less powerful than two \emph{unentangled} quantum proofs, unless $\EXP = \NEXP$.

$\QMA$ --- a quantum variant of $\NP$ (see \cite{DBLP:journals/sigact/Gharibian23} for a complete review) --- is the class of decision problems that can be decided with access to a quantum proof.
Here, a quantum computer (``Arthur'') communicates with a dishonest but all-powerful machine (``Merlin'').
Merlin sends a quantum state (``proof'') that is most likely to convince Arthur of a statement's veracity.
If the statement is true, there is a proof that convinces Arthur to accept with high probability; if it is false, all proofs make Arthur reject with high probability.

In $\QMA(2)$, 
the quantum proof is guaranteed to be 
\emph{not} entangled across a fixed bipartition ---
equivalently, the  two parts are given by two \emph{unentangled} Merlins.
It is natural to wonder what additional power comes from this entanglement structure.
Although there is evidence that $\QMA(2)$ is more powerful than $\QMA$~\cite{blier2010quantum,chen2010short,per12,qma2_yirka}, we still
only know the trivial bounds $\QMA \subseteq \QMA(2) \subseteq \NEXP$.

\begin{quote}
    {\hfil \textit{What is the power of the complexity class $\QMA(2)$?}}
    \vspace{0.85em}
    % \\
\end{quote}

In a recent attempt to answer this question, \cite{jeronimo2023power} proposed a modification to $\QMA$ they named $\QMA^+$, where the proof is required to have non-negative amplitudes. They used this model to exacerbate the difference between $\QMA$ and $\QMA(2)$, showing that $\QMA^+(2)$ with a constant completeness-soundness gap equals $\NEXP$. This result is particularly interesting because $\QMA^+(k)$ with a large enough gap is equal to $\QMA(k)$; therefore, if $\QMA^+(2)$ admits gap amplification, then $\QMA(2)=\NEXP$.
However, more recently it was discovered~\cite{bassirian2023qmaplus}
that $\QMA^+$ at some gap equals $\NEXP$ as well. 
Thus, if there exists a procedure to gap amplify $\QMA^+(2)$, it must crucially use unentanglement.\footnote{See \cite{jeronimo2024dimension} for the latest progress in this direction.}

In this work, we propose a different strategy to highlight the difference between $\QMA$ and $\QMA(2)$ by imposing additional entanglement structure to the quantum proof. 
We match the main results of the previous attempt while overcoming the primary drawback, giving a new approach to prove $\QMA(2) = \NEXP$.
Furthermore, the complexity classes we introduce 
can be seen as natural generalizations
of $\QMA(2)$, as they emphasize the role of \emph{multipartite} entanglement in computational complexity. We believe they are of independent interest.

\subsection{Results}
Consider the complexity class $\QMAP$, a variant to $\QMA$ where the quantum proof must have a fixed, separable \emph{subsystem}: that is, after tracing out all qubits beyond the subsystem, a predetermined set of $O(1)$ qubits are \emph{not} entangled with the rest of the subsystem.
Unlike in $\QMA(2)$, this guarantee refers to \emph{\underline{i}nternal} \underline{s}eparability: the two parts of the subsystem can be entangled, but only if the entanglement disappears when the subsystem is isolated.
See \Cref{fig:sep_vs_internal_sep} for a visual comparison.
We restrict one part of the subsystem to $O(1)$ qubits to zero in on the effect of \emph{internal} entanglement structure;
without the restriction, this class trivially contains $\QMA(2)$. 
\begin{figure}[t]
    \centering
    \includegraphics[width=0.8\textwidth]{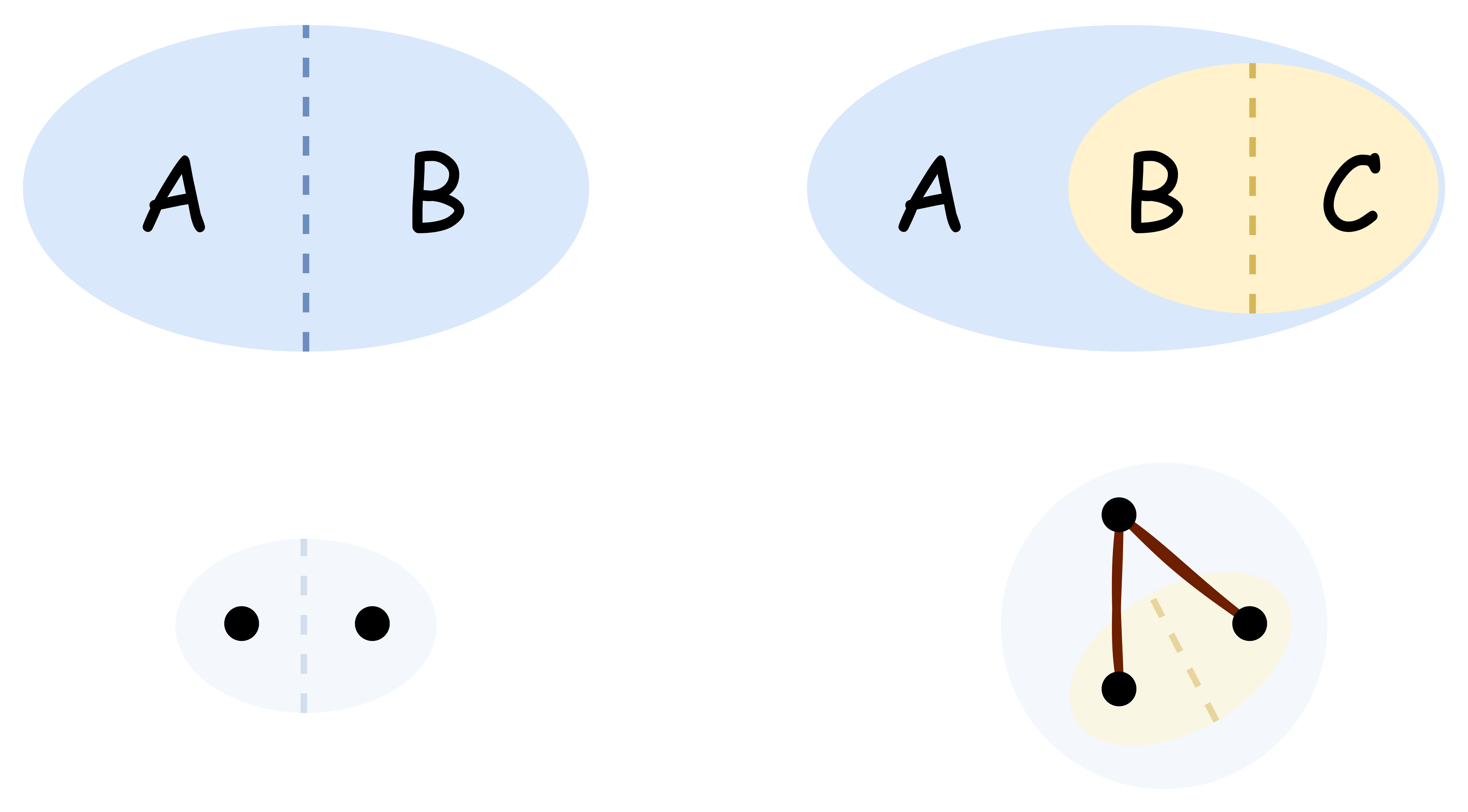}
    \caption{\footnotesize Cartoons of a separable state and a state with a separable subsystem. The left side depicts a separable state, where parts $A$ and $B$ share no entanglement. The right side depicts a state where \emph{after} tracing out part $A$, parts $B$ and $C$ share no entanglement; this state is in general entangled across every bipartition. The lower cartoons each display a graph where every vertex is a part ($A$, $B$, or $C$), and an edge between two parts $X,Y$ allows bipartite entanglement in the reduced state $\rho_{XY}$. On the right side, part A is entangled with both part B and part C. 
    }
    \label{fig:sep_vs_internal_sep}
\end{figure}

We first show that $\QMAP$, even with inverse exponentially small gap, is contained in $\EXP$:
\begin{theorem}[Upper bound]
\label{thm:upperbound}
    For any polynomial $p$ and $1 > c > s > 0$ such that $c - s > \frac{1}{2^{p(n)}}$, we have $\QMAP^{c,s} \subseteq \EXP$, where $c,s$ are the completeness and soundness parameters, respectively.\footnotemark
\end{theorem}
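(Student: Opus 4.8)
The plan is to design a deterministic exponential-time algorithm that approximates the optimal acceptance probability of the $\QMAP$ verifier over all internally separable proofs, and then compares it against the threshold. Fix an instance and let $V$ be the verifier's circuit; folding in the fixed ancilla, the acceptance probability on a proof $\rho$ supported on registers $ABC$ (where $A$ is ``beyond the subsystem'' and $C$ is the $O(1)$ separable qubits) is $\mathrm{Tr}(M\rho)$ for a fixed operator $0\preceq M\preceq I$ whose entries are computable in $\EXP$ by multiplying the poly-many gate matrices of $V$. The key structural observation is that, unlike the separability constraint defining $\QMA(2)$, internal separability is a \emph{convex} constraint: the admissible proofs are exactly $\{\rho\succeq 0:\ \mathrm{Tr}\,\rho=1,\ \mathrm{Tr}_A\rho\in\sep(B\!:\!C)\}$, and since $\sep(B\!:\!C)$ is convex and $\mathrm{Tr}_A$ is linear, this set is convex. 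Thus it suffices to approximate $p^*:=\max_\rho \mathrm{Tr}(M\rho)$ over this convex set to additive error below $(c-s)/2$, and accept iff the estimate exceeds $s+(c-s)/2$.

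To turn this into a solvable program I would exploit that $C$ has only $d=2^{O(1)}$ dimensions. First build an $\epsilon$-net $\mathcal N=\{\tilde\tau^1_C,\dots\}$ over density matrices on $C$ in trace norm; since the set of such matrices has real dimension $d^2-1=O(1)$, taking $\epsilon=2^{-\Theta(p(n))}$ yields $|\mathcal N|=(1/\epsilon)^{O(1)}=2^{O(p(n))}$, still singly exponential. Every $\mathrm{Tr}_A\rho\in\sep(B\!:\!C)$ whose $C$-components lie in $\mathcal N$ can be written $\sum_j \sigma^j_B\otimes\tilde\tau^j_C$ with $\sigma^j_B\succeq 0$ and $\sum_j\mathrm{Tr}\,\sigma^j_B=1$. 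This lets me write a single semidefinite program: maximize $\mathrm{Tr}(M\rho)$ over variables $\rho\succeq 0$ and $\{\sigma^j_B\succeq 0\}_{j}$ subject to $\mathrm{Tr}\,\rho=1$ and the linear constraint $\mathrm{Tr}_A\rho=\sum_j \sigma^j_B\otimes\tilde\tau^j_C$. This SDP has $2^{\mathrm{poly}(n)}$ variables and constraints, so it can be solved to inverse-exponential additive accuracy in $\EXP$ by the ellipsoid or an interior-point method.

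It remains to control the error introduced by the net. In one direction every net-separable marginal is a genuine separable state, so the SDP optimum is at most $p^*$; this prevents false accepts on NO instances. For the other direction I would take an optimal $\rho^*$ with marginal $\mu^*=\sum_i p_i\,\sigma^i_B\otimes\tau^i_C$, replace each $\tau^i_C$ by its nearest net point to obtain a net-separable $\tilde\mu$ with $\|\mu^*-\tilde\mu\|_1\le\epsilon$, and then produce a feasible full state with marginal exactly $\tilde\mu$: purifying $\rho^*$ onto $A$ together with a reference and invoking Uhlmann's theorem gives a state $\tilde\rho$ with $\mathrm{Tr}_A\tilde\rho=\tilde\mu$ and $\|\rho^*-\tilde\rho\|_1=O(\sqrt\epsilon)$, whence $\mathrm{Tr}(M\tilde\rho)\ge p^*-O(\sqrt\epsilon)$. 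Choosing $\epsilon$ inverse-exponentially small makes $O(\sqrt\epsilon)$ smaller than $(c-s)/4$ while keeping $|\mathcal N|$ (and hence the SDP) exponential, so the whole procedure runs in $\EXP$.

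The main obstacle is precisely this quantitative error budget: the completeness--soundness gap may be as small as $2^{-p(n)}$, so both the net resolution and the SDP solver's precision must be pushed to inverse-exponential scale, and I must verify that (i) the net size and SDP dimension remain singly exponential at that resolution, and (ii) the exponential-size SDP is well-conditioned enough to be solved to inverse-exponential accuracy within exponential time. The convexity of internal separability is what makes the problem tractable at all; the genuinely delicate part is bookkeeping the $O(\sqrt\epsilon)$ Uhlmann loss and the SDP numerical precision against the inverse-exponential gap.
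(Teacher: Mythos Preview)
Your approach is correct and takes a genuinely different route from the paper's. The paper works in the generalized Bloch-vector representation and chains together standard convex-optimization reductions: it first casts the decision problem as weak validity ($\wval$) over $\cm$, then reduces to weak membership ($\wmem$) via a black-box optimization-to-membership reduction, and finally decides $\wmem(\cm)$ by writing $\cm=K_1\cap K_2$ (all density matrices, intersected with all matrices whose $BC$-marginal is separable) and testing membership in $K_2$ with the Doherty--Parrilo--Spedalieri hierarchy. Both arguments hinge on the same two structural facts---convexity of the internal-separability constraint and $\dim(C)=O(1)$---but exploit the latter differently: the paper uses it so that the DPS separability test terminates in time polynomial in the ambient dimension at inverse-exponential precision, whereas you use it so that an $\epsilon$-net over $\cd(C)$ has cardinality $(1/\epsilon)^{O(1)}$ and hence stays singly exponential even for $\epsilon=2^{-\poly(n)}$, collapsing everything into one explicit SDP. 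Your route is more elementary and self-contained; the paper's is more modular and reuses existing black boxes. One point worth making explicit when you write it up: in the Uhlmann step that produces $\tilde\rho$ on $ABC$ with $\Tr_A\tilde\rho=\tilde\mu$, the reference register lives only in the analysis and may be taken of dimension at least $\dim(B)\cdot\dim(C)$, so a purification of $\tilde\mu$ over $A$ together with the reference is guaranteed to exist regardless of $\dim(A)$; tracing out the reference then lands you back in the SDP's feasible set.
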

\footnotetext{We formally define $\QMAP$ in~\cref{sec:setup}. In a $\QMAP^{c,s}$ protocol, YES instances are accepted with probability at least $c$, and NO instances are accepted with probability at most $s$.}
We also define $\QMAP(2)$, where the verifier receives \emph{two} unentangled proofs, each with the above guarantee. 
In contrast with \Cref{thm:upperbound}, we find a constant-sized gap where $\QMAP(2)$ equals $\NEXP$.
As a consequence, $\QMAP$ and $\QMAP(2)$ cannot be equal unless $\EXP = \NEXP$.
\begin{theorem}[Lower bound]
\label{thm:lowerbound}
    There exist constants $1 > c > s > 0$ where $\QMAP^{c,s}(2) = \NEXP$.
\end{theorem}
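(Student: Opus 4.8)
The plan is to prove the two inclusions $\QMAP^{c,s}(2) \subseteq \NEXP$ and $\NEXP \subseteq \QMAP^{c,s}(2)$ separately. The first is immediate: internal separability is merely a promise that restricts Merlin's messages, so any $\QMAP^{c,s}(2)$ protocol is in particular a $\QMA(2)$ protocol, and one can invoke the trivial bound $\QMA(2) \subseteq \NEXP$ noted in the introduction. Hence the entire content lies in the hardness direction $\NEXP \subseteq \QMAP^{c,s}(2)$, and my approach is to adapt the $\QMA^+(2) = \NEXP$ construction of \cite{jeronimo2023power,bassirian2023qmaplus}, substituting the internal-separability promise for the non-negative-amplitude promise.

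For the hardness direction I would reduce from a standard $\NEXP$-complete \emph{succinct} gap constraint satisfaction problem, in which a $\poly(n)$-size circuit describes an instance on $2^n$ variables and exponentially many constraints, with a constant gap between the completeness and soundness fractions of satisfiable constraints. The honest proof encodes a satisfying assignment $x$ as an assignment state of the form $\frac{1}{\sqrt{2^n}}\sum_i \ket{i}\ket{x_i}$ on $\poly(n)$ qubits, augmented so that the designated $O(1)$-qubit subsystem sits in a fixed product state after tracing out the remaining register; by construction this places the honest proof inside $\QMAP$. Arthur then runs, with appropriate probabilities, (i) a symmetry/product test across the two unentangled proofs to force them close to equal and to exploit unentanglement; (ii) a consistency test so that the encoded assignment is well defined; (iii) a constraint-sampling test that draws a random constraint and rejects a violation; and (iv) a structure test that reads the designated separable subsystem and uses its product form. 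Completeness is routine: the intended assignment state meets the internal-separability promise and passes every test with probability at least $c$ for a suitable constant.

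Soundness is the heart of the matter and mirrors the role non-negativity plays in $\QMA^+(2)$, where non-negative amplitudes turn the squared-amplitude vector into a genuine probability distribution and thereby prevent destructive phase cancellation from defeating the consistency checks. The key lemma I would aim to establish is that internal separability of the $O(1)$-qubit subsystem, together with the product structure across the two Merlins, forces any accepted pair of proofs to be close (in trace distance) to a bona fide assignment encoding --- that is, that the separable subsystem can be used to certify the ``non-negativity-like'' rigidity the argument needs. Granting such a lemma, a pair of proofs accepted with probability exceeding $s$ would, after the product and consistency tests, yield a well-defined global assignment satisfying a $1-o(1)$ fraction of the sampled constraints, and the constant gap of the underlying CSP then certifies a YES instance.

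The hard part will be exactly this soundness lemma: showing that internal separability is a strong enough structural constraint to stand in for non-negativity. Unlike non-negativity, which is a global, basis-dependent promise on \emph{every} amplitude, internal separability constrains only a constant number of qubits after a partial trace, so the analysis must show how a test touching $O(1)$ qubits can propagate rigidity to the entire exponentially large index register. I expect this to require a careful test design in which the separable subsystem acts as a control that decouples the problematic phase degrees of freedom, combined with a robustness analysis converting the \emph{approximate} product structure implied by near-perfect acceptance into an approximate assignment state. Controlling the error accumulation across the $\poly(n)$ tests while keeping the final completeness--soundness gap a positive constant is the most delicate part of the argument.
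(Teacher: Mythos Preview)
Your proposal has the right high-level shape but misses the two concrete ideas that make the argument work, and without them the soundness lemma you flag as ``the hard part'' cannot be proved as stated.

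First, your honest proof is structured incorrectly. You propose that the designated $O(1)$-qubit subsystem ``sits in a fixed product state after tracing out the remaining register.'' But then that subsystem carries no information about the assignment, and internal separability imposes no constraint on a cheating Merlin beyond what he could achieve by tensoring on a fixed ancilla. The paper instead \emph{duplicates} data into the separable subsystem: the honest proof lives on registers $A\otimes B\otimes C = \mathbb{C}^R\otimes(\mathbb{C}^\kappa\otimes\mathbb{C}^R)\otimes\mathbb{C}^\kappa$ and has the form $\sum_v \alpha_v\,\ket{v}_A\ket{c_v,v}_B\ket{c_v}_C$. A simple $\semicheck$ (measure everything, accept iff the two $(v,c)$ pairs match) forces the state near $\sum_{v,c} a_{vc}\ket{v}\ket{c,v}\ket{c}$. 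Now the point: if some vertex $v$ carries two colors $c\neq c'$ with nonzero weight, the reduced state $\rho_{BC}$ has off-diagonal term $\bra{c,v,c}\rho_{BC}\ket{c',v,c'}$, and a short calculation (\Cref{lemma:separable_offdiagonal_cauchy_schwarz}) shows that separability of $\rho_{BC}$ bounds this by diagonal terms $\bra{c,v,c'}\rho_{BC}\ket{c,v,c'}$ which are small because they fail $\semicheck$. This is the precise mechanism by which internal separability replaces non-negativity in enforcing one-color-per-vertex; your ``phase decoupling control'' intuition does not capture it.

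Second, and more seriously, the argument above only works for \emph{pure} proofs, and you do not explain what role the second Merlin plays. The paper's route is: prove $\QMAPP=\NEXP$ (one pure internally separable proof suffices), then show $\QMAPP\subseteq\QMAP(2)$ by using the two unentangled proofs \emph{solely} to run a SWAP test, which forces each proof near a pure state in $\cm$ (this needs a continuity argument, \Cref{def:continuityProp} and \Cref{lemma:swaptest_rank1restrict}). You instead list a ``symmetry/product test'' as one of four parallel checks without saying why purity is needed; but without purity the quasirigidity lemma fails, since a mixture of internally separable pure states can be internally separable yet encode many colors per vertex. Indeed this is exactly why $\QMAP\subseteq\EXP$ while $\QMAPP=\NEXP$: the purity restriction is doing essential work, and the second proof exists only to certify it.
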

\begin{corollary}
\label{cor:exp_not_nexp}
    If $\EXP \ne \NEXP$, then $\QMAP \ne \QMAP(2)$.
    \vspace{-0.7em}
\end{corollary}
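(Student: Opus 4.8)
The plan is to prove the contrapositive: assuming $\QMAP = \QMAP(2)$ at the relevant gap, I would derive $\EXP = \NEXP$. Both ingredients are already in hand --- the gap-universal upper bound of \Cref{thm:upperbound} and the constant-gap lower bound of \Cref{thm:lowerbound} --- so the corollary should follow by chaining them at a single common gap.

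First I would invoke \Cref{thm:lowerbound} to fix constants $1 > c > s > 0$ with $\QMAP^{c,s}(2) = \NEXP$. The key observation is that because $c$ and $s$ are constants, the gap $c - s$ is a positive constant, and therefore exceeds $2^{-p(n)}$ for every polynomial $p$ and all sufficiently large $n$. Consequently \Cref{thm:upperbound} applies at \emph{these very same} constants, yielding $\QMAP^{c,s} \subseteq \EXP$. No gap amplification or gap translation is needed: the upper bound already holds for every (even inverse-exponentially small) gap, so the two theorems can be read off at one and the same pair $(c,s)$.

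Next I would chain the containments. If $\QMAP^{c,s} = \QMAP^{c,s}(2)$, then $\NEXP = \QMAP^{c,s}(2) = \QMAP^{c,s} \subseteq \EXP$, whence $\NEXP \subseteq \EXP$; combined with the trivial inclusion $\EXP \subseteq \NEXP$ this forces $\EXP = \NEXP$. Contrapositively, $\EXP \ne \NEXP$ rules out $\QMAP^{c,s} = \QMAP^{c,s}(2)$, which is precisely $\QMAP \ne \QMAP(2)$ at the witnessing gap.

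I do not anticipate a genuine obstacle, since the statement is an immediate corollary of the two preceding theorems. The only point deserving care is the gap bookkeeping: one must confirm that the constant gap produced by the lower bound is admissible for the (inverse-exponential) regime of the upper bound, so that both theorems can legitimately be applied to the same promise $(c,s)$. Once that is checked, the derivation reduces to composing the two inclusions.
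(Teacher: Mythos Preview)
Your approach is correct and is exactly what the paper intends; the corollary is stated without proof as an immediate consequence of \Cref{thm:upperbound} and \Cref{thm:lowerbound}. One small refinement: since the paper defines $\QMAP$ and $\QMAP(2)$ as unions over \emph{all} constant gaps, the inequality $\QMAP^{c,s} \ne \QMAP^{c,s}(2)$ at one fixed $(c,s)$ is not literally the same as $\QMAP \ne \QMAP(2)$ --- instead observe that \Cref{thm:upperbound} at every constant gap gives $\QMAP \subseteq \EXP$, while \Cref{thm:lowerbound} gives $\NEXP \subseteq \QMAP(2)$, so any language in $\NEXP \setminus \EXP$ separates the two union classes directly.
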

We summarize these results in \Cref{tab:categorization}.
\Cref{thm:lowerbound} also implies that a quantum verifier with a specific guarantee on entanglement structure is \emph{exponentially} more powerful than a classical verifier.

\Cref{cor:exp_not_nexp} provides a new route to characterize the power of $\QMA(2)$.
For any constant $k$, $\QMAP(k)$ at large enough completeness-soundness gap is equal to $\QMA(k)$. 
This suggests that the guarantee of internal separability is not too strong.
\begin{claim}
\label{claim:almost_qmak}
    For any constant $k$, there exist constants $1 > c > s > 0$ so that $\QMAP^{c,s}(k) = \QMA(k)$. 
\end{claim}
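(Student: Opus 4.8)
The plan is to establish the two inclusions $\QMA(k) \subseteq \QMAP^{c,s}(k)$ and $\QMAP^{c,s}(k) \subseteq \QMA(k)$ separately, choosing the constants $c,s$ so that the completeness--soundness gap $c-s$ is a large enough constant to make the second inclusion go through.

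For $\QMA(k) \subseteq \QMAP^{c,s}(k)$ I would use a padding argument. Given a $\QMA(k)$ verifier, I design a $\QMAP(k)$ verifier that expects each proof to carry an extra $O(1)$-qubit register $B_i$ in the fixed state $|0\cdots0\rangle$, places the original proof in the remaining registers $A_iC_i$, and simply runs the original verifier while ignoring the $B_i$. The honest prover's padded state is internally separable, since tracing out $A_i$ leaves $|0\cdots0\rangle\langle0\cdots0|_{B_i}\otimes\rho_{C_i}$, a product (hence separable) state, so completeness is preserved. For soundness, every internally separable proof is in particular unentangled across the $k$-partition, so it is a legal $\QMA(k)$ proof and the original verifier rejects it with the same guarantee. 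Since standard $\QMA(k)$ admits gap amplification by parallel repetition, I can first amplify to whatever constants $c,s$ are required.

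The substantive direction is $\QMAP^{c,s}(k) \subseteq \QMA(k)$. The difficulty is that a $\QMAP$ verifier is only promised soundness against \emph{internally separable} proofs, whereas a $\QMA(k)$ prover may submit arbitrary unentangled proofs that violate internal separability and could be accepted with probability far exceeding $s$. To simulate the $\QMAP$ verifier in $\QMA(k)$, I would interleave the original acceptance test with a separability test that (i) accepts every internally separable proof with probability $1$, and (ii) penalizes, by a constant amount, any proof whose reduced state $\rho_{B_iC_i}$ is far from separable; the verifier runs each sub-test with constant probability and accepts only if both pass. Crucially, because the separable subsystem $B_i$ has only $O(1)$ qubits, I expect the separability test to be implementable efficiently, e.g.\ by requesting a constant number of additional copies of the proofs (which leaves the class unchanged, since $\QMA(k')=\QMA(2)$ for any constant $k'\ge 2$) and applying product-test--style machinery to the constant-dimensional $B_i$ factor. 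A robustness argument would then show that any proof passing the separability test with high probability is close in trace distance to an internally separable proof $\widetilde\sigma$, and that the two are accepted by the $\QMAP$ test with comparable probability; invoking the $\QMAP$ soundness guarantee on $\widetilde\sigma$ bounds the overall acceptance on NO instances.

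The main obstacle is precisely this separability test and its robustness analysis. General separability testing is hard, so the argument must rely essentially on $B_i$ being of constant dimension; I expect to need a quantitative statement of the form ``passing the test with probability $1-\varepsilon$ implies $\rho_{B_iC_i}$ is $O(\poly(\varepsilon))$-close to separable,'' together with a perturbation bound showing that the $\QMAP$ verifier's acceptance probability changes by at most $O(\poly(\varepsilon))$ when $\rho_{B_iC_i}$ is replaced by a nearby separable state. Both the test loss and the perturbation term are constants, which is exactly why the equality requires $c-s$ to be a sufficiently large constant: the delicate part is combining these constants so that YES instances still pass with probability bounded below by a constant while NO instances are rejected with a constant margin.
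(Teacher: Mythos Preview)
Your easy direction is fine and matches the paper's (which simply places the original proof in register $A$ and ignores the $BC$ subsystem).

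For the hard direction there is a genuine gap, and the paper's route is both different and much simpler. You never actually construct the separability test, and it is not clear one exists with the robustness you require: ``product-test--style machinery'' certifies that a \emph{pure} state is product across a known cut, not that a \emph{reduced mixed} state $\Tr_A[\rho]$ is separable. Your fallback of requesting extra unentangled copies also breaks the $k=1$ case, since $\QMA(1)=\QMA$ is not known to equal $\QMA(2)$; you would only obtain $\QMAP^{c,s}\subseteq\QMA(2)$ rather than $\QMAP^{c,s}\subseteq\QMA$.

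The paper avoids any separability test entirely: it runs the $\QMAP(k)$ verifier $V$ \emph{unchanged} inside $\QMA(k)$. The key observation (\Cref{claim:approxlemma}) is that because register $C$ has constant dimension, the Schmidt decomposition of any pure $\ket{\psi}$ across the cut $AB|C$ has at most $\dim(C)$ terms, so the largest Schmidt component $\ket{\chi}=\ket{i^\star}_{AB}\ket{\phi_{i^\star}}_C$ satisfies $|\braket{\chi}{\psi}|^2\ge 1/\dim(C)$; being product across $AB|C$, this $\ket{\chi}$ is trivially internally separable. Writing $\ket{\psi}=\sqrt{\ell}\,\ket{g}+\sqrt{1-\ell}\,\ket{b}$ with $\ket{g}\in\cpp$, $\ell\ge 1/\dim(C)$, and $\braket{g}{b}=0$, Cauchy--Schwarz gives $\bra{\psi}V\ket{\psi}\le \ell s+(1-\ell)+2\sqrt{\ell(1-\ell)\,s}$ on NO instances. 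Since $\ell$ is a fixed positive constant, choosing $s$ small and $c$ close to $1$ yields a constant gap; the same decomposition applied to each of the $k$ proofs handles the general case. No testing, no extra copies, no robustness analysis.
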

By \Cref{claim:almost_qmak} and \Cref{thm:lowerbound}, gap amplification of $\QMAP(2)$ would imply $\QMA(2) = \NEXP$:
\begin{corollary}
\label{cor:gapamp_means_qma2_nexp}
    If $\QMAP(2)$ exhibits gap amplification, then $\QMA(2) = \NEXP$.
\end{corollary}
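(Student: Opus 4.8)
The plan is to chain together the lower bound of \Cref{thm:lowerbound}, the hypothesized gap amplification, and \Cref{claim:almost_qmak} (at $k=2$) to sandwich $\NEXP$ inside $\QMA(2)$. First I would invoke \Cref{thm:lowerbound} to fix constants $1 > c > s > 0$ with $\QMAP^{c,s}(2) = \NEXP$; in particular $\NEXP \subseteq \QMAP^{c,s}(2)$. Next I would make precise what gap amplification buys us: it should mean that any language with a $\QMAP(2)$ protocol at some nontrivial (say constant, or inverse-polynomial) gap also admits a $\QMAP(2)$ protocol whose completeness is arbitrarily close to $1$ and soundness arbitrarily close to $0$. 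Applying this to the protocol from \Cref{thm:lowerbound}, I amplify its gap until it meets the threshold constants $c^\ast, s^\ast$ demanded by \Cref{claim:almost_qmak}, yielding $\QMAP^{c,s}(2) \subseteq \QMAP^{c^\ast, s^\ast}(2)$.

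Then I would apply \Cref{claim:almost_qmak} with $k = 2$, which gives $\QMAP^{c^\ast, s^\ast}(2) = \QMA(2)$ once the gap is large enough. Composing the inclusions yields $\NEXP \subseteq \QMAP^{c,s}(2) \subseteq \QMAP^{c^\ast,s^\ast}(2) = \QMA(2)$. Finally, the trivial upper bound $\QMA(2) \subseteq \NEXP$ noted in the introduction closes the loop, giving $\QMA(2) = \NEXP$. No new quantum information-theoretic work is needed: the corollary is a purely logical consequence of \Cref{thm:lowerbound} and \Cref{claim:almost_qmak} together with the hypothesis.

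The one point that needs care — and the only place the argument could break — is matching the amplified gap to the threshold in \Cref{claim:almost_qmak}. That claim only promises equality with $\QMA(2)$ for \emph{some} specific pair of constants $(c^\ast, s^\ast)$, so the notion of gap amplification I adopt must be strong enough to push the completeness and soundness of the \Cref{thm:lowerbound} protocol past exactly those constants. Since any reasonable amplification procedure drives the gap toward $1-\epsilon$ versus $\epsilon$ for arbitrarily small $\epsilon$, this is not a genuine obstruction; but the corollary is only meaningful relative to a fixed definition of what ``$\QMAP(2)$ exhibits gap amplification'' means, so I would state that definition explicitly (amplification from any constant or inverse-polynomial gap to a gap exceeding any prescribed target) before running the chain.
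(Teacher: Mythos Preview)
Your proposal is correct and matches the paper's own reasoning: the corollary is presented there as an immediate consequence of \Cref{thm:lowerbound} and \Cref{claim:almost_qmak} (at $k=2$), together with the trivial bound $\QMA(2)\subseteq\NEXP$. Your added care about matching the amplified gap to the specific threshold $(c^\ast,s^\ast)$ from \Cref{claim:almost_qmak} is appropriate and does not diverge from the paper's intent.
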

Although the same is true of $\QMA^+(2)$~\cite{jeronimo2023power}, we now know that the class $\QMA^+$ at some completeness-soundness gap is also equal to $\NEXP$, while it is equal to $\QMA$ at a larger gap. 
As a consequence, $\QMA^+$ \emph{cannot} have gap amplification unless $\QMA = \NEXP$~\cite{bassirian2023qmaplus}. 
This limits the techniques available to prove gap amplification of
$\QMA^+(2)$ in order to show $\QMA(2) = \NEXP$. 

By contrast, gap amplification of $\QMAP(2)$ can be plausibly found by studying gap amplification of $\QMAP$, since the latter implies no such complexity collapse.
This raises the exciting possibility of using \Cref{cor:gapamp_means_qma2_nexp} to fully characterize $\QMA(2)$.
\setlength{\tabcolsep}{0.5em} % for the horizontal padding
\renewcommand{\arraystretch}{1.2}% for the vertical padding
\begin{table}[h]
    \centering
    \begin{tabular}{c|c|c|}
\diagbox[trim=rl,font=\footnotesize\itshape,height=2\line]{internal structure}{external structure} 
&  \textsc{general} & \textsc{separable bipartition} \\
    \hline
        \textsc{general} & $\QMA$  & $\QMA(2)$\\
        \hline
    \textsc{separable subsystem} & $\QMAP \subseteq \EXP$ & $\QMAP(2) = \NEXP$ \\
    \hline
    \end{tabular}
        \caption{\footnotesize Categorizing quantum verification protocols by the guaranteed entanglement structures in the proof.}
    \label{tab:categorization}
\end{table}

\subsection{Other implications}
\paragraph{A complexity-theoretic framework to study multipartite entanglement.}
Despite much attention given to $\QMA(2)$, entanglement can be much richer than simply across a bipartition.
For example, the W state $\frac{1}{\sqrt{3}}\left( \ket{001} + \ket{010} + \ket{100} \right)$ and GHZ state $\frac{1}{\sqrt{2}} \left( \ket{000} + \ket{111} \right)$ are entangled in \emph{inequivalent} ways:
 both states are entangled across every bipartition, yet the states are not equivalent under LOCC~\cite{PhysRevA.62.062314}.
The same paper observes that 
tracing out a qubit from the W state gives an entangled state, but the same action on the GHZ state destroys all entanglement.

$\QMAP$ allows us to probe one type of multipartite (un)entanglement from the lens of computational complexity, defined as \emph{partial semi-separability} in \cite{Brassard_2001}.
It is intuitive to study the power of $\QMA$ with other kinds of multipartite entanglement guarantees. As a first step, we show that restricting to quantum proofs that are pure and \emph{multiseparable}\footnote{See \Cref{sec:multisep} for a definition of multiseparability.}~\cite{Thapliyal_1999}
can also decide $\NEXP$:
\begin{corollary}
\label{cor:multisep}
    There exist constants $1 > c > s > 0$ where $\QMA^{c,s}$ restricted to pure, multiseparable quantum proofs can decide $\NEXP$.
\end{corollary}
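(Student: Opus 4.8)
The plan is to deduce the corollary directly from \Cref{thm:lowerbound} by re-reading the honest $\QMAP(2)$ proof as a \emph{single} pure multiseparable state, and then checking that the multiseparability promise is no weaker than the two-Merlin internal-separability promise. Recall that the protocol witnessing $\QMAP^{c,s}(2)=\NEXP$ uses an honest proof of the form $\ket{\psi_1}\otimes\ket{\psi_2}$, where the two halves are unentangled and each $\ket{\psi_i}$ lives on a block that, after tracing out a designated register, leaves an $O(1)$-qubit subsystem unentangled from the remainder. Viewing $\ket{\psi_1}\otimes\ket{\psi_2}$ as one pure state on the union of all registers, I would first verify that it satisfies exactly the separability constraints defining multiseparability in \Cref{sec:multisep}: it is a product across the cut separating the two Merlins, and within each half it is partially semi-separable in the prescribed way. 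This establishes completeness for the single-Merlin protocol that simply runs the $\QMAP(2)$ verifier on the two halves of the proof.

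The substance of the argument is soundness. Here I would invoke the fact that, for a \emph{pure} state, separability across a bipartition is equivalent to factorizing as a tensor product across that cut. Consequently, any admissible pure multiseparable proof automatically splits as $\ket{\phi_1}\otimes\ket{\phi_2}$ across the Merlin cut, and the remaining multiseparability constraints force each $\ket{\phi_i}$ to carry precisely the internal-separability structure of a $\QMAP$ proof. Every admissible proof of the multiseparable protocol is therefore a legitimate $\QMAP(2)$ proof, so the soundness guarantee of \Cref{thm:lowerbound}---which already quantifies over all such proofs---transfers verbatim. Together with completeness, this yields $\NEXP \subseteq \QMAMS^{c,s}$ at the same constants $1>c>s>0$, which is the claim.

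I expect the main obstacle to be definitional bookkeeping: aligning the abstract notion of multiseparability (formalized in \Cref{sec:multisep} following \cite{Thapliyal_1999}) with the concrete register layout of the $\QMAP(2)$ protocol, so that the designated $O(1)$-qubit subsystems and the Merlin cut coincide with the separability constraints of the promise. One must check both directions---that the honest proof meets every constraint (for completeness), and that the constraints are collectively at least as strong as the two-proof internal-separability promise (for soundness). The single genuinely substantive point is confirming that purity lets each ``separable across a cut'' constraint collapse to a clean tensor-product factorization, so that no globally entangled proof satisfying only the reduced-state conditions can slip through and break soundness.
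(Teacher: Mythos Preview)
Your proposal rests on a misreading of what ``multiseparable'' means in the paper. Following \cite{Thapliyal_1999}, a pure tripartite state $\ket{\psi}$ on registers $A,B,C$ is multiseparable when each of the three \emph{reduced} states $\Tr_A[\ketbra{\psi}]$, $\Tr_B[\ketbra{\psi}]$, $\Tr_C[\ketbra{\psi}]$ is separable across the remaining two registers. None of these constraints says the full pure state is separable across any bipartition. In particular, your soundness step --- ``for a pure state, separability across a bipartition is equivalent to factorizing as a tensor product, so any admissible pure multiseparable proof automatically splits as $\ket{\phi_1}\otimes\ket{\phi_2}$ across the Merlin cut'' --- is false: the GHZ state $\tfrac{1}{\sqrt 2}(\ket{000}+\ket{111})$ is multiseparable (tracing out any qubit leaves the classical mixture $\tfrac12(\ketbra{00}+\ketbra{11})$), yet it is entangled across every cut. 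So there is no way to conclude that a multiseparable prover is honestly simulating two unentangled $\QMAP$ provers, and the reduction to \Cref{thm:lowerbound} collapses. The ``definitional bookkeeping'' you flag is not a technicality; no tripartition of the $\QMAP(2)$ registers will simultaneously make the honest proof multiseparable and force every multiseparable cheater to be a product state across the Merlin cut.

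The paper's proof takes an entirely different route: it does not go through $\QMAP(2)$ at all. Instead it adapts the single-prover $\QMAPP$ protocol of \Cref{sec:qmaw_equals_nexp} directly, changing the register layout to five blocks $\mathbb{C}^R\otimes(\mathbb{C}^R\otimes\mathbb{C}^\kappa)\otimes(\mathbb{C}^R\otimes\mathbb{C}^\kappa)$ so that the honest quasirigid state has the form $\sum_v a_v\,\ket{v}\ket{v,\sigma(v)}\ket{v,\sigma(v)}$. Because each computational basis element is determined by any one of the three blocks, tracing out any block leaves a classical (hence separable) mixture, so completeness holds within $\cg$. Soundness then reuses the analysis of \Cref{sec:qmaw_equals_nexp} essentially verbatim, since that analysis only invokes the single condition $\Tr_A[\ketbra{\psi}]\in\sep(B,C)$, which is one of the three multiseparability constraints. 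The point is that the protocol is redesigned so the honest state is multiseparable, rather than trying to force multiseparability to imply a product structure it does not have.
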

These results motivate a general question about the power of multipartite entanglement:
\begin{question}
\label{question:multipartite_unentanglement}
    Which notions of \emph{multipartite unentanglement} are computationally strong? Why?
\end{question}

\paragraph{A unifying perspective: $\QMA(2)$ and purity testing.}
Many works~\cite{lcv06,harrow2013testing,bks17,Yu_2022} use the fact that a protocol over separable states is equivalently a protocol over states with a purity constraint.
This suggests that \emph{purity testing} is at the heart of $\QMA(2)$'s power, even if it seems like a weaker primitive.
In our case, $\QMAP(2)$ is no more powerful than $\QMAP$ with a purity test.
The proof of \Cref{thm:lowerbound} only uses the two unentangled quantum proofs to implement a SWAP test, ensuring that Merlin sends a \emph{pure} quantum state.  
We show that optimizing over \emph{purifications} of separable states is enough to decide $\NEXP$, and even hard to approximate:
\begin{corollary}
    Given any $M\in \mathbb{R}^{n\times n}$, consider the optimization problem
    \begin{align*}
    \max_{\rho \in \cpp}\, \Tr[M \rho]\,,
    \end{align*}
    where $\cpp$ is the set of $n$-dimensional \emph{pure} states that the prover in $\QMAP$ can send. 
    Then there exists a constant $0<\alpha<1$ for which it is
    $\NP$-hard to compute an $\alpha$-approximation of this problem.
\end{corollary}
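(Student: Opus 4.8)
The plan is to realize the matrix $M$ as the acceptance operator of a single-proof, logarithmic-length $\QMAP$ verification of a gap constraint-satisfaction problem, so that $\max_{\rho\in\cpp}\Tr[M\rho]$ is exactly the optimal acceptance probability of that verification over pure internally-separable proofs, and then read the approximation hardness off the completeness-soundness gap. The first step is to extract from the proof of \Cref{thm:lowerbound} its single-copy core. As emphasized in the text, the two unentangled proofs there serve only to run a SWAP test that certifies purity; once purity is granted, the remaining verification acts on a \emph{single} pure internally-separable proof $\rho$ and accepts with probability $\Tr[M\rho]$ for some positive semidefinite $M$ with spectrum in $[0,1]$, which may be taken real by the standard embedding of the verifier circuit into real amplitudes. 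Thus the honest $\QMAP(2)$ value coincides with $\max_{\rho\in\cpp}\Tr[M\rho]$, and the soundness analysis of \Cref{thm:lowerbound}, specialized to pure states, bounds this same quantity on NO instances.

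Next I would instantiate this core on \emph{explicit} rather than succinct instances. The $\NEXP$-hardness of \Cref{thm:lowerbound} arises from succinctly encoded constraint satisfaction together with polynomial-length proofs; replacing the succinct instance by an explicit gap-CSP on $N$ variables—$\NP$-hard by the PCP theorem—and encoding a proposed assignment in the usual index-value superposition uses only $O(\log N)$ qubits. The verifier then acts on a Hilbert space of dimension $n=\poly(N)$, so $M$ is an explicit $n\times n$ real matrix computable in polynomial time from the CSP. Carrying over the (pure-state) completeness and soundness bounds yields constants $1>c>s>0$ such that, if the CSP is satisfiable, then $\max_{\rho\in\cpp}\Tr[M\rho]\ge c$, while if it is $\epsilon$-far from satisfiable, then $\max_{\rho\in\cpp}\Tr[M\rho]\le s$.

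To conclude, fix any $\alpha\in(s/c,1)$. An algorithm that computes an $\alpha$-approximation to $\max_{\rho\in\cpp}\Tr[M\rho]$ returns a value $v$ with $v\ge\alpha c>s$ on satisfiable instances and $v\le s$ on far-from-satisfiable ones, so thresholding $v$ at $s$ decides the $\NP$-hard gap-CSP. Hence computing an $\alpha$-approximation is $\NP$-hard, which is the claim.

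The main obstacle I anticipate is confirming that the constant gap of \Cref{thm:lowerbound} genuinely survives this down-scaling: one must check that the soundness argument, originally written for two polynomial-length proofs in the succinct regime, still yields a constant bound $s<c$ when specialized to a single pure internally-separable proof of logarithmic length on an explicit gap-CSP. Concretely, this amounts to verifying that no step of that argument relies on the polynomial proof length or on the SWAP test beyond its purity-enforcing role; granting this, the reduction and the approximation-hardness conclusion are routine.
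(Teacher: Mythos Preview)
Your proposal is correct and follows essentially the same route as the paper's intended argument. Two small remarks: (i) you need not ``extract'' the single-copy core from \Cref{thm:lowerbound}, because the paper already isolates it as \Cref{thm:qmapp-nexp} ($\QMAPP^{c,s}=\NEXP$ for absolute constants $c>s$); the verifier of \Cref{sec:qmaw_equals_nexp} acts on a single pure internally-separable proof, and scaling the succinct gap-CSP down to an explicit one yields the $\NP$-hard, $O(\log N)$-qubit instance you describe. (ii) Your stated obstacle is not a real one: the soundness analysis in \Cref{sec:qmaw_equals_nexp} (\Cref{lemma:semicheck_means_close_to_quasirigid}, \Cref{claim:rigidity}, \Cref{lemma:quadratic}, and the probability choice in \Cref{sec:probabilities}) produces constants depending only on $\kappa$, not on the dimension $R$ or on the SWAP test, so the gap survives the down-scaling verbatim.
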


By contrast, optimizing over separable states (of $\poly(n)$ dimension) is known to be $\NP$-hard, but with gap decreasing with system size~\cite{gurvits2003classicaldeterministiccomplexityedmonds,power_of_unentanglement,gharibian2009strong,blier2010quantum,per12,GallNN12}.
If the gap in these problems could be made constant, then $\QMA(2) = \NEXP$.\footnote{One must also assume that the problems can be made succinct.}

More generally, with mild conditions on a set of quantum states $\cw$, we show 
how a $\QMA(2)$ protocol with each proof restricted to $\cw$ can implement a $\QMA$ protocol with a proof restricted to \emph{pure states} in $\cw$. 
We use this as a recipe
to generate other computational models where one quantum proof is provably weaker than two \emph{unentangled} quantum proofs, up to standard complexity-theoretic assumptions. 
We give two examples in \Cref{sec:more_unentanglement_power}, from the pure quantum proofs in $\QMA^+$ and
the \emph{proper states} of~\cite{power_of_unentanglement,beigi2009np}.
Each set represents the
pure states of some larger set $\cw$. 
Then, $\QMA_\cw$ can be decided with a semidefinite program, but $\QMA_\cw(2)$ can restrict to the more capable pure states.
This technique can likely generate other examples that demonstrate the power of unentanglement.

\subsection{Techniques}
To show \Cref{thm:upperbound}, we rely on a well-known duality of membership testing of a convex set and optimizing a convex function over this set. We use this duality to reduce any $\QMAP$ problem to membership testing of the set of internally separable quantum states $\cm$.
In order to use tools from convex optimization, we represent quantum states as vectors in a generalized Bloch sphere. We use a non-ellipsoidal reduction that permits an error analysis matching our setup~\cite{liu2007thesis,gharibian2009strong}. 

We then show how to test membership of $\cm$ in $\EXP$.
It is essential that only $O(1)$ qubits are unentangled from the rest of the subsystem.
Our algorithm tests membership of two convex sets whose intersection forms $\cm$; note that this requires testers with inverse exponential precision.
The first set represents all Bloch vectors of quantum states, which can be tested using exponentially many explicit conditions~(e.g. \cite{Kimura_2003}).
The second set represents all vectors whose reduced density matrix on the subsystem is separable.
This can be tested by solving the Doherty-Parrilo-Spedalieri SDP~\cite{dps_02,Doherty_2004,Navascu_s_2009} which decides if a state has a \emph{symmetric extension}. A careful error analysis shows that separability of a bipartite state $\rho$ with local dimensions $M,N$ can be
tested 
to error $\delta$ in time $\tilde{O}((2/\delta)^{9N} \poly(M,N))$~\cite{ioannou2007computationalcomplexityquantumseparability}. When $N$ is a fixed constant, this SDP runs in exponential time.

To prove \Cref{thm:lowerbound}, we consider $\QMAPP$, a variant of $\QMAP$ where the proof must also be a \emph{pure state}. Note that unlike $\QMA$ and $\QMA(2)$, restricting $\QMAP$ to pure states can change the power of the complexity class, as an internally separable ensemble of pure states may not decompose into an \emph{ensemble} of internally separable pure states.\footnote{For example, the same is true in $\QMA^+$: this is the difference between doubly non-negative matrices and completely positive matrices.} 
We then develop a $\QMAPP$ protocol for an $\NEXP$-complete problem, building on 
the proof of $\QMA^+ = \NEXP$ \cite{blier2010quantum,jeronimo2023power,bassirian2023qmaplus}.

As in \cite{jeronimo2023power,bassirian2023qmaplus}, we consider a succinct constraint satisfaction problem (CSP) with constant promise gap, which is $\NEXP$-hard by the PCP theorem~\cite{AS92,ALMSS98,harsha2004robust}. In the $\QMA^+$ protocol, one enforces a \emph{rigidity} property of the quantum proof: the state must be close to $\frac{1}{\sqrt{R}}\sum_{j \in [R]} \ket{j}\ket{a_j}$.
It is easy to ensure that all constraints $j \in [R]$ are included in the state (``density''), but the positive-amplitude guarantee is used to ensure there is at most one assignment $a_j$ per constraint (``quasirigidity'').
We find a way to instead enforce quasirigidity using both internal separability \emph{and} purity.
\begin{lemma}[informal]
    There is a test $A$ and a fixed quantum circuit $C$ such that for any internally separable $\ket{\psi}$, if $A(\ket{\psi}) \ge 1 - \epsilon$, then $C\ket{\psi}$ has $1 - O(\epsilon)$ squared overlap with a \emph{quasirigid} state.
\end{lemma}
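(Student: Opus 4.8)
The plan is to engineer the proof registers so that the honest quasirigid state is internally separable, and then argue that the test $A$ drives any internally separable \emph{pure} proof back toward quasirigidity. Concretely, I would use three registers: a large index register $I$ holding the constraint label $j \in [R]$, together with two $O(1)$-qubit copies $V_1, V_2$ of the local assignment, where $V_2$ is the designated $O(1)$-qubit separable subsystem and $I$ is the part traced out to test internal separability. The intended honest proof is $\frac{1}{\sqrt{R}}\sum_j \ket{j}_I\ket{a_j}_{V_1}\ket{a_j}_{V_2}$; tracing out $I$ yields the classically correlated state $\frac{1}{R}\sum_j \ket{a_j a_j}\bra{a_j a_j}_{V_1 V_2}$, which is separable across $V_1 : V_2$. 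Hence such states are legal inputs, i.e.\ they lie in $\cpp$.

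The test $A$ would be the conjunction of two efficient checks, each passed with probability $1$ by the honest proof. First, a density check forcing the marginal on $I$ to be (close to) uniform, so that every $j \in [R]$ is present. Second, the matching check $\semicheck$, which measures $V_1$ and $V_2$ in the computational basis and accepts iff they agree; passing with probability $\ge 1-\epsilon$ pins $\ket{\psi}$ to within $O(\epsilon)$ of the maximally correlated subspace $\mathrm{span}\{\ket{a a}_{V_1 V_2}\}$, so that (after the fixed cleanup circuit $C$ rearranges registers) we may write $\ket{\psi} \approx \sum_a \ket{a a}_{V_1 V_2}\ket{\eta_a}_I$. Here the internal-separability promise does real work: tracing out $I$ gives the maximally correlated state with Gram matrix $G_{a a'} = \langle \eta_{a'} | \eta_a\rangle$, and it is a standard fact that a maximally correlated state is separable iff $G$ is diagonal. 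Thus separability forces the index-vectors $\{\ket{\eta_a}\}$ to be (approximately) orthogonal, while the density check fixes their squared norms.

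The main obstacle is the final upgrade from orthogonality of the $\ket{\eta_a}$ to quasirigidity, which demands that their \emph{supports} be essentially disjoint, so that each index carries a single assignment. Orthogonality alone is insufficient: a Hadamard-type choice of $\ket{\eta_a}$ is orthonormal yet maximally spread, yielding an internally separable pure proof that passes density and matching while assigning a uniform superposition of values to every index. This is precisely the gap that amplitude non-negativity closes for free in $\QMA^+$, since for non-negative vectors orthogonality is equivalent to disjoint support. Lacking non-negativity, I would close it by adding a third component to $A$ that penalizes exactly the off-diagonal coherences separating ``orthogonal'' from ``disjoint support''. The natural lever is purity: the SWAP test between the two Merlins already certifies a pure proof, and with the implicit second copy one can run a test sensitive to the relative phases among the $\ket{\eta_a}$ — for instance a controlled-swap/phase test on two index copies — that accepts the disjoint-support configurations and rejects the coherent ones. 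This reproduces, via internal separability and purity together, what non-negativity supplied in $\QMA^+$, mirroring the passage from the doubly-non-negative to the completely-positive regime. I expect the bulk of the effort, and the delicate constant-tracking, to lie in designing this component and proving it is robust for proofs drawn from $\cpp$.

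Finally, I would assemble the pieces with a routine error analysis. If $A$ accepts with probability $\ge 1-\epsilon$, each sub-test is passed up to $O(\epsilon)$, so $C\ket{\psi}$ lies within $O(\epsilon)$ of the maximally correlated subspace, its Gram matrix $G$ is $O(\epsilon)$-close to diagonal, and the phase test forces $O(\epsilon)$-closeness to disjoint supports. A triangle-inequality argument, together with a gentle-measurement bound to pass between acceptance probabilities and state distances, then yields squared overlap $1 - O(\epsilon)$ between $C\ket{\psi}$ and an exactly quasirigid state $\frac{1}{\sqrt{R}}\sum_j \ket{j}\ket{a_j}$, as claimed.
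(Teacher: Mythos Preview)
Your proposal contains a genuine gap, and you have in fact identified it yourself: with registers $I,V_1,V_2$ where both $V_i$ are $O(1)$-qubit, internal separability together with $\semicheck$ only forces the index vectors $\{\ket{\eta_a}\}$ to be (approximately) orthogonal, not to have disjoint supports. Your Hadamard counterexample is exactly right --- the state $\tfrac{1}{\sqrt{2}}(\ket{+}_I\ket{00}+\ket{-}_I\ket{11})$ is pure, internally separable, passes $\semicheck$ with probability $1$, and is maximally far from quasirigid. Your proposed remedy, a ``phase test'' that borrows the second Merlin's copy, is both unspecified and outside the scope of this lemma: the statement is about a single pure proof in $\cpp$, and the second copy in $\QMAP(2)$ is spent solely on the SWAP test that certifies purity (\Cref{sec:rank1restrict}); it is not available here.

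The idea you are missing is that the paper does \emph{not} keep both $B$ and $C$ small. It uses four registers $\mathbb{C}^R\otimes(\mathbb{C}^\kappa\otimes\mathbb{C}^R)\otimes\mathbb{C}^\kappa$ and places a second copy of the vertex index \emph{inside} register $B$, so that the honest proof is $\sum_v \alpha_v\,\ket{v}_A\ket{c_v,v}_B\ket{c_v}_C$. The test $A$ is just $\semicheck$ (measure everything and accept iff you see $\ket{v}\ket{c,v}\ket{c}$), and the circuit $C$ is $\cnot_{1,3}\cnot_{2,4}$. With the vertex pinned inside $B$, the ``two colors on one vertex'' cross term becomes the off-diagonal element $\bra{\sigma_v,v,\sigma_v}\rho_{B,C}\ket{d,v,d}$, and separability (\Cref{lemma:separable_offdiagonal_cauchy_schwarz}) bounds it by the diagonal elements $\bra{\sigma_v,v,d}\rho_{B,C}\ket{\sigma_v,v,d}$ and $\bra{d,v,\sigma_v}\rho_{B,C}\ket{d,v,\sigma_v}$. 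These have mismatched colors between $B$ and $C$, so they are already $O(\epsilon)$ by the $\semicheck$ assumption. Your Hadamard example, transplanted into this four-register layout, violates exactly this inequality and is therefore not internally separable --- the extra vertex copy in $B$ is what rules it out. No third test, no density check, and no second proof are needed for quasirigidity; those enter only later, for rigidity and for the passage from $\QMAPP$ to $\QMAP(2)$.

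Two smaller points: the $\density$ test is for upgrading quasirigid to rigid and does not belong in this lemma; and your final target state $\tfrac{1}{\sqrt{R}}\sum_j\ket{j}\ket{a_j}$ is rigid, not quasirigid --- the lemma only claims the latter.
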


In our $\QMAPP$ protocol, the verifier enforces tripartite proofs of the form $\sum_{j, a} z_{ja} \ket{j}\ket{a,j}\ket{a}$. 
This can be done by measuring all registers and accepting iff we see two copies of the same constraint $j$, and two copies of the same assignment $a$.
We then observe that having multiple assignments per constraint adds entanglement to the second and third registers, even after tracing out the first register. Thus, if the state is \emph{internally separable}, it has high overlap with a state with one assignment $a$ per constraint $j$. 
Notably, this test is inspired by recasting the $\QMA^+$ protocol in the language of ``superposition detection'' and non-collapsing measurement~\cite{bassirian2024superposition,aaronson_pdqma}, 
an unphysical 
operation related to a problem of Aaronson~\cite{scott_blogpost_pdqp}.

We then prove that $\QMAPP \subseteq \QMAP(2)$.
Without loss of generality, the best proof Merlin can send  in a $\QMAP(2)$ protocol is a product state $\rho_1 \otimes \rho_2$, since separable states form the convex hull of the set of product states.
The SWAP test accepts a product state with probability $\frac{1}{2}(1 + \Tr[\rho_1 \rho_2])$: this is $1$ only if $\rho_1, \rho_2$ are \emph{pure}.
The $\QMAP(2)$ simulator thus applies a SWAP test with some probability, and otherwise applies the $\QMAPP$ circuit.

\subsection{Related work}

\paragraph{The complexity class $\QMA(2)$}
One sign that $\QMA(2)$ is more powerful than $\QMA$ is that  $\QMA$ with a $O(\log n)$-qubit proof is equal to $\BQP$~\cite{marriott2005quantumarthurmerlingames}.
By contrast, $\QMA(2)$ with two unentangled $O(\log n)$-qubit proofs can decide $\NP$, although with inverse polynomially small completeness-soundness gap~\cite{power_of_unentanglement,beigi2009np,blier2010quantum,chen2010short,chiesa2011improved,GallNN12,per12}. Improving the completeness-soundness gap to a constant would likely imply $\QMA(2) = \NEXP$; see~\cite{jeronimo2023power,bassirian2023qmaplus,jeronimo2024dimension} for a recent attempt using quantum proofs with non-negative amplitudes ($\QMA^+$).
In general, one only needs two unentangled proofs since $\QMA(2) = \QMA(k)$~\cite{harrow2013testing}; the proof of this result uses an iterated SWAP test.
There are few complete problems known for $\QMA(2)$~\cite{chailloux,Hayden_2013,Gutoski_2015}; one proposal is the \emph{pure state marginal problem}, which is known to be $\QMA$-hard and in $\QMA(2)$~\cite{liu06cldm,lcv06,liu2007thesis,boson_qmacomplete,Yu_2021,Yu_2022}.

\paragraph{Convex optimization and quantum entanglement}
Some quantum papers build a Turing reduction using the duality between convex optimization and membership testing of a convex set, for example $\NP$-hardness of separability testing~\cite{gurvits2003classicaldeterministiccomplexityedmonds,ioannou2007computationalcomplexityquantumseparability,gharibian2009strong}, and $\QMA$-hardness of the mixed state marginal problem~\cite{liu06cldm,lcv06,liu2007thesis,boson_qmacomplete}.
This reduction can be implemented in several ways, including the ellipsoid method~\cite{yudin_nem_ellipsoid,grotschel2012geometric}, random walks~\cite{bertsimas_vempala}, and a perceptron-like algorithm~\cite{liu2007thesis}.
One can use a hierarchy of semidefinite programs to detect entanglement~\cite{dps_02,Doherty_2004,Navascu_s_2009,Harrow_2017_anand} or optimize over separable states~\cite{bcy11_definetti,bh12,bks17}. In some cases, carefully chosen $\epsilon$-nets provide similar performance~\cite{shi2011epsilonnetmethodoptimizationsseparable,barak2013roundingsumofsquaresrelaxations,brandao2015estimatingoperatornormsusing}.

\paragraph{Measuring quantum entanglement}
There are numerous ways to quantify entanglement; see the surveys \cite{zyczkowski2006introductionquantumentanglementgeometric,G_hne_2009,horodecki2009entanglement,horodecki2024multipartiteentanglement}. One approach considers the set of reachable states under transformations comprising only of local operations and classical communication (LOCC)~\cite{PhysRevA.62.062314,bennett2000exactAndMultipartite,vrana2015entanglementTransformations}.
Other works classify multipartite entanglement by whether a \emph{subsystem} is separable~\cite{Thapliyal_1999,Brassard_2001}.
Permutation-symmetric approaches include the Positive Partial Transpose (PPT) criterion~\cite{Peres_1996} and quantum de Finetti theorems~\cite{Caves_2002,Christandl_2007}. Mysteriously, both approaches identify entangled states that ``look'' separable under LOCC transformations~\cite{toth_gune_prl_permutationsymmetry,bcy11_definetti,bh12,Lancien_2013,Li_2015}.
There exist ensembles of low-entanglement quantum states that are computationally indistinguishable from maximally entangled quantum states~\cite{aaronson2023quantumpseudoentanglement}.

\subsection{Outline of the paper}
We describe our setup in \Cref{sec:setup}. We prove \Cref{thm:upperbound} in \Cref{sec:qmamix-in-exp}. We prove \Cref{thm:lowerbound} in \Cref{sec:qmaw_equals_nexp,sec:rank1restrict}; some details are deferred to \Cref{sec:probabilities,sec:padding}. We suggest some future directions in \Cref{sec:outlook}. 
We investigate the relationship of $\QMA(2)$ and purity testing in \Cref{sec:qma2_purity}.
We verify \Cref{claim:almost_qmak} in \Cref{sec:gapamplification}.
We describe a complete problem for any restriction of $\QMA$ to a set of quantum proofs $\cw$ in \Cref{sec:completeproblem}.
We find more computational models demonstrating the power of unentanglement in \Cref{sec:more_unentanglement_power}.
We prove \Cref{cor:multisep} in \Cref{sec:multisep}.

\section{Our setup}
\label{sec:setup}
Let $\cd(X)$ be the set of density matrices over register $X$. Let $\sep(X,Y)$ be the set of separable density matrices over registers $X,Y$. Let $Sym(X_1, \dots,  X_k)$ be the symmetric subspace over registers $X_1, \dots,  X_k$. Let $XY|Z$ refer to the bipartition with registers $X,Y$ in one part and $Z$ in the other.

\paragraph{$\QMA$ with a restricted set of proofs.}
Fix any set of quantum states $\cw$. 
Let $\QMA_\cw(m)$ as $\QMA(m)$ where Merlin \emph{must} send $m$ unentangled proofs from $\cw$.
Here we modify both completeness and soundness. Only modifying completeness sometimes has no effect, as seen in \cite{sqma}.
\begin{definition}[$\QMA_{\cw}(m)$]
\label{defn:qmaw}
Let $c,s: \mathbb{N} \to \mathbb{R}^+$ be polynomial-time computable functions, and $m,p$ be polynomials.
Fix an arbitrary set of quantum states $\cw$.

A promise problem $A = (A_{yes}, A_{no})$ is in $\QMA_{\cw}^{c,s}(m)$ if and only if there exists a polynomial $q$ and a polynomial-time uniform family of quantum circuits $\{Q_n\}$, where $Q_n$ takes as input a string $x \in \{0,1\}^n$,
$m$ $p(n)$-qubit quantum states in $\cw$, and $q(n)$ scratch qubits in state $\ket{0}^{\otimes q(n)}$, such that:
\begin{itemize}
    \item (completeness) If $x \in A_{yes}$, there exists a set of $m$ $p(n)$-qubit quantum states $\rho_1, \dots, \rho_m \in \cw$ such that $Q_n$ accepts $\ketbra{x}\otimes \left(\rho_1\otimes\cdots\otimes\rho_m\right)\otimes\ketbra{0}^{\otimes q(n)}$
    with probability at least $c(n)$.
    \item (soundness) If $x \in A_{no}$, then for any set of $m$ $p(n)$-qubit quantum states $\rho_1, \dots, \rho_m \in \cw$, $Q_n$ accepts $\ketbra{x}\otimes \left(\rho_1\otimes\cdots\otimes\rho_m\right)\otimes\ketbra{0}^{\otimes q(n)}$ with probability at most $s(n)$.
\end{itemize}
\end{definition}
\begin{remark}
    Following the convention in~\cite{jeronimo2023power}, we drop the $c,s$ when the completeness-soundness gap is allowed to be any absolute constant: $\QMA_\cw(m) \defeq \bigcup_{c-s = \Omega(1)} \QMA_\cw^{c,s}(m)$. However, note that $\QMA_\cw^{c,s}(m)$ is not in general equal to $\QMA_\cw^{c',s'}(m)$~\cite{bassirian2023qmaplus}.
\end{remark}

This work studies tripartite proofs. Label the three registers $A,B,C$. Consider two families of quantum states, where $k \in \mathbb{N}$ is the dimension of register $C$:
\begin{itemize}
    \item $\cm^k = \{\rho \ |\ \Tr_A[\rho] \in \sep(B,C)\}$: all mixed states that are \emph{internally separable} on $B,C$.
    \item $\cpp^k  = \{ \ket{\psi} \ | \ \Tr_A[\ketbra{\psi}] \in \sep(B,C) \}$: all \emph{purifications} of separable states on $B,C$.
\end{itemize}
We denote $\cm = \bigcup_{k \in \mathbb{N}} \cm^k$ and $\cpp = \bigcup_{k \in \mathbb{N}} \cpp^k$ to allow the dimension of register $C$ to be any fixed number.
This generates the following complexity classes:\footnote{
In the definition for $\QMAP$ and $\QMAPP$, the underlying Hilbert spaces of registers $A,B,C$ are omitted. To be more accurate, we could instead define $\QMAP^{a,b,k},\QMAPP^{a,b,k}$, where $a,b$ correspond to the number of qubits in registers $A$ and $B$ (some polynomial depending on $n$), and $k\in\NN$ is some constant representing the number of qubits for register $C$. 
However, to ease the notation, we put this information in $\cm$ and $\cpp$. 
}
\begin{align*}
    \QMAP(m) &\defeq \QMA_{\cm}(m), \\
    \QMAPP &\defeq \QMA_{\cpp}.
\end{align*}
We will only consider $\QMAPP$ with one proof, as we prove $\QMAPP = \NEXP$ in \Cref{sec:qmaw_equals_nexp}. This implies $\QMAPP(m) = \NEXP$ for all $m \ge 1$.

We occasionally select another property $P$ to choose a set of quantum states which we denote $\cw_P$. When this occurs, we use the shorthand $\QMA_P(m) \defeq \QMA_{\cw_P}(m)$. 
We use the label $\ket{P}$ to signify that the set $\cw_{\ket{P}}$ is pure.

\paragraph{Convex optimization.} In the area of convex optimization, a convex set $K$ is typically defined as a subset of $\mathbb{R}^d$ that is equipped with the standard (Euclidean) inner product.\footnote{For a self-contained introduction to convex optimization, see the textbook \cite{grotschel2012geometric}.} Let $B_\ell(\vec{p})$ be the ball of radius $\ell$ around point $\vec{p} \in \mathbb{R}^d$.
Some algorithms use special properties of $K$:
\begin{itemize}
    \item \emph{compactness}: By Heine-Borel, it is equivalent for $K$ to be closed and bounded.
    \item \emph{well-bounded $\vec{p}$-centeredness}: There is a known radius $R$ such that $K \subseteq B_R(\vec{0})$. Moreover, there is a known point $\vec{p}$ and radius $r$ such that $B_r(\vec{p}) \subseteq K$.
\end{itemize}

To work with this formalism, we represent a quantum state in $\cd(\mathbb{C}^M)$ by its generalized Bloch vector in $\mathbb{R}^{M^2-1}$; see \cite{Kimura_2003} for a review.
For example, fix any set $\{\sigma_i\}_{i \in [M^2 - 1]}$ of traceless Hermitian generators of $SU(M)$ satisfying $\Tr[\sigma_i \sigma_j] = 2 \delta_{ij}$.
Then we may decompose $\rho$ as
\begin{align}
\label{eqn:rho_decomposition}
    \rho = \frac{\mathbb{I}}{M} + \frac{1}{2}\sum_{i=1}^{M^2 - 1} \vec{r}_i \sigma_i\,.
\end{align}
We call $\vec{r} \in \mathbb{R}^{M^2 - 1}$ the (generalized) \emph{Bloch vector} of $\rho$. Note that a Bloch vector uniquely specifies a quantum state. We can thus identify a set of quantum states $J \subseteq \cd(\mathbb{C}^M)$ with the set of associated Bloch vectors $K \subseteq \mathbb{R}^{M^2 -1}$.
We only consider problems where $M$ is a power of $2$. Thus, choose the generators to be the nontrivial $\log M$-qubit Pauli words, each multiplied by the normalizing factor $2^{(1-\log M)/2}$ to preserve the identity $\Tr[\sigma_i \sigma_j] = 2 \delta_{ij}$.

Let $S(K, \epsilon)$ be the set of points $x$ 
that are at most $\epsilon$-far from $K$; i.e. $B_\epsilon(x) \cap K$ is nonempty.
Let $S(K, -\epsilon)$ be the set of points $x \in K$ that are at least distance $\epsilon$ from the boundary; i.e. $B_\epsilon(x) \subseteq K$.
We always have $S(K,-\epsilon) \subseteq K \subseteq S(K,\epsilon)$.

We work with two of the five ``basic problems''~\cite{grotschel2012geometric} in convex optimization. The first problem we use decides the maximum value of  a linear function on a convex set.
\begin{definition}[{Weak Validity ($\wval_\epsilon$(K))~\cite{grotschel2012geometric,liu2007thesis,gharibian2009strong}}]
Let $K \subseteq \mathbb{R}^d$ be a convex, compact, and well-bounded
$\vec{p}$-centered set. Then, given $\vec{c} \in \mathbb{Q}^d$ with $\|\vec{c}\,\|_2 = 1$, and $\gamma, \epsilon \in \mathbb{Q}$ such that $\epsilon > 0$, decide the following:
\begin{itemize}
    \item     If there exists $\vec{y} \in S(K, -\epsilon)$  with $\vec{c} \cdot \vec{y} \ge \gamma + \epsilon$, then output YES.
    \item     If for all $\vec{x} \in S(K, \epsilon)$, we have $\vec{c} \cdot \vec{x} \le \gamma - \epsilon$, then output NO.\footnote{We use the naming convention of~\cite{grotschel2012geometric} where $\wval$ is a decision problem. As in~\cite{liu2007thesis,gharibian2009strong}, we formulate $\wval$ and $\wmem$ as promise problems instead of multi-output problems.
    } 
\end{itemize}
\end{definition}
The second problem we use decides if an arbitrary point is in the convex set:
\begin{definition}[{Weak Membership ($\wmem_\beta$(K))~\cite{grotschel2012geometric,liu2007thesis,gharibian2009strong}}]
Let $K \subseteq \mathbb{R}^d$ be a convex, compact, and well-bounded
$\vec{p}$-centered set. 
Then given $y \in \mathbb{Q}^d$ and parameter $\beta \in \mathbb{Q}$ such that $\beta > 0$, decide the following:
\begin{itemize}
    \item If $y \in S(K, -\beta)$, then output YES.
    \item If $y \notin S(K, \beta)$, then output NO.
\end{itemize}
\end{definition}

We conclude this section with a few facts about Bloch vector representations of quantum states. To start, the norm of the Bloch vector is always bounded by a constant:
\begin{claim}[Folklore]
\label{claim:blochvectors_bounded}
    Every Bloch vector has norm at most $R = \sqrt{2}$.
\end{claim}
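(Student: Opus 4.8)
The plan is to connect the Euclidean norm of the Bloch vector to the \emph{purity} $\Tr[\rho^2]$, and then invoke the elementary inequality $\Tr[\rho^2] \le 1$ that holds for every density matrix. The orthogonality and tracelessness of the chosen generators do all the work.

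First I would substitute the decomposition \eqref{eqn:rho_decomposition} into $\Tr[\rho^2]$ and expand the square into three groups of terms. The constant term contributes $\Tr[\mathbb{I}/M^2] = 1/M$. The two cross terms vanish because each generator is traceless, so $\Tr[\sigma_i] = 0$. The quadratic term is pinned down by the normalization $\Tr[\sigma_i \sigma_j] = 2\delta_{ij}$, which collapses the double sum $\tfrac14 \sum_{i,j} \vec{r}_i \vec{r}_j \Tr[\sigma_i\sigma_j]$ to $\tfrac12 \|\vec{r}\,\|_2^2$. This yields the identity
\[
    \Tr[\rho^2] = \frac{1}{M} + \frac{1}{2}\|\vec{r}\,\|_2^2\,.
\]

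Next I would use that $\rho$ is a density matrix: its eigenvalues $\{\lambda_k\}$ are nonnegative and sum to $1$, so $\Tr[\rho^2] = \sum_k \lambda_k^2 \le \left(\sum_k \lambda_k\right)^2 = 1$. Combining this with the identity above gives $\tfrac12 \|\vec{r}\,\|_2^2 \le 1 - 1/M \le 1$, hence $\|\vec{r}\,\|_2 \le \sqrt{2}$, as claimed. (In fact one obtains the slightly sharper $\|\vec{r}\,\|_2 \le \sqrt{2(1-1/M)}$ for each fixed $M$, with the constant $\sqrt{2}$ being the dimension-independent bound we want.)

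There is essentially no obstacle here: the only structural inputs are the normalization $\Tr[\sigma_i\sigma_j] = 2\delta_{ij}$ and tracelessness $\Tr[\sigma_i]=0$ that the generators were selected to satisfy, together with the standard purity inequality. Everything else is a one-line expansion, so the proof is routine.
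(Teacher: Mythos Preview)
Your proof is correct and follows essentially the same approach as the paper: expand $\Tr[\rho^2]$ using the decomposition \eqref{eqn:rho_decomposition} to obtain $\Tr[\rho^2] = \tfrac{1}{M} + \tfrac{1}{2}\|\vec{r}\,\|_2^2$, then bound this by $1$. The paper compresses this into a single displayed inequality, but the content is identical.
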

\begin{proof}
Fix any quantum state $\rho$, and let $\vec{r}$ be the associated Bloch vector. Then 
\begin{align*}
1 \ge \Tr[\rho^2] = \frac{1}{M} + \frac{1}{2} \sum_{i=1}^{M^2-1} \vec{r}_i~^2 \ge \frac{1}{2} \|\vec{r}\,\|^2\,.\tag*{\qedhere}
\end{align*}
\end{proof}
Moreover, Bloch vectors with small enough norm are separable across every bipartition:
\begin{fact}[\cite{Gurvits_2002_separable_ball}]
\label{fact:separable_everywhere_small_ball}
    Fix any quantum state $\rho \in \cd(\mathbb{C}^M)$, and let  $\vec{r}$ be the associated Bloch vector. If $\|\vec{r}\,\|^2 \le \frac{1}{M^2}$, then $\rho$ is separable across every bipartition.
\end{fact}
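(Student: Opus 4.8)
The plan is to translate the hypothesis on the Bloch vector into a bound on the Hilbert--Schmidt (Frobenius) distance of $\rho$ from the maximally mixed state, and then invoke the Gurvits--Barnum separable-ball theorem. First I would use the decomposition \eqref{eqn:rho_decomposition} to write $\rho - \mathbb{I}/M = \frac12 \sum_i \vec r_i \sigma_i$, so that the normalization $\Tr[\sigma_i\sigma_j] = 2\delta_{ij}$ gives
\[
\left\| \rho - \tfrac{\mathbb{I}}{M}\right\|_F^2 = \Tr\!\left[\left(\rho - \tfrac{\mathbb{I}}{M}\right)^2\right] = \frac14 \sum_{i,j}\vec r_i \vec r_j \Tr[\sigma_i\sigma_j] = \frac12 \|\vec r\,\|^2 ,
\]
equivalently $\Tr[\rho^2] = \frac1M + \frac12\|\vec r\,\|^2$. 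Under the hypothesis $\|\vec r\,\|^2 \le 1/M^2$ this becomes $\|\rho - \mathbb{I}/M\|_F \le \tfrac{1}{\sqrt2\,M}$, or in purity form $\Tr[\rho^2] \le \frac1M + \frac{1}{2M^2}$.

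Next I would invoke the Gurvits--Barnum theorem: for a bipartite system of total dimension $M$, every state with $\Tr[\rho^2] \le \frac{1}{M-1}$ --- equivalently every state within Frobenius distance $1/\sqrt{M(M-1)}$ of $\mathbb{I}/M$ --- is separable. It then remains to check the containment $\tfrac{1}{\sqrt2\,M} \le \tfrac{1}{\sqrt{M(M-1)}}$, which reduces to $M-1 \le 2M$ and holds for all $M \ge 1$; equivalently $\frac1M + \frac{1}{2M^2} \le \frac{1}{M-1}$, since $\frac{1}{2M^2} \le \frac{1}{M(M-1)}$. Because the Gurvits--Barnum purity criterion depends only on the total dimension $M$ and not on the particular tensor factorization, it certifies separability across an arbitrary cut; applying it to each bipartition in turn yields separability across every bipartition, as claimed.

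The step I expect to be the main obstacle is the Gurvits--Barnum estimate itself, which I intend to cite rather than reprove. If one wanted a self-contained argument, I would pass to the dual description of the separable cone: $\rho$ is separable iff $\Tr[W\rho] \ge 0$ for every block-positive witness $W$, so the in-radius of the separable set at $\mathbb{I}/M$ (within the trace-one hyperplane) equals $\inf_W \frac{\Tr[W]/M}{\|W_0\|_F}$, where $W_0 = W - \frac{\Tr[W]}{M}\mathbb{I}$ is the traceless part and the infimum ranges over block-positive $W$. The genuinely hard point is the sharp spectral inequality $\Tr[W] \ge \sqrt{M/(M-1)}\,\|W_0\|_F$ for block-positive $W$, which is exactly what pins down the radius $1/\sqrt{M(M-1)}$ and is tight (for instance at the two-qubit Werner boundary state, where $\Tr[\rho^2] = 1/3 = 1/(M-1)$). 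Everything else --- the identity above and the final numerical comparison --- is routine, so I would lean on the cited bound for this spectral estimate and present only the short reduction.
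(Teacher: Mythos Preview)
Your reduction is correct. The paper does not prove this statement at all: it is labeled as a \texttt{Fact} and simply imported from \cite{Gurvits_2002_separable_ball} without any argument, so there is no ``paper's own proof'' to compare against. What you have written is precisely the translation one needs to see that the Bloch-vector formulation in the paper is a consequence of the Gurvits--Barnum separable-ball theorem in its standard purity/Frobenius form; the computation $\|\rho - \mathbb{I}/M\|_F^2 = \tfrac12\|\vec r\,\|^2$ and the numerical check $\tfrac1M + \tfrac{1}{2M^2} \le \tfrac{1}{M-1}$ are both correct, and your remark that the criterion depends only on the total dimension (hence applies to every bipartition) is the right way to get the ``every bipartition'' clause.
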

\begin{corollary}
    The set of Bloch vectors representing separable states in $\cd(\mathbb{C}^M, \mathbb{C}^N)$ is well-bounded $\vec{0}$-centered with $r = \frac{1}{M^2N^2}$ and $R = \sqrt{2}$. 
\end{corollary}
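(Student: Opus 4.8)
The plan is to verify the two defining conditions of well-bounded $\vec{0}$-centeredness directly, reusing the two facts just established. A separable state in $\cd(\mathbb{C}^M, \mathbb{C}^N)$ is a density matrix on the $MN$-dimensional joint system, so its Bloch vector lives in $\mathbb{R}^{(MN)^2 - 1}$, and the set $K$ of such vectors sits inside the set of all Bloch vectors of $\cd(\mathbb{C}^{MN})$. I would treat the outer radius and the inner radius separately.

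For the outer radius, I would simply observe that $K$ is contained in the set of all Bloch vectors of the joint system. By \Cref{claim:blochvectors_bounded}, every such vector has norm at most $\sqrt{2}$, so $K \subseteq B_{\sqrt{2}}(\vec{0})$ and we may take $R = \sqrt{2}$.

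For the inner radius, I would first note that $\vec{0}$ is the Bloch vector of the maximally mixed state $\frac{\mathbb{I}}{MN} = \frac{\mathbb{I}_M}{M} \otimes \frac{\mathbb{I}_N}{N}$, which is a product state and hence lies in $K$; this pins the center at $\vec{p} = \vec{0}$. To show $B_r(\vec{0}) \subseteq K$, I would apply \Cref{fact:separable_everywhere_small_ball} with the joint dimension $MN$ substituted for $M$: any Bloch vector $\vec{x}$ with $\|\vec{x}\|^2 \le \frac{1}{(MN)^2}$ represents a state separable across every bipartition, in particular across the $M \mid N$ cut, so it lies in $K$. Since $MN \ge 1$ gives $\frac{1}{M^2 N^2} \le \frac{1}{MN}$, every $\vec{x} \in B_{1/(M^2 N^2)}(\vec{0})$ satisfies $\|\vec{x}\| \le \frac{1}{M^2 N^2} \le \frac{1}{MN}$, whence $\|\vec{x}\|^2 \le \frac{1}{(MN)^2}$ and the fact applies. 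This yields $B_{1/(M^2 N^2)}(\vec{0}) \subseteq K$, so we may take $r = \frac{1}{M^2 N^2}$.

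The argument is essentially bookkeeping, and the only step requiring care is the dimension substitution: \Cref{fact:separable_everywhere_small_ball} is phrased for a single system of dimension $M$, so one must instantiate it at the joint dimension $MN$ to obtain the bound $\frac{1}{M^2 N^2}$. The one genuine subtlety, which is already packaged inside the cited Gurvits--Barnum fact, is that Bloch vectors of sufficiently small norm automatically correspond to legitimate (positive semidefinite, trace-one) density matrices; this is what guarantees that the entire ball $B_r(\vec{0})$ consists of bona fide separable states rather than arbitrary points of $\mathbb{R}^{(MN)^2 - 1}$.
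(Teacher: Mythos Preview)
Your proposal is correct and matches the paper's intended argument: the corollary is stated without proof as an immediate consequence of \Cref{claim:blochvectors_bounded} (for $R$) and \Cref{fact:separable_everywhere_small_ball} applied at joint dimension $MN$ (for $r$), exactly as you unpack it. Your remark that the Gurvits--Barnum fact implicitly guarantees positive semidefiniteness inside the small ball is the only nontrivial point, and you handle it correctly.
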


Finally, the Euclidean distance of two vectors is upper-bounded by a constant times the trace distance of their associated matrices:
\begin{claim}
\label{claim:euclidean_bloch_atmost_trace_distance}
    Consider two vectors $\vec{r}, \vec{t} \in \mathbb{R}^{M^2 - 1}$, and let $\rho$ and $\tau$ be the associated $M \times M$ matrices according to~\cref{eqn:rho_decomposition}. Then $\|\vec{r} - \vec{t}\,\|_2 \le \sqrt{2} \|\rho - \tau\|_1$.
\end{claim}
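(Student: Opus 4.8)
The plan is to route through the Hilbert--Schmidt (Frobenius) norm, which sits naturally between the Euclidean norm of the Bloch vector and the trace norm of the matrix. Write $\vec{d} = \vec{r} - \vec{t}$ and $D = \rho - \tau$. Since the $\mathbb{I}/M$ term cancels in the difference, \cref{eqn:rho_decomposition} gives $D = \frac{1}{2}\sum_{i=1}^{M^2-1} d_i \sigma_i$, so $D$ is a traceless Hermitian matrix whose Bloch vector is exactly $\vec{d}$.

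First I would compute the Hilbert--Schmidt norm of $D$. Using that $D$ is Hermitian, $\|D\|_2^2 = \Tr[D^\dagger D] = \Tr[D^2]$; expanding the square and applying the normalization $\Tr[\sigma_i \sigma_j] = 2\delta_{ij}$ collapses the double sum to a diagonal one, yielding $\|D\|_2^2 = \frac{1}{4}\sum_{i,j} d_i d_j \Tr[\sigma_i\sigma_j] = \frac{1}{2}\|\vec{d}\,\|_2^2$. Equivalently, $\|\vec{d}\,\|_2 = \sqrt{2}\,\|D\|_2$.

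Next I would invoke the standard ordering of Schatten norms: the Schatten-$p$ norm is nonincreasing in $p$, so in particular $\|D\|_2 \le \|D\|_1$. Concretely, if $D$ has singular values $s_1, \dots, s_M$, then $\sqrt{\sum_i s_i^2} \le \sum_i s_i$, since the cross terms in $(\sum_i s_i)^2$ are nonnegative. Chaining the two estimates gives $\|\vec{d}\,\|_2 = \sqrt{2}\,\|D\|_2 \le \sqrt{2}\,\|D\|_1 = \sqrt{2}\,\|\rho - \tau\|_1$, which is the claim.

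There is no genuine obstacle here: the only things to get right are the bookkeeping of the factors of $\tfrac12$ and $2$ introduced by the normalization convention on the generators $\sigma_i$, and the use of Hermiticity so that $\Tr[D^\dagger D] = \Tr[D^2]$. The inequality is tight only up to the gap between the $\ell_1$ and $\ell_2$ norms of the singular value vector, which is exactly where the constant $\sqrt{2}$ (rather than a dimension-dependent factor) comes from.
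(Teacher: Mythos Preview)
Your proof is correct and follows essentially the same approach as the paper: compute the Hilbert--Schmidt norm of $\rho-\tau$ via the orthogonality relation $\Tr[\sigma_i\sigma_j]=2\delta_{ij}$ to get $\|\rho-\tau\|_2=\tfrac{1}{\sqrt{2}}\|\vec{r}-\vec{t}\,\|_2$, then apply the Schatten norm inequality $\|\cdot\|_2\le\|\cdot\|_1$. One small quibble with your closing commentary: the constant $\sqrt{2}$ arises from the normalization of the generators, not from the $\ell_1$--$\ell_2$ comparison (which contributes only the inequality direction).
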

\begin{proof}
    Note that $\|\rho- \tau\|_1 \ge \|\rho - \tau\|_2$ by monotonicity of Schatten norms. Moreover,
    \begin{align*}
        \|\rho - \tau\|_2 
        = \sqrt{  \frac{1}{4} \sum_{i,j = 1}^{M^2 - 1} (\vec{r}_i - \vec{t}_i)(\vec{r}_j - \vec{t}_j) \Tr[\sigma_i \sigma_j] }
        = \sqrt{ \frac{1}{2} \sum_{i=1}^{M^2 - 1} (\vec{r}_i - \vec{t}_i)^2 } 
        = \frac{1}{\sqrt{2}} \|\vec{r} - \vec{t}\|_2\,.\tag*{\qedhere} 
    \end{align*}
\end{proof}

\section{Optimization over internally separable states: $\QMAP \subseteq \EXP$} \label{sec:qmamix-in-exp}
In this section, we show how to decide any $\QMAP$ problem in exponential time. This has three steps:
\begin{align*}
    \QMAP \stackrel{(1)}{\le_m} \wval(\cm) \stackrel{(2)}{\le_T} \wmem(\cm) \stackrel{(3)}{\subseteq} \EXP\,.
\end{align*}
First, we build a many-one reduction\footnote{Recall that a \emph{Turing} reduction from problem $A$ to problem $B$ is an algorithm to solve problem $A$ using an oracle for problem $B$. A \emph{many-one} reduction is a special type of Turing reduction where the oracle can only be called once, and the oracle's output must be returned by the algorithm.}
from any problem in $\QMAP$ to 
the convex optimization problem $\wval(\cm)$. Then, we use a Turing reduction from $\wval(\cm)$ to the membership testing problem $\wmem(\cm)$. We finally show that $\wmem(\cm)$ can be decided in $\EXP$.\footnote{To be precise, step (1) runs in exponential time and creates an 
exponentially long output. 
Steps (2) and (3) each run in polynomial time with respect to input size, which is 
exponential in the original $\QMAP$ problem's input length.
}

For any problem in $\QMAP$, there is a $\QMAP$ protocol and polynomial $p$ such that for input size $n$, the quantum proof in $\cm$ has $p(n)$ qubits.
In this section, we identify $\cm$ with its subset on $p(n)$ qubits, i.e. the set of allowed quantum proofs in this protocol.
By \Cref{sec:setup}, we can interchange $\cm$ with its projection to the generalized Bloch sphere (a subset of $\mathbb{R}^{4^{p(n)} - 1}$) in exponential time.
Therefore, in this section, we may view $\cm$ as a subset of real Euclidean space. 

\subsection{Showing that $\cm$ is well-behaved}
We first verify that $\cm$ has all the properties we need in our reduction.
\begin{claim}
\label{claim:wpurif_convex_properties}
    $\cm$ is a convex, well-bounded $\vec{0}$-centered set with radii $R = \sqrt{2}, r = \frac{1}{4^{p(n)}}$.
\end{claim}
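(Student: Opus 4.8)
The plan is to verify the three asserted properties of $\cm$ — convexity, the outer radius $R=\sqrt{2}$, and the inner radius $r=\frac{1}{4^{p(n)}}$ — one at a time, each reducing to a fact already available in the excerpt. Throughout, write $M=2^{p(n)}$ for the total dimension of the proof register $ABC$, and recall that the identification of states with Bloch vectors via $\vec{r}_i = \Tr[\rho\sigma_i]$ is affine, so convex combinations of density matrices correspond exactly to convex combinations of their Bloch vectors.

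For \emph{convexity} it therefore suffices to argue that $\cm$ is convex as a set of density matrices. This is immediate from linearity: if $\rho_1,\rho_2\in\cm$, then $\Tr_A[\rho_1],\Tr_A[\rho_2]\in\sep(B,C)$, and for $\lambda\in[0,1]$ we have $\Tr_A[\lambda\rho_1+(1-\lambda)\rho_2]=\lambda\Tr_A[\rho_1]+(1-\lambda)\Tr_A[\rho_2]$, which lies in $\sep(B,C)$ because $\sep(B,C)$ is convex. For the \emph{outer radius}, the bound $R=\sqrt{2}$ is immediate from \Cref{claim:blochvectors_bounded}: every Bloch vector has norm at most $\sqrt{2}$, so in particular $\cm\subseteq B_{\sqrt2}(\vec 0)$.

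The substantive part is the \emph{inner radius}: I must show $B_r(\vec0)\subseteq\cm$ with $r=\frac1{M^2}$. Fix any Bloch vector $\vec t$ with $\|\vec t\|\le r$, and let $\tau$ be the associated matrix. Since $\|\vec t\|^2\le \frac1{M^4}\le\frac1{M^2}$, \Cref{fact:separable_everywhere_small_ball} already tells us $\tau$ is a legitimate (separable) density matrix, so $\Tr_A[\tau]$ is a genuine state on $BC$; the only thing left is to certify $\Tr_A[\tau]\in\sep(B,C)$. The idea is to reapply \Cref{fact:separable_everywhere_small_ball} to the \emph{smaller} two-party system $BC$, whose only bipartition is $B|C$. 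For this I compute the Bloch vector $\vec u$ of $\Tr_A[\tau]$ on $BC$ (of dimension $N=M/d_A$, where $d_A=\dim(A)$): tracing out $A$ retains exactly the components of $\vec t$ indexed by Pauli words of the form $I_A\otimes Q_{BC}$, and rescales each by the ratio of normalization factors $2^{(1-\log N)/2}/2^{(1-\log M)/2}=\sqrt{d_A}$. Hence $\|\vec u\|\le\sqrt{d_A}\,\|\vec t\|$, giving $\|\vec u\|^2\le \frac{d_A}{M^4}=\frac{1}{d_A^3 N^4}\le\frac1{N^2}$. Applying \Cref{fact:separable_everywhere_small_ball} to $BC$ then yields $\Tr_A[\tau]$ separable across $B|C$, so $\tau\in\cm$.

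I expect the inner-radius step to be the main obstacle, since it is the only place where one must understand how partial trace acts on the Bloch-vector coordinates: specifically, that $\Tr_A$ projects onto the $I_A\otimes Q_{BC}$ components and rescales them by $\sqrt{d_A}$. Once this scaling is established, the chosen radius $r=\frac1{M^2}$ is comfortably small enough that the separable-ball bound survives the $\sqrt{d_A}$ blow-up, and the claim follows. The convexity and outer-radius parts are routine by comparison.
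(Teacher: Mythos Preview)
Your proposal is correct. Convexity and the outer radius match the paper exactly. For the inner radius you take a genuinely different route: you trace out $A$, track the $\sqrt{d_A}$ scaling of the reduced Bloch vector (as in \Cref{remark:blochvector_subsystem}), and then reapply \Cref{fact:separable_everywhere_small_ball} to the smaller $BC$ system. The paper instead applies \Cref{fact:separable_everywhere_small_ball} once to the full $ABC$ system to obtain separability across the $AB|C$ cut, and then notes that $AB|C$-separability of $\rho$ immediately yields $B|C$-separability of $\Tr_A[\rho]$, since $\Tr_A\big[\sum_i p_i\,\sigma_i^{AB}\otimes\tau_i^C\big]=\sum_i p_i\,\Tr_A[\sigma_i^{AB}]\otimes\tau_i^C$. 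The paper's argument is shorter and avoids the partial-trace bookkeeping on Bloch coordinates; your argument is more explicit about the reduced system and shows quantitatively that the separable-ball bound survives the $\sqrt{d_A}$ blow-up with slack, which could be useful if one ever needed a sharper inner radius on $BC$.
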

\begin{proof}
We first check convexity. Consider $\rho_1, \rho_2 \in \cm$. Since both $\Tr_A[\rho_1]$ and $\Tr_A[\rho_2]$ are separable, then so is $\Tr_A[p \rho_1 + (1-p) \rho_2]$. So the state $p \rho_1 + (1-p) \rho_2 \in \cm$. This establishes that $\cm$ is convex when viewed as density matrices. Since the transformation from density matrix to Bloch vector is linear, $\cm$ is also convex when viewed as Bloch vectors.

By \Cref{claim:blochvectors_bounded}, all Bloch vectors have norm at most $\sqrt{2}$. Moreover, by \Cref{fact:separable_everywhere_small_ball}, every Bloch vector in $\mathbb{R}^{M^2 - 1}$ with norm at most $\frac{1}{M^2}$ is separable across every bipartition. This implies that they are separable across the $AB|C$ bipartition, 
and are thus internally separable. So $\cm$ is well-bounded $\vec{0}$-centered with $r = \frac{1}{4^{p(n)}}$ and $R = \sqrt{2}$.
\end{proof}

It will be useful to represent $\cm$ as the intersection of two sets $\cm = K_1 \cap K_2$. Let $K_1$ be the set of all Bloch vectors corresponding to density matrices on $p(n)$ qubits; i.e. the projection of $\cd(\mathbb{C}^{2^{p(n)}})$ onto $\mathbb{R}^{4^{p(n)} - 1}$. 
Let $K_2$ be the set of all vectors in $\mathbb{R}^{4^{p(n)} - 1}$ whose associated vector on the subsystem represents a separable state (up to normalization), and where all other coordinates have magnitude at most $2$. This works for the following reason:
\begin{remark}
\label{remark:blochvector_subsystem}
When the Bloch vector encoding uses the generators $\{\sigma_i\}$ associated with Pauli words, the Bloch vector of a \emph{subsystem} is a subset of the coordinates of the original Bloch vector, multiplied to a global normalization.\footnote{The normalizing factor is exactly sqrt of the dimension of register $A$, to preserve the identity $\Tr[\sigma_i \sigma_j] = 2\delta_{ij}$.}
In particular, $\vec{r}_i = \Tr[\rho \sigma_i]$ corresponds to an element of the Bloch vector of the subsystem if and only if $\sigma_i$ is $\mathbb{I}$ on every qubit beyond the subsystem.
\end{remark}
\begin{claim}
\label{claim:k1k2properties}
    $K_1$ and $K_2$ are convex, compact, well-bounded $\vec{0}$-centered sets, with associated radii $R_1 = \sqrt{2}, R_2 = 4^{p(n)}, r = \frac{1}{4^{p(n)}}$.
\end{claim}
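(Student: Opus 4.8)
The plan is to treat $K_1$ and $K_2$ separately, verifying the four required properties (convexity, compactness, a bounded outer radius $R$, and an inscribed ball of radius $r$ about $\vec 0$) in each case. Write $M = 2^{p(n)}$ for the full dimension. For $K_1$, the Bloch vectors of all density matrices on $p(n)$ qubits, everything reuses the reasoning of \Cref{claim:wpurif_convex_properties}. Convexity holds because $\cd(\mathbb{C}^M)$ is convex and the map $\rho \mapsto \vec r$ with $\vec r_i = \Tr[\rho\sigma_i]$ is linear. Compactness holds because $\cd(\mathbb{C}^M)$ is cut out by the closed conditions $\rho = \rho^\dagger$, $\rho \succeq 0$, $\Tr\rho = 1$ and is bounded by \Cref{claim:blochvectors_bounded}, and a linear image of a compact set is compact. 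The outer radius $R_1 = \sqrt 2$ is immediate from \Cref{claim:blochvectors_bounded}, and the inscribed ball of radius $r = 1/M^2 = 1/4^{p(n)}$ is exactly \Cref{fact:separable_everywhere_small_ball}: every Bloch vector of norm at most $1/M^2$ is separable, hence a valid density matrix, hence in $K_1$.

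The substantive work is $K_2$, which constrains only the \emph{subsystem} coordinates of $\vec r$ — those whose generator $\sigma_i$ acts as the identity outside $BC$, which by \Cref{remark:blochvector_subsystem} form a fixed rescaling of the Bloch vector of $\rho_{BC}$ — to represent a separable state, while forcing every remaining coordinate to have magnitude at most $2$. Convexity is immediate: the separability constraint is the preimage of the convex set of separable subsystem Bloch vectors under a linear projection-and-rescaling, the box constraint $|\vec r_i| \le 2$ is convex, and $K_2$ is their intersection. For compactness, the key observation is that the box constraint on the non-subsystem coordinates is precisely what makes $K_2$ bounded; without it, $K_2$ would be an unbounded cylinder. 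Since separable states form a closed and bounded set, so do their Bloch vectors, and intersecting the closed separability preimage with the closed, bounded box yields a closed, bounded, hence compact set.

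For the radii of $K_2$, let $M = M_A M_{BC}$ with $M_{BC} = \dim(BC)$ the subsystem dimension. By \Cref{remark:blochvector_subsystem} and its footnote, each subsystem coordinate of $\vec r$ equals $1/\sqrt{M_A}$ times the corresponding coordinate of the genuine subsystem Bloch vector $\vec t$ of $\rho_{BC}$. For the outer radius, the subsystem part of $\vec r$ has norm at most $\sqrt 2$ by \Cref{claim:blochvectors_bounded}, while the remaining at most $M^2 - 1$ coordinates each have magnitude at most $2$, contributing norm below $2M$; hence $\|\vec r\| < \sqrt 2 + 2M \le M^2 = 4^{p(n)} = R_2$, using $M \ge 4$ since $BC$ is a proper subsystem. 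For the inscribed ball, take any $\vec r$ with $\|\vec r\| \le 1/M^2$. Each non-subsystem coordinate then has magnitude at most $1/M^2 \le 2$, satisfying the box constraint. Moreover, writing $\vec t$ for the genuine subsystem Bloch vector, $\|\vec t\| = \sqrt{M_A}\,\|(\text{subsystem coordinates of }\vec r)\| \le \sqrt{M_A}/M^2 = 1/(M_A^{3/2} M_{BC}^2) \le 1/M_{BC}^2$, so by \Cref{fact:separable_everywhere_small_ball} applied at the subsystem dimension $M_{BC}$, $\vec t$ represents a separable state and $\vec r \in K_2$. This gives $r = 1/M^2 = 1/4^{p(n)}$.

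The only place requiring genuine care — and thus the main obstacle — is this inscribed-ball argument for $K_2$: one must track the $1/\sqrt{M_A}$ rescaling relating the subsystem coordinates of $\vec r$ to a bona fide subsystem Bloch vector, and verify that the rescaling keeps the vector small enough to invoke the Gurvits--Barnum separable-ball bound (\Cref{fact:separable_everywhere_small_ball}) at the \emph{subsystem} dimension $M_{BC}$ rather than the full dimension $M$. Fortunately the rescaling shrinks rather than enlarges the vector, so the needed bound $\|\vec t\| \le 1/M_{BC}^2$ holds with room to spare.
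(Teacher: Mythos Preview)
Your proof is correct and tracks the paper's structure closely: verify convexity, compactness, and the two radii for each of $K_1,K_2$, with $K_1$ handled exactly as in \Cref{claim:wpurif_convex_properties}. The one substantive divergence is the inscribed-ball argument for $K_2$. The paper does not track the $\sqrt{M_A}$ rescaling at all; it simply observes $\cm \subseteq K_2$ and inherits the radius-$1/4^{p(n)}$ ball directly from the already-proven \Cref{claim:wpurif_convex_properties}. You instead argue from scratch, pushing the norm bound through the rescaling of \Cref{remark:blochvector_subsystem} and invoking the Gurvits--Barnum ball at the subsystem dimension $M_{BC}$. Both routes are valid; the paper's is shorter and more modular, while yours is self-contained and makes explicit the pleasant fact that the rescaling only shrinks the subsystem vector. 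Your outer-radius estimate $\sqrt 2 + 2M$ for $K_2$ is coarser than the paper's but still well under $R_2 = M^2$, so nothing is lost.
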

\begin{proof}
We start with convexity. The set of all $p(n)$-qubit states is convex; since the transformation from density matrix to Bloch vector is linear, $K_1$ is also convex. The set of separable states is convex, by the same argument, so are the associated Bloch vectors. $K_2$ is the set of all vectors which match a separable Bloch vector on a fixed set of coordinates, intersected with the solid hypercube $[-2,2]^{4^{p(n)}-1}$. Both sets are convex (and $K_2$ is non-empty), so $K_2$ is convex.

All Bloch vectors have norm at most $\sqrt{2}$. Vectors in $K_2$ have norm at most $\sqrt{2 \cdot(4^{p(n)}-1)}$. Since we have $\cm \subseteq K_1$ and $\cm \subseteq K_2$, both $K_1$ and $K_2$ contain a $\vec{0}$-centered ball of radius $r = \frac{1}{4^{p(n)}}$ by \Cref{claim:wpurif_convex_properties}.

The set of all density matrices is compact (e.g. \cite{zyczkowski2006introductionquantumentanglementgeometric}). 
Since the transformation from density matrix to Bloch vectors is linear, $K_1$ is also compact.
By Heine-Borel, $K_2$ is compact iff it is closed and bounded. It remains to show closure. The set of Bloch vectors representing separable states is closed (e.g. \cite{gharibian2009strong}). Note that $K_2$ is the Cartesian product of this set with the closed set $[-2,2]$ on each other coordinate. Since the product of closed sets is closed, $K_2$ is closed.
\end{proof}
\begin{claim}
\label{claim:wpurif_compact}
    $\cm$ is compact.
\end{claim}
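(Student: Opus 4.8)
The plan is to exploit the decomposition $\cm = K_1 \cap K_2$ introduced just before the statement, together with \Cref{claim:k1k2properties}, which already establishes that both $K_1$ and $K_2$ are compact. Since compactness in $\mathbb{R}^d$ is equivalent (by Heine-Borel) to being closed and bounded, and since the intersection of two closed sets is closed while the intersection of two bounded sets is bounded, the set $\cm = K_1 \cap K_2$ is itself closed and bounded, hence compact. This is the entire argument at a high level; the work is really in confirming that the identity $\cm = K_1 \cap K_2$ holds as sets of Bloch vectors.

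First I would justify $\cm = K_1 \cap K_2$. A vector $\vec{r} \in \mathbb{R}^{4^{p(n)}-1}$ lies in $K_1$ exactly when it is the Bloch vector of a genuine density matrix $\rho$ on $p(n)$ qubits. Given such a $\vec{r}$, by \Cref{remark:blochvector_subsystem} the coordinates indexed by Pauli words acting as $\mathbb{I}$ on register $A$ pick out (up to the fixed global normalization) the Bloch vector of the subsystem $\Tr_A[\rho]$. Membership in $K_2$ precisely asserts that this sub-collection of coordinates encodes a separable state on $B,C$, while the remaining coordinates are unconstrained beyond the harmless magnitude bound. Hence $\vec{r} \in K_1 \cap K_2$ if and only if $\rho$ is a valid state whose reduction $\Tr_A[\rho]$ is separable across $B|C$, which is exactly the defining condition for $\cm$. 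The magnitude-$2$ box in the definition of $K_2$ does not remove any points of $\cm$, since by \Cref{claim:blochvectors_bounded} every genuine Bloch vector already has all coordinates bounded by $\sqrt{2} < 2$.

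Once the set identity is in hand, I would simply invoke \Cref{claim:k1k2properties}: both $K_1$ and $K_2$ are compact, so their intersection is compact. If one prefers an even more self-contained statement, I could note separately that $\cm$ is bounded because $\cm \subseteq K_1$ and every Bloch vector has norm at most $\sqrt{2}$ (\Cref{claim:blochvectors_bounded}), and that $\cm$ is closed as the intersection of the closed set of all Bloch vectors with the closed set of vectors whose subsystem coordinates encode a separable state (closedness of the separable set is used in \Cref{claim:k1k2properties}).

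There is no real obstacle here: the statement is a routine consequence of the compactness of $K_1$ and $K_2$ already proved, and the only point demanding care is making the set-level identity $\cm = K_1 \cap K_2$ explicit via \Cref{remark:blochvector_subsystem}, so that the ambient claim ``intersection of compacts is compact'' can be applied cleanly. Accordingly I expect the proof to be a one- or two-line appeal to \Cref{claim:k1k2properties} after this bookkeeping.
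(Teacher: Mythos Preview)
Your proposal is correct and matches the paper's proof essentially verbatim: the paper also invokes Heine--Borel, noting that $\cm$ is bounded (norm at most $R$) and closed as the intersection of the two closed sets $K_1, K_2$ from \Cref{claim:k1k2properties}. Your additional justification of the set identity $\cm = K_1 \cap K_2$ is more explicit than what the paper writes (it simply asserts this decomposition in the text preceding the claim), but the core argument is identical.
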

\begin{proof}
By Heine-Borel, we can prove compactness by showing that $\cm$ is closed and bounded. $\cm$ is bounded because the norm is always at most $R$. $\cm$ is closed because it is the intersection of two closed sets $K_1,K_2$ (\Cref{claim:k1k2properties}).
\end{proof}

\subsection{Proving the reduction}
The first step is straightforward. Recall the definition of $\wval$ from \Cref{sec:setup}. Observe that the maximum acceptance probability of any $\QMAP$ protocol with verifier circuit $V$ can be written as $\max_{\rho \in \cm} \Tr[V \rho]$.
Thus, we can decide this problem if we can decide the value $\max_{\rho \in \cm} \Tr[V \rho]$ with good enough precision.

\begin{claim}
    Every problem in $\QMAP$ with completeness-soundness gap $\Delta$ has an exponential-time reduction to $\wval_{\epsilon}(\cm)$, where $\epsilon = \Omega(\frac{\Delta}{2^{\poly(n)}})$.
\end{claim}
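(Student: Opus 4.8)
The plan is to show that the maximum acceptance probability of a $\QMAP$ protocol is exactly a linear functional over $\cm$ (viewed as Bloch vectors), and then argue that $\wval_\epsilon(\cm)$ with a suitably small $\epsilon$ distinguishes the completeness and soundness cases. The starting observation is that for a fixed verifier circuit $V$ (absorbing the scratch ancillas $\ket{0}^{\otimes q(n)}$ and the final accept-projector into a single POVM element $0 \preceq E \preceq \mathbb{I}$ acting on the $p(n)$-qubit proof register), the acceptance probability on proof $\rho$ is $\Tr[E\rho]$, which is affine-linear in $\rho$ and hence affine-linear in the Bloch vector $\vec{r}$ of $\rho$. So $\Tr[E\rho] = \vec{c}\cdot\vec{r} + d$ for an explicit vector $\vec{c}$ and constant $d$ computable from the circuit description. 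After normalizing $\vec{c}$ to unit length (dividing the target threshold accordingly), the quantity $\max_{\rho\in\cm}\Tr[E\rho]$ becomes exactly a weak-validity query over the convex, compact, well-bounded $\vec{0}$-centered set $\cm$ guaranteed by \Cref{claim:wpurif_convex_properties,claim:wpurif_compact}.

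First I would set up the reduction explicitly: given the $\QMAP$ instance $x$ of size $n$, compute the unitary $V = Q_n$, trace out the ancilla and non-proof registers to extract the effective measurement operator $E$ on the $p(n)$-qubit proof, and expand $E$ in the Pauli basis to read off $\vec{c}$ and $d$. This expansion has $4^{p(n)}$ terms, so it takes exponential time and produces an exponentially long description of $\vec{c}$ — this matches the stated running time. I would then set the $\wval$ threshold to $\gamma = \frac{1}{2}(c + s) - d$ (rescaled by $\|\vec c\|$), so that a YES instance (acceptance $\ge c$) yields a point in $\cm$ with $\vec{c}\cdot\vec{r} \ge \gamma + \Delta/2$ and a NO instance forces $\vec{c}\cdot\vec{r}\le \gamma - \Delta/2$ for \emph{every} $\vec r\in\cm$.

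The main technical point — and the step I expect to require the most care — is the error budget relating the $\wval$ tolerance $\epsilon$ to the completeness-soundness gap $\Delta$. The $\wval$ promise allows the maximizing point to drift into the inflated set $S(\cm,\epsilon)$ (YES side) or shrink to $S(\cm,-\epsilon)$ (NO side), so I must verify that moving a Bloch vector by Euclidean distance $\epsilon$ changes the linear objective $\vec c\cdot\vec r$ by at most $\epsilon$ (immediate since $\|\vec c\|_2 = 1$ by Cauchy–Schwarz), and separately that such a perturbation corresponds to a trace-distance change controlled via \Cref{claim:euclidean_bloch_atmost_trace_distance}, so that the acceptance probability changes by at most $O(\epsilon)$. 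Choosing $\epsilon = \Omega(\Delta / 2^{\poly(n)})$ — where the $2^{\poly(n)}$ factor absorbs the normalization $\|\vec c\|$ that can be as small as inverse-exponential after unit-normalization, since the un-normalized Pauli coefficients of $E$ carry a $2^{-p(n)}$ scale — then guarantees that the $\epsilon$-slack cannot bridge the $\Delta/2$ margin, so a correct $\wval_\epsilon(\cm)$ answer correctly decides the original problem. The rescaling of $\gamma$ and the bookkeeping of the normalization factor that produces the $2^{\poly(n)}$ denominator is the one place where a sloppy constant would break the reduction, so I would track it explicitly.
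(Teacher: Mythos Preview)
Your overall approach is the same as the paper's: expand the verifier's accepting POVM in the Pauli basis, normalize to get a unit vector $\vec c$, and set $\gamma$ to the midpoint so that the $\Delta/2$ margin survives the $\epsilon$-slack after accounting for the (at most exponential) normalization factor $\|\vec d\|$. Your soundness analysis via Cauchy--Schwarz is exactly what the paper does.

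However, there is one step you gloss over that the paper handles explicitly, and your description of the $\wval$ sides is inverted. In $\wval$, the YES case demands a witness in the \emph{deflated} set $S(\cm,-\epsilon)$, not the inflated one. So in completeness you are given some $\vec r\in\cm$ with $\vec c\cdot\vec r\ge\gamma+\Delta/(2\|\vec d\|)$, and you must produce a point \emph{strictly interior} to $\cm$ (with an $\epsilon$-ball around it still inside $\cm$) that still beats $\gamma+\epsilon$. Cauchy--Schwarz alone does not give this: it bounds how much the objective changes when you move, but it does not tell you that you \emph{can} move inward while staying in $\cm$. The paper uses the $\vec 0$-centeredness from \Cref{claim:wpurif_convex_properties}: it scales $\vec r$ toward the origin to get $\vec z=(1-\epsilon/\|\vec r\|)\vec r$, and then a short geometric argument (convex hull of $\vec r$ with the inner ball of radius $r=4^{-p(n)}$) shows $B_{\epsilon/4^{p(n)+1}}(\vec z)\subseteq\cm$, i.e.\ $\vec z\in S(\cm,-\epsilon/4^{p(n)+1})$. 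This is where an extra $4^{p(n)}$ factor enters the final $\epsilon$; without this step the completeness direction of the reduction is not actually verified. Also, your appeal to \Cref{claim:euclidean_bloch_atmost_trace_distance} is unnecessary here, since you are already working with the Euclidean inner product on Bloch vectors and never need to go back to trace distance.
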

\begin{proof}
Suppose the verification circuit is $V$, and let the proof be a state on $p(n)$ qubits.
We represent $V$ as a vector $\vec{c}$, so the acceptance probability of $V$ on $\rho$ is affine  to the inner product of $\vec{c}$ with the Bloch vector representation of $\rho$.
The quantity $\Tr[V \rho]$ can be decomposed via \cref{eqn:rho_decomposition} as
\begin{align*}
    \Tr[V \rho] = \frac{1}{2^{p(n)}}\Tr[V] + \frac{1}{2} \sum_{i=1}^{4^{p(n)}-1} \vec{r}_i \cdot \Tr[V \sigma_i]\,,
\end{align*}
where $\vec{r} \in \mathbb{R}^{4^{p(n)}-1}$ is the Bloch vector representation of $\rho$.
Let $\vec{d}$ be the vector where $\vec{d}_i = \Tr[V \sigma_i]$, and choose $\vec{c} \defeq \frac{\vec{d}}{\|\vec{d}\,\|}$. We now argue that there is a small enough $\epsilon$ so that deciding $\wval_\epsilon(\cm)$ with $\vec{c}$ decides the $\QMAP$ problem.

Since $V$ is generated by a polynomial-time Turing machine, there is a polynomial $q$ such that each matrix element of $V$ has norm at most $2^{q(n)}$. So, each vector element $\vec{d}_i = \Tr[V \sigma_i]$ has norm at most $2^{p(n) + q(n)}$. This implies that $\|\vec{d}\,\|$ is at most $2^{2p(n)+q(n)}$.

Suppose the $\QMAP$ problem has soundness $s$. Then in soundness, for any state $\rho \in \cm$, we have $\Tr[V \rho] \le s$, or equivalently,
\begin{align}
    \vec{c} \cdot \vec{r} = \frac{2}{\|\vec{d}\,\|} \left( \Tr[V \rho] - \frac{1}{2^{p(n)}} \Tr[V] \right) \le  \frac{2}{\|\vec{d}\,\|} \left( s - \frac{1}{2^{p(n)}} \Tr[V] \right)\,.
    \label{eq:wval-reduction-soundness}
\end{align}
The $\QMAP$ problem has completeness $s + \Delta$ by definition. In this case, there exists $\rho \in \cm$ where $\Tr[V \rho] \ge s + \Delta$, or equivalently, 
\begin{align}
     \vec{c} \cdot \vec{r} \ge \frac{2}{\|\vec{d}\,\|} \left( s - \frac{1}{2^{p(n)}} \Tr[V] \right) + \frac{2}{\|\vec{d}\,\|} \Delta \,.
     \label{eq:wval-reduction-completeness}
\end{align}

Define the parameter 
\[\gamma \defeq \frac{2}{\|\vec{d}\,\|} \left( s - \frac{1}{2^{p(n)}} \Tr[V] \right) + \frac{1}{\|\vec{d}\,\|}\Delta.\] 
We verify this problem can be solved by the \emph{weak} optimization problem $\wval_{\epsilon/4^{p(n)+1}}(\cm)$ for 
\begin{align*}
    \epsilon \defeq \min\left(\frac{1}{2 \cdot 4^{p(n)}}, \frac{\Delta}{2\|\vec{d}\,\|}\right)\,.
\end{align*}

\begin{itemize}
\item
In the soundness case, consider $\vec{y} \in S(\cm, \epsilon)$. By definition, there exists a vector $\vec{x} \in \cm$ such that $\|\vec{y} - \vec{x} \| \le \epsilon$.
So by Cauchy-Schwarz, for unit vector $\vec c$, we have
\begin{align*}
    \vec{c} \cdot \vec{y} 
    = \vec{c} \cdot \vec{x} + \vec{c} \cdot \left(\vec{y} - \vec{x} \right)
    \le \vec{c} \cdot \vec{x}  + \|\vec{c}\,\| \|\vec{y} - \vec{x} \| \le  \vec{c} \cdot \vec{x} + \epsilon\,.
\end{align*}
Since for any $\vec{x} \in \cm$, we have $\vec{c} \cdot \vec{x} \le \gamma - \frac{\Delta}{\|\vec{d}\,\|} \le \gamma - 2\epsilon$ by \cref{eq:wval-reduction-soundness}, the above inequality implies $\vec{c} \cdot \vec{y} \le \gamma - \epsilon$.

\item
In completeness, there is an $\vec{r} \in \cm$ such that $\vec{c} \cdot \vec{r} \ge \gamma + \frac{\Delta}{\|\vec{d}\,\|} \ge \gamma + 2\epsilon$ by \cref{eq:wval-reduction-completeness}. We split into two cases, depending on the norm of $\vec{r}$: 
    \begin{itemize}
        \item If $\|\vec{r}\,\| \le \epsilon \le \frac{1}{2 \cdot 4^{p(n)}}$, then it must be in $S(\cm, -\frac{1}{2 \cdot 4^{p(n)}}) \subseteq S(\cm, -\epsilon)$. 
        This is because by~\Cref{claim:wpurif_convex_properties}, all vectors of norm at most $\frac{1}{4^{p(n)}}$ are in $\cm$.
    \item Otherwise, define $\vec{z} \defeq \vec{r} \cdot \frac{\|\vec{r}\,\| - \epsilon}{\|\vec{r}\,\|}$. Since $\vec{0} \in \cm$, $\vec{z} \in \cm$ by convexity. 
    Moreover, 
    \begin{align*}
        \vec{c} \cdot \vec{z} = \vec c \cdot \vec r + \vec c \cdot (\vec z- \vec r)  \ge \vec c \cdot \vec r - \epsilon \ge \gamma +\epsilon\,.
    \end{align*}
    Finally, by a geometric argument\footnote{
    Consider a $\vec{0}$-centered convex set $K$ with inner radius $r$. Then for any vector $\vec{x} \in K$ and $0 \le \alpha \le 1$, the set $K$ contains a ball centered at $(1-\alpha) \vec{x}$ of radius at least $\frac{1}{2} \alpha r$.
    },
    $B_\ell(\vec{z}) \subseteq \cm$ for $\ell = \frac{1}{2} \cdot \frac{\epsilon}{\|\vec{r}\,\|} \cdot \frac{1}{4^{p(n)}} \ge \frac{\epsilon}{4 \cdot 4^{p(n)}}$.
    So then $\vec{z} \in S(\cm, -\frac{\epsilon}{4 \cdot 4^{p(n)}})$.
    \qedhere
    \end{itemize}
\end{itemize}
\end{proof}

The second step uses a duality between optimization over a convex set $K$ and membership testing of $K$. In our case, we need a reduction from optimization to membership testing. We use a reduction that runs in exponential time given our setup.
The runtime of this reduction depends on \emph{the encoding size} of the convex set $K$, denoted by $\langle K \rangle$. This is defined as the dimension plus the number of bits required to specify $r, R, \vec{p}$~\cite{gurvits2003classicaldeterministiccomplexityedmonds,gharibian2009strong}. In our case, $\langle \cm \rangle$ is at most $O(2^{\poly(n)})$.
\begin{theorem}[{\cite[Proposition 2.8]{liu2007thesis}}]
Let $K \subseteq \mathbb{R}^d$ be a convex, compact, and well-bounded $\vec{p}$-centered set with associated radii $(R, r)$.
Given an instance $\Pi = (K, c, \gamma, \epsilon)$ of $\wval_\epsilon(K)$ with $0 < \epsilon < 1$, there exists an algorithm which runs in time $\poly(\langle K \rangle, R, \lceil 1/\epsilon \rceil)$, and solves $\Pi$ using an oracle for $\wmem_{\beta}(K)$ with $1/\beta = O(\poly(d, \epsilon, R, 1/r))$.
\end{theorem}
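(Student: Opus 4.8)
The plan is to prove the optimization-to-membership direction of the Grötschel–Lovász–Schrijver equivalences, specialized so that the error parameters are tracked explicitly rather than buried inside the ellipsoid method. Abstractly, deciding $\wval_\epsilon(K)$ amounts to comparing the support value $h_K(\vec c) \defeq \max_{x \in K} \vec c \cdot x$ against the threshold $\gamma$, up to the promised slack $\epsilon$. First I would reduce this to a one-dimensional search: since $K \subseteq B_R(\vec 0)$, we have $|\vec c \cdot x| \le R$ on $K$, so $h_K(\vec c)$ lives in $[-R, R]$, and a bisection on a candidate value $t$ localizes it to any desired additive accuracy in $O(\log(R/\epsilon))$ rounds. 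Each round reduces to a single feasibility question: does $K$ (robustly) meet the halfspace $\{x : \vec c \cdot x \ge t\}$?

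The promise structure of $\wval_\epsilon$ is what makes coarse, membership-only tests sufficient. In the YES case the witness $\vec y$ lies in $S(K,-\epsilon)$, so an entire ball $B_\epsilon(\vec y) \subseteq K$ has objective value at least $\gamma$ throughout (for $z \in B_\epsilon(\vec y)$, $\vec c \cdot z \ge \vec c \cdot \vec y - \epsilon \ge \gamma$, using $\|\vec c\| = 1$); in the NO case even the fattening $S(K,\epsilon)$ is disjoint from $\{x : \vec c \cdot x \ge \gamma - \epsilon\}$. Thus there is a $2\epsilon$-wide ``dead zone'' around the decision boundary, and it suffices to detect a feasible point of value $\ge t$ up to error comparable to $\epsilon$; small membership mistakes made strictly inside this zone cannot flip the final answer, provided the membership precision $\beta$ is taken polynomially small relative to $\epsilon$, $d$, $R$, and $1/r$.

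The crux — and the reason a genuine reduction is needed rather than a one-line argument — is the feasibility subroutine itself. The tempting shortcut, a line search outward from the interior center $\vec p$ along the direction $\vec c$ using bisection and the $\wmem$ oracle, does \emph{not} work: because the inner radius $r$ may be exponentially small (indeed $r = 1/4^{p(n)}$ for $\cm$), the value certified along the central ray can fall short of $h_K(\vec c)$ by an amount that does not shrink as the accuracy improves, since the maximizer may lie far off-axis and convexity with the tiny ball $B_r(\vec p)$ lifts the ray by only an $O(r/R)$ fraction toward it. One must therefore actually locate a possibly off-axis near-maximizer using only membership access. Following~\cite{liu2007thesis}, this is done by querying the $\wmem$ oracle on a carefully chosen sequence of points that, using convexity, either exhibits an approximately feasible point of value $\ge t$ or certifies that none exists; the delicate part is the error analysis showing that setting $1/\beta = O(\poly(d, 1/\epsilon, R, 1/r))$ keeps every comparison correct despite the exponentially small $r$. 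I expect this error propagation to be the main obstacle, and it is precisely what the cited result carries out.

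Finally, I would assemble the pieces. After $O(\log(R/\epsilon))$ bisection rounds, each making $\poly(\langle K \rangle)$ membership queries at precision $\beta$, the value $h_K(\vec c)$ is pinned down to within $\epsilon$, at which point comparing it to $\gamma$ decides $\wval_\epsilon(K)$ correctly on both horns of the promise. Bookkeeping the number of queries together with the arithmetic on rationals of size $\poly(\langle K \rangle)$ yields the claimed total runtime $\poly(\langle K \rangle, R, \lceil 1/\epsilon \rceil)$, completing the reduction.
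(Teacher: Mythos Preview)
The paper does not prove this theorem at all: it is quoted as a black-box result from Liu's thesis and immediately used to derive \Cref{cor:start_to_wmem}. There is thus no ``paper's own proof'' to compare against.

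Your proposal is not a self-contained proof either. The outer shell you describe (bisection on $t\in[-R,R]$ to localize $h_K(\vec c)$, plus the $2\epsilon$ dead-zone observation) is fine and standard, but you correctly identify that the entire difficulty lies in the feasibility subroutine---deciding, using only $\wmem$ queries, whether $K$ robustly meets a halfspace---and then you explicitly punt that step back to the very citation the theorem is drawn from (``Following~\cite{liu2007thesis}\ldots it is precisely what the cited result carries out''). That is circular: the feasibility-from-membership step \emph{is} the content of the theorem, so invoking Liu to supply it is not a proof of Liu's proposition.

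Two smaller remarks. First, the paper's related-work section describes Liu's reduction as a ``perceptron-like algorithm,'' which is structurally different from your bisection-plus-feasibility decomposition; so even as an exposition of the cited result, your sketch may not reflect what actually happens there. Second, note the stated runtime is $\poly(\langle K\rangle, R, \lceil 1/\epsilon\rceil)$, i.e.\ polynomial in $1/\epsilon$ rather than $\log(1/\epsilon)$; your $O(\log(R/\epsilon))$ bisection rounds alone would beat this, which is another hint that the heavy lifting (and the $\poly(1/\epsilon)$ cost) lives entirely inside the subroutine you did not spell out.
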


Putting the first two steps together, we obtain the following:
\begin{corollary}
\label{cor:start_to_wmem}
    Every problem in $\QMAP$ with completeness-soundness gap $\Delta$ has an exponential-time reduction to $\wmem_{\beta}(\cm)$, where $\beta = \Omega(\frac{\Delta}{2^{\poly(n)}})$.
\end{corollary}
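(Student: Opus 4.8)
The plan is to compose the two reductions established earlier in this section: the many-one reduction from any $\QMAP$ problem to $\wval_\epsilon(\cm)$, and the Turing reduction from $\wval_\epsilon(\cm)$ to $\wmem_\beta(\cm)$ given by \cite[Proposition 2.8]{liu2007thesis}. The corollary is essentially a matter of chaining these two and tracking how the error parameters degrade, so I would not expect any new ideas beyond careful bookkeeping.

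First I would invoke the preceding claim, which shows that every $\QMAP$ problem with completeness-soundness gap $\Delta$ reduces in exponential time to $\wval_\epsilon(\cm)$ with $\epsilon = \Omega(\frac{\Delta}{2^{\poly(n)}})$. Since $\cm$ is convex, compact, and well-bounded $\vec{0}$-centered with radii $R = \sqrt{2}$ and $r = \frac{1}{4^{p(n)}}$ (by \Cref{claim:wpurif_convex_properties} and \Cref{claim:wpurif_compact}), it satisfies the hypotheses of the Turing reduction theorem. I would then apply that theorem to the resulting $\wval_\epsilon(\cm)$ instance. The theorem produces an algorithm running in time $\poly(\langle \cm \rangle, R, \lceil 1/\epsilon \rceil)$ that solves $\wval_\epsilon(\cm)$ using an oracle for $\wmem_\beta(\cm)$, where $1/\beta = O(\poly(d, \epsilon, R, 1/r))$.

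Next I would verify that the composite procedure still runs in exponential time and that $\beta$ stays inverse-exponentially large. Here $d = 4^{p(n)} - 1$, $R = \sqrt{2}$, $1/r = 4^{p(n)}$, and $1/\epsilon = O(\frac{2^{\poly(n)}}{\Delta})$, while the encoding size $\langle \cm \rangle$ is at most $O(2^{\poly(n)})$ as noted in the text. Each of these quantities is at most exponential in $n$, so $\poly$ of them remains exponential in $n$; this confirms the reduction runs in exponential time. Plugging the same bounds into $1/\beta = O(\poly(d, \epsilon, R, 1/r))$ gives $1/\beta = O(\frac{2^{\poly(n)}}{\Delta})$, i.e. $\beta = \Omega(\frac{\Delta}{2^{\poly(n)}})$, matching the claimed bound.

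The main obstacle, such as it is, lies not in any conceptual difficulty but in confirming that the polynomial dependence on $1/\epsilon$ in the Turing reduction's runtime does not blow the total time past exponential once $\epsilon$ is itself inverse-exponential, and in checking that the many-one reduction's exponentially long output is consumed correctly as the input whose size drives the polynomial runtimes of steps (2) and (3). Both points are handled by the observation (already flagged in the section's footnote) that all relevant parameters are at most $2^{\poly(n)}$, so that a polynomial in them is again $2^{\poly(n)}$; hence the whole pipeline is an exponential-time reduction to $\wmem_\beta(\cm)$ with the stated $\beta$.
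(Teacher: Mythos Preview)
Your proposal is correct and matches the paper's approach exactly: the paper's entire proof is the sentence ``Putting the first two steps together, we obtain the following,'' and you have simply filled in the bookkeeping that this sentence leaves implicit. Your parameter tracking is sound, and the runtime check via the observation that every relevant quantity is at most $2^{\poly(n)}$ is precisely what the paper's footnote flags.
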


In the third step, we show that $\wmem(\cm)$ can be solved in exponential time. We do this by testing membership of convex sets $K_1,K_2$ whose intersection $K_1 \cap K_2 = \cm$. 
\begin{theorem}[{\cite[Section 4.7]{grotschel2012geometric}}]
\label{thm:solve_wmem_through_intersection}
For each $i \in \{1,2\}$, let $K_i \subseteq \mathbb{R}^d$ be a convex, compact, and well-bounded $\vec{p_i}$-centered set with associated radii $(R_i, r_i)$. Suppose we are given an $r_3 \in \mathbb{Q}$ such that $K_1 \cap K_2$ contains a ball with radius $r_3$. Then $K_1 \cap K_2$ is a well-bounded convex body, and given a instance $\Pi = (K_1 \cap K_2, y, \beta)$ of $\wmem_\beta(K)$, there exists an algorithm which runs in constant time and solves $\Pi$ using an oracle for $\wmem_{\zeta}(K_1)$ and $\wmem_{\zeta}(K_2)$, with  $\zeta = \beta \cdot \frac{r_3}{2 \min(R_1, R_2)}$.
\end{theorem}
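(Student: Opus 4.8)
The plan is to solve $\wmem_\beta(K_1 \cap K_2)$ with the obvious algorithm: on input $y$, query the oracles $\wmem_\zeta(K_1)$ and $\wmem_\zeta(K_2)$ on the \emph{same} point $y$, and output YES if and only if both oracles return YES. This makes only two oracle calls with constant arithmetic overhead, matching the claimed constant running time. First I would dispatch the structural claim that $K_1 \cap K_2$ is a well-bounded convex body: it is convex and compact as an intersection of convex compact sets, it is contained in $B_{\min(R_1,R_2)}(\vec 0)$, and by hypothesis it contains a ball of radius $r_3$, say centered at $\vec p_3$, so it is well-bounded $\vec p_3$-centered with radii $(\min(R_1,R_2), r_3)$.

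Correctness splits along the two promise cases. The YES case is immediate: if $y \in S(K_1 \cap K_2, -\beta)$ then $B_\beta(y) \subseteq K_1 \cap K_2 \subseteq K_i$ for each $i$, so $y \in S(K_i, -\beta) \subseteq S(K_i, -\zeta)$, using $\zeta \le \beta$ (indeed $\zeta = \beta \cdot \frac{r_3}{2\min(R_1,R_2)} \le \beta/2$, since an inner ball of radius $r_3$ forces $r_3 \le \min(R_1,R_2)$). Hence both oracles must answer YES and the algorithm is correct here. The substance lies in the NO case, which I would establish by contraposition: assuming $y \in S(K_1, \zeta) \cap S(K_2, \zeta)$, so that both oracles \emph{could} answer YES, I must exhibit a point of $K_1 \cap K_2$ within distance $\beta$ of $y$, contradicting $y \notin S(K_1 \cap K_2, \beta)$.

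For this I would use a ``pushing toward the core'' argument. Pick $x_i \in K_i$ with $\|y - x_i\| \le \zeta$, set $\lambda = \zeta/r_3$, and consider $z = (1-\lambda)y + \lambda \vec p_3$. The key elementary fact (the same flavor as the geometric footnote used in the $\wval$ reduction) is that the convex hull of $\{x_i\} \cup B_{r_3}(\vec p_3)$ contains the ball $B_{\lambda r_3}\bigl((1-\lambda)x_i + \lambda \vec p_3\bigr)$; since both $x_i$ and $B_{r_3}(\vec p_3)$ lie in the convex set $K_i$, so does this ball. Because $\|z - ((1-\lambda)x_i + \lambda \vec p_3)\| = (1-\lambda)\|y - x_i\| \le \zeta = \lambda r_3$, the point $z$ lies in that ball, hence $z \in K_i$; as this holds for both $i$, we conclude $z \in K_1 \cap K_2$. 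Finally $\|y - z\| = \lambda \|y - \vec p_3\| = \frac{\zeta}{r_3}\|y - \vec p_3\| \le \frac{\zeta}{r_3}\cdot 2\min(R_1,R_2) = \beta$, where the bound $\|y - \vec p_3\| \le 2\min(R_1,R_2)$ follows since both $\vec p_3$ and (up to the $\zeta$ slack) $y$ sit in $B_{\min(R_1,R_2)}(\vec 0)$. This is exactly the relationship encoded in the stated value of $\zeta$.

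The main obstacle is this NO-case geometry: the choice of $\lambda$ must be \emph{large} enough that $z$ remains inside both $K_i$, forcing $\lambda \ge \zeta/r_3$, yet \emph{small} enough that $z$ stays within $\beta$ of $y$, forcing $\lambda\|y-\vec p_3\| \le \beta$. Reconciling these two constraints is precisely what pins down $\zeta = \beta r_3/(2\min(R_1,R_2))$. The only delicate point is the slightly loose estimate $\|y - \vec p_3\| \le 2\min(R_1,R_2)$, where the $O(\zeta)$ overshoot from $y$ being $\zeta$-outside $K_i$ must be absorbed into constants (harmless in our regime, where $\zeta$ is inverse-exponentially small). The YES case and the well-boundedness of $K_1 \cap K_2$ are routine by comparison.
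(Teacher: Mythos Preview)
The paper does not prove this theorem; it is quoted from Gr\"otschel--Lov\'asz--Schrijver with only a citation, so there is no in-paper argument to compare against. Your proposal is the standard textbook proof: run both membership oracles on the same point and return the logical AND, with the NO-case handled by shrinking $y$ toward the common inner center $\vec p_3$. That is exactly the argument in \cite[Section~4.7]{grotschel2012geometric}, and your identification of the push-in parameter $\lambda=\zeta/r_3$ and the resulting constraint $\zeta\cdot\|y-\vec p_3\|/r_3\le\beta$ is the right mechanism.

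The one honest gap you already flag is real but cosmetic for the paper's purposes: from $y\in S(K_i,\zeta)$ you only get $\|y-\vec p_3\|\le 2\min(R_1,R_2)+\zeta$, so with the stated $\zeta=\beta r_3/(2\min(R_1,R_2))$ the bound on $\|y-z\|$ overshoots $\beta$ by $\zeta^2/r_3$. This means the precise constant in the theorem statement is not recovered by your argument as written; one either tightens $\zeta$ by an $O(\zeta)$ factor, or pre-tests $\|y\|\le\min(R_1,R_2)$ (rejecting otherwise, which is safe since then $y$ is at distance $>\! \zeta$ from $K_1\cap K_2$ and hence, for the parameter regime $\beta\ge\zeta$ in force here, outside $S(K_1\cap K_2,\beta)$ unless very close to the boundary, but more simply it cannot be a YES instance). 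In the paper's application every relevant quantity is exponential and only polynomial dependence matters, so this slack is immaterial; but if you want the exact constant as stated, the cleanest fix is the preliminary norm check on $y$ so that $\|y-\vec p_3\|\le 2\min(R_1,R_2)$ holds exactly.
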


To decide $\wmem(K_2)$, we must test that each coordinate of the vector has magnitude at most $2$, and that the vector's projection to the subsystem (as in \Cref{remark:blochvector_subsystem}) is proportional to a Bloch vector of a separable state. 
Fortunately, since one part of the subsystem has $O(1)$ qubits, we can test the latter using an algorithm that runs in time polynomial in dimension, i.e. exponential time.

There are many known algorithms to test whether a Bloch vector is separable; see \cite{ioannou2007computationalcomplexityquantumseparability} for a (slightly dated) comparison. In this work, we rely on an algorithm which searches over \emph{symmetric extensions} of a state. This algorithm takes an input $\rho \in \cd(X,Y)$ and tries to find a \emph{extension}; i.e. a state $\sigma \in \mathcal{D}(X,Y_1, \dots, Y_k)$ such that the reduced density matrix of $\sigma$ on $X$ and any $Y$ register is equal to $\rho$. A state is separable if and only if it is $k$-extendable for all $k$, so this test always succeeds at large enough $k$ (although this may not be polynomial in the number of qubits).
This algorithm was initiated by works of Doherty-Parrilo-Spedalieri \cite{dps_02,Doherty_2004} and the quantum de Finetti theorem~\cite{Caves_2002,Christandl_2007}, but see follow-ups  \cite{Navascu_s_2009,Brand_o_2012_search,Harrow_2017_anand} which find ways to quantify and reduce the value of $k$ required to test separability.
\begin{claim}[{\cite[Corollary 1 and Section 3.5]{ioannou2007computationalcomplexityquantumseparability}}]
\label{claim:entanglementtest}
    Let $\mathcal{S}_{N_1,N_2}$ be the set of separable states of dimension $N_1 \times N_2$. Then there is an algorithm to decide $\wmem_{\beta}(\mathcal{S}_{N_1,N_2})$ (with respect to the trace distance) in time $\poly((1/\beta)^{9N_2}, N_1, N_2)$.
\end{claim}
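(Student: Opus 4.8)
The plan is to realize this weak-membership test through the Doherty--Parrilo--Spedalieri hierarchy of semidefinite programs, together with a quantitative de Finetti theorem to control its convergence rate. First I would invoke the characterization underlying the hierarchy: a bipartite state $\rho \in \cd(X,Y)$ with $\dim X = N_1$ and $\dim Y = N_2$ is separable if and only if, for every $k \in \NN$, it admits a \emph{$k$-symmetric extension} --- a state $\sigma \in \cd(X,Y_1,\dots,Y_k)$ invariant under permutations of $Y_1,\dots,Y_k$ with $\Tr_{Y_2 \cdots Y_k}[\sigma] = \rho$. Crucially, the extension is taken on the \emph{smaller} register $Y$. For each fixed $k$, the existence of such a $\sigma$ is a semidefinite feasibility problem whose variable is a positive semidefinite operator supported on $X \otimes \mathrm{Sym}(Y_1,\dots,Y_k)$, subject to the linear partial-trace and normalization constraints (one may additionally impose positivity of partial transposes to tighten convergence, though this is not needed for the stated bound). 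The dimension of the matrix variable is $N_1 \cdot \binom{N_2 + k - 1}{k} = \poly(N_1, k^{N_2})$, so the SDP can be solved by interior-point or ellipsoid methods in time $\poly(N_1, k^{N_2})$.

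Next I would supply the quantitative convergence that turns ``feasible/infeasible'' into the weak-membership promise. The finite de Finetti theorem (Christandl--K\"onig--Mitchison--Renner) gives that any $\rho$ admitting a $k$-symmetric extension on the $Y$ side lies within trace distance $O(N_2^2/k)$ of $\sep(X,Y)$; choosing $k = \Theta(N_2^2/\beta)$ makes this distance strictly smaller than $\beta$. The test then runs the $k$-extension SDP and outputs YES iff it is feasible. For completeness, if $\rho \in S(\mathcal{S}_{N_1,N_2}, -\beta)$ then $\rho$ is separable, hence $k$-extendable for every $k$, so the SDP is feasible. For soundness, if $\rho \notin S(\mathcal{S}_{N_1,N_2}, \beta)$ then $\rho$ is more than $\beta$-far from every separable state, so by the de Finetti bound no $k$-extension can exist and the SDP is infeasible. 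Substituting $k = \Theta(N_2^2/\beta)$ into the SDP size yields runtime $\poly((1/\beta)^{N_2}, N_1, N_2)$, and tracking the constant through the symmetric-subspace dimension and the solver's cost pins it to the stated $(1/\beta)^{9N_2}$.

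The main obstacle is the interface between the \emph{exact} de Finetti convergence and the \emph{finite-precision} SDP, which is exactly where the ``careful error analysis'' of the reference does its work. Because the promise only guarantees a separation of width $\beta$ between the YES and NO cases, I must ensure that the solver's additive tolerance, the rounding of the returned (in)feasibility certificate, and the slack in the de Finetti inequality together come to less than this $\beta$ gap; this forces solving the SDP to precision inverse-polynomial in $\beta$ and requires the feasibility region to be well-bounded and to contain a ball of known radius, of the kind established for the separable-state polytope underlying $K_2$ in \Cref{claim:k1k2properties}. Finally, optimizing the constant $9$ in the exponent is bookkeeping: one balances the de Finetti rate (which fixes $k$) against the blow-up $\binom{N_2+k-1}{k}$ and the polynomial overhead of the solver, and this trade-off is where the precise value of the exponent is determined.
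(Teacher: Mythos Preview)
The paper does not prove this claim itself but cites it from \cite{ioannou2007computationalcomplexityquantumseparability}; the surrounding discussion attributes the algorithm to the Doherty--Parrilo--Spedalieri symmetric-extension hierarchy combined with the quantum de Finetti theorem. Your proposal develops exactly this mechanism, so it matches the paper's treatment (the specific exponent $9$ is likewise deferred to the cited reference rather than derived in the paper).
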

By \Cref{claim:euclidean_bloch_atmost_trace_distance}, we know that \Cref{claim:entanglementtest} also holds with respect to Euclidean distance of the real vectors. As a consequence:
\begin{theorem}
\label{cor:wmem_k2}
    We can decide $\wmem_\zeta(K_2)$ in exponential time, for any $\zeta$ that is at least an inverse exponential in $n$.
\end{theorem}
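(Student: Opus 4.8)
The plan is to decide $\wmem_\zeta(K_2)$ by using that $K_2$ splits as a Cartesian product across a partition of its $4^{p(n)}-1$ coordinates, and then handling each factor separately. Partition the Bloch coordinates into the set $I$ of \emph{subsystem coordinates} (those $i$ for which $\sigma_i$ is identity on every qubit outside $BC$, as in \Cref{remark:blochvector_subsystem}) and the complementary set $J$. By the definition of $K_2$ together with \Cref{remark:blochvector_subsystem}, a vector $\vec y$ lies in $K_2$ exactly when (i) the rescaled subvector $\sqrt{d_A}\,\vec y_I$ is the Bloch vector of a \emph{separable} state on $BC$, where $d_A$ is the dimension of register $A$, and (ii) each coordinate in $J$ has magnitude at most $2$. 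Since separable Bloch vectors have norm at most $\sqrt2 < 2$ by \Cref{claim:blochvectors_bounded}, the box constraint is vacuous on $I$, so $K_2 = K^{\mathrm{sep}} \times [-2,2]^J$ with $K^{\mathrm{sep}} = \tfrac{1}{\sqrt{d_A}}\,\mathcal{S}_{N_1,N_2}$, $N_1 = \dim B = 2^{\poly(n)}$, and $N_2 = \dim C = k = O(1)$.

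For a Cartesian product of convex sets, weak membership reduces to weak membership of the factors: I claim that deciding $\wmem_{\zeta/\sqrt2}$ on each factor and taking the logical AND decides $\wmem_\zeta(K_2)$. This is a short geometric argument. If $\vec y \in S(K_2,-\zeta)$, then projecting the ball $B_\zeta(\vec y)$ onto each block of coordinates shows $\vec y_I \in S(K^{\mathrm{sep}},-\zeta)$ and $\vec y_J \in S([-2,2]^J,-\zeta)$, so both factor-tests accept; conversely, if both $(\zeta/\sqrt2)$-balls meet their respective factors, concatenating the two witnesses gives a point of $K_2$ within Euclidean distance $\sqrt{(\zeta/\sqrt2)^2+(\zeta/\sqrt2)^2}=\zeta$ of $\vec y$, so $\vec y\notin S(K_2,\zeta)$ cannot hold. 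The box factor is immediate: $\wmem_{\zeta/\sqrt2}([-2,2]^J)$ is decided coordinate-by-coordinate in time $O(\lvert J\rvert)=O(4^{p(n)})$, which is exponential.

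For the separable factor, rescaling by $\sqrt{d_A}$ turns $\wmem_{\zeta/\sqrt2}(K^{\mathrm{sep}})$ on $\vec y_I$ into a weak-membership test for $\mathcal{S}_{N_1,N_2}$ on $\sqrt{d_A}\,\vec y_I$ at precision $\sqrt{d_A}\cdot\zeta/\sqrt2 \ge \zeta/\sqrt2$ (since $d_A\ge1$, rescaling only improves the precision). I would discharge this using the symmetric-extension algorithm of \Cref{claim:entanglementtest}, which I apply with respect to Euclidean distance via \Cref{claim:euclidean_bloch_atmost_trace_distance} at a constant-factor cost in precision, running in time $\poly((1/\beta)^{9N_2}, N_1, N_2)$ with $\beta = \Omega(\zeta)$.

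The main obstacle --- indeed the whole point of the $O(1)$ restriction on register $C$ --- is checking that this runtime is only \emph{singly} exponential. Because $\zeta \ge 1/2^{\poly(n)}$ we have $1/\beta \le 2^{\poly(n)}$, and because $N_2 = k = O(1)$ the exponent $9N_2$ stays constant, so $(1/\beta)^{9N_2} = 2^{O(k\,\poly(n))} = 2^{\poly(n)}$; together with $N_1 = 2^{\poly(n)}$ this gives exponential time. Were $k$ polynomial instead, $9N_2$ would itself be exponential and the separability test would blow up to doubly exponential time, so the constant local dimension of $C$ is precisely what keeps the test in $\EXP$. Combining the two factor-tests then yields $\wmem_\zeta(K_2) \in \EXP$ for every inverse-exponential $\zeta$.
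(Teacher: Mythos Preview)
Your proposal is correct and follows essentially the same approach as the paper: check the box constraints on the non-subsystem coordinates, then run the symmetric-extension separability test of \Cref{claim:entanglementtest} on the (rescaled) $BC$ projection, using that $\dim(C)=O(1)$ keeps the exponent $9N_2$ bounded so the runtime stays singly exponential. Your explicit Cartesian-product framing and the clean $\sqrt2$-factor reduction for $\wmem$ on a product of convex bodies is a slightly tidier packaging of the error analysis than the paper's ad~hoc treatment (which carries around an extra factor of $M$ in the precision), but the two arguments are interchangeable.
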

\begin{proof}
Let $M = 2^{p(n)}$ be such that $K_2 \subseteq \mathbb{R}^{M^2 - 1}$. Consider any vector $\vec{r} \in \mathbb{R}^{M^2 - 1}$. We first trivially check if any coordinate is more than $2$; if it is, we stop and report that $\vec{r} \notin K_2$. Otherwise, we consider the projection of $\vec{r}$ to the $BC$ subsystem, which is a vector on fewer dimensions. 
By \Cref{claim:entanglementtest}, we can check this reduced vector is separable up to error $\beta \defeq \frac{\zeta}{2M}$ in time $\poly((2M/\zeta)^{9\dim(C)}, \dim(B), \dim(C))$. Since $\dim(C)$ is at most some uniform constant, this runtime is $\poly(1/\zeta, M)$, i.e. exponential in $n$.

We remark on the error tolerance.
Let $\vert^{BC}$ represent the projection onto the $BC$ subsystem.
By \Cref{remark:blochvector_subsystem}, $K_2$ is the direct sum of $K_2\vert^{BC}$ and a solid hypercube (for the other coordinates).
\begin{itemize}
    \item In the YES case, the input $\vec{x} \in S(K_2, -\zeta)$. So $K_2\vert^{BC}$ contains a ball centered at $\vec{x}\vert^{BC}$ of radius $\zeta$: $\vec{x}\vert^{BC} \in S(K_2\vert^{BC}, -\zeta)$. This is correctly decided since \Cref{claim:entanglementtest} has error at most $\frac{\zeta}{M}$, and the normalization factor is at most $M$.
    \item In the NO case, the input $\vec{x} \notin S(K_2, \zeta)$. We can check each coordinate of $\vec{x}$ is at most $2$ up to bit precision, i.e. inverse double exponential in $n$. If this check succeeds, then $\vec{x}\vert^{BC}$ must be at least distance $\frac{\zeta}{2}$ from $K_2\vert^{BC}$: $\vec{x}\vert^{BC} \notin S(K_2\vert^{BC}, \frac{\zeta}{2})$.
    This is correctly decided since \Cref{claim:entanglementtest} has error at most $\frac{\zeta}{2M}$, and the normalization factor is at most $M$.\qedhere
\end{itemize}
\end{proof}

To decide $\wmem(K_1)$, we must test whether a vector is a true Bloch vector. 
This amounts to ensuring that the associated density matrix is positive semi-definite, as it is already trace-1 and Hermitian. There are several ways to test this property; for example, we can use \Cref{claim:entanglementtest} with $N_1 = M$ and $N_2 = 1$.
\begin{fact}
\label{fact:wmem_k1}
    We can decide $\wmem_\zeta(K_1)$ in exponential time, for any $\zeta$ that is at least an inverse exponential in $n$.
\end{fact}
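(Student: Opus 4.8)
The plan is to recognize that membership in $K_1$ is just a positive-semidefiniteness test, and to realize that test as a degenerate case of the separability algorithm already invoked for $K_2$. Recall that the decomposition in \cref{eqn:rho_decomposition} turns \emph{every} real vector $\vec{y} \in \mathbb{R}^{M^2-1}$, with $M = 2^{p(n)}$, into a Hermitian trace-$1$ matrix $\rho_y$; so the only constraint defining $K_1$ is that $\rho_y$ be positive semidefinite, i.e. a genuine density matrix. Now I would apply \Cref{claim:entanglementtest} with $N_1 = M$ and $N_2 = 1$. Because the second tensor factor is one-dimensional, every state on $\mathbb{C}^{M} \otimes \mathbb{C}^{1} \cong \mathbb{C}^{M}$ is trivially a product state, so $\mathcal{S}_{M,1} = \cd(\mathbb{C}^M)$ is exactly the set of all density matrices, whose image under the Bloch map is precisely $K_1$. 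Hence the separability tester, run on this degenerate bipartition, decides membership in $K_1$.

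The one point needing care is the change of metric: \Cref{claim:entanglementtest} measures error in trace distance, whereas $\wmem_\zeta(K_1)$ measures error in Euclidean distance on Bloch vectors, so I would set $\beta \defeq \zeta/\sqrt{2}$ and invoke \Cref{claim:euclidean_bloch_atmost_trace_distance}, which gives $\|\vec{r} - \vec{t}\,\|_2 \le \sqrt{2}\,\|\rho - \tau\|_1$ for associated matrices. In the YES case, if $B_\zeta(\vec{y}) \subseteq K_1$, then any Hermitian trace-$1$ matrix $\tau$ within trace distance $\beta$ of $\rho_y$ has Bloch vector within Euclidean distance $\sqrt{2}\beta = \zeta$ of $\vec{y}$, hence lies in $K_1$ and is a genuine density matrix; so $\rho_y \in S(\mathcal{S}_{M,1}, -\beta)$ and the tester accepts. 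In the NO case, if $\vec{y}$ is Euclidean distance $>\zeta$ from $K_1$, then $\rho_y$ must be trace distance $>\beta$ from every density matrix, since otherwise a nearby density matrix $\tau$ would supply (via the same inequality) a point $\Phi(\tau) \in K_1$ within distance $\sqrt{2}\beta = \zeta$ of $\vec{y}$; so $\rho_y \notin S(\mathcal{S}_{M,1}, \beta)$ and the tester rejects. Conveniently, both cases use the inequality in the same direction (trace-small implies Euclidean-small), so no reverse bound is needed. This is exactly the error bookkeeping already carried out for $K_2$ in the proof of \Cref{cor:wmem_k2}, only simpler because there is no subsystem projection.

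For the runtime, \Cref{claim:entanglementtest} with $N_2 = 1$ runs in time $\poly((1/\beta)^{9}, M, 1) = \poly(1/\zeta, M)$; since $M = 2^{p(n)}$ and $1/\zeta$ is at most exponential in $n$ by hypothesis, this is $2^{\poly(n)}$, as claimed. I do not expect a serious obstacle, as the fact is essentially a corollary of the machinery already assembled for $K_2$; the only things to double-check are that the $N_2 = 1$ instance of the symmetric-extension algorithm is well-defined (returning ``separable'' precisely on valid density matrices rather than degenerating) and the single direction of the trace-versus-Euclidean conversion. As an independent sanity check, and as an alternative proof that sidesteps the degenerate separability instance, one may instead form $\rho_y$ explicitly as an $M \times M$ matrix and compute its minimum eigenvalue in time $\poly(M)$, thresholding against a quantity proportional to $\zeta$; relating the most-negative eigenvalue of $\rho_y$ to its Euclidean distance from $K_1$ yields the same conclusion.
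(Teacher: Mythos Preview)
Your proposal is correct and takes essentially the same approach as the paper: the paper's justification for this fact is precisely the single observation that one can invoke \Cref{claim:entanglementtest} with $N_1 = M$ and $N_2 = 1$, and your error conversion via \Cref{claim:euclidean_bloch_atmost_trace_distance} spells out what the paper asserts in one line. Your alternative sanity check via eigenvalue computation also parallels the paper's subsequent Remark about verifying positive-semidefiniteness analytically.
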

\begin{remark}
    In a model of computation that supports exact arithmetic, we can analytically verify that a vector is a true Bloch vector~\cite{Kimura_2003}.
     Choose $\vec{r} \in \mathbb{R}^{M^2 - 1}$ and a set $\{\sigma_i\}_{i \in [M^2-1]}$ of traceless Hermitian generators of $SU(M)$ satisfying $\Tr[\sigma_i \sigma_j] = 2\delta_{ij}$. Let $P$ be the $M \times M$ matrix
    \begin{align*}
        P \defeq \frac{\mathbb{I}}{M} + \frac{1}{2} \sum_{i=1}^{M^2 - 1} \vec{r}_i\sigma_i\,.
    \end{align*}
    Then $P$ is positive semidefinite if and only if $a_k \ge 0$ for all $k \in [M]$, where $a_0 = 1$, and
    \begin{align*}
        k \cdot a_k = \sum_{q=1}^k (-1)^{q-1} \Tr[P^q] a_{k-q}\,.
    \end{align*}
\end{remark}

We can now state the main result of this section:
\begin{theorem}
\label{thm:exp_upper_bound}
    For all $0 \le c < s \le 1$ where $c - s = \Omega(\frac{1}{2^{\poly(n)}})$, we have $\QMAP^{c,s} \subseteq \EXP$.
\end{theorem}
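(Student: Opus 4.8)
The plan is to assemble the three reductions built up in this section into a single chain $\QMAP^{c,s} \le_m \wval_\epsilon(\cm) \le_T \wmem_\beta(\cm) \subseteq \EXP$, and to check that every precision parameter produced along the way stays at worst inverse-exponential in $n$, so that the exponential-time membership testers of \Cref{cor:wmem_k2} and \Cref{fact:wmem_k1} remain applicable. Fix a $\QMAP^{c,s}$ problem with verifier $V$ acting on a $p(n)$-qubit proof, and write $\Delta \defeq c - s = \Omega(1/2^{\poly(n)})$ for the gap. Throughout I identify $\cm$ with its Bloch-vector image, which is legitimate by \Cref{sec:setup} together with \Cref{claim:wpurif_convex_properties,claim:wpurif_compact} establishing that $\cm$ is convex, compact, and well-bounded $\vec 0$-centered.

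First I would apply the many-one reduction of this subsection to obtain a $\wval_\epsilon(\cm)$ instance with $\epsilon = \Omega(\Delta/2^{\poly(n)}) = \Omega(1/2^{\poly(n)})$; this stage itself takes exponential time and emits an exponentially long description of the functional $\vec c$ and of $\cm$. Feeding this into \Cref{cor:start_to_wmem} (which repackages \cite[Proposition 2.8]{liu2007thesis}) yields oracle calls to $\wmem_\beta(\cm)$ with $\beta = \Omega(\Delta/2^{\poly(n)}) = \Omega(1/2^{\poly(n)})$, and runs in $\poly(\langle \cm\rangle, R, 1/\epsilon) = 2^{\poly(n)}$ time since $\langle \cm\rangle \le 2^{\poly(n)}$.

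The core step is deciding $\wmem_\beta(\cm)$. I would write $\cm = K_1 \cap K_2$ and invoke \Cref{thm:solve_wmem_through_intersection}, using that $\cm$ contains a $\vec 0$-centered ball of radius $r_3 = 1/4^{p(n)}$ (\Cref{claim:wpurif_convex_properties}) and that $\min(R_1,R_2) = \sqrt 2$ (\Cref{claim:k1k2properties}). This reduces each query to $\wmem_\zeta(K_1)$ and $\wmem_\zeta(K_2)$ with $\zeta = \beta \cdot \frac{r_3}{2\min(R_1,R_2)} = \Omega(\beta/4^{p(n)})$. The one point genuinely requiring care — and the main obstacle — is confirming that $\zeta$ stays $\Omega(1/2^{\poly(n)})$ rather than collapsing to inverse-double-exponential: it is a product of the inverse-exponential $\beta$ with the inverse-exponential inner radius $r_3$, so the exponents merely add and $\zeta$ remains inverse-exponential. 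This is exactly the regime in which \Cref{cor:wmem_k2} and \Cref{fact:wmem_k1} decide $\wmem_\zeta(K_2)$ and $\wmem_\zeta(K_1)$ in exponential time; a double-exponentially small $\zeta$ would instead drive the separability tester of \Cref{claim:entanglementtest}, whose cost scales as $(1/\zeta)^{9\dim(C)}$, into double-exponential time. Because $\dim(C) = O(1)$ pins that exponent to a constant, both testers run in $2^{\poly(n)}$ time, and composing the constant-time reduction of \Cref{thm:solve_wmem_through_intersection} with them solves $\wmem_\beta(\cm)$ in exponential time.

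Finally I would chain the runtimes: the exponential-time stage (1), composed with stages (2) and (3) that run in time polynomial in their exponential-size inputs, yields an overall $2^{\poly(n)}$ decision procedure for the original $\QMAP^{c,s}$ problem. No further mathematical content is needed beyond this precision bookkeeping; it only remains to confirm each stage is dominated by $2^{\poly(n)}$ and that the $\Omega(1/2^{\poly(n)})$ gaps carried through meet the promise of each basic problem with room to spare, giving $\QMAP^{c,s} \subseteq \EXP$.
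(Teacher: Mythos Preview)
Your proposal is correct and follows essentially the same approach as the paper: chain the reduction $\QMAP^{c,s}\to\wval_\epsilon(\cm)\to\wmem_\beta(\cm)$ via \Cref{cor:start_to_wmem}, decompose $\cm=K_1\cap K_2$ and invoke \Cref{thm:solve_wmem_through_intersection} with $r_3=1/4^{p(n)}$, then discharge $\wmem_\zeta(K_1)$ and $\wmem_\zeta(K_2)$ by \Cref{fact:wmem_k1} and \Cref{cor:wmem_k2}. Your explicit precision-tracking argument that $\zeta$ stays inverse-exponential (exponents add, not multiply) is exactly the content the paper summarizes in one line, and the chaining of runtimes is identical.
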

\begin{proof}
Recall that $\cm = K_1 \cap K_2$, where $K_1$ and $K_2$ are convex, compact, well-bounded $\vec{0}$-centered sets with associated radii $R_1 = \sqrt{2}, R_2 = 4^{p(n)}, r = \frac{1}{4^{p(n)}}$ by \Cref{claim:k1k2properties}. 
    Both $\wmem_\zeta(K_1)$ and $\wmem_\zeta(K_2)$ can be solved in time exponential in $n$, for any $\zeta$ at least an inverse exponential in $n$, by \Cref{fact:wmem_k1} and \Cref{cor:wmem_k2}.
    Moreover, $\cm$ contains a $\vec{0}$-centered ball of radius $r_3 = \frac{1}{4^{p(n)}}$.
    Thus, by \Cref{thm:solve_wmem_through_intersection}, we can solve $\wmem_{\beta}(\cm)$ for any $\beta \ge \zeta \cdot 4^{p(n)+1} \ge \zeta \cdot \frac{2 \min(R_1, R_2)}{r_3}$; i.e. any $\beta$ at least inverse exponential in $n$. 
    
    Consider a problem in $\QMAP$ with completeness-soundness gap $\Delta$. By \Cref{cor:start_to_wmem}, there is an exponential-time reduction to $\wmem_{\beta}(\cm)$, where $\beta = \Omega(\frac{\Delta}{2^{\poly(n)}})$. 
    When $\Delta = \Omega(\frac{1}{2^{\poly(n)}})$, so is $\beta$, and this problem can be decided in exponential time.
\end{proof}

\begin{remark}
It is likely that \Cref{thm:exp_upper_bound} can be improved to a completeness-soundness gap that is inverse \emph{double} exponential in $n$.
This would require two changes. First, the second step needs a reduction that runs in $\poly(\log 1/\epsilon)$ time; this may be satisfied by the shallow-cut ellipsoid method~\cite{grotschel2012geometric,liu2007thesis}.
Second, membership testing of $K_2$ needs a program that runs in $\poly(\log 1/\epsilon)$ time; this can plausibly be done using an improvement to the Doherty-Parrilo-Spedalieri SDP~\cite{Harrow_2017_anand}.
\end{remark}

\section{Purity and internal separability: $\QMAPP = \NEXP$}
\label{sec:qmaw_equals_nexp}
In this section we prove there exist constants $c,s$ representing completeness and soundness for which $\QMAPP^{c,s}$ (the variant of $\QMAP^{c,s}$ restricted to pure quantum proofs) can decide $\NEXP$.

The starting point of our approach is the Blier-Tapp protocol to decide the $\NP$-hard  \textsc{3-COLORING} 
graph problem~\cite{blier2010quantum} in $\QMA_{\log}(2)$, i.e. $\QMA(2)$ with two unentangled quantum proofs each on $O(\log n)$ qubits.
In their protocol, the classical $\NP$ proof is encoded in $O(\log n)$-qubit quantum states of the form $\ket{\psi} = \sum_{v \in V} \ket{v}\ket{c_v}$, where $V$ is the set of vertices of the graph and $c_v$ is a color assigned to vertex $v$. So $\ket\psi$ encodes a coloring to the graph.
With two copies of a quantum proof in this form, the verifier can measure both in the computational basis, obtaining two random vertices $v,w$ with their assigned colors $c_v, c_w$. 
If the two measured vertices are neighbors, one can check whether or not the color constraint is satisfied on this edge.
When the two proofs are  \emph{unentangled}, Arthur can verify that the proofs have the ``correct'' form: (i) each vertex is assigned exactly one color, and (ii) the coloring is consistent across the two proofs.

The catch is that the protocol decides $\NP$ with an \emph{inverse polynomial} completeness-soundness gap.  Thus, a straightforward application of the same protocol (for a succinct version of 3-colorability \cite{papad_succinct}) decides $\NEXP$ with \emph{inverse exponential} completeness-soundness gap. This is too small a gap for $\QMA(2)$.\footnote{Intuitively, Arthur most often sees two vertices which are not connected by an edge and so he cannot check the validity of the coloring. In the $O(\log n)$-qubit version he saw adjacent vertices with an inverse-polynomial probability, but now this occurs with an inverse-exponential probability.}
What causes this small gap in the Blier-Tapp protocol?
The first source is that when the graph is not 3-colorable, there may be only \emph{one} edge that violates the coloring, so it is very difficult to sample this edge. This can be overcome by taking the PCP version of \textsc{3-coloring}. 
The second source is that for \emph{sparse} graphs, the probability of that two vertices chosen at random are adjacent is inverse polynomial in the number of vertices. 
One may try to fix this with a \emph{dense} graph; however, since one can efficiently approximate \textsc{3-coloring} on \emph{dense} graphs~\cite{ARORA1999}, we can no longer use a PCP to fix the first issue.

To simultaneously overcome both issues in the Blier-Tapp protocol and appropriately scale up to $\NEXP$, Jeronimo and Wu~\cite{jeronimo2023power}~consider a different succinct gapped constraint satisfaction problem (CSP).
Their protocol follows a similar outline.\footnote{Each quantum proof, instead of encoding each vertex and a chosen color, now encodes a each constraint and a list of assigned values to its corresponding variables.
For ease of discussion, we continue to call the registers ``vertex'' and ``color'' registers.
This is just a difference in notation, as each constraint only uses a constant number of variables.}
A key part of the protocol in~\cite{jeronimo2023power} is to enforce the same ``correct'' form of each quantum proof, as used in the Blier-Tapp protocol. This form was later named \emph{rigid} in \cite{bassirian2023qmaplus}:
\begin{definition}[Rigid and quasirigid states]
\label{defn:rigid}
A bipartite state $\ket{\psi}\in\CC^R\otimes\CC^\kappa$ is called quasirigid if it is of the form $\ket{\psi} =  \sum_{v \in [R]} \alpha_v \ket{v} \ket{c_v}$ for some complex numbers $\{\alpha_v\}_{v\in[R]}$ and indices $\{c_v\in[\kappa]\}_{v\in[R]}$, and moreover called \emph{rigid} if all $\alpha_v = \frac{1}{\sqrt{R}}$.
\end{definition}
However, \cite{jeronimo2023power} only managed to enforce the \emph{rigidity} of each quantum proof by requiring additional structure, i.e. 
the proofs are unentangled, \emph{and} have non-negative amplitudes in the computational basis.
Bassirian, Fefferman, and Marwaha~\cite{bassirian2023qmaplus} show how to use only the non-negative amplitudes requirement (i.e. without imposing unentanglement) to get the same result. 
This is done using two distinct tests: (i) check that each vertex is assigned exactly one color (i.e. \emph{quasirigid} or \emph{valid}), and (ii) check that every vertex is represented in the proof with nontrivial weight.

In this section we go the other direction: we enforce that a quantum proof is \emph{rigid} for pure states with our specific entanglement structure of \emph{internal separability}.
We closely follow the recipe of \cite{bassirian2023qmaplus, bassirian2024superposition}, adjusting some of their claims (especially \cite[Lemma 10]{bassirian2023qmaplus}) to our scenario.

\subsection{A warm-up with the computational basis}
In \cite{bassirian2024superposition}, Bassirian and Marwaha create a framework to unify the previous results. They observed that a variant of $\QMA$, where Arthur is equipped with a fantastical operation they call a ``superposition detector'', can decide $\NEXP$. This operation inputs a quantum state, and decides whether the quantum state is a computational basis element, or $\epsilon$-far from any such element. 
It empowers Arthur to enforce condition (i) (quasirigidity) --- i.e. ensuring the proof has one color per vertex.
For condition (ii), Arthur needn't any special equipment. 
He tests that all vertices are in the proof with similar weight by projecting to the uniform superposition state $\ket{+}$. While this test is simple, it comes at a price: the honest proof state passes it only with probability $1/\kappa$, strictly less than $1$. 
This creates some difficulty when choosing the probabilities of 
running each test in a way that decides the problem, but it typically can be overcome.

 superposition detector can be easily implemented given two copies of $\ket{\psi}$: measure each copy in the computational basis, and accept iff the outcomes agree.
But in~\cite{bassirian2024superposition}, this operation is applied to a single quantum proof after a partial measurement in the computational basis.
Nonetheless, \cite{bassirian2024superposition} shows how to efficiently implement this operation if the proof has a particular form, which includes states with non-negative amplitudes (showing the main result of \cite{bassirian2023qmaplus} within the framework).

Here, we warm up by considering two entanglement guarantees that are stronger than $\cpp$.
We show how to decide $\NEXP$ with each entanglement guarantee using the framework of \cite{bassirian2024superposition}.
We progressively relax the guarantees, eventually working with $\cpp$ in the next subsection.

We start with a set of tripartite proofs --- where the first register may contain $O(\poly(n))$ qubits, and the other two registers are of constant size --- that roughly corresponds to having ``two copies'' of the color register. Precisely, the set is $\ccs \defeq \{ \ket{\psi} \ |\ \ket{\psi} = \sum_i \alpha_i \ket{i}\ket{\phi_i}\ket{\phi_i}\}$. 
Thus, after measuring any outcome $\ket{i}$, the reduced state is some $\ket{\phi_i}\ket{\phi_i}$.
Since the reduced state has two copies, we can implement a superposition detector and enforce condition (i).
Thus, a quantum proof from this state must be close to $\sum_v\alpha_v\ket{v}\ket{c_v}\ket{c_v}$ (where $\ket{c_v}$ is a computational basis state).
We may convert such a state to a \emph{quasirigid} state by applying a $\cnot$ on the last two registers. From here, the framework of \cite{bassirian2024superposition} shows how to choose the test probabilities to enforce a rigid state, and thus decide $\NEXP$.

We now slightly relax the guarantee by considering the set of computational basis purifications of separable states: $\ccc=  \{ \ket{\psi} \ | \ \ket{\psi}= \sum_i \alpha_i \ket{i}\ket{\mu_i}\ket{\nu_i} \}$. The difference between $\ccs$ and $\ccc$ is that $\ccc$ contains states where $\ket{\mu_i}$ and $\ket{\nu_i}$ are different. We claim that a similar argument as before shows $\QMACP = \NEXP$.

\begin{claim}[{Inspired by \cite[Claim 21]{bassirian2024superposition}}]
\label{claim:claim21_for_vcomp}
Consider a quantum state of the form $\ket{\mu}\ket{\nu}$, where each register has $k$ qubits. For $0<\epsilon\leq1/2$, there is an efficient quantum protocol that decides if (YES) $\ket{\mu}\ket{\nu}$ is two copies of a computational basis element $\ket{e}\ket{e}$, or (NO) $\epsilon$-far in trace distance from any such basis element, with completeness and soundness gap $2x(1-x)$ for $x = \min(\epsilon, \sqrt{1-\epsilon}, \sqrt{\frac{1-\epsilon}{2^k}})$.
\end{claim}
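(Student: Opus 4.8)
The plan is to use the natural two-copy superposition detector, adapting the analogous claim of~\cite{bassirian2024superposition} to the setting where the two registers $\ket{\mu}$ and $\ket{\nu}$ need not be identical. Concretely, the protocol measures both $k$-qubit registers in the computational basis and accepts if and only if the two outcomes are equal. Writing $\ket{\mu} = \sum_i \mu_i \ket{i}$ and $\ket{\nu} = \sum_i \nu_i \ket{i}$ and setting $p_i = |\mu_i|^2$, $q_i = |\nu_i|^2$, the acceptance probability is exactly the collision probability $\sum_i p_i q_i$. Completeness is immediate: if $\ket{\mu}\ket{\nu} = \ket{e}\ket{e}$ then both measurements return $e$ and the protocol accepts with probability $1$.

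The content is in soundness, where I would lower bound the disagreement probability $D = 1 - \sum_i p_i q_i = \sum_{i \neq j} p_i q_j$. The NO promise says $\ket{\mu}\ket{\nu}$ is $\epsilon$-far from every $\ket{e}\ket{e}$, which bounds the overlap with the nearest basis pair: letting $e^\star$ maximize $p_e q_e$, the fidelity $\sqrt{p_{e^\star} q_{e^\star}}$ is at most $1-\epsilon$. (I would double-check here whether the intended convention is fidelity $\le 1-\epsilon$ or trace distance $\ge \epsilon$, since this directly controls the first term of the bound.) The key step is a coarse-graining: collapsing each register's outcome into ``$e^\star$'' versus ``anything else'' can only merge outcomes that already agreed, so $D$ is at least the disagreement probability $p_{e^\star}(1-q_{e^\star}) + (1-p_{e^\star})q_{e^\star}$ of the induced pair of biased coins. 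In the concentrated regime, where $e^\star$ carries substantial weight, the constraint $\sqrt{p_{e^\star}q_{e^\star}} \le 1-\epsilon$ forces this quantity to be at least $2\epsilon(1-\epsilon)$ by AM-GM; this is the source of the term $x = \epsilon$.

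This coarse-graining degrades precisely when no basis element is heavy, i.e. when the state is spread out, which is the regime that forces the dimension-dependent term. Here I would instead bound the collision probability directly by $\sum_i p_i q_i \le \min(\max_i p_i, \max_i q_i)$ and argue that a distribution spread across the $2^k$ coordinates cannot keep this close to $1$; chasing this through against the far condition is what produces the term $\sqrt{(1-\epsilon)/2^k}$. The middle term $\sqrt{1-\epsilon}$ is a mild technical bound that is dominated by the other two for $\epsilon \le 1/2$ (and by $\sqrt{(1-\epsilon)/2^k}$ for all $k \ge 0$), so it never binds in the stated regime and is included only for uniformity. Taking $x = \min(\epsilon, \sqrt{1-\epsilon}, \sqrt{(1-\epsilon)/2^k})$ makes the single estimate $D \ge 2x(1-x)$ valid simultaneously in every case, which is what the statement asserts.

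I expect the main obstacle to be the soundness bookkeeping rather than any single clever idea: one must stitch together the concentrated and spread-out cases so that the same $2x(1-x)$ lower bound survives in both, and track constants so that the three-term minimum emerges cleanly. A secondary subtlety, inherited from replacing two identical copies by the possibly distinct $\ket{\mu}, \ket{\nu}$, is that the governing quantity is now the asymmetric collision $\sum_i p_i q_i$ rather than $\sum_i p_i^2$, so several steps that were symmetric in~\cite{bassirian2024superposition} require an extra AM-GM or Cauchy--Schwarz comparison between $p$ and $q$.
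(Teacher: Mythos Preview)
Your protocol and the core soundness bound are right and essentially match the paper. Your coarse-graining step---dropping all disagreement terms except those touching $e^\star$---gives $D \ge p_{e^\star}(1-q_{e^\star}) + (1-p_{e^\star})q_{e^\star}$, and after AM--GM this is $D \ge 2s(1-s)$ with $s=\sqrt{p_{e^\star}q_{e^\star}}$. The paper reaches the identical inequality by instead upper-bounding $\sum_j p_j q_j \le p_{e^\star}q_{e^\star} + (1-p_{e^\star})(1-q_{e^\star})$ via Cauchy--Schwarz on the tail, then applying the same AM--GM; the two routes coincide after one line. (One nit: your verbal justification is backwards---coarse-graining merges outcomes that may have \emph{dis}agreed, which is exactly why $D_{\mathrm{coarse}} \le D$.)

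Where your plan drifts is the case structure and the convention, and your spread-out argument has a real gap. The paper takes ``$\epsilon$-far'' to mean $|\langle\mu,\nu|e,e\rangle|^2 \le 1-\epsilon$, so the upper bound on $s$ is $\sqrt{1-\epsilon}$, not $1-\epsilon$; evaluating $2s(1-s)$ there is what yields the term $x=\sqrt{1-\epsilon}$, not $x=\epsilon$. The term $x=\epsilon$ instead comes from a \emph{prior} case split: if acceptance is already below $1-2\epsilon$ you stop immediately with gap $\ge 2\epsilon \ge 2\epsilon(1-\epsilon)$. Only after assuming acceptance $\ge 1-2\epsilon$ does one pigeonhole to get $s^2=\max_j p_jq_j \ge (1-2\epsilon)/2^k$, which supplies the dimension-dependent lower endpoint of $s$. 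Your suggested bound $\sum_i p_iq_i \le \max_i p_i$ in the spread-out regime does not produce a lower bound on $s$ and so cannot recover this term; the pigeonhole is the missing step. Restructure as: (i) if acceptance $<1-2\epsilon$, done with $x=\epsilon$; (ii) otherwise pigeonhole gives the lower endpoint of $s$, the NO promise gives the upper endpoint $\sqrt{1-\epsilon}$, and $2s(1-s)$ at the two endpoints furnishes the remaining two terms of the min.
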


\begin{proof}
Let $\ket{\mu} = \sum_j \beta_j \ket{j}$ and $\ket{\nu} = \sum_j \delta_j \ket{j}$, where $\sum_j |\beta_j|^2 = \sum_j |\delta_j|^2 = 1$. The protocol measures each state in the computational basis and accepts iff they agree. This succeeds with probability 
\begin{align*}
    \sum_j |\beta_j \delta_j|^2\,.
\end{align*}
In the YES case, $\ket{\mu} = \ket{\nu}$ both are equal to some computational basis element $\ket{j}$, and so the protocol succeeds with probability $1$. 

In the NO case, $|\beta_j \delta_j|^2 = |\bra{\mu,\nu}\ket{j,j}|^2 \le 1- \epsilon$ for every computational basis element $\ket{j}$. We suppose the probability of success is at least $1 - 2\epsilon$, since otherwise there is a completeness-soundness gap of $2 \epsilon$. Let $j^*$ be a value that maximizes $|\beta_j \delta_j|^2$ (break ties arbitrarily). Then $|\beta_{j^*} \delta_{j^*}|^2 \ge \frac{1-2\epsilon}{2^k}$. So the probability of success is 
\begin{align*}
    \sum_j |\beta_j \delta_j|^2 
    &= |\beta_{j^*} \delta_{j^*}|^2 + \sum_{j \ne j^*} |\beta_j \delta_j|^2 
    \\
    &\le |\beta_{j^*} \delta_{j^*}|^2 + \big( \sum_{j \ne j^*} |\beta_j | \cdot |\delta_j| \big)^2 
    \\ 
    &\le |\beta_{j^*} \delta_{j^*}|^2 + \big( \sum_{j \ne j^*} |\beta_j |^2 \big) \cdot \big(  \sum_{j \ne j^*} |\delta_j|^2 \big) 
    \\
    &\le |\beta_{j^*} \delta_{j^*}|^2 + (1 - |\beta_{j^*}|^2)\cdot (1 -  |\delta_{j^*}|^2)\,.
\end{align*}
The second inequality follows by Cauchy-Schwarz. We next invoke the AM-GM inequality: for non-negative $x,y$, we have $2\sqrt{xy} \le x + y$. This implies
\begin{align*}
    (1 - x)(1-y) = 1 - x - y + xy \le 1 - 2\sqrt{xy} + xy = {(1 - \sqrt{xy})}^2\,.
\end{align*}
So the probability of acceptance is at most $|\beta_{j^*}\delta_{j^*}|^2 + (1 - |\beta_{j^*}\delta_{j^*}|)^2 = 1 - 2|\beta_{j^*}\delta_{j^*}| \cdot (1 - |\beta_{j^*}\delta_{j^*}|)$.
This expression is maximized at the smallest and largest values of $|\beta_{j^*}\delta_{j^*}|$. By definition of the NO case, $|\beta_{j^*}\delta_{j^*}| \le \sqrt{1-\epsilon}$, and by assumption, $|\beta_{j^*}\delta_{j^*}| \ge \sqrt{\frac{1-\epsilon}{2^k}}$.\end{proof}
Since we can still implement the superposition detector with a single proof from $\ccc$, we may use the framework of \cite{bassirian2024superposition} to enforce rigidity of our proof (up to a $\cnot$ gate), and so $\QMACP = \NEXP$.

\subsection{Handling purifications in an arbitrary basis}
\label{sec:pure2polypolyconstNexp}
In $\ccc$, after a computational basis measurement in the first register, the post-measurement state is \emph{always} separable. We took advantage of this structure to implement a superposition detector in  \Cref{claim:claim21_for_vcomp}.
Unfortunately, showing that $\QMAPP = \NEXP$ is not quite as simple. 
The main difference between $\cpp$ and $\ccc$ is that in $\cpp$, the purification register can be oriented in any basis.
For example, the below state belongs to $\cpp$, but not $\ccc$:
\begin{align*}
    \tfrac{1}{\sqrt{2}}(\ket{+}\ket{0}\ket{0}+\ket{-}\ket{1}\ket{1})=\tfrac{1}{2}(\ket{0}(\ket{00}+\ket{11})+\ket{1}(\ket{00}-\ket{11}))\,.
\end{align*}
To amend this issue, we add data to the second register to mimic the behavior of states in $\ccc$.

Consider states on four registers $\mathbb{C}^R \otimes (\mathbb{C}^\kappa \otimes \mathbb{C}^R) \otimes \mathbb{C}^\kappa$, where $R = 2^{\poly(n)}$ and $\kappa$ is some constant. Let $p_1, p_2$ be probabilities that we set later. Our protocol is as follows:
\begin{itemize}
    \item \emph{Enforce ``quasirigidity''.} With probability $p_1$ run $\semicheck$ --- measure all the registers in the computational basis, and make sure it is of the form $\ket{v}\ket{c}\ket{v}\ket{c}$ (i.e.\ the first pair of outcomes \emph{matches} the second pair of outcomes).
    \item \emph{Enforce ``density''.} With probability $p_2$ run $\cnot_{1,3}$ (i.e.\ applied to the first and third register), then $\cnot_{2,4}$, and check $\density$ --- measure the first two registers in the Hadamard ($X$) basis and accept if all the qubits are $\ket{+}$.
    \item \emph{Estimate CSP value.} Otherwise run $\cnot_{1,3}$ and $\cnot_{2,4}$, and then run the constraint tests of~\cite{bassirian2023qmaplus} on the first two registers, which check that the encoded classical assignment satisfies the CSP.
\end{itemize}
Why does this work? It turns out that $\semicheck$ combined with the internal separability guarantee enforces the state (after applying CNOTs) to be close to \emph{quasirigid}. Adding the $\density$ check (after CNOTs) enforces it to be close to \emph{rigid}. However, to show this, we must reprove adjusted versions of lemmas from \cite{bassirian2023qmaplus,bassirian2024superposition}.

Consider an arbitrary tripartite state
\begin{align}
\label{eqn:arbitrary_tripartite_state}
    \ket{\psi} \defeq \sum_{v,c,w,d}  a_{vc,wd} \ket{v}_A\ket{c,w}_B\ket{d}_C\,,
\end{align}
and let $\rho_{B, C} = \tr_A(\ketbra{\psi})$ be the reduced density matrix after tracing out register $A$. Then the success probability of $\semicheck$ is exactly $\sum_{vc} |a_{vc,vc}|^2$.

Now consider $\ket{\psi}$ after applying $\cnot_{1,3}$ and $\cnot_{2,4}$:
\begin{align*}
        \cnot_{1,3} \cnot_{2,4} \ket{\psi} = \sum_{vc,wd}  a_{vc,wd} \ket{v}\ket{c}\ket{w \oplus v}\ket{d \oplus c}\,.
\end{align*}
The last two registers are $\ket{0}\ket{0}$ exactly when $(v,c) = (w,d)$, which happens with \emph{exactly} the probability that $\semicheck$ succeeds. So conditioned on the last two registers being $\ket{0}\ket{0}$, the new state is proportional to 
\begin{align*}
 \left(\sum_{vc} a_{vc,vc}\ket{v}\ket{c}\right)\ket{0}\ket{0}\,.
\end{align*}
It remains to show that the new state is quasirigid.
Suppose $\semicheck$ passes with high probability; then for a typical $\ket{v}_A$ in state \cref{eqn:arbitrary_tripartite_state}, the two color registers are the roughly in superposition of $\ket{cc}$ for allowed colors $c$. If the superposition consists more than one color, it adds some entanglement between the $B$ and $C$ registers.
Since we have that state $\rho_{B,C}$ is \emph{separable}, then $\ket{\psi}$ must be close to quasirigid.

To prove this claim formally, we require a technical lemma:
\begin{lemma}
\label{lemma:separable_offdiagonal_cauchy_schwarz}
    Consider a state $\rho \in \cd(\mathbb{C}^m, \mathbb{C}^n)$ that is separable. Then for any $w,y \in [m]$ and $x,z \in [n]$, we have
    \begin{align*}
        |\bra{wx}\rho\ket{yz}| \le \sqrt{\min(\bra{wx}\rho\ket{wx}, \bra{wz}\rho\ket{wz}, \bra{yx}\rho\ket{yx}, \bra{yz}\rho\ket{yz})}\,,
    \end{align*}
    and moreover,
    \begin{align*}
        |\bra{wx}\rho\ket{yz}| \le \frac{1}{2} \cdot \min(\bra{wx}\rho\ket{wx}+ \bra{yz}\rho\ket{yz}, \bra{wz}\rho\ket{wz}+ \bra{yx}\rho\ket{yx})\,.
    \end{align*}
\end{lemma}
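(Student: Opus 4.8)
The plan is to expand $\rho$ in a separable decomposition and reduce both inequalities to two applications of Cauchy--Schwarz on the sum over the decomposition. Since $\rho$ is separable, I would write $\rho = \sum_k p_k \ketbra{\alpha_k} \otimes \ketbra{\beta_k}$ with $p_k \ge 0$, $\sum_k p_k = 1$, and $\ket{\alpha_k} \in \CC^m$, $\ket{\beta_k} \in \CC^n$ unit vectors (a finite ensemble suffices by Carath\'eodory). Every matrix element then factorizes over $k$; in particular,
\begin{align*}
  \bra{wx}\rho\ket{yz} = \sum_k p_k\, \braket{w}{\alpha_k}\braket{\alpha_k}{y}\, \braket{x}{\beta_k}\braket{\beta_k}{z}\,,
\end{align*}
while each diagonal element takes the shape $\bra{wx}\rho\ket{wx} = \sum_k p_k |\braket{w}{\alpha_k}|^2 |\braket{x}{\beta_k}|^2$, and analogously for the other three index pairs.

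Next I would apply Cauchy--Schwarz to this weighted sum in two different groupings of the four scalar factors. The \emph{straight} grouping pairs $\braket{w}{\alpha_k}\braket{x}{\beta_k}$ against $\braket{\alpha_k}{y}\braket{\beta_k}{z}$; the two resulting sums of squared magnitudes are exactly $\bra{wx}\rho\ket{wx}$ and $\bra{yz}\rho\ket{yz}$, yielding
\begin{align*}
  |\bra{wx}\rho\ket{yz}| \le \sqrt{\bra{wx}\rho\ket{wx}}\,\sqrt{\bra{yz}\rho\ket{yz}}\,.
\end{align*}
The \emph{crossed} grouping instead pairs $\braket{w}{\alpha_k}\braket{\beta_k}{z}$ against $\braket{\alpha_k}{y}\braket{x}{\beta_k}$, whose squared magnitudes sum to $\bra{wz}\rho\ket{wz}$ and $\bra{yx}\rho\ket{yx}$, giving
\begin{align*}
  |\bra{wx}\rho\ket{yz}| \le \sqrt{\bra{wz}\rho\ket{wz}}\,\sqrt{\bra{yx}\rho\ket{yx}}\,.
\end{align*}

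With these two product bounds in hand, both displays follow directly. For the second (AM--GM) inequality I would apply $\sqrt{ab} \le \tfrac12(a+b)$ to each of the two product bounds and take the minimum. For the first inequality I would use that every diagonal entry of a density matrix is a probability in $[0,1]$ (since $\rho \succeq 0$ and $\Tr[\rho] = 1$), so the second square-root factor in each product bound is at most $1$ and may be dropped; this gives $|\bra{wx}\rho\ket{yz}| \le \sqrt{\bra{wx}\rho\ket{wx}}$ and, symmetrically, the bounds by $\sqrt{\bra{yz}\rho\ket{yz}}$, $\sqrt{\bra{wz}\rho\ket{wz}}$, and $\sqrt{\bra{yx}\rho\ket{yx}}$, so taking the minimum over all four yields the claim. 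There is no genuinely hard step; the only thing to get right is the bookkeeping of the two pairings so that the crossed grouping produces the matrix elements indexed by $wz$ and $yx$ rather than reproducing $wx$ and $yz$, which amounts to checking that the complex conjugations land on the correct factors so each grouped term squares to a \emph{bona fide} diagonal element.
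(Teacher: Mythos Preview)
Your proof is correct and follows the same overall strategy as the paper: expand $\rho$ in a separable decomposition, apply Cauchy--Schwarz, then AM--GM. The execution differs slightly: the paper uses a decomposition into mixed factors $M_\iota \otimes N_\iota$, first applies the triangle inequality, then Cauchy--Schwarz to each $M_\iota$ and $N_\iota$ individually, and finally needs Jensen's inequality to pull the square root outside the sum; by contrast, your single application of Cauchy--Schwarz directly on the $p_k$-weighted sum (with the straight and crossed groupings) yields the two geometric-mean bounds $\sqrt{\bra{wx}\rho\ket{wx}\,\bra{yz}\rho\ket{yz}}$ and $\sqrt{\bra{wz}\rho\ket{wz}\,\bra{yx}\rho\ket{yx}}$ in one step, making Jensen unnecessary. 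Both routes are equally valid; yours is a touch more streamlined.
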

\begin{proof}
    Fix a separable decomposition of $\rho$ of the form $\rho = \sum_\iota p_\iota M_\iota \otimes N_\iota$, where all $p_\iota \ge 0$, $\sum_\iota p_\iota = 1$, and $M_\iota, N_\iota$ are trace-1 and positive semidefinite. Choose any $w,y \in [m]$ and $x,z \in [n]$. Then
    \begin{align*}
        \left|\bra{wx}\rho\ket{yz}\right| &\le \left| \sum_\iota p_\iota \bra{w}M_\iota\ket{y}\cdot \bra{x}N_\iota\ket{z} \right|
        \le  \sum_\iota p_\iota \left| \bra{w}M_\iota\ket{y} \right| \cdot \left| \bra{x}N_\iota\ket{z} \right|\,.
    \end{align*}
    Recall that PSD matrices $A$ decompose as $A = B^\dagger B$, and so by Cauchy-Schwarz,
    \begin{align*}
            |\bra{i}A\ket{k}|^2 
            = \left| (B\ket{i})^\dagger (B\ket{k}) \right|^2 
            \le  \bra{i}B^\dagger B\ket{i} \cdot \bra{k}B^\dagger B\ket{k}
            = \bra{i}A\ket{i} \cdot \bra{k}A\ket{k}\,,
    \end{align*}    
    So then $\left|\bra{wx}\rho\ket{yz} \right| \le \sum_\iota p_\iota  \sqrt{ 
    \bra{w} M_\iota \ket{w} \cdot \bra{y} M_\iota \ket{y}
    \cdot 
    \bra{x} N_\iota \ket{x} \cdot \bra{z} N_\iota \ket{z}
    }$.

    For any PSD matrix $A$ and vector $\ket{\psi}$, we have $0 \le \bra{\psi}A\ket{\psi} \le \Tr(A)$. All $M_\iota, N_\iota$ are trace-1, so
    \begin{align*}
        \left|\bra{wx}\rho\ket{yz}\right| \le \sum_\iota p_\iota \sqrt{ 
    \bra{w} M_\iota \ket{w} \cdot \bra{z} N_\iota \ket{z}
    }
    \le
    \sqrt{\sum_\iota p_\iota \bra{w} M_\iota \ket{w} \cdot \bra{z} N_\iota \ket{z}}
    = \sqrt{\bra{wz}\rho\ket{wz}}\,,
    \end{align*}
    where the second inequality is Jensen's.
    The other bounds follow mutatis mutandis.

Moreover, we can use the AM-GM inequality; i.e. for non-negative $a,b$, we have $\sqrt{ab} \le \frac{a + b}{2}$. So
    \begin{align*}
        \left|\bra{wx}\rho\ket{yz} \right| &\le
        \frac{1}{2} \sum_\iota p_\iota \left( \bra{w} M_\iota \ket{w} \cdot \bra{x} N_\iota \ket{x} 
        +  \bra{y} M_\iota \ket{y}
    \cdot 
    \bra{z} N_\iota \ket{z}\right)
    \\
    &= \frac{\bra{wx}\rho\ket{wx} + \bra{yz}\rho\ket{yz}}{2}\,.
    \end{align*}
    Similarly, $\left|\bra{wx}\rho\ket{yz} \right| \le \frac{1}{2} \cdot (\bra{wz}\rho\ket{wz} + \bra{yx}\rho\ket{yx})$.
\end{proof}

\begin{lemma} \label{lemma:semicheck_means_close_to_quasirigid}
        Suppose $\semicheck$ succeeds with probability $1-\epsilon$, and $\rho_{B, C}$ is separable. Then there is a state $\ket{\phi}$ that is \emph{quasirigid} after applying $\cnot_{1,3}\cnot_{2,4}$ such that $|\bra{\phi}\ket{\psi}|^2 \ge 1 - (\kappa + 1) \epsilon$.\footnote{Recall that $\kappa$ is the dimension of the third register. Although this lemma resembles~\cite[Lemma 18]{bassirian2024superposition}, note that $\epsilon$ has a different meaning in this paper.}
\end{lemma}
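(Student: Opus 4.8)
The plan is to reduce the lemma to a combinatorial statement about the \emph{diagonal} amplitudes $a_{vc,vc}$ of the state in \cref{eqn:arbitrary_tripartite_state}, and then to charge any vertex that carries more than one color to the off-diagonal entries of the separable matrix $\rho_{B,C}$ via \Cref{lemma:separable_offdiagonal_cauchy_schwarz}. First I would pin down the optimal witness $\ket{\phi}$. A state that is quasirigid after $\cnot_{1,3}\cnot_{2,4}$ is, before the CNOTs, exactly of the form $\sum_v \beta_v\ket{v}\ket{c_v,v}\ket{c_v}$, supported only on $\semicheck$-passing basis elements; hence its overlap with $\ket{\psi}$ sees only the ``diagonal'' part $\ket{D}\defeq\sum_{v,c}a_{vc,vc}\ket{v}\ket{c,v}\ket{c}$, whose squared norm is exactly $\Pr[\semicheck]=1-\epsilon$. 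Maximizing $|\bra{\phi}\ket{\psi}|^2$ over the choice of one color $c_v$ per vertex and the weights $\beta_v$ (take $\beta_v\propto a_{vc_v,vc_v}$ and $c_v\defeq\argmax_c|a_{vc,vc}|$) gives
\[
\max_{\phi}|\bra{\phi}\ket{\psi}|^2=\sum_v\max_c|a_{vc,vc}|^2 .
\]

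So it suffices to show $\sum_v\max_c|a_{vc,vc}|^2\ge 1-(\kappa+1)\epsilon$. Since $\sum_{v,c}|a_{vc,vc}|^2=1-\epsilon$, this is equivalent to bounding the \emph{non-dominant} diagonal weight $\sum_v\sum_{c\ne c_v}|a_{vc,vc}|^2$ by $\kappa\epsilon$. Using $|a_{vc,vc}|\le|a_{vc_v,vc_v}|$, I would bound each non-dominant term by the cross product $|a_{vc_v,vc_v}|\,|a_{vc,vc}|$, turning the target into a sum of cross products over vertices and color pairs.

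The crux is that each such cross product is essentially an off-diagonal entry of the \emph{separable} matrix $\rho_{B,C}$. Concretely, the entry between $\ket{c_v,v}_B\ket{c_v}_C$ and $\ket{c,v}_B\ket{c}_C$ equals $\sum_{v'}a_{v'c_v,vc_v}\,a^*_{v'c,vc}$: its $v'=v$ term is exactly the cross product, while the tail $v'\ne v$ consists only of $\semicheck$-failing amplitudes (the $A$-register index disagrees with the copy-vertex). Applying the AM--GM form of \Cref{lemma:separable_offdiagonal_cauchy_schwarz} bounds this off-diagonal entry by half the sum of the two diagonal entries having color differing from copy-color, each of which is therefore a sum of \emph{purely} $\semicheck$-failing amplitudes. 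This gives
\[
|a_{vc_v,vc_v}|\,|a_{vc,vc}| \le \tfrac12\!\left(\bra{c_v,v}_B\bra{c}_C\rho_{B,C}\ket{c_v,v}_B\ket{c}_C + \bra{c,v}_B\bra{c_v}_C\rho_{B,C}\ket{c,v}_B\ket{c_v}_C\right) + \Big|\sum_{v'\ne v}a_{v'c_v,vc_v}a^*_{v'c,vc}\Big| .
\]

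Finally I would sum over all $v$ and all $c\ne c_v$. The diagonal contributions charge every $\semicheck$-failing amplitude at most once, so each of the two diagonal sums totals at most $\epsilon$, contributing at most $\epsilon$ after the factor $\tfrac12$; the tail terms, bounded by AM--GM in the traced-out index $v'$, redistribute the failing weight across the $\kappa$ colors and contribute at most $\tfrac{\kappa}{2}\epsilon$. This yields $\sum_v\sum_{c\ne c_v}|a_{vc,vc}|^2\le(1+\tfrac{\kappa}{2})\epsilon$, hence $|\bra{\phi}\ket{\psi}|^2\ge 1-(2+\tfrac{\kappa}{2})\epsilon\ge 1-(\kappa+1)\epsilon$ for $\kappa\ge 2$ (and $\kappa=1$ is trivially quasirigid). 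I expect the main obstacle to be exactly this summation: tracing out register $A$ mixes every vertex into each off-diagonal entry, so there is no clean per-vertex block on which to invoke separability, and the genuine work is the charging argument that keeps the sum over the (exponentially many) vertices from blowing up — one must verify that each failing amplitude is counted a bounded number of times, with the factor $\kappa$ emerging precisely from spreading the tail weight over the $\kappa$ colors.
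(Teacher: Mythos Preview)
Your proposal is correct and essentially identical to the paper's proof. Both arguments pick the dominant color $c_v=\argmax_c|a_{vc,vc}|$, bound the non-dominant diagonal mass by the cross products $|a_{vc_v,vc_v}|\,|a_{vc,vc}|$, split each cross product into the off-diagonal entry $\bra{c_v,v,c_v}\rho_{B,C}\ket{c,v,c}$ plus a tail over $v'\ne v$, apply the AM--GM form of \Cref{lemma:separable_offdiagonal_cauchy_schwarz} to the former and AM--GM in $v'$ to the latter, and then charge each $\semicheck$-failing amplitude a bounded number of times to arrive at $\frac{\kappa+2}{2}\epsilon\le\kappa\epsilon$ (your constant accounting is a minor rearrangement of the paper's but yields the same total).
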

\begin{proof}
Label the computational basis elements of $\ket{\psi}$ as $a_{vc,wd}\ket{v}\ket{c,w}\ket{d}$ as in \cref{eqn:arbitrary_tripartite_state}.
Let $\sigma: [R] \to [\kappa]$ be a function that picks out the largest amplitude per vertex, i.e. where $|a_{v\sigma(v), v\sigma(v)}|^2 \ge |a_{vd, vd}|^2$ for all $d \ne \sigma(v)$. For convenience, denote $\sigma_v \defeq \sigma(v)$.

Consider the matrix elements of the reduced density matrix of $\ket{\psi}$ after tracing out the first register:
\begin{align*}
        \bra{c, v, d}\rho_{B, C}\ket{e, w, f} = \sum_x a_{xc,vd} a_{xe,wf}^\dagger\,.
\end{align*}

We use this expression to upper-bound the weight on colors not chosen by $\sigma$:
\begin{align*}
\sum_v \sum_{d;d \ne \sigma_v}    |a_{vd, vd}|^2 
&\le
\sum_v \sum_{d;d \ne \sigma_v}    |a_{v\sigma_v, v\sigma_v}| \cdot |a_{vd, vd}^\dagger |
\\
&=
\sum_v \sum_{d;d \ne \sigma_v}    |a_{v\sigma_v, v\sigma_v} \cdot a_{vd, vd}^\dagger |
\\
&=
\sum_v \sum_{d;d \ne \sigma_v}  
\left(
        |\bra{\sigma_v, v, \sigma_v}\rho_{B, C}\ket{d, v, d} - \sum_{x;x \ne v} a_{x\sigma_v,v\sigma_v} \cdot a_{xd,vd}^\dagger|
\right)
\\
&\le 
\Bigg(
 \sum_v \sum_{d;d \ne \sigma_v} 
|\bra{\sigma_v, v, \sigma_v}\rho_{B, C}\ket{d, v, d}| 
\Bigg)
+ \sum_v \sum_{d;d \ne \sigma_v}  
 \sum_{x;x \ne v} |a_{x\sigma_v,v\sigma_v}| \cdot |a_{xd,vd}^\dagger|
\,.
\end{align*}
The second term is actually quite small. By the AM-GM inequality,
\begin{align*}
    |a_{x\sigma_v,v\sigma_v}| \cdot |a_{xd,vd}^\dagger| \le \frac{1}{2} \cdot \left(|a_{x\sigma_v,v\sigma_v}|^2 +  |a_{xd,vd}^\dagger|^2 \right)\,.
\end{align*}
Taking $|a_{xd,vd}|^2$ over all three sums $(v,d,x)$ is at most $\sum_{(x,c) \ne  (v,d)} |a_{xc,vd}|^2 = 1 - (1-\epsilon) = \epsilon$. 
Similarly, summing $|a_{x\sigma_v,v\sigma_v}|^2$ over sums $(v,x)$ contributes at most $\epsilon$, so at most $\kappa \epsilon$ over sums $(v,d,x)$.
In total, the second term is at most $\frac{\kappa+1}{2} \epsilon$.

For the first term, since $\rho_{B, C}$ is separable, we can use one of the bounds proven in \Cref{lemma:separable_offdiagonal_cauchy_schwarz}, i.e. 
\begin{align*}
        |\bra{\sigma_v, v, \sigma_v}\rho_{B, C} \ket{d, v, d}| \le \frac{1}{2} \cdot \big(\bra{\sigma_v, v, d}\rho_{B, C}\ket{\sigma_v, v, d} + \bra{d, v, \sigma_v}\rho_{B, C}\ket{d, v, \sigma_v}\big)\,.
\end{align*}
So then
\begin{align*}
    \sum_v \sum_{d;d \ne \sigma_v}    |a_{vd, vd}|^2 
    &\le
    \frac{\kappa+1}{2} \epsilon + 
    \frac{1}{2} \left(\sum_v \sum_{d;d \ne \sigma_v} 
     \bra{\sigma_v, v, d}\rho_{B, C}\ket{\sigma_v, v, d} + \bra{d, v, \sigma_v}\rho_{B, C}\ket{d, v, \sigma_v}\right)
    \\
    &\le \frac{\kappa+1}{2} \epsilon + 
     \frac{1}{2} \left(\sum_v \sum_{d;d \ne \sigma_v} 
     \sum_{x} |a_{x\sigma_v,vd}|^2 
     + |a_{xd,v\sigma_v}|^2\right)\,.
\end{align*}
The remaining sum is at most $\sum_{(x,c) \ne  (v,d)} |a_{xc,vd}|^2 = \epsilon$, so the upper bound is at most $\frac{\kappa + 2}{2}\epsilon \le \kappa \epsilon$.

Consider the state $\ket{\phi} = \frac{1}{\sqrt{\gamma}} \sum_v a_{v \sigma_v, v\sigma_v} \ket{v}\ket{\sigma_v}\ket{v}\ket{\sigma_v}$, where $\gamma = \sum_v |a_{v \sigma_v, v\sigma_v}|^2$ is chosen so that $\braket{\phi}{\phi} = 1$. Note that $\ket{\phi}$ is quasirigid after applying $\cnot_{1,3}$ and $\cnot_{2,4}$. Moreover, $|\braket{\phi}{\psi}|^2 = \gamma$. Putting everything together,
\begin{align*}
    \gamma = \sum_v |a_{v \sigma_v, v\sigma_v}|^2 = \Pr[\semicheck \text{ succeeds}] -  \sum_v \sum_{d;d \ne \sigma_v}    |a_{vd, vd}|^2  \ge 1 - \epsilon - \kappa \epsilon\,.\tag*{\qedhere} 
\end{align*}
\end{proof}

We also use the following fact in several places:
\begin{fact}[\cite{fuchs1998cryptographic}]
\label{fact:fuchs_vdg_implication}
    Let $0 \le \Pi \le \mathbb{I}$ be a positive semi-definite matrix, and let $\ket{\psi_1}$ and $\ket{\psi_2}$ be quantum states such that $|\braket{\psi_1}{\psi_2}|^2 \ge 1 - d$. Then $|\bra{\psi_1}\Pi\ket{\psi_1} - \bra{\psi_2}\Pi\ket{\psi_2}| \le \sqrt{d}$.
\end{fact}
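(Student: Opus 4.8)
The plan is to recognize this as the standard Fuchs--van de Graaf bound specialized to pure states, routed through the trace distance. Write $\rho_1 = \ketbra{\psi_1}$ and $\rho_2 = \ketbra{\psi_2}$, so that the quantity to control is $\bra{\psi_1}\Pi\ket{\psi_1} - \bra{\psi_2}\Pi\ket{\psi_2} = \Tr[\Pi(\rho_1 - \rho_2)]$. First I would bound this by the trace distance $T(\rho_1,\rho_2) \defeq \frac12\|\rho_1 - \rho_2\|_1$. This follows from the variational characterization of trace distance: $T(\rho_1,\rho_2) = \max_{0 \le \Pi' \le \mathbb{I}} \Tr[\Pi'(\rho_1-\rho_2)]$, so the given $\Pi$ yields $\Tr[\Pi(\rho_1-\rho_2)] \le T(\rho_1,\rho_2)$, and applying the same bound to the valid operator $\mathbb{I}-\Pi$ (using that $\rho_1,\rho_2$ have equal trace) handles the other sign, giving $|\Tr[\Pi(\rho_1-\rho_2)]| \le T(\rho_1,\rho_2)$.

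The one substantive step is to evaluate the trace distance of two pure states exactly. I would restrict to the at-most-two-dimensional subspace spanned by $\ket{\psi_1},\ket{\psi_2}$ and write $\ket{\psi_2} = \alpha\ket{\psi_1} + \beta\ket{\psi_1^\perp}$ with $\ket{\psi_1^\perp}\perp\ket{\psi_1}$, $|\alpha|^2 + |\beta|^2 = 1$, and $|\alpha| = |\braket{\psi_1}{\psi_2}|$. In the basis $\{\ket{\psi_1},\ket{\psi_1^\perp}\}$ the operator $\rho_1 - \rho_2$ is a $2\times 2$ traceless Hermitian matrix with off-diagonal entry $-\alpha\beta^\ast$ and diagonal $(|\beta|^2, -|\beta|^2)$; its characteristic polynomial is $\lambda^2 - |\beta|^2$, so its eigenvalues are $\pm|\beta| = \pm\sqrt{1-|\alpha|^2}$. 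Hence $\|\rho_1-\rho_2\|_1 = 2\sqrt{1-|\alpha|^2}$ and $T(\rho_1,\rho_2) = \sqrt{1 - |\braket{\psi_1}{\psi_2}|^2}$.

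Combining the two steps gives $|\bra{\psi_1}\Pi\ket{\psi_1} - \bra{\psi_2}\Pi\ket{\psi_2}| \le \sqrt{1 - |\braket{\psi_1}{\psi_2}|^2} \le \sqrt{d}$, where the last inequality uses the hypothesis $|\braket{\psi_1}{\psi_2}|^2 \ge 1-d$ together with monotonicity of the square root. There is no genuine obstacle; the only place needing care is the eigenvalue computation for $\rho_1 - \rho_2$, which I would confine to a short self-contained lemma on the pure-state trace distance. As an alternative that sidesteps the trace-distance machinery entirely, one can expand $\bra{\psi_2}\Pi\ket{\psi_2}$ directly in the $\{\ket{\psi_1},\ket{\psi_1^\perp}\}$ basis and bound the resulting cross terms using $\Pi = \sqrt{\Pi}\sqrt{\Pi}$ and Cauchy--Schwarz; I would prefer the trace-distance route since it isolates all the nontrivial content into the single clean identity above.
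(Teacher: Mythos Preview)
Your argument is correct and is exactly the standard route: bound $|\Tr[\Pi(\rho_1-\rho_2)]|$ by the trace distance via its variational characterization, then use the closed-form identity $T(\ketbra{\psi_1},\ketbra{\psi_2})=\sqrt{1-|\braket{\psi_1}{\psi_2}|^2}$ for pure states. The paper does not give its own proof of this statement; it records it as a cited fact from Fuchs--van de Graaf, so there is nothing further to compare against.
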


We are now ready to show that a quantum proof passing both tests must be close to quasirigid:
\begin{claim}
\label{claim:rigidity}
    Suppose $\density$ and $\semicheck$ succeed on $\ket{\psi}$ with probability $\frac{1}{\kappa} - d_{\mathsf{D}}$ and $1- d_\mathsf{M}$, respectively. Then there exists a state $\ket{\chi}$ that is rigid after applying $\cnot_{1,3}\cnot_{2,4}$ such that 
    \begin{align*}
    |\braket{\chi}{\psi}|^2 &\ge 1 - \kappa d_{\mathsf{D}}  - (\kappa+1) \sqrt{(\kappa + 1) d_\mathsf{M}}\,.
    \end{align*}
\end{claim}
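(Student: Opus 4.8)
The plan is to chain the quasirigidity lemma with the $\density$ test, exploiting the fact that $\density$ forces an already-quasirigid state to have \emph{uniform} amplitudes, i.e.\ to become rigid. First I would invoke \Cref{lemma:semicheck_means_close_to_quasirigid}: since $\semicheck$ succeeds with probability $1-d_\mathsf{M}$ and $\rho_{B,C}$ is separable, there is a state $\ket{\phi}$ --- quasirigid after $\cnot_{1,3}\cnot_{2,4}$ --- with $|\braket{\phi}{\psi}|^2 \ge 1-(\kappa+1)d_\mathsf{M}$. The key structural fact I would carry forward is that this $\ket{\phi}$ is precisely the normalized projection of $\ket{\psi}$ onto the ``diagonal'' subspace: writing $P = \sum_v \ketbra{v\sigma_v v\sigma_v}{v\sigma_v v\sigma_v}$, we have $\ket{\phi} = P\ket{\psi}/\sqrt{\gamma}$ with $\gamma = \|P\ket{\psi}\|^2 = |\braket{\phi}{\psi}|^2$.

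Next I would define the candidate rigid state to share $\ket{\phi}$'s colors but carry uniform weights, $\ket{\chi} = \tfrac{1}{\sqrt{R}}\sum_v \ket{v}\ket{\sigma_v}\ket{v}\ket{\sigma_v}$, so that $\cnot_{1,3}\cnot_{2,4}\ket{\chi} = \tfrac{1}{\sqrt{R}}\sum_v \ket{v}\ket{\sigma_v}\ket{0}\ket{0}$ is rigid. The crucial observation is that $\ket{\chi}$ also lies in $\mathrm{range}(P)$, so $\braket{\chi}{\psi} = \bra{\chi}P\ket{\psi} = \sqrt{\gamma}\,\braket{\chi}{\phi}$, which yields the clean multiplicative identity
\[
|\braket{\chi}{\psi}|^2 = |\braket{\phi}{\psi}|^2 \cdot |\braket{\chi}{\phi}|^2 .
\]
This sidesteps the lossy triangle-inequality combination one would otherwise reach for, and reduces the task to lower-bounding $|\braket{\chi}{\phi}|^2$ via the $\density$ test.

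For that step I would directly compute the $\density$ acceptance probability of the quasirigid $\ket{\phi}$. Writing $\cnot_{1,3}\cnot_{2,4}\ket{\phi} = \sum_v \alpha_v \ket{v}\ket{\sigma_v}\ket{0}\ket{0}$ with $\alpha_v = a_{v\sigma_v,v\sigma_v}/\sqrt{\gamma}$, projecting registers $1,2$ onto $\ket{+}$ gives acceptance probability $\tfrac{1}{R\kappa}\bigl|\sum_v \alpha_v\bigr|^2 = \tfrac{1}{\kappa}|\braket{\chi}{\phi}|^2$. Since the $\density$ operator conjugated by the CNOTs is a projector $\Pi$ with $0 \le \Pi \le \mathbb{I}$, \Cref{fact:fuchs_vdg_implication} transfers the assumed acceptance probability $\tfrac{1}{\kappa}-d_\mathsf{D}$ of $\ket{\psi}$ to $\ket{\phi}$ up to an additive $\sqrt{(\kappa+1)d_\mathsf{M}}$. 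Hence $|\braket{\chi}{\phi}|^2 = \kappa\cdot\Pr[\density\text{ on }\ket{\phi}] \ge 1 - \kappa d_\mathsf{D} - \kappa\sqrt{(\kappa+1)d_\mathsf{M}}$.

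Finally I would combine the two bounds through the multiplicative identity and apply $(1-a)(1-b)\ge 1-a-b$ to get $|\braket{\chi}{\psi}|^2 \ge 1 - \kappa d_\mathsf{D} - \kappa\sqrt{(\kappa+1)d_\mathsf{M}} - (\kappa+1)d_\mathsf{M}$. To match the stated form I would fold the last two terms together using $(\kappa+1)d_\mathsf{M} \le \sqrt{(\kappa+1)d_\mathsf{M}}$, valid whenever $(\kappa+1)d_\mathsf{M}\le 1$ (and the claim is vacuous otherwise, since its right-hand side is then negative), so that $\kappa\sqrt{(\kappa+1)d_\mathsf{M}} + (\kappa+1)d_\mathsf{M} \le (\kappa+1)\sqrt{(\kappa+1)d_\mathsf{M}}$, giving the desired inequality. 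I expect the only real subtlety to be the bookkeeping that makes the multiplicative identity \emph{exact}: the payoff of recognizing $\ket{\phi}$ as the normalized projection of $\ket{\psi}$ onto the diagonal subspace containing $\ket{\chi}$ is precisely that $|\braket{\chi}{\psi}|^2 = |\braket{\phi}{\psi}|^2|\braket{\chi}{\phi}|^2$ holds on the nose, converting a potentially lossy three-state argument into two controlled factors.
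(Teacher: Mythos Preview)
Your argument is correct and follows the same scaffold as the paper: invoke \Cref{lemma:semicheck_means_close_to_quasirigid} to obtain the quasirigid $\ket{\phi}$, transfer the $\density$ acceptance from $\ket{\psi}$ to $\ket{\phi}$ via \Cref{fact:fuchs_vdg_implication}, and then read off $|\braket{\chi}{\phi}|^2 = \kappa \cdot \Pr[\density\text{ on }\ket{\phi}]$. The one point of divergence is the last step: the paper simply applies \Cref{fact:fuchs_vdg_implication} a second time with $\ket{\mu}=\ket{\chi}$ to pass from $|\braket{\chi}{\phi}|^2$ to $|\braket{\chi}{\psi}|^2$, incurring an additive $\sqrt{(\kappa+1)d_\mathsf{M}}$, whereas you exploit that $\ket{\phi}$ is the normalized projection of $\ket{\psi}$ onto a subspace containing $\ket{\chi}$ to get the exact identity $|\braket{\chi}{\psi}|^2 = |\braket{\phi}{\psi}|^2\,|\braket{\chi}{\phi}|^2$ and then fold $(\kappa+1)d_\mathsf{M}\le\sqrt{(\kappa+1)d_\mathsf{M}}$. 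Your route is a touch sharper in principle (the multiplicative identity avoids one lossy square-root), but after the final bookkeeping both land on the identical bound; the paper's version is shorter since it reuses the same lemma rather than introducing the projection structure.
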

\begin{proof}
    By \Cref{lemma:semicheck_means_close_to_quasirigid}, there exists a state $\ket{\phi}=\sum_v\alpha_v\ket{v}\ket{c_v,v}\ket{c_v}$ such that 
    \begin{align*}
        |\braket{\psi}{\phi}|^2 \ge 1 - (\kappa+1)d_\mathsf{M}\,.
    \end{align*}
    By \Cref{fact:fuchs_vdg_implication}, 
    for any  unit vector $\ket{\mu}$,
    we have
    $\left| |\braket{\mu}{\psi}|^2 - |\braket{\mu}{\phi}|^2 \right| \le \sqrt{(\kappa+1)d_\mathsf{M}}$.
    We use this in two places. 
    Let $\ket{+'} = \cnot_{1,3}\cnot_{2,4}\ket{+}\ket{0}\ket{0}$. First, since  $|\braket{+'}{\psi}|^2 \ge \frac{1}{\kappa} - d_{\mathsf{D}}$, applying \Cref{fact:fuchs_vdg_implication} with $\ket{\mu} = \ket{+'}$ implies $|\braket{+'}{\phi}|^2 \ge \frac{1}{\kappa} - d_{\mathsf{D}} -\sqrt{(\kappa+1)d_\mathsf{M}}$. 
    Now let $\ket{\chi}=\frac{1}{\sqrt{R}}\sum_v \ket{v}\ket{c_v,v}\ket{c_v}$ be the state with the same computational basis elements as $\ket{\phi}$. Then
    \begin{align*}
        |\braket{\chi}{\phi}|^2 =\left|\frac{1}{\sqrt{R}}\sum_{v\in[R]}\alpha_v \right|^2
        =\kappa \left|\frac{1}{\sqrt{\kappa R}} \sum_{v\in[R]}^R \alpha_v\right|^2
        = \kappa |\braket{+'}{\phi}|^2
        \ge 1 - \kappa\left(d_{\mathsf{D}} +\sqrt{(\kappa+1)d_\mathsf{M}}\right)\,.
    \end{align*}
   Applying \Cref{fact:fuchs_vdg_implication} using $\ket{\mu} = \ket{\chi}$ then implies
   \begin{align*}
       |\braket{\chi}{\psi}|^2  \ge  1 - \kappa\left(d_{\mathsf{D}} +\sqrt{(\kappa+1)d_\mathsf{M}}\right) - \sqrt{(\kappa+1)d_\mathsf{M}}=1 - \kappa d_{\mathsf{D}}  - (\kappa+1) \sqrt{(\kappa + 1) d_\mathsf{M}}\,.\tag*{\qedhere}
   \end{align*}
\end{proof}
Recall that in the YES case, the $\density$ test succeeds with probability strictly less than $1$. A conniving Merlin in the NO case could try to take advantage of this fact.
To mitigate this issue, 
we show that if the $\density$ test succeeds with probability higher than in the YES case, the success of the $\semicheck$ test is penalized.
We prove a tradeoff between the success of these two tests:
\begin{lemma}
\label{lemma:quadratic}
    Suppose $\density$ and $\semicheck$ succeed on state $\ket{\psi}$ with probability $w_{\mathsf{D}} \ge \frac{1}{\kappa}$ and $w_{\mathsf{M}}$, respectively. Then $(w_{\mathsf{D}} -  \frac{1}{\kappa})^2 +  (\kappa+1)w_{\mathsf{M}}\le  \kappa+1$.
\end{lemma}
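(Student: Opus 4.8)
The plan is to bound $w_{\mathsf D}$ from above by $\tfrac1\kappa + (1-w_{\mathsf M})$ and then square. Throughout I use that $\ket\psi\in\cpp$, so $\rho_{B,C}=\Tr_A\ketbra{\psi}$ is separable across $B|C$; this hypothesis is essential, since without it the state $\tfrac1{\sqrt{R\kappa}}\sum_{v,c}\ket{v}\ket{c,v}\ket{c}$ already achieves $w_{\mathsf D}=w_{\mathsf M}=1$ and violates the claimed inequality.

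First I would set up an orthogonal decomposition of the post-CNOT state $\ket{\psi'}\defeq\cnot_{1,3}\cnot_{2,4}\ket{\psi}$. Write $\ket{\psi'}=\ket{\psi'_0}+\ket{\psi'_\perp}$, where $\ket{\psi'_0}$ is the component whose registers $3,4$ are $\ket{0}\ket{0}$ and $\ket{\psi'_\perp}$ is the remainder. As already observed in the text, registers $3,4$ equal $\ket0\ket0$ with exactly the probability that $\semicheck$ succeeds, so $\|\ket{\psi'_0}\|^2=w_{\mathsf M}$ and $\|\ket{\psi'_\perp}\|^2=1-w_{\mathsf M}$. The $\density$ acceptance operator $\Pi\defeq\ketbra{+}_{12}\otimes I_{34}$ (with $\ket{+}$ the uniform superposition over the first two registers) acts only on registers $1,2$, so it commutes with the projector onto the $\ket{0}\ket{0}$ subspace of registers $3,4$. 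Hence the cross term $\bra{\psi'_0}\Pi\ket{\psi'_\perp}$ vanishes and
\[
w_{\mathsf D}=\bra{\psi'}\Pi\ket{\psi'}=\bra{\psi'_0}\Pi\ket{\psi'_0}+\bra{\psi'_\perp}\Pi\ket{\psi'_\perp}.
\]
Since $\Pi$ is a projector, the second term is at most $\|\ket{\psi'_\perp}\|^2=1-w_{\mathsf M}$, while the first equals $|\braket{+'}{\psi}|^2=:P_{00}$, with $\ket{+'}=\cnot_{1,3}\cnot_{2,4}\ket{+}\ket0\ket0$ as in \Cref{claim:rigidity}. It remains to show $P_{00}\le\tfrac1\kappa$.

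This bound is the crux, and it is where separability enters. Computing $\ket{+'}$ explicitly, it factors as a tensor product of two maximally entangled states, $\ket{+'}=\ket{\mathrm{ME}}_{AB_2}\otimes\ket{\mathrm{ME}}_{B_1C}$, where $B=(B_1,B_2)$ with $B_1$ the color register and $B_2$ the duplicate-vertex register, and $\ket{\mathrm{ME}}$ is a normalized maximally entangled state. I would then invoke monotonicity of fidelity under the partial trace over $A$: writing $F(\rho,\sigma)=\big(\Tr\sqrt{\sqrt{\rho}\,\sigma\sqrt{\rho}}\big)^2$, we get $P_{00}=F(\ketbra{+'},\ketbra{\psi})\le F(\rho^{+'}_{B,C},\rho_{B,C})$, where $\rho^{+'}_{B,C}=\Tr_A\ketbra{+'}=\tfrac1R I_{B_2}\otimes\ketbra{\mathrm{ME}}_{B_1C}$. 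Because $\ketbra{\mathrm{ME}}_{B_1C}$ is a rank-one projector, a direct computation gives $F(\rho^{+'}_{B,C},\rho_{B,C})=\tfrac1R(\Tr\sqrt{M})^2$ with $M\defeq\bra{\mathrm{ME}}_{B_1C}\rho_{B,C}\ket{\mathrm{ME}}_{B_1C}$ a positive operator on $B_2$. Here separability does the work: $\Tr M=\bra{\mathrm{ME}}\rho_{B_1,C}\ket{\mathrm{ME}}$, and since $\rho_{B_1,C}=\Tr_{B_2}\rho_{B,C}$ is separable across $B_1|C$, its overlap with the maximally entangled state is at most $\tfrac1\kappa$. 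Combining this with the Cauchy--Schwarz bound $(\Tr\sqrt{M})^2\le R\,\Tr M$ (valid because $M$ acts on an $R$-dimensional space) yields $P_{00}\le\tfrac1R\cdot R\cdot\tfrac1\kappa=\tfrac1\kappa$.

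Finally I would assemble the pieces: $w_{\mathsf D}\le P_{00}+(1-w_{\mathsf M})\le\tfrac1\kappa+(1-w_{\mathsf M})$, so the hypothesis $w_{\mathsf D}\ge\tfrac1\kappa$ gives $0\le w_{\mathsf D}-\tfrac1\kappa\le 1-w_{\mathsf M}$. Squaring yields $(w_{\mathsf D}-\tfrac1\kappa)^2\le(1-w_{\mathsf M})^2\le 1-w_{\mathsf M}\le(\kappa+1)(1-w_{\mathsf M})$, which rearranges to the claim $(w_{\mathsf D}-\tfrac1\kappa)^2+(\kappa+1)w_{\mathsf M}\le\kappa+1$. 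I expect the main obstacle to be the bound $P_{00}\le\tfrac1\kappa$, whose two key ingredients are the factorization of $\ket{+'}$ into maximally entangled pairs and the reduction, via fidelity monotonicity, to a statement about $\rho_{B,C}$ (the only object separability constrains). The resulting inequality is in fact quite loose, as the factor $\kappa+1$ is never used in this argument; this suggests that a cruder estimate on $P_{00}$ would also suffice and that the stated constant is calibrated for convenience in the surrounding protocol analysis.
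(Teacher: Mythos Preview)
Your proof is correct and takes a genuinely different route from the paper's.

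The paper proves \Cref{lemma:quadratic} by invoking \Cref{lemma:semicheck_means_close_to_quasirigid}: since $\semicheck$ succeeds with probability $w_{\mathsf M}$, there is a quasirigid $\ket\phi$ with $|\braket{\psi}{\phi}|^2\ge 1-(\kappa+1)(1-w_{\mathsf M})$, and then \Cref{fact:fuchs_vdg_implication} transfers this closeness to the $\density$ overlap, using that $|\braket{+'}{\phi}|^2\le\tfrac1\kappa$ by Cauchy--Schwarz on a quasirigid state. This yields $w_{\mathsf D}-\tfrac1\kappa\le\sqrt{(\kappa+1)(1-w_{\mathsf M})}$, which is exactly the claimed inequality after squaring.

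You instead split $\ket{\psi'}$ orthogonally according to whether registers $3,4$ read $\ket{00}$, bound the ``bad'' part trivially by $1-w_{\mathsf M}$, and then prove $P_{00}=|\braket{+'}{\psi}|^2\le\tfrac1\kappa$ directly from separability. The key step---factoring $\ket{+'}=\ket{\mathrm{ME}}_{AB_2}\otimes\ket{\mathrm{ME}}_{B_1C}$, applying fidelity monotonicity under $\Tr_A$, and using that a separable $\rho_{B_1,C}$ has overlap at most $\tfrac1\kappa$ with a maximally entangled state---is clean and does not appear in the paper. Your argument is self-contained (it does not go through \Cref{lemma:semicheck_means_close_to_quasirigid}) and in fact proves the sharper linear bound $w_{\mathsf D}-\tfrac1\kappa\le 1-w_{\mathsf M}$, from which the stated inequality follows with the $\kappa+1$ factor entirely unused, as you observe. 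The paper's approach, by contrast, reuses machinery already needed for \Cref{claim:rigidity}, so it is shorter in context but yields the weaker square-root bound. Your remark that the stated constant is calibrated for the surrounding analysis rather than being tight is exactly right.
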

\begin{proof}
By \Cref{lemma:semicheck_means_close_to_quasirigid}, there exists a $\ket{\phi} = \sum_{v \in [R]} \alpha_v \ket{v}\ket{c_v,v}\ket{c_v}$ such that  $|\braket{\psi}{\phi}|^2 \ge 1-(\kappa+1)(1 - w_{\mathsf{M}})$. 
Using \Cref{fact:fuchs_vdg_implication}, we see that 
$\left| |\braket{+'}{\psi}|^2 - |\braket{+'}{\phi}|^2 \right|$ is at most $\sqrt{(\kappa+1)(1-w_{\mathsf{M}})}$. Note that by assumption, $|\braket{+'}{\psi}|^2 = w_{\mathsf{D}}\ge\frac{1}{\kappa}$, and by Cauchy-Schwarz, $|\braket{+'}{\phi}|^2 = \frac{1}{R \cdot \kappa} \left| \sum_{v \in [R]} \alpha_v \right|^2 \le \frac{1}{\kappa} \sum_{v \in [R]} |\alpha_v|^2 = \frac{1}{\kappa}$. Combining these claims, we have 
\begin{align*}
    w_{\mathsf{D}} - \frac{1}{\kappa} \le |\braket{+'}{\psi}|^2 - |\braket{+'}{\phi}|^2 \le  \sqrt{(\kappa+1)(1-w_{\mathsf{M}})}\,.
\end{align*}
The lemma follows after rearranging the terms.
\end{proof}

After following the recipe in \cite{bassirian2024superposition} and adjusting many tools from there, we have in \Cref{claim:rigidity} and \Cref{lemma:quadratic} two tests that enforce \emph{rigidity} of the quantum proof. These statements allow us to set the probability of running each test (i.e. $p_1,p_2$) in the protocol. We defer the full calculation to \Cref{sec:probabilities}. Once this is done, we reach our goal in this section:
\begin{theorem}\label{thm:qmapp-nexp}
     There exist constants $1 > c > s > 0$ such that $\QMAPP^{c, s} = \NEXP$.
\end{theorem}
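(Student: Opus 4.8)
The plan is to prove the two inclusions separately; the containment $\QMAPP^{c,s} \subseteq \NEXP$ is routine and the $\NEXP$-hardness is the substantive direction. For the easy containment, I would note that a proof in $\cpp$ is a pure state on $\poly(n)$ qubits, so an $\NEXP$ machine can nondeterministically guess a classical description of its (exponentially many) amplitudes to exponential precision, verify membership in $\cpp$ by additionally guessing and checking a separable decomposition of the reduced state $\Tr_A[\ketbra{\psi}]$ on the $B,C$ registers, and then compute the polynomial-time verifier's acceptance probability by an explicit exponential-time matrix computation, accepting iff it is at least $c$. This yields $\QMAPP^{c,s} \subseteq \NEXP$ for any constants $c > s$.

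For the hard direction $\NEXP \subseteq \QMAPP^{c,s}$, I would assemble the three tests developed in this section into a single verifier and analyze it. I start from a succinct constraint satisfaction problem with a constant promise gap, which is $\NEXP$-hard by the PCP theorem. The verifier expects a proof on the four registers $\mathbb{C}^R \otimes (\mathbb{C}^\kappa \otimes \mathbb{C}^R) \otimes \mathbb{C}^\kappa$ and, with probabilities $p_1$, $p_2$, and $1 - p_1 - p_2$, runs $\semicheck$, runs $\density$ (after applying $\cnot_{1,3}\cnot_{2,4}$), or runs the CSP constraint tests of \cite{bassirian2023qmaplus} (again after the CNOTs), respectively. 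For completeness I would exhibit the honest rigid proof $\frac{1}{\sqrt{R}}\sum_v \ket{v}\ket{c_v, v}\ket{c_v}$ encoding a satisfying assignment, verify it lies in $\cpp$ (tracing out the first register leaves a classically-correlated, hence separable, state on $B,C$), and record that it passes $\semicheck$ and the CSP test with probability $1$, while passing $\density$ with probability exactly $\frac{1}{\kappa}$. This gives completeness $c = p_1 + (1 - p_1 - p_2) + \frac{p_2}{\kappa} = 1 - p_2 \cdot \frac{\kappa - 1}{\kappa}$, which is strictly below $1$ because of the sub-unit $\density$ acceptance.

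The soundness analysis is where the two lemmas of this section do the work, and I expect the balancing of the test probabilities to be the main obstacle. Given any proof $\ket{\psi} \in \cpp$ with $\semicheck$ and $\density$ acceptance probabilities $w_{\mathsf{M}}$ and $w_{\mathsf{D}}$, I would split into two regimes. If $w_{\mathsf{M}}$ is close to $1$ and $w_{\mathsf{D}}$ close to $\frac{1}{\kappa}$, then \Cref{claim:rigidity} forces $\ket{\psi}$ to have large squared overlap with a state $\ket{\chi}$ that is rigid after the CNOTs; since the instance is unsatisfiable, a constant fraction of constraints are violated, so $\ket{\chi}$ fails the CSP test with constant probability, and by \Cref{fact:fuchs_vdg_implication} so does $\ket{\psi}$. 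Otherwise $w_{\mathsf{D}}$ appreciably exceeds $\frac{1}{\kappa}$, and the tradeoff \Cref{lemma:quadratic} then forces $w_{\mathsf{M}}$ to be bounded away from $1$, so the $\semicheck$ branch rejects with constant probability. The delicate point is to choose absolute constants $p_1, p_2$ (and hence $\kappa$) so that \emph{both} regimes simultaneously yield total acceptance at most some $s$ strictly below $c$ --- in particular ensuring that the completeness deficit $p_2 \cdot \frac{\kappa-1}{\kappa}$ does not swallow the gap won in either soundness case. I would carry this out as a single case analysis over the pair $(w_{\mathsf{M}}, w_{\mathsf{D}})$, which is the optimization deferred to \Cref{sec:probabilities}, and conclude that there exist constants $1 > c > s > 0$ with YES instances accepted with probability at least $c$ and NO instances at most $s$. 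Combining the two inclusions gives $\QMAPP^{c,s} = \NEXP$.
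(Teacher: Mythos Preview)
Your proposal is correct and follows the paper's approach closely: the same three-test protocol, the same honest rigid proof (and its verification of membership in $\cpp$), and the same reliance on \Cref{claim:rigidity} and \Cref{lemma:quadratic} for soundness, with the probability-balancing deferred to \Cref{sec:probabilities}.

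One caveat on your soundness sketch: the two regimes you name are not exhaustive. The negation of ``$w_{\mathsf{M}}$ close to $1$ and $w_{\mathsf{D}}$ close to $\tfrac{1}{\kappa}$'' is not ``$w_{\mathsf{D}}$ appreciably exceeds $\tfrac{1}{\kappa}$''; it also contains (a) $w_{\mathsf{D}}$ appreciably \emph{below} $\tfrac{1}{\kappa}$ and (b) $w_{\mathsf{M}}$ bounded away from $1$ while $w_{\mathsf{D}}$ stays near $\tfrac{1}{\kappa}$. The paper's analysis in \Cref{sec:probabilities} accordingly has four cases, with (a) and (b) appearing as Cases~1 and~3; they are easy (the proof simply loses on the $\density$ or $\semicheck$ branch directly), but they must be present for the argument to be exhaustive and they do constrain the choice of $p_1, p_2$. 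A second small correction: the constraint test invoked from~\cite{bassirian2023qmaplus} has completeness $\cyes < 1$ (\Cref{lemma:rigid_means_qma_nexp}), not $1$, so your completeness expression should carry a factor $\cyes$ on the third branch rather than the implicit~$1$.
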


\section{Two copies enforce purity: $\QMAPP \subseteq \QMAP(2)$}
\label{sec:rank1restrict}
We can use two unentangled proofs to implement a SWAP test. For product states, this ensures the unentangled proofs are \emph{pure}.
Consider a convex set of quantum states $\cw$ with the property that nearly-pure states in $\cw$ are always close to a state in $\purerestrict(\cw)$, i.e. a pure state in $\cw$.
When this is the case, we show that any protocol in $\QMA_{\purerestrict(\cw)}$ can be simulated in $\QMA_{\cw}(2)$.

\begin{definition}[Continuity condition of $\cw$]\label{def:continuityProp}
A set of quantum states $\cw$ has the \emph{continuity condition} if there exists a number $0 < a\le 1$ and a continuous, strictly monotonic function $f: [0,a] \to [0,1]$ with $f(0) = 0$ satisfying the following property: For any $\rho \in \cw$, if the maximum eigenvalue of $\rho$ is $1 - x$, then there is some $\ket{\psi} \in \purerestrict(\cw)$
where $\bra{\psi}\rho\ket{\psi} \ge 1 - f(x)$.
\end{definition}
The continuity condition guarantees that $\rho \in \cw$ is $\epsilon$-close to a pure state only if it is $f(\epsilon)$-close to a pure state \emph{in $\cw$}. Although this condition is satisfied by many sets $\cw$, for example the set of separable states, there exist pathological sets without this condition.\footnote{For example, let $P_d\subseteq \cd(\CC^d)$ 
be the set of states with overlap at least $1-\tfrac1d$ with some pure state, and define $\mathcal{G}\defeq\bigcup_d (\cd(\CC^d)\setminus P_d)$.
Then $\mathcal{G}$ has states that are arbitrarily close to pure, yet $\purerestrict(\mathcal{G})$ is empty.
} 

\begin{lemma}
\label{lemma:swaptest_rank1restrict}
    Consider any protocol $V$ in $\QMA_{\purerestrict(\cw)}$ with completeness-soundness gap $\Delta$.
    Suppose also that $\cw$ has the continuity condition with function $f$.
    Then $\QMA_{\cw}(2)$ can simulate $V$ with completeness-soundness gap $\Omega(\Delta \cdot f^{-1}(\frac{\Delta^2}{4}))$.
\end{lemma}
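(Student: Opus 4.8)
The plan is to realize the $\QMA_{\cw}(2)$ simulator as a probabilistic mixture of two subroutines acting on the two unentangled proofs $\rho_1,\rho_2\in\cw$: with probability $p$ run a SWAP test on $\rho_1$ and $\rho_2$, and with probability $1-p$ run the original verifier $V$ on $\rho_1$ alone (discarding $\rho_2$). The SWAP test accepts the product input $\rho_1\otimes\rho_2$ with probability $\tfrac12(1+\Tr[\rho_1\rho_2])$, and this is the mechanism that forces purity. I would leave $p$ and a threshold parameter as free constants and fix them at the end to optimize the gap.

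Completeness is immediate. In a YES instance, let $\ket{\psi}\in\purerestrict(\cw)$ be an optimal proof for $V$, so $\bra{\psi}V\ket{\psi}\ge c$. Since pure states of $\cw$ lie in $\cw$, i.e.\ $\purerestrict(\cw)\subseteq\cw$, the honest Merlin may send $\rho_1=\rho_2=\ketbra{\psi}$. Then the SWAP branch accepts with probability $1$ and the $V$ branch with probability at least $c$, so the simulator accepts with probability at least $p+(1-p)c$.

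Soundness is the substance of the argument. Consider a NO instance and arbitrary $\rho_1,\rho_2\in\cw$; the joint proof is already the product $\rho_1\otimes\rho_2$. Write $t=\Tr[\rho_1\rho_2]\in[0,1]$. The key chain converts a high SWAP acceptance into near-purity of $\rho_1$: by Cauchy--Schwarz $t\le\sqrt{\Tr[\rho_1^2]}$, hence $\Tr[\rho_1^2]\ge t^2$, and since $\Tr[\rho_1^2]\le\lambda_{\max}(\rho_1)$ the top eigenvalue obeys $1-\lambda_{\max}(\rho_1)\le 1-t^2$. Invoking the continuity condition (\Cref{def:continuityProp}) with $x=1-t^2$ produces a pure $\ket{\psi}\in\purerestrict(\cw)$ with $\bra{\psi}\rho_1\ket{\psi}\ge 1-f(1-t^2)$; the standard fidelity--trace-distance inequality then gives $\|\rho_1-\ketbra{\psi}\|_1\le 2\sqrt{f(1-t^2)}$, so that $\Tr[V\rho_1]\le\bra{\psi}V\ket{\psi}+2\sqrt{f(1-t^2)}\le s+2\sqrt{f(1-t^2)}$, using that $V$ rejects every proof in $\purerestrict(\cw)$ in the NO case. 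Substituting, the simulator accepts with probability at most $g(t):=\tfrac{p}{2}(1+t)+(1-p)\big(s+2\sqrt{f(1-t^2)}\big)$, so the YES--NO gap is at least $\tfrac{p}{2}(1-t)+(1-p)\big(\Delta-2\sqrt{f(1-t^2)}\big)$, uniformly over $t\in[0,1]$.

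It remains to lower bound this gap by $\Omega(\Delta\cdot f^{-1}(\Delta^2/4))$. I would set the threshold $u_0=f^{-1}(\Delta^2/4)$ (replacing it by $a$ in the degenerate case $\Delta^2/4>f(a)$) and split on whether $1-t^2$ exceeds $u_0$. When $1-t^2<u_0$, monotonicity of $f$ forces $2\sqrt{f(1-t^2)}<\Delta$, so the second term is nonnegative and the SWAP term $\tfrac{p}{2}(1-t)\ge \tfrac{p}{4}u_0$ near the boundary $t\to\sqrt{1-u_0}$ supplies the gap; when $1-t^2\ge u_0$ the proof $\rho_1$ need not be near-pure, but there $t$ is bounded below $1$, and the SWAP term $\tfrac{p}{2}(1-t)$ dominates provided $1-p=\Theta(u_0/(1-c))$ is taken small enough to absorb the possibly negative $V$-contribution. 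Balancing these two regimes pins down $p=1-\Theta(u_0)$ and yields an overall gap $\Omega(\Delta\cdot u_0)=\Omega(\Delta\cdot f^{-1}(\Delta^2/4))$; the region $1-t^2>a$ is handled separately, since there $t$ is bounded away from $1$ by an absolute constant and the SWAP branch alone gives a constant gap. The main obstacle is precisely this final bookkeeping: the square-root loss in passing from overlap $1-f(\cdot)$ to trace distance $2\sqrt{f(\cdot)}$ is what forces the argument $\Delta^2/4$ inside $f^{-1}$, and the two tests pull in opposite directions as $t$ varies (a smaller SWAP gap coincides with a larger $V$ gap and vice versa), so the delicate point is choosing $p$ and $u_0$ so that the minimum of $g(t)$ is controlled simultaneously in both regimes.
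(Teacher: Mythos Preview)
Your overall strategy matches the paper's: run the SWAP test with probability $p$, otherwise run $V$, and in soundness split into cases according to how close the proofs are to pure. Your parameterization via $t=\Tr[\rho_1\rho_2]$ is actually somewhat cleaner than the paper's, which parameterizes by $\epsilon=1-p_1q_1$ (the product of the top eigenvalues) and then needs a separate claim bounding the SWAP acceptance in terms of $\epsilon$; your Cauchy--Schwarz chain $t^2\le\Tr[\rho_1^2]\le\lambda_{\max}(\rho_1)$ ties SWAP acceptance directly to purity of $\rho_1$. The paper also runs $V$ on a uniformly random $\rho_i$ rather than on $\rho_1$ alone, but your asymmetric choice works just as well.

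There are two issues, one cosmetic and one that breaks your Case~1 argument as written. First, for $0\le V\le I$ and states $\rho,\sigma$ one has $|\Tr[V(\rho-\sigma)]|\le\tfrac12\|\rho-\sigma\|_1$, not $\|\rho-\sigma\|_1$; so your penalty term should be $\sqrt{f(1-t^2)}$, not $2\sqrt{f(1-t^2)}$. Second, your description of the regime $1-t^2<u_0$ is incorrect: you say the SWAP term $\tfrac{p}{2}(1-t)$ ``supplies the gap'', but $t=1$ lies in this regime and the SWAP term vanishes there. The right argument---which your factor-of-$2$ error obscures---is that in this regime the $V$-branch \emph{alone} supplies the gap: with the correct penalty $\sqrt{f}$ and threshold $u_0=f^{-1}(\Delta^2/4)$ you have $\Delta-\sqrt{f(1-t^2)}\ge\Delta/2$ throughout, so the gap is at least $(1-p)\Delta/2$ uniformly. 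With your doubled penalty one instead gets $\Delta-2\sqrt{f(u_0)}=0$ at the boundary, and neither term is bounded away from zero across the whole regime; you would have to subdivide further or move the threshold to $f^{-1}(\Delta^2/16)$. Once these two points are corrected, your Case~2 bookkeeping goes through with $1-p=\Theta(u_0)$ and yields the stated $\Omega(\Delta\cdot f^{-1}(\Delta^2/4))$.
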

\begin{proof}
The quantum proof in $\QMA_{\cw}(2)$ is some product state $\rho_1 \otimes \rho_2$ where $\rho_1, \rho_2 \in \cw$. The verifier runs the following procedure, where we fix the value of $p$ later on:
    \begin{itemize}
        \item With probability $p$, run the SWAP test.
        \item Otherwise, choose $i \in \{1,2\}$ uniformly at random and run $V$ on $\rho_i$.
    \end{itemize}
The SWAP test passes with probability $\frac{1}{2}(1 + \Tr[\rho_1\rho_2])$. Notice that if this is $1$, then $\rho_1 = \rho_2$ is a pure state; i.e. $\rho_1,\rho_2 \in \purerestrict(\cw)$. In completeness, Merlin sends an honest quantum proof, so the procedure accepts with probability at least
$p + (1-p)(s + \Delta)$, where $s$ is the soundness guaranteed by $V$.
We'll split the analysis of soundness into cases depending on the maximum overlap of $\rho_1$ and $\rho_2$ with a pure state.

Let $\rho_1 = \sum_i p_i \ketbra{\psi_{i}}$ and $\rho_2 = \sum_j q_j \ketbra{\phi_j}$ where $\{\ket{\psi_i}\}$ and $\{\ket{\phi_j}\}$ are orthonormal sets, and the $p_i$'s and $q_i$'s are in decreasing order. Suppose $p_1 q_1 = 1 - \epsilon$. Then each of $p_1, q_1$ is at least $1 - \epsilon$. 

\begin{enumerate}
    \item The first case is when $\epsilon \ge 1/3$. Here, the proofs are very far from pure, and so the SWAP test fails. Notice that $\Tr[\rho_1 \rho_2] = \sum_j q_j \bra{\phi_j}\rho_1\ket{\phi_j} \le \max_{\ket{\chi}} \bra{\chi}\rho_1 \ket{\chi} = p_1$. Similarly, $\Tr[\rho_1 \rho_2] \le q_1$.  
    So then $\Tr[\rho_1 \rho_2]^2 \le \min(p_1, q_1)^2 \le p_1 q_1 = 1 - \epsilon$.
    Thus, the SWAP test succeeds with probability $\frac{1}{2}(1 + \Tr[\rho_1 \rho_2]) \le \frac{1}{2}(1 + \sqrt{2/3})$. So the completeness-soundness gap is at least
    \begin{align*}
    p + (1-p)&(s + \Delta) - \left( p \cdot \frac{1}{2}(1 + \sqrt{2/3}) + (1-p) \right)
    \\
    &=
    p \cdot \frac{1}{2}(1 - \sqrt{2/3}) + (1-p)(s + \Delta - 1)
    \,.
    \end{align*}
    The gap is at least $\delta > 0$ when the probability $p$ is at least 
    \begin{align}
    \label{eqn:condition1}
        p > \frac{1 - (s + \Delta) + \delta}{1 - (s + \Delta) + \frac{1}{2}(1 - \sqrt{2/3})}\,.
    \end{align}
    \item Otherwise, $\epsilon < 1/3$. We will need the probability of the SWAP test in this case, whose proof is deferred:
    \begin{claim}\label{claim:mixed-swap}
        When $\epsilon < 1/3$, the SWAP test accepts with probability at most $1 - \frac{1}{2} \epsilon + \frac{1}{2} \epsilon^2$. 
    \end{claim}
We next analyze the acceptance probability of the other test. Suppose $i=1$; the $i=2$ case is the same by symmetry.
Let $f$ be the polynomial associated with the continuity condition of $\cw$. Since $p_1 \ge 1 - \epsilon$, $\rho_1$ has squared overlap at least $1-f(\epsilon)$ with a state in $\purerestrict(\cw)$. Since one state is pure, the squared overlap is equal to the fidelity.\footnote{We use the ``squared'' version of fidelity, i.e. $F(\rho, \sigma) \defeq \left(\Tr[\sqrt{\sqrt{\rho} \sigma \sqrt{\rho}}] \right)^2$.}
Fuchs-van de Graaf~\cite{fuchs1998cryptographic} implies that the trace distance is at most $\sqrt{f(\epsilon)}$. 
So this test accepts with probability at most $s + \sqrt{f(\epsilon)}$.

Altogether, the procedure accepts with probability at most
\begin{align*}
    p \cdot (1 - \frac{1}{2}\epsilon + \frac{1}{2}\epsilon^2) + (1-p)(s + \sqrt{f(\epsilon)})\,,
\end{align*}
and so the completeness-soundness gap is at least
\begin{align*}
    \frac{1}{2} p \cdot \epsilon(1 - \epsilon) + (1-p)(\Delta - \sqrt{f(\epsilon)})\,.
\end{align*}
Choose $\gamma>0$ where $f(\gamma) = \frac{\Delta^2}{4}$.\footnote{If this is not possible for any $\gamma \le 1/3$, then set $\gamma = 1/3$, and ignore case (b).} We further divide into two subcases:
\begin{enumerate}
    \item When $\epsilon \le \gamma$, the proof is close to a state in $\purerestrict(\cw)$, so the gap is at least $(1-p)\frac{\Delta}{2}$.
    \item Otherwise, $\gamma   < \epsilon  < 1/3$. Then the proofs are far from pure and fail the SWAP test, even if $f(\epsilon) = 1$. The gap is at least $\delta > 0$ when the probability $p$ is at least
\begin{align}
\label{eqn:condition2}
    p > \frac{1 - \Delta + \delta}{1 - \Delta  + \frac{1}{2} \gamma(1 - \gamma)}\,.
\end{align}
\end{enumerate}
\end{enumerate}
If we set $\delta = \frac{1}{4} \min(1 - \sqrt{2/3}, \gamma(1-\gamma))$, then we may choose $0 < p < 1$ that satisfies both \cref{eqn:condition1,eqn:condition2}, producing a completeness-soundness gap $\Omega(\min(\delta, (1-p)\Delta)) = \Omega(\delta \Delta)$. So the completeness-soundness gap is $\Omega(\Delta \cdot f^{-1}(\frac{\Delta^2}{4}))$.
\end{proof}
We finish the proof of~\cref{lemma:swaptest_rank1restrict} by proving~\cref{claim:mixed-swap}.
    \begin{proof}[Proof of~\cref{claim:mixed-swap}]
        Recall that  $\Tr[\rho_1 \rho_2] = \sum_{i,j} p_i q_j |\bra{\psi_i}\ket{\phi_j}|^2$. Then
        \begin{align*}
        \Tr[\rho_1 \rho_2]  \le p_1 q_1 |\bra{\psi_1}\ket{\phi_1}|^2 
        + \sum_{j \ne 1} q_j |\bra{\psi_1}\ket{\phi_j}|^2
+ \sum_{i \ne 1} p_i |\bra{\psi_i}\ket{\phi_1}|^2
+  \left(\sum_{i \ne 1} p_i\right)\left(\sum_{j \ne 1} q_j\right) \,.
    \end{align*}
    Let $|\bra{\psi_1}\ket{\phi_1}|^2 =1-z$.
    We simplify each term in this upper bound:
    \begin{itemize}
    \item  The first term is $(1 - \epsilon)(1-z)$.
    \item The value $|\bra{\psi_1}\ket{\phi_j}|^2$ when $j \ne 1$ is at most $1 - |\bra{\psi_1}\ket{\phi_1}|^2 = z$. So the second term is at most $z(\sum_{j \ne 1} q_j) \le z \epsilon$.
\item By symmetry, the third term is at most $z \epsilon$.
\item  Since $p_1,q_1 \ge 1 - \epsilon$, the last term is at most $\epsilon^2$.
\end{itemize}
Altogether, we have
\begin{align*}
    \Tr[\rho_1 \rho_2] \le (1 - \epsilon)(1-z) + 2 z \epsilon +\epsilon^2 = 1-\epsilon + (3\epsilon - 1)z + \epsilon^2\,.
\end{align*}
Since $\epsilon < 1/3$, this upper bound is maximized at $z = 0$. So the SWAP test accepts with probability $\frac{1}{2}(1 + \Tr[\rho_1 \rho_2]) \le 1 - \frac{1}{2}\epsilon + \frac{1}{2}\epsilon^2$.
    \end{proof}

We now use \Cref{lemma:swaptest_rank1restrict} to prove that $\QMAPP \subseteq \QMAP(2)$.
\begin{claim}
\label{cor:qmapurif2_is_nexp}
    $\QMAPP \subseteq \QMAP(2)$, and so $\QMAP(2) = \NEXP$.
\end{claim}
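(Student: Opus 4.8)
The plan is to apply \Cref{lemma:swaptest_rank1restrict} with $\cw = \cm$, the set of internally separable (mixed) states. For this I must verify two things: first, that $\purerestrict(\cm) = \cpp$, so that $\QMA_{\purerestrict(\cm)}$ is exactly $\QMAPP$; and second, that $\cm$ satisfies the continuity condition of \Cref{def:continuityProp}. The first identity is essentially by definition: a pure state $\ket{\psi}$ lies in $\cm$ iff $\Tr_A[\ketbra{\psi}] \in \sep(B,C)$, which is precisely the defining condition of $\cpp$. So the real content is establishing the continuity condition for $\cm$.

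First I would set up the continuity argument. Take any $\rho \in \cm$ whose top eigenvalue is $1 - x$, with corresponding eigenvector $\ket{\psi}$. The goal is to exhibit a pure state in $\cpp = \purerestrict(\cm)$ with squared overlap at least $1 - f(x)$ with $\rho$. The natural candidate is $\ket{\psi}$ itself, and indeed $\bra{\psi}\rho\ket{\psi} \ge 1 - x$ automatically since $1-x$ is the largest eigenvalue. The only thing to check is that $\ket{\psi} \in \cpp$, i.e. that $\Tr_A[\ketbra{\psi}]$ is separable on $B,C$. This is where the main obstacle lies: it need \emph{not} be true that the top eigenvector of an internally separable $\rho$ is itself internally separable, since the set $\cm$ is a convex set of mixed states and internal separability is not preserved under taking spectral components. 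So a more careful argument is needed.

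The fix is a continuity/perturbation argument on $\sep(B,C)$. The idea is that if $\rho \in \cm$ is $x$-close to pure (top eigenvalue $1-x$), then $\rho$ is trace-distance $O(\sqrt{x})$-close to the pure state $\ketbra{\psi}$, by Fuchs--van de Graaf. Since the partial trace is trace-non-increasing, $\Tr_A[\ketbra{\psi}]$ is then $O(\sqrt{x})$-close to $\Tr_A[\rho]$, which \emph{is} separable because $\rho \in \cm$. So $\Tr_A[\ketbra{\psi}]$ is a state that is $O(\sqrt{x})$-close to the separable set $\sep(B,C)$ but may itself be slightly entangled. Here I would use that register $C$ has only $O(1)$ qubits together with the separable-ball fact (a quantitative version of \Cref{fact:separable_everywhere_small_ball}): one can project $\Tr_A[\ketbra{\psi}]$ onto $\sep(B,C)$ to obtain a nearby genuinely separable state $\tau$, and then take any purification $\ket{\chi}$ of $\tau$ whose reduced state on $B,C$ is $\tau$. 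By construction $\ket{\chi} \in \cpp$, and a fidelity/purification continuity estimate (e.g. via Uhlmann's theorem) shows $|\braket{\chi}{\psi}|^2 \ge 1 - O(\sqrt{x})$, hence $\bra{\chi}\rho\ket{\chi} \ge 1 - O(\sqrt{x})$. This yields the continuity condition with $f(x) = \Theta(\sqrt{x})$ on a suitable interval $[0,a]$.

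With the continuity condition in hand, \Cref{lemma:swaptest_rank1restrict} applies directly: any $\QMAPP = \QMA_{\cpp} = \QMA_{\purerestrict(\cm)}$ protocol with constant gap $\Delta$ can be simulated in $\QMA_{\cm}(2) = \QMAP(2)$ with completeness--soundness gap $\Omega(\Delta \cdot f^{-1}(\tfrac{\Delta^2}{4})) = \Omega(\Delta \cdot \Delta^4) = \Omega(\Delta^5)$, still a constant. Combining this containment with \Cref{thm:qmapp-nexp} (which gives constants where $\QMAPP^{c,s} = \NEXP$) shows that $\QMAP(2) = \NEXP$ at some constant gap, since the lower bound $\NEXP \subseteq \QMAPP \subseteq \QMAP(2)$ is matched by the trivial upper bound $\QMAP(2) \subseteq \NEXP$. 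The step I expect to require the most care is the perturbation argument ensuring the nearby separable state admits a purification that is still close to $\ket{\psi}$; getting the constant in $C$ to cooperate with the separable-ball radius, and invoking Uhlmann to control the purification overlap, is the delicate part, though the $O(1)$-qubit restriction on $C$ is exactly what makes it go through.
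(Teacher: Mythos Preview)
Your overall plan is right, and indeed the paper proves this claim via \Cref{lemma:swaptest_rank1restrict}. But the heart of the matter---the continuity condition for $\cm$---has a real gap in your argument, one that the paper explicitly flags and works around.

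The problem is the Uhlmann step. You have $\ket{\psi}$ on $A\otimes B\otimes C$ and a separable state $\tau$ on $B\otimes C$ close to $\Tr_A[\ketbra{\psi}]$; you want a purification $\ket{\chi}$ of $\tau$ on the \emph{same} space $A\otimes B\otimes C$ with high overlap with $\ket{\psi}$. Uhlmann guarantees such a $\ket{\chi}$ only when register $A$ is large enough to purify $\tau$, i.e.\ when $\dim(A)\ge\operatorname{rank}(\tau)$. Separable states can be full rank on $B\otimes C$, so in general you need $\dim(A)\ge\dim(B)\cdot\dim(C)$. This is not part of the definition of $\cm$, and the $O(1)$ bound on $\dim(C)$ does not save you since $\dim(B)$ is exponential. (The separable-ball fact you invoke concerns a neighborhood of the maximally mixed state and does not help here.) The paper in fact lists the continuity condition for all of $\cm$ as an open question.

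The paper's fix is to restrict to $\cm^\star\subseteq\cm$, the internally separable states satisfying the register inequality $\dim(A)\ge\dim(B)\cdot\dim(C)$. For $\cm^\star$ the Uhlmann argument goes through cleanly---and you do not even need your projection step: $\Tr_A[\rho]$ is \emph{already} separable, so take $\tau=\Tr_A[\rho]$, use monotonicity of fidelity under partial trace to get $F(\Tr_A[\rho],\Tr_A[\ketbra{\psi}])\ge 1-x$, and apply Uhlmann directly. This yields $f(x)=2x$ rather than your $\Theta(\sqrt{x})$. The remaining work, carried out in \Cref{sec:padding}, is a padding argument showing the $\QMAPP$ protocol for $\NEXP$ from \Cref{sec:qmaw_equals_nexp} can be run with proofs whose $A$ register is enlarged to meet the size requirement, so that $\NEXP\subseteq\QMA_{\purerestrict(\cm^\star)}$. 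Then \Cref{lemma:swaptest_rank1restrict} gives $\NEXP\subseteq\QMA_{\cm^\star}(2)\subseteq\QMAP(2)$.
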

\begin{proof}
Ideally, we would like to show that $\cm$ has the continuity condition.
However, we are unable to do this when the first register in the proof has too few qubits. Nevertheless, we show that a subset of $\cm$ has the continuity condition, and moreover, the protocol for $\NEXP$ (\Cref{sec:qmaw_equals_nexp}) can use quantum proofs in this subset of $\cm$. Since the $\NEXP$-hard problem is complete for $\QMAPP$, every problem in $\QMAPP$ can be decided in $\QMAP(2)$.

Label the registers of the quantum proof $A,B,C$, and denote the dimension of any register $X$ as $\dim(X)$.
Let $\cm^\star \subseteq \cm$ be the set of tripartite states in $\cm$ where the registers also satisfy $\dim(A)\ge\dim(B)\cdot\dim(C)$. We show in \Cref{sec:padding}  that the protocol of \Cref{sec:qmaw_equals_nexp} can use a quantum proof in $\cpp$ obeying this register size inequality; thus,  $\NEXP \subseteq \QMA_{\purerestrict(\cm^\star)}$.

We now show that the continuity condition holds for $\cm^\star$. Consider any such quantum proof $\rho$. Suppose the largest eigenvalue-eigenvector pair of $\rho$ is $(\lambda_1, \ket{\psi})$. 
The fidelity of $\rho$ and $\ket{\psi}$ is exactly $\bra{\psi}\rho\ket{\psi} = \lambda_1$.
Since fidelity is non-decreasing under CPTP maps, the fidelity of $\Tr_A[\rho]$ and $\Tr_A[\ketbra{\psi}]$ is at least $\lambda_1$. 
Since $\ket{\psi}$ purifies $\Tr_A[\ketbra{\psi}]$, and $\dim(A) \ge \dim(B)\cdot \dim(C)$, Uhlmann's theorem implies the existence of a purification $\ket{\chi_{\rho}}$ of $\Tr_A[\rho]$ that maintains fidelity, i.e. $|\bra{\psi}\ket{\chi_{\rho}}|^2 \ge \lambda_1$.

Note that by construction, $\ket{\chi_{\rho}} \in \purerestrict(\cm^\star)$. Moreover,
\begin{align*}
    \bra{\chi_{\rho}} \rho \ket{\chi_{\rho}} \ge \lambda_1 \bra{\chi_{\rho}} \ketbra{\psi}\ket{\chi_{\rho}} \ge \lambda_1^2\,.
\end{align*}
We verify that $\cm$ has the continuity condition with $f(x) = 2x$. Suppose $\lambda_1 = 1 - x$; then the squared overlap with $\ket{\chi_{\rho}}$ is at least $(1-x)^2 = 1 - (2x - x^2) \ge 1 - 2x$. \Cref{lemma:swaptest_rank1restrict} then implies that $\NEXP \subseteq \QMA_{\cm^{\star}}(2)$, which is trivially contained in $\QMA_{\cm}(2) = \QMAP(2)$.
\end{proof}
We show in \Cref{sec:gapamplification} that $\QMAP(2)$ at large enough completeness-soundness gap can be simulated by $\QMA(2)$. As a consequence:
\begin{corollary}
    Gap amplification of $\QMAP(2)$ implies $\QMA(2) = \NEXP$.
\end{corollary}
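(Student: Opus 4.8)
The plan is to chain the two endpoints we already have for $\QMAP(2)$ at different constant gaps. By \Cref{thm:lowerbound} (proved via \Cref{cor:qmapurif2_is_nexp}), there are fixed constants $1 > c_0 > s_0 > 0$ with $\QMAP^{c_0,s_0}(2) = \NEXP$; in particular, the $\NEXP$-complete succinct CSP underlying \Cref{sec:qmaw_equals_nexp} has a $\QMAP(2)$ protocol at this gap. At the other extreme, \Cref{claim:almost_qmak} with $k=2$ supplies constants $1 > c_1 > s_1 > 0$, with gap large enough, for which $\QMAP^{c_1,s_1}(2) = \QMA(2)$.

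First I would fix a precise reading of the hypothesis. Saying that $\QMAP(2)$ \emph{exhibits gap amplification} should mean that for every pair of constants $c > s$ and every target gap $\Delta' < 1$ one has $\QMAP^{c,s}(2) \subseteq \QMAP^{c',s'}(2)$ for some $c' > s'$ with $c' - s' \ge \Delta'$. Granting this, I would apply the amplifier to the starting gap $(c_0,s_0)$ with target equal to the threshold gap required by \Cref{claim:almost_qmak}, obtaining $\QMAP^{c_0,s_0}(2) \subseteq \QMAP^{c_1,s_1}(2)$.

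Assembling the pieces gives the containment chain $\NEXP = \QMAP^{c_0,s_0}(2) \subseteq \QMAP^{c_1,s_1}(2) = \QMA(2)$, so $\NEXP \subseteq \QMA(2)$. The reverse inclusion $\QMA(2) \subseteq \NEXP$ is the standard trivial upper bound (guess the two unentangled proofs to exponential precision and estimate the acceptance probability in $\NEXP$). Together these yield $\QMA(2) = \NEXP$.

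The main obstacle is conceptual rather than computational: it lies in settling what gap amplification for $\QMAP(2)$ ought to mean and in checking that the amplified gap can actually reach the threshold of \Cref{claim:almost_qmak}. As flagged after \Cref{defn:qmaw}, $\QMAP^{c,s}(2)$ need not equal $\QMAP^{c',s'}(2)$ for different constant gaps~\cite{bassirian2023qmaplus}, so the hypothesis is genuinely nontrivial and all the content hides there; the remaining steps are bookkeeping. I would also verify that whatever amplification procedure one posits preserves the internal-separability guarantee on the proofs, so that the amplified protocol is still a legitimate $\QMAP(2)$ protocol and the middle inclusion above is valid.
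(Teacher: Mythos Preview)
Your proposal is correct and follows essentially the same approach as the paper: the paper treats the corollary as an immediate consequence of \Cref{cor:qmapurif2_is_nexp} (giving $\NEXP = \QMAP^{c_0,s_0}(2)$) together with \Cref{remark:qmapurif2_in_qma2} (the $k=2$ case of \Cref{claim:almost_qmak}, giving $\QMAP^{c_1,s_1}(2)\subseteq\QMA(2)$ at a large enough gap), and gap amplification is exactly the hypothesis that bridges these two regimes. Your additional remarks on formalizing the meaning of gap amplification and on preserving internal separability are reasonable elaborations but not needed beyond what the paper already assumes.
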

However, by \Cref{thm:upperbound}, the single-copy variant  $\QMAP \subseteq \EXP$ at all completeness-soundness gaps at least inverse exponential in input size. Unlike $\QMA^+$, gap amplification of $\QMAP$ does not obviously imply any complexity collapse.
In other words, $\QMAP(2)$ has exactly the desired features of \cite{jeronimo2023power} while bypassing the issue identified in \cite{bassirian2023qmaplus}.

\section{Open questions}
\label{sec:outlook}
\begin{enumerate}
    \item Does the complexity class $\QMAP$ exhibit gap amplification? This is a good starting point to study gap amplification of $\QMAP(2)$, which by \Cref{cor:gapamp_means_qma2_nexp} implies $\QMA(2) = \NEXP$.
    \item The proof of \Cref{thm:lowerbound} uses two quantum proofs only to implement a purity test. Is there a set of proofs $\cw$ such that $\QMA_\cw(2)$ is strictly more powerful than $\QMA_{\purerestrict(\cw)}$? If so, there is a setting where \emph{unentanglement} is useful for something beyond purity testing.
    \item Is there a reduction from $\wval{}$ to $\wmem{}$ that preserves a constant gap? If so, we can convert our results on hardness of approximation to the associated testing problems. For example, this would imply $\NP$-hardness of \emph{testing} whether a state is a purification of a separable state or $\Omega(1)$-far from any such state.
    \item How essential is the continuity condition (\Cref{def:continuityProp}) to show $\QMA_{\purerestrict(\cw)} \subseteq \QMA_{\cw}(2)$? For example, does all of $\cm$ satisfy the continuity condition? A positive answer would simplify the proof of \Cref{cor:qmapurif2_is_nexp} by bypassing the argument in \Cref{sec:padding}. Is there a simple characterization of sets which do \emph{not} satisfy the continuity condition?
    \item One problem related to $\QMA$ and purity testing is the \emph{pure state marginal problem}. Here, one decides if there exists a \emph{pure} state that is consistent with a given set of reduced density matrices. It is known to be in $\QMA(2)$ and at least $\QMA$-hard, whereas the mixed variant is $\QMA$-complete~\cite{liu06cldm,Broadbent_2022}. Can one exactly characterize the complexity of this problem?
    \item A \emph{disentangler} is a hypothetical quantum channel 
    that maps the set of states to separable states and is approximately surjective. As noted by John Watrous (and communicated in~\cite{power_of_unentanglement}), the existence of an efficient disentangler implies $\QMA = \QMA(2)$. We remark that an analogous map from states (or even separable states) to \emph{internally separable} states would imply $\QMA(2) = \NEXP$.
    Does such a channel exist?
    In fact, disentanglers that separate $O(1)$ qubits from the rest of the state exist by the quantum de Finetti theorem~\cite{Caves_2002,Christandl_2007}, but they separate a register from the entire system, not a particular subsystem.
    \item Our work initiates a complexity-theoretic study of a notion of \emph{multipartite} unentanglement. 
    As asked in \Cref{question:multipartite_unentanglement}, which notions are computationally powerful?
    For example, what is the power of $\QMA_\cw$ when $\cw$ contains states with a multipartite entanglement structure distinct from internal separability?
\end{enumerate}

\subsection*{Acknowledgements}
KM thanks Tyler Chen, Christopher Musco, and Apoorv Vikram Singh for helpful discussions on convex optimization.
RB, KM, and IL thank the Institute for Advanced Study for organizing the summer school where they met and began collaborating.
IL and PW thank Thomas Vidick for suggesting them to talk. 

This work was done in part while a subset of the authors were visiting the Simons Institute for the Theory of Computing, supported by NSF QLCI Grant No. 2016245.
BF, RB, and KM acknowledge support from AFOSR
(FA9550-21-1-0008). This material is based upon work partially
supported by the National Science Foundation under Grant CCF-2044923
(CAREER), by the U.S. Department of Energy, Office of Science,
National Quantum Information Science Research Centers (Q-NEXT) and by
the DOE QuantISED grant DE-SC0020360. KM acknowledges support from the National Science Foundation Graduate Research Fellowship Program under Grant No. DGE-1746045. 
This research was co-funded by the European Union (ERC, EACTP, 101142020), the Israel Science Foundation (grant number 514/20) and the Len Blavatnik and the Blavatnik Family Foundation. 
PW is supported by ERC Consolidator Grant VerNisQDevS (101086733).

Any views, opinions, findings, and conclusions or recommendations expressed in this material are those of the author(s) and do not necessarily reflect the views of the National Science Foundation, European Union, or the European Research Council Executive Agency. Neither the European Union nor any granting authority can be held responsible for them.

\printbibliography

@misc{aaronson_pdqma,
	title        = {PDQMA = DQMA = NEXP: QMA With Hidden Variables and Non-collapsing Measurements},
	author       = {Scott Aaronson and Sabee Grewal and Vishnu Iyer and Simon C. Marshall and Ronak Ramachandran},
	year         = 2024,
	eprint       = {2403.02543},
	archiveprefix = {arXiv}
}

@inproceedings{aaronson2023quantumpseudoentanglement,
	title        = {{Quantum Pseudoentanglement}},
	author       = {Aaronson, Scott and Bouland, Adam and Fefferman, Bill and Ghosh, Soumik and Vazirani, Umesh and Zhang, Chenyi and Zhou, Zixin},
	year         = 2024,
	booktitle    = {15th Innovations in Theoretical Computer Science Conference (ITCS 2024)},
	doi          = {10.4230/LIPIcs.ITCS.2024.2},
	eprint       = {2211.00747},
	archiveprefix = {arXiv}
}

@article{ALMSS98,
	title        = {Proof Verification and the Hardness of Approximation Problems},
	author       = {Sanjeev Arora and Carsten Lund and Rajeev Motwani and Madhu Sudan and Mario Szegedy},
	year         = 1998,
	journal      = {J. {ACM}},
	volume       = 45,
	number       = 3,
	pages        = {501--555},
	doi          = {10.1145/278298.278306},
	url          = {https://www.cs.umd.edu/~gasarch/TOPICS/pcp/ALMSS.pdf},
	timestamp    = {Tue, 14 Jun 2022 13:12:43 +0200},
	biburl       = {https://dblp.org/rec/journals/jacm/AroraLMSS98.bib},
	bibsource    = {dblp computer science bibliography, https://dblp.org}
}

@inproceedings{AS92,
	title        = {Probabilistic Checking of Proofs; {A} New Characterization of {NP}},
	author       = {Sanjeev Arora and Shmuel Safra},
	year         = 1992,
	booktitle    = {33rd Annual Symposium on Foundations of Computer Science},
	doi          = {10.1109/SFCS.1992.267824},
	url          = {https://www.cs.umd.edu/~gasarch/TOPICS/pcp/AS.pdf},
	timestamp    = {Thu, 23 Mar 2023 23:57:55 +0100},
	biburl       = {https://dblp.org/rec/conf/focs/AroraS92.bib},
	bibsource    = {dblp computer science bibliography, https://dblp.org}
}

@inproceedings{barak2013roundingsumofsquaresrelaxations,
	title        = {Rounding sum-of-squares relaxations},
	author       = {Barak, Boaz and Kelner, Jonathan A. and Steurer, David},
	year         = 2014,
	booktitle    = {Proceedings of the Forty-Sixth Annual ACM Symposium on Theory of Computing},
	doi          = {10.1145/2591796.2591886},
	eprint       = {1312.6652},
	archiveprefix = {arXiv}
}

@inproceedings{bassirian2023qmaplus,
	title        = {{Quantum Merlin-Arthur and proofs without relative phase}},
	author       = {Bassirian, Roozbeh and Fefferman, Bill and Marwaha, Kunal},
	year         = 2023,
	booktitle    = {15th Innovations in Theoretical Computer Science Conference (ITCS 2024)},
	doi          = {10.4230/LIPIcs.ITCS.2024.9},
	eprint       = {2306.13247},
	archiveprefix = {arXiv},
	annote       = {Keywords: quantum complexity, QMA(2), PCPs}
}

@misc{bassirian2024superposition,
	title        = {Superposition detection and QMA with non-collapsing measurements},
	author       = {Roozbeh Bassirian and Kunal Marwaha},
	year         = 2024,
	eprint       = {2403.02532},
	archiveprefix = {arXiv}
}

@inproceedings{bcy11_definetti,
	title        = {A quasipolynomial-time algorithm for the quantum separability problem},
	author       = {Brandão, Fernando G.S.L. and Christandl, Matthias and Yard, Jon},
	year         = 2011,
	month        = jun,
	booktitle    = {Proceedings of the Forty-Third Annual ACM Symposium on Theory of Computing},
	doi          = {10.1145/1993636.1993683},
eprint={1011.2751},
	archiveprefix = {arXiv}
}

@misc{beigi2009np,
	title        = {NP vs QMA\_log(2)},
	author       = {Salman Beigi},
	year         = 2008,
	eprint       = {0810.5109},
	archiveprefix = {arXiv}
}

@article{bennett2000exactAndMultipartite,
	title        = {Exact and asymptotic measures of multipartite pure-state entanglement},
	author       = {Bennett, Charles H. and Popescu, Sandu and Rohrlich, Daniel and Smolin, John A. and Thapliyal, Ashish V.},
	year         = 2000,
	month        = 12,
	journal      = {Phys. Rev. A},
	publisher    = {American Physical Society},
	volume       = 63,
	doi          = {10.1103/PhysRevA.63.012307},
	issue        = 1,
	numpages     = 12,
	eprint       = {quant-ph/9908073},
	archiveprefix = {arXiv}
}

@article{bertsimas_vempala,
	title        = {Solving convex programs by random walks},
	author       = {Bertsimas, Dimitris and Vempala, Santosh},
	year         = 2004,
	month        = 6,
	journal      = {J. ACM},
	publisher    = {Association for Computing Machinery},
	address      = {New York, NY, USA},
	volume       = 51,
	number       = 4,
	pages        = {540–556},
	doi          = {10.1145/1008731.1008733},
	url          = {https://people.orie.cornell.edu/miketodd/bertvemp.pdf},
	issue_date   = {July 2004},
	abstract     = {Minimizing a convex function over a convex set in n-dimensional space is a basic, general problem with many interesting special cases. Here, we present a simple new algorithm for convex optimization based on sampling by a random walk. It extends naturally to minimizing quasi-convex functions and to other generalizations.},
	numpages     = 17,
	keywords     = {Convex programs, polynomial time, random walks}
}

@inproceedings{bh12,
	title        = {Quantum de finetti theorems under local measurements with applications},
	author       = {Brandão, Fernando G.S.L. and Harrow, Aram W.},
	year         = 2013,
	booktitle    = {Proceedings of the Forty-Fifth Annual ACM Symposium on Theory of Computing},
	doi          = {10.1145/2488608.2488718},
	eprint          = {1210.6367},
	archiveprefix = {arXiv},
	keywords     = {quantum information theory, de finetti, SDP hierarchy}
}

@inproceedings{bks17,
	title        = {Quantum entanglement, sum of squares, and the log rank conjecture},
	author       = {Barak, Boaz and Kothari, Pravesh K. and Steurer, David},
	year         = 2017,
	booktitle    = {Proceedings of the 49th Annual ACM SIGACT Symposium on Theory of Computing},
	doi          = {10.1145/3055399.3055488},
	eprint       = {1701.06321},
	archiveprefix = {arXiv}
}

@article{blier2010quantum,
	title        = {A quantum characterization of NP},
	author       = {Blier, Hugue and Tapp, Alain},
	year         = 2012,
	month        = 9,
	journal      = {Comput. Complex.},
	doi          = {10.1007/s00037-011-0016-2},
	issue_date   = {September 2012},
	eprint       = {0709.0738},
	archiveprefix = {arXiv}
}

@article{boson_qmacomplete,
	title        = {Interacting Boson Problems Can Be QMA Hard},
	author       = {Wei, Tzu-Chieh and Mosca, Michele and Nayak, Ashwin},
	year         = 2010,
	month        = 1,
	journal      = {Phys. Rev. Lett.},
	publisher    = {American Physical Society},
	volume       = 104,
	doi          = {10.1103/PhysRevLett.104.040501},
eprint={0905.3413},
	archiveprefix = {arXiv},
	issue        = 4,
	numpages     = 4
}

@article{Brand_o_2012_search,
	title        = {Detection of Multiparticle Entanglement: Quantifying the Search for Symmetric Extensions},
	author       = {Brandão, Fernando G.S.L. and Christandl, Matthias},
	year         = 2012,
	month        = 10,
	journal      = {Physical Review Letters},
	publisher    = {American Physical Society (APS)},
	doi          = {10.1103/physrevlett.109.160502},
	eprint       = {1105.5720},
	archiveprefix = {arXiv}
}

@misc{brandao2015estimatingoperatornormsusing,
	title        = {Estimating operator norms using covering nets},
	author       = {Fernando G.S.L. Brandão and Aram W. Harrow},
	year         = 2015,
	eprint       = {1509.05065},
	archiveprefix = {arXiv}
}

@article{Brassard_2001,
	title        = {Multi-particle entanglement via two-party entanglement},
	author       = {Gilles Brassard and  Tal Mor},
	year         = 2001,
	month        = 8,
	journal      = {Journal of Physics A: Mathematical and General},
	publisher    = {},
	volume       = 34,
	number       = 35,
	doi          = {10.1088/0305-4470/34/35/306},
url={https://csaws.cs.technion.ac.il/~talmo/CV/my-papers/BM01-multipartyEnt.pdf}

}

@article{Broadbent_2022,
	title        = {QMA-Hardness of Consistency of Local Density Matrices with Applications to Quantum Zero-Knowledge},
	author       = {Broadbent, Anne and Grilo, Alex B.},
	year         = 2022,
	month        = aug,
	journal      = {SIAM Journal on Computing},
	volume       = 51,
	number       = 4,
	doi          = {10.1137/21m140729x},
	archiveprefix = {arXiv},
	eprint          = {1911.07782}
}

@article{Caves_2002,
	title        = {Unknown quantum states: The quantum de Finetti representation},
	author       = {Caves, Carlton M. and Fuchs, Christopher A. and Schack, Rüdiger},
	year         = 2002,
	month        = sep,
	journal      = {Journal of Mathematical Physics},
	volume       = 43,
	number       = 9,
	doi          = {10.1063/1.1494475},
	archiveprefix = {arXiv},
	eprint          = {quant-ph/0104088}
}

@inproceedings{chailloux,
	title        = {The Complexity of the Separable Hamiltonian Problem},
	author       = {Andre Chailloux and Or Sattath},
	year         = 2012,
	month        = 6,
	booktitle    = {2012 {IEEE} 27th Conference on Computational Complexity},
	publisher    = {{IEEE}},
	doi          = {10.1109/ccc.2012.42},
	archiveprefix = {arXiv},
	eprint       = {1111.5247}
}

@misc{chen2010short,
	title        = {Short multi-prover quantum proofs for SAT without entangled measurements},
	author       = {Chen, Jing and Drucker, Andrew},
	year         = 2010,
	archiveprefix = {arXiv},
	eprint       = {1011.0716}
}

@article{chiesa2011improved,
         author={Alessandro Chiesa and Michael A. Forbes},
         title={Improved Soundness for QMA with Multiple Provers},
         journal={Chicago Journal of Theoretical Computer Science},
    month={1},
    year={2013},
	archiveprefix = {arXiv},
	eprint       = {1108.2098}
}

@article{Christandl_2007,
	title        = {One-and-a-Half Quantum de Finetti Theorems},
	author       = {Christandl, Matthias and König, Robert and Mitchison, Graeme and Renner, Renato},
	year         = 2007,
	month        = mar,
	journal      = {Communications in Mathematical Physics},
	publisher    = {Springer Science and Business Media LLC},
	volume       = 273,
	number       = 2,
	pages        = {473–498},
	doi          = {10.1007/s00220-007-0189-3},
	archiveprefix = {arXiv},
	eprint       = {quant-ph/0602130}
}

@article{DBLP:journals/sigact/Gharibian23,
	title        = {Guest Column: The 7 faces of quantum {NP}},
	author       = {Sevag Gharibian},
	year         = 2023,
	journal      = {{SIGACT} News},
	volume       = 54,
	number       = 4,
	pages        = {54--91},
	doi          = {10.1145/3639528.3639535},
eprint={2310.18010},
	archiveprefix = {arXiv},
}

@article{Doherty_2004,
	title        = {Complete family of separability criteria},
	author       = {Doherty, Andrew C. and Parrilo, Pablo A. and Spedalieri, Federico M.},
	year         = 2004,
	month        = feb,
	journal      = {Physical Review A},
	publisher    = {American Physical Society (APS)},
	volume       = 69,
	number       = 2,
	doi          = {10.1103/physreva.69.022308},
eprint={quant-ph/0308032},
	archiveprefix = {arXiv},
}

@article{dps_02,
	title        = {Distinguishing Separable and Entangled States},
	author       = {Doherty, Andrew C. and Parrilo, Pablo A. and Spedalieri, Federico M.},
	year         = 2002,
	month        = 4,
	journal      = {Phys. Rev. Lett.},
	publisher    = {American Physical Society},
	volume       = 88,
	doi          = {10.1103/PhysRevLett.88.187904},
	issue        = 18,
	numpages     = 4,
eprint={quant-ph/0112007},
	archiveprefix = {arXiv},
}

@article{fuchs1998cryptographic,
	title        = {Cryptographic Distinguishability Measures for Quantum Mechanical States},
	author       = {Christopher A. Fuchs and Jeroen van de Graaf},
	year         = 1999,
	eprint       = {quant-ph/9712042},
	archiveprefix = {arXiv},
  journal={IEEE Transactions on Information Theory}, 
  volume={45},
  number={4},
  pages={1216-1227},

  doi={10.1109/18.761271},
}

@article{G_hne_2009,
	title        = {Entanglement detection},
	author       = {Gühne, Otfried and Tóth, Géza},
	year         = 2009,
	month        = apr,
	journal      = {Physics Reports},
	publisher    = {Elsevier BV},
	volume       = 474,
	number       = {1–6},
	pages        = {1–75},
	doi          = {10.1016/j.physrep.2009.02.004},
eprint={0811.2803},
	archiveprefix = {arXiv},
}

@article{GallNN12,
	title        = {On {QMA} protocols with two short quantum proofs},
	author       = {Fran{\c{c}}ois Le Gall and Shota Nakagawa and Harumichi Nishimura},
	year         = 2012,
	journal      = {Quantum Inf. Comput.},
	volume       = 12,
	number       = {7-8},
	pages        = {589--600},
	doi          = {10.26421/QIC12.7-8-4},
eprint={1108.4306},
	archiveprefix = {arXiv},
}

@article{gharibian2009strong,
    issue_date = {March 2010},
    publisher = {Rinton Press, Incorporated},
    volume = {10},
    number = {3},
    journal = {Quantum Info. Comput.},
    month = mar,
    pages = {343–360},
	title        = {Strong NP-Hardness of the Quantum Separability Problem},
	author       = {Sevag Gharibian},
	year         = 2010,
	doi          = {10.5555/2011350.2011361},
	eprint       = {0810.4507},
	archiveprefix = {arXiv},
}

@book{grotschel2012geometric,
	title        = {Geometric algorithms and combinatorial optimization},
	author       = {Gr{\"o}tschel, Martin and Lov{\'a}sz, L{\'a}szl{\'o} and Schrijver, Alexander},
	year         = 1988,
	publisher    = {Springer Science \& Business Media},
	volume       = 2,
	doi          = {10.1007/978-3-642-97881-4},
}

@article{Gurvits_2002_separable_ball,
	title        = {Largest separable balls around the maximally mixed bipartite quantum state},
	author       = {Gurvits, Leonid and Barnum, Howard},
	year         = 2002,
	month        = dec,
	journal      = {Physical Review A},
	publisher    = {American Physical Society (APS)},
	volume       = 66,
	number       = 6,
	doi          = {10.1103/physreva.66.062311},
	issn         = {1094-1622},
	eprint          = {quant-ph/0204159},
	archiveprefix = {arXiv},
}

@inproceedings{gurvits2003classicaldeterministiccomplexityedmonds,
author = {Gurvits, Leonid},
title = {Classical deterministic complexity of Edmonds' Problem and quantum entanglement},
year = {2003},
doi = {10.1145/780542.780545},
booktitle = {Proceedings of the Thirty-Fifth Annual ACM Symposium on Theory of Computing},
	eprint       = {quant-ph/0303055},
	archiveprefix = {arXiv},
}

@article{Gutoski_2015,
	title        = {Quantum interactive proofs and the complexity of separability testing},
	author       = {Gutoski, Gus and Hayden, Patrick and Milner, Kevin and Wilde, Mark M.},
	year         = 2015,
	journal      = {Theory of Computing},
	volume       = 11,
	number       = 1,
	pages        = {59–103},
	doi          = {10.4086/toc.2015.v011a003},
eprint={1308.5788},
	archiveprefix = {arXiv},
}

@article{Harrow_2017_anand,
	title        = {An Improved Semidefinite Programming Hierarchy for Testing Entanglement},
	author       = {Harrow, Aram W. and Natarajan, Anand and Wu, Xiaodi},
	year         = 2017,
	month        = mar,
	journal      = {Communications in Mathematical Physics},
	publisher    = {Springer Science and Business Media LLC},
	volume       = 352,
	number       = 3,
	pages        = {881–904},
	doi          = {10.1007/s00220-017-2859-0},
	eprint={1506.08834},
	archiveprefix = {arXiv}
}

@article{harrow2013testing,
	title        = {Testing product states, quantum Merlin-Arthur games and tensor optimization},
	author       = {Harrow, Aram W. and Montanaro, Ashley},
	year         = 2013,
	journal      = {Journal of the ACM (JACM)},
	publisher    = {ACM New York, NY, USA},
	volume       = 60,
	number       = 1,
	pages        = {1--43},
doi={10.1145/2432622.2432625},
	eprint          = {1001.0017},
	archiveprefix = {arXiv}
}

@phdthesis{harsha2004robust,
	title        = {Robust PCPs of proximity and shorter PCPs},
	author       = {Prahladh Harsha},
	year         = 2004,
	url          = {https://dspace.mit.edu/bitstream/handle/1721.1/26720/59552830-MIT.pdf},
	school       = {Massachusetts Institute of Technology}
}

@inproceedings{Hayden_2013,
	title        = {Two-Message Quantum Interactive Proofs and the Quantum Separability Problem},
	author       = {Hayden, Patrick and Milner, Kevin and Wilde, Mark M.},
	year         = 2013,
	month        = jun,
	booktitle    = {2013 IEEE Conference on Computational Complexity},
	doi          = {10.1109/ccc.2013.24},
url={1211.6120},
	archiveprefix = {arXiv}
}

@article{horodecki2009entanglement,
	title        = {Quantum entanglement},
	author       = {Horodecki, Ryszard and Horodecki, Pawe\l{} and Horodecki, Micha\l{} and Horodecki, Karol},
	year         = 2009,
	month        = 6,
	journal      = {Rev. Mod. Phys.},
	publisher    = {American Physical Society},
	volume       = 81,
	pages        = {865--942},
	doi          = {10.1103/RevModPhys.81.865},
eprint={quant-ph/0702225},
	archiveprefix = {arXiv},
	issue        = 2,
}

@misc{horodecki2024multipartiteentanglement,
	title        = {Multipartite entanglement},
	author       = {Pawel Horodecki and Łukasz Rudnicki and Karol Życzkowski},
	year         = 2024,
	eprint       = {2409.04566},
	archiveprefix = {arXiv},
}

@article{ioannou2007computationalcomplexityquantumseparability,
author = {Ioannou, Lawrence M.},
title = {Computational complexity of the quantum separability problem},
year = {2007},
issue_date = {May 2007},
volume = {7},
number = {4},
journal = {Quantum Info. Comput.},
month = may,
	eprint       = {quant-ph/0603199},
	archiveprefix = {arXiv},
pages = {335–370},
numpages = {36},
doi={10.5555/2011725.2011730},
}

@inproceedings{jeronimo2023power,
author = {Jeronimo, Fernando Granha and Wu, Pei},
title = {The Power of Unentangled Quantum Proofs with Non-negative Amplitudes},
year = {2023},
doi = {10.1145/3564246.3585248},
booktitle = {Proceedings of the 55th Annual ACM Symposium on Theory of Computing},
	eprint       = {2402.18790},
	archiveprefix = {arXiv},
}

@InProceedings{jeronimo2024dimension,
  author =	{Jeronimo, Fernando Granha and Wu, Pei},
  title =	{{Dimension Independent Disentanglers from Unentanglement and Applications}},
	eprint       = {2402.15282},
	archiveprefix = {arXiv},
  booktitle =	{39th Computational Complexity Conference (CCC 2024)},
  year =	{2024},
  doi =		{10.4230/LIPIcs.CCC.2024.26},
  annote =	{Keywords: QMA(2), disentangler, quantum proofs}
}

@article{Kimura_2003,
	title        = {The Bloch vector for N-level systems},
	author       = {Kimura, Gen},
	year         = 2003,
	month        = aug,
	journal      = {Physics Letters A},
	publisher    = {Elsevier BV},
	volume       = 314,
	number       = {5–6},
	pages        = {339–349},
	doi          = {10.1016/s0375-9601(03)00941-1},
	archiveprefix = {arXiv},
	eprint          = {quant-ph/0301152}
}

@article{Lancien_2013,
	title        = {Distinguishing Multi-Partite States by Local Measurements},
	author       = {Lancien, Cécilia and Winter, Andreas},
	year         = 2013,
	month        = aug,
	journal      = {Communications in Mathematical Physics},
	publisher    = {Springer Science and Business Media LLC},
	volume       = 323,
	number       = 2,
	pages        = {555–573},
	doi          = {10.1007/s00220-013-1779-x},
eprint={1206.2884},
archivePrefix={arXiv},
}

@article{lcv06,
	title        = {N-representability is QMA-complete},
	author       = {Liu, Yi-Kai and Christandl, Matthias and Verstraete, F.},
	year         = 2007,
	month        = mar,
	journal      = {Physical Review Letters},
	publisher    = {American Physical Society (APS)},
	volume       = 98,
	number       = 11,
	doi          = {10.1103/physrevlett.98.110503},
	eprint          = {quant-ph/0609125},
archivePrefix={arXiv}
}

@article{Li_2015,
	title        = {Quantum de Finetti Theorem under Fully-One-Way Adaptive Measurements},
	author       = {Li, Ke and Smith, Graeme},
	year         = 2015,
	month        = apr,
	journal      = {Physical Review Letters},
	publisher    = {American Physical Society (APS)},
	volume       = 114,
	number       = 16,
	doi          = {10.1103/physrevlett.114.160503},
	eprint          = {1408.6829},
archivePrefix={arXiv}
}

@inproceedings{liu06cldm,
	title        = {Consistency of Local Density Matrices is QMA-complete},
	author       = {Yi-Kai Liu},
	year         = 2006,
	eprint       = {quant-ph/0604166},
	archiveprefix = {arXiv},
booktitle={APPROX and RANDOM},
}

@phdthesis{liu2007thesis,
	title        = {The Complexity of the Consistency and N-representability Problems for Quantum States},
	author       = {Yi-Kai Liu},
	year         = 2007,
	school       = {University of California, San Diego},
	eprint       = {0712.3041},
	archiveprefix = {arXiv},
}

@article{marriott2005quantumarthurmerlingames,
	title        = {Quantum Arthur-Merlin Games},
	author       = {Chris Marriott and John Watrous},
	year         = 2005,
	doi = {10.1007/s00037-005-0194-x},
	journal = {Comput. Complex.},
	eprint       = {cs/0506068},
	archiveprefix = {arXiv},
}

@article{Navascu_s_2009,
	title        = {Power of symmetric extensions for entanglement detection},
	author       = {Navascués, Miguel and Owari, Masaki and Plenio, Martin B.},
	year         = 2009,
	month        = 11,
	journal      = {Physical Review A},
	publisher    = {American Physical Society (APS)},
	volume       = 80,
	number       = 5,
eprint={0906.2731},
	archiveprefix = {arXiv},
	doi          = {10.1103/physreva.80.052306},
}

@book{Nielsen_Chuang_2010,
	title        = {Quantum Computation and Quantum Information: 10th Anniversary Edition},
	author       = {Nielsen, Michael A. and Chuang, Isaac L.},
	year         = 2010,
	publisher    = {Cambridge University Press},
	place        = {Cambridge},
isbn={978-1-107-00217-3},
doi={10.1017/CBO9780511976667}
}

@misc{per12,
	title        = {Multi-Prover Quantum Merlin-Arthur Proof Systems with Small Gap},
	author       = {Attila Pereszlényi},
	year         = 2012,
	eprint       = {1205.2761},
	archiveprefix = {arXiv},
}

@article{Peres_1996,
	title        = {Separability Criterion for Density Matrices},
	author       = {Peres, Asher},
	year         = 1996,
	month        = aug,
	journal      = {Physical Review Letters},
	publisher    = {American Physical Society (APS)},
	volume       = 77,
	number       = 8,
	pages        = {1413–1415},
	doi          = {10.1103/physrevlett.77.1413},
	archiveprefix = {arXiv},
	url          = {quant-ph/9604005}
}

@article{PhysRevA.62.062314,
	title        = {Three qubits can be entangled in two inequivalent ways},
	author       = {D\"ur, W. and Vidal, G. and Cirac, J. I.},
	year         = 2000,
	month        = 11,
	journal      = {Phys. Rev. A},
	publisher    = {American Physical Society},
	volume       = 62,
	doi          = {10.1103/PhysRevA.62.062314},
	eprint          = {quant-ph/0005115},
archivePrefix={arXiv},
	issue        = 6,
	numpages     = 12
}

@article{power_of_unentanglement,
	title        = {The Power of Unentanglement},
	author       = {Aaronson, Scott and Beigi, Salman and Drucker, Andrew and Fefferman, Bill and Shor, Peter},
	year         = 2009,
	journal      = {Theory of Computing},
	publisher    = {Theory of Computing},
	volume       = 5,
	number       = 1,
	pages        = {1--42},
	doi          = {10.4086/toc.2009.v005a001},
	eprint       = {0804.0802},
	archiveprefix = {arXiv}
}

@inproceedings{qma2_defn,
	title        = {Quantum Merlin-Arthur Proof Systems: Are Multiple Merlins More Helpful to Arthur?},
	author       = {Hirotada Kobayashi and Keiji Matsumoto and Tomoyuki Yamakami},
	year         = 2003,
	eprint       = {quant-ph/0306051},
	archiveprefix = {arXiv},
booktitle={Algorithms and Computation},
doi={10.1007/978-3-540-24587-2_21},
}

@misc{qma2_yirka,
	title        = {Quantum Generalizations of the Polynomial Hierarchy with Applications to QMA(2)},
	author       = {Gharibian, Sevag and Santha, Miklos and Sikora, Jamie and Sundaram, Aarthi and Yirka, Justin},
	year         = 2018,
	publisher    = {Schloss Dagstuhl - Leibniz-Zentrum fuer Informatik GmbH, Wadern/Saarbruecken, Germany},
	doi          = {10.4230/LIPICS.MFCS.2018.58},
	eprint          = {1805.11139},
	archiveprefix = {arXiv},
}

@misc{scott_blogpost_pdqp,
	title        = {Quantum miscellany --- Shtetl-Optimized},
	author       = {Aaronson, Scott},
	year         = 2023,
	url          = {https://scottaaronson.blog/?p=7516}
}

@article{shi2011epsilonnetmethodoptimizationsseparable,
	title        = {Epsilon-net method for optimizations over separable states},
	author       = {Yaoyun Shi and Xiaodi Wu},
	eprint       = {1112.0808},
	archiveprefix = {arXiv},
journal = {Theoretical Computer Science},
volume = {598},
pages = {51-63},
year = {2015},
doi = {10.1016/j.tcs.2015.03.031},
}

@inproceedings{sqma,
  title={QMA with subset state witnesses},
	author       = {Alex B. Grilo and Iordanis Kerenidis and Jamie Sikora},
  booktitle={International Symposium on Mathematical Foundations of Computer Science},
  year={2015},
	eprint       = {1410.2882},
	archiveprefix = {arXiv},
}

@article{Thapliyal_1999,
	title        = {Multipartite pure-state entanglement},
	author       = {Thapliyal, Ashish V.},
	year         = 1999,
	month        = 5,
	journal      = {Physical Review A},
	publisher    = {American Physical Society (APS)},
	volume       = 59,
	number       = 5,
	pages        = {3336–3342},
	doi          = {10.1103/physreva.59.3336},
	eprint          = {quant-ph/9811091},
archivePrefix={arXiv},
}

@article{toth_gune_prl_permutationsymmetry,
	title        = {Entanglement and Permutational Symmetry},
	author       = {T\'oth, G\'eza and G\"uhne, Otfried},
	year         = 2009,
	month        = 5,
	journal      = {Phys. Rev. Lett.},
	publisher    = {American Physical Society},
	volume       = 102,
	doi          = {10.1103/PhysRevLett.102.170503},
	issue        = 17,
	numpages     = 4,
eprint={0812.4453},
archivePrefix={arXiv},
}

@article{vrana2015entanglementTransformations,
	title        = {{Asymptotic entanglement transformation between W and GHZ states}},
	author       = {Vrana, Péter and Christandl, Matthias},
	year         = 2015,
	month        = {02},
	journal      = {Journal of Mathematical Physics},
	volume       = 56,
	number       = 2,
eprint={1310.3244},
archivePrefix={arXiv},
}

@article{Yu_2021,
	title        = {A complete hierarchy for the pure state marginal problem in quantum mechanics},
	author       = {Yu, Xiao-Dong and Simnacher, Timo and Wyderka, Nikolai and Nguyen, H. Chau and Gühne, Otfried},
	year         = 2021,
	month        = feb,
	journal      = {Nature Communications},
	publisher    = {Springer Science and Business Media LLC},
	volume       = 12,
	number       = 1,
	doi          = {10.1038/s41467-020-20799-5},
archivePrefix={arXiv},
	eprint          = {2008.02124}
}

@article{Yu_2022,
	title        = {Quantum-Inspired Hierarchy for Rank-Constrained Optimization},
	author       = {Yu, Xiao-Dong and Simnacher, Timo and Nguyen, H. Chau and Gühne, Otfried},
	year         = 2022,
	month        = mar,
	journal      = {PRX Quantum},
	publisher    = {American Physical Society (APS)},
	volume       = 3,
	number       = 1,
	doi          = {10.1103/prxquantum.3.010340},
archivePrefix={arXiv},
	eprint          = {2012.00554}
}

@article{yudin_nem_ellipsoid,
	title        = {Informational complexity and efficient methods for the solution of convex extremal problems},
	author       = {D.B. Yudin and A.S. Nemirovskii},
	year         = 1976,
	journal      = {Ekonomika i Matematicheskie Metody},
	volume       = {12:357–369},
url={https://books.google.com/books?id=0kEzAAAAIAAJ},
	note         = {English translation: Matekon 13 (3) pp.25-45 (1977)}
}

@book{zyczkowski2006introductionquantumentanglementgeometric,
edition={2}, title={Geometry of Quantum States: An Introduction to Quantum Entanglement}, publisher={Cambridge University Press},
	author       = {Karol Życzkowski and Ingemar Bengtsson},
	year         = 2017,
	doi          = {10.1017/9781139207010},
	note         = {The arXiv link is based on a book chapter},
	eprint       = {quant-ph/0606228},
	archiveprefix = {arXiv},
}

@article{papad_succinct,
title = {A note on succinct representations of graphs},
journal = {Information and Control},
volume = {71},
number = {3},
pages = {181-185},
year = {1986},
doi = {https://doi.org/10.1016/S0019-9958(86)80009-2},
author = {Christos H. Papadimitriou and Mihalis Yannakakis},
}

@article{ARORA1999,
title = {Polynomial Time Approximation Schemes for Dense Instances of NP-Hard Problems},
journal = {Journal of Computer and System Sciences},
volume = {58},
number = {1},
pages = {193-210},
year = {1999},
doi = {10.1006/jcss.1998.1605},
url={https://theory.cs.uni-bonn.de/ftp/reports/cs-reports/1994/85119-CS.pdf},
author = {Sanjeev Arora and David Karger and Marek Karpinski}
}
% \bibliography{main.bib}

\clearpage
\newpage

\appendix

\clearpage
\newpage
\section{Characterizing $\QMA(2)$ as a purity test}
\label{sec:qma2_purity}
The proof of $\QMAP(2) = \NEXP$ uses the two unentangled proofs \emph{only} to implement a purity test. We suggest this is not coincidence: many works~\cite{harrow2013testing,bks17,Yu_2021,Yu_2022} take advantage of the fact that every protocol over separable states is equivalently a protocol over states with a pure subsystem.
We formalize this equivalence in a self-contained way.

The optimal success probability of a $\QMA(2)$ verifier $V$ is $h_{sep}(V) \defeq \max_{\rho \in \sep(X,Y)} \Tr[V \rho]$ \cite{harrow2013testing}.
By convexity, we only need to optimize over pure product states $\rho = \ketbra{\psi_X} \otimes \ketbra{\psi_Y}$. In fact, we only need to optimize over $\rho \in \cd(X,Y)$ such that $\Tr_Y[\rho]$ is pure, since these states contain all pure product states:
\begin{claim}
\label{fact:pure_rdm_means_product}
    Consider a state $\rho \in \cd(X,Y)$ where $\Tr_Y[\rho]$ is pure. Then $\rho$ is a product state.
\end{claim}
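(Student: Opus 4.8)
The plan is to lean on the fact that pure states are precisely the extreme points of the convex set of density matrices, so any convex decomposition of $\rho$ must be compatible with the purity of its $X$-marginal. Write $\Tr_Y[\rho] = \ketbra{\psi_X}$ for some unit vector $\ket{\psi_X}$, and fix any convex decomposition $\rho = \sum_k p_k \ketbra{\phi_k}$ into pure states (for instance the spectral decomposition), where each $p_k > 0$ and $\sum_k p_k = 1$.

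First I would take the partial trace of this decomposition to obtain $\sum_k p_k \Tr_Y[\ketbra{\phi_k}] = \ketbra{\psi_X}$. Each $\Tr_Y[\ketbra{\phi_k}]$ is itself a density matrix on $X$, so this equation exhibits the extreme point $\ketbra{\psi_X}$ as a convex combination of density matrices; since an extreme point admits only the trivial such representation, every term must satisfy $\Tr_Y[\ketbra{\phi_k}] = \ketbra{\psi_X}$. Next I would invoke the Schmidt decomposition: a bipartite pure state whose reduced density matrix on $X$ has rank one is necessarily a product state, so each $\ket{\phi_k} = \ket{\psi_X} \otimes \ket{\chi_k}$ for some unit vector $\ket{\chi_k}$ on $Y$. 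Substituting back gives $\rho = \ketbra{\psi_X} \otimes \left( \sum_k p_k \ketbra{\chi_k} \right)$, which is manifestly a product state (indeed separable).

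The steps here are each standard, so I do not expect a genuine obstacle; the one point that deserves care is the claim that a convex combination of density matrices equal to a pure state forces every summand to equal that pure state, which I would justify by noting that $\ketbra{\psi_X}$ is a rank-one projector and hence an extreme point of $\cd(X)$. An alternative, equally short route avoids the extreme-point language entirely: purify $\rho$ to a pure state $\ket{\Phi}$ on $X \otimes Y \otimes Z$, observe that $\Tr_{YZ}[\ketbra{\Phi}] = \ketbra{\psi_X}$ is pure, conclude by Schmidt decomposition that $\ket{\Phi} = \ket{\psi_X} \otimes \ket{\xi_{YZ}}$, and trace out $Z$ to recover $\rho = \ketbra{\psi_X} \otimes \Tr_Z[\ketbra{\xi_{YZ}}]$.
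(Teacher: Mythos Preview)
Your proof is correct and essentially identical to the paper's: both take a spectral (or convex) decomposition of $\rho$, use the purity of the $X$-marginal to force each summand's $X$-marginal to equal $\ketbra{\psi_X}$, and then conclude via Schmidt decomposition that each pure summand is a product with $\ket{\psi_X}$. The paper states the extreme-point step more tersely (``all $\Tr_Y[\ketbra{\phi_i}]$ must be proportional to $\ketbra{\psi}$''), while you spell out the justification; otherwise the arguments match.
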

\begin{proof}
Diagonalize $\rho = \sum_i p_i \ketbra{\phi_i}$. Since $\Tr_Y[\rho] = \sum_i p_i \Tr_Y[\ketbra{\phi_i}]$ is some pure state $\ketbra{\psi}$, all $\Tr_Y[\ketbra{\phi_i}]$ must be proportional to $\ketbra{\psi}$. A pure state has a pure subsystem if and only if it is separable, so $\ket{\phi_i} = \ket{\psi}\ket{\chi_i}$ for all $i$. So $\rho = \ketbra{\psi} \otimes \sum_i p_i \ketbra{\chi_i}$ is a product state.
\end{proof}
\begin{fact}
    Consider a state $\rho = \ketbra{\psi} \otimes \ketbra{\phi}$. Then its reduced density matrix $\ketbra{\phi}$ is pure.
\end{fact}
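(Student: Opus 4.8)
The plan is to compute the reduced density matrix directly from the definition of the partial trace and observe that it collapses to $\ketbra{\phi}$, which is manifestly pure. First I would write the product state explicitly as $\rho = \ketbra{\psi} \otimes \ketbra{\phi}$, with $\ket{\psi}$ supported on register $X$ and $\ket{\phi}$ on register $Y$, and trace out register $X$. Because the partial trace acts factor-by-factor on a tensor product, we have $\Tr_X[\ketbra{\psi} \otimes \ketbra{\phi}] = \Tr[\ketbra{\psi}] \cdot \ketbra{\phi}$.

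Next I would use that $\ket{\psi}$ is a unit vector, so $\Tr[\ketbra{\psi}] = \braket{\psi}{\psi} = 1$, leaving the reduced density matrix equal to $\ketbra{\phi}$. Since $\ketbra{\phi}$ is a rank-one projector onto the unit vector $\ket{\phi}$, it is pure by definition (equivalently $\Tr[(\ketbra{\phi})^2] = |\braket{\phi}{\phi}|^2 = 1$), which finishes the argument.

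There is no genuine obstacle here: the claim is immediate from the defining property of the partial trace on product states, and the only point worth stating carefully is which register is traced out so that the surviving factor is indeed $\ketbra{\phi}$. Conceptually this is the converse companion to \Cref{fact:pure_rdm_means_product}; the two together establish that a bipartite state is a pure product state exactly when tracing out one register leaves a pure reduced density matrix, which is the equivalence the section is building toward.
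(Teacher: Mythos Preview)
Your proof is correct; the paper states this fact without proof, treating it as immediate from the definition of partial trace on tensor products, which is exactly the computation you carry out.
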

So $h_{sep}(V)$ is equivalently an optimization over all states $\rho \in \cd(X,Y)$ where $\Tr_Y[\rho]$ is pure. In other words, $h_{sep}(V)$ is exactly
\begin{align*}
    &\max_{\rho \in \cd(X,Y)} \Tr[V \rho]
    \\
    \textnormal{such that }\quad  &rank(\Tr_{Y}[\rho]) = 1\,.
\end{align*}
Thus, the optimal acceptance probability of a $\QMA(2)$ protocol can be written as an SDP with a rank-1 constraint (i.e. pure \emph{subsystem}).

At the same time, any SDP with a rank-1 constraint can be converted to an SDP over separable states. For example, we may rewrite the above optimization as 
\begin{align}
    &\max_{\rho \in \sep(XY, Y')} \Tr[V \rho] \nonumber
    \\
     \textnormal{such that }\quad  &\Tr_{X}[\rho] \in Sym(Y, Y')\,.
     \label{eqn:constraint}
\end{align}
\cref{eqn:constraint} specifies the SWAP test on $(Y, Y')$ to accept with probability $1$. 
This is because optimizing over separable states is equivalent to optimizing over product states, and for product states, the SWAP test passes with probability $1$ iff the states are rank-1 (i.e. pure).

One can incorporate \cref{eqn:constraint} into a new verifier $V'$, which with probability $p$ applies the SWAP test on $(Y, Y')$, and otherwise applies $V$. Then $\max_{\rho \in \sep(XY, Y')} \Tr[V' \rho]$ is affine to $h_{sep}(V)$, with precision depending on $p$. This is essentially the content of the product test of \cite{harrow2013testing}.

We also describe how $\QMA(2)$ essentially applies ``purity testing'' to extendable states.
Extendable states can be thought of as ``weakly'' separable.
\begin{definition}[$k$-extendable bipartite state]
    Fix $k > 1$. A state $\rho_{AB} \in \cd(A,B)$ is $k$-extendable with respect to $B$ if there exists a state $\sigma \in \cd(A,B_1, \dots, B_k)$ that is invariant on permutation of the subsystems $\{B_1, \dots, B_k\}$ and satisfies $\Tr_{B_2, \dots, B_k}[\sigma] = \rho_{AB}$.
\end{definition}
\begin{fact}
\label{fact:separable_means_extendable}
    Every separable state is $k$-extendable for all $k > 1$.
\end{fact}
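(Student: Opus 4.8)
The plan is to write down an explicit symmetric extension directly from the separable decomposition. By definition, any separable state admits a decomposition $\rho_{AB} = \sum_i p_i\, \sigma_i \otimes \tau_i$ with $p_i \ge 0$, $\sum_i p_i = 1$, and each $\sigma_i \in \cd(A)$, $\tau_i \in \cd(B)$ a density matrix. The single idea needed is that the $B$-marginal of a product term is left unchanged when we replace the factor $\tau_i$ by $k$ independent copies $\tau_i^{\otimes k}$ and then trace all but one back out.

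Concretely, I would set
\begin{align*}
    \sigma \defeq \sum_i p_i\, \sigma_i \otimes \tau_i^{\otimes k} \in \cd(A, B_1, \dots, B_k)\,,
\end{align*}
placing the same $\tau_i$ on each of the registers $B_1, \dots, B_k$. Then I would verify the three requirements in turn. First, $\sigma$ is a genuine density matrix, being a convex combination of tensor products of density matrices. Second, each summand is invariant under every permutation of $\{B_1, \dots, B_k\}$, since all of those registers carry the identical factor $\tau_i$; hence so is $\sigma$. Third, because each $\tau_i$ has unit trace, $\Tr_{B_2, \dots, B_k}[\sigma] = \sum_i p_i\, \sigma_i \otimes \tau_i = \rho_{AB}$, recovering the original state on $(A,B)$. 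Since the construction works verbatim for every integer $k > 1$, this establishes the claim.

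There is essentially no obstacle here: the only thing to watch is that the permutation-symmetry and marginal conditions hold \emph{simultaneously}, and both follow from the diagonal ``copy'' structure $\tau_i^{\otimes k}$ of the extension. It is worth remarking that, had we instead started from an entangled state, no such product-wise duplication would be available — which is precisely the sense in which $k$-extendability is a relaxation of separability, and why the converse (that every $k$-extendable state for all $k$ is separable) is the genuinely nontrivial direction.
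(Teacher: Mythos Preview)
Your proposal is correct and matches the paper's own proof essentially verbatim: the paper simply writes that $\rho = \sum_i p_i\, \rho_{A,i} \otimes \rho_{B,i}$ extends to $\sigma = \sum_i p_i\, \rho_{A,i} \otimes \rho_{B,i}^{\otimes k}$, which is exactly your construction with the verification steps made explicit.
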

\begin{proof}
    Any separable state $\rho = \sum_i p_i \rho_{A,i} \otimes \rho_{B,i}$ can be extended to $\sigma =\sum_i p_i \rho_{A,i} \otimes \rho_{B,i}^{\otimes k}$.
\end{proof}
In fact, a state is separable \emph{if and only if} it is $k$-extendable for all $k$ \cite{Doherty_2004}.

For \emph{pure} states, extendability is equivalent to separability. As a result, any optimization over pure separable states is equivalently an optimization over pure $k$-extendable states for any $k > 1$.
\begin{fact}
\label{fact:pure_extendable_is_separable}
    Any \emph{pure} bipartite state $\ket{\psi} \in \cd(A,B)$ is 2-extendable with respect to $B$ if and only if it is separable.
\end{fact}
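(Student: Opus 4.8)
The plan is to prove both directions separately, with essentially all of the content living in the reverse (``if'') direction.

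For the forward direction, observe that a \emph{pure} separable bipartite state is just a product state $\ket{\phi}_A \otimes \ket{\chi}_B$, which is $2$-extendable by \Cref{fact:separable_means_extendable}: the state $\sigma = \ketbra{\phi}_A \otimes \ketbra{\chi}_{B_1} \otimes \ketbra{\chi}_{B_2}$ is manifestly symmetric under swapping $B_1 \leftrightarrow B_2$ and has the correct marginal on $A B_1$.

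For the reverse direction, the key observation is that $2$-extendability forces tracing out one of the two $B$-copies to yield a \emph{pure} state, which is a strong structural constraint. First I would take a symmetric extension $\sigma_{A B_1 B_2}$ with $\Tr_{B_2}[\sigma] = \ketbra{\psi}$, where $B_1$ plays the role of the original register $B$. Viewing $A B_1$ as one subsystem and $B_2$ as the other, the hypothesis that $\Tr_{B_2}[\sigma]$ is pure lets me invoke \Cref{fact:pure_rdm_means_product} directly, giving $\sigma = \ketbra{\psi}_{A B_1} \otimes \tau_{B_2}$ for some state $\tau$. I would then use the permutation symmetry: since swapping $B_1 \leftrightarrow B_2$ fixes $\sigma$, the marginal $\Tr_{B_1}[\sigma]$, read as a state on $A B_2$, also equals $\ketbra{\psi}$. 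But computing this marginal from the product form gives $\Tr_{B_1}[\ketbra{\psi}_{A B_1} \otimes \tau_{B_2}] = \rho_A \otimes \tau_{B_2}$ with $\rho_A = \Tr_{B_1}[\ketbra{\psi}_{A B_1}]$. Equating the two, $\ketbra{\psi} = \rho_A \otimes \tau$ is a product of density matrices; since $\ketbra{\psi}$ is rank one, both factors must be pure, so $\ket{\psi} = \ket{\phi}_A \otimes \ket{\chi}_B$ is a product state, hence separable.

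I do not expect a genuine obstacle here, since the heavy lifting is already packaged in \Cref{fact:pure_rdm_means_product} (a pure global state whose marginal is pure must be a product). The only point demanding care is the register bookkeeping: the definition of $2$-extendability identifies the original $B$ with $B_1$, so I must confirm that the swap symmetry legitimately lets me equate $\Tr_{B_1}[\sigma]$ with $\ketbra{\psi}$ under the relabeling $B_2 \mapsto B$. Once this is checked, the argument is short and purely structural.
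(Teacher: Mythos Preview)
Your proof is correct and mirrors the paper's argument almost exactly: both invoke \Cref{fact:pure_rdm_means_product} to factor the extension as $\ketbra{\psi}_{AB_1} \otimes \tau_{B_2}$, then use the swap symmetry to see that $\Tr_{B_1}[\ketbra{\psi}] \otimes \tau$ must be pure, forcing $\ket{\psi}$ to be a product state. One cosmetic quibble: the hard direction you label ``if'' is really the ``only if'' direction of the biconditional, though this does not affect the mathematics.
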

\begin{proof}
The backwards direction follows from \Cref{fact:separable_means_extendable}, so we focus on the forwards direction. Consider the 2-extension $\sigma \in \cd(A,B_1,B_2)$, which can be diagonalized as $\sigma = \sum_i p_i \ketbra{\phi_i}$. Note that $\Tr_{B_2}[\sigma] = \ketbra{\psi}$ is pure; thus, $\sigma = \ketbra{\psi} \otimes \sum_i p_i \ketbra{\chi_i}$ by \Cref{fact:pure_rdm_means_product}. But since $\sigma$ is symmetric over $(B_1, B_2)$, $\Tr_{B_1}[\sigma] = \ketbra{\psi}$ is \emph{also} pure. This implies that $\Tr_{B_1}[\ketbra{\psi}] \otimes \sum_i p_i \ketbra{\chi_i}$ must be pure, and thus $\ket{\psi}$ must be separable.
\end{proof}

By contrast, optimization over \emph{mixed} extendable states is equivalent to that over arbitrary states:
\begin{claim}
An optimization problem over arbitrary states can be converted to one over $\poly(n)$-extendable states, and vice versa.
\end{claim}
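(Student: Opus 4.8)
The plan is to prove the two directions separately, exploiting the fact that $k$-extendability is a convex (indeed semidefinite) constraint and that the two optimization problems are linked by the symmetrization (twirl) channel over the $B$-registers. Throughout I take $k = \poly(n)$, so the extended system has only a $\poly(n)$ blow-up in qubit count and every reduction stays efficient.

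For the easy direction (arbitrary $\to$ extendable): given an optimization $\max_{\rho \in \cd(H)} \Tr[M\rho]$ over arbitrary states, I would regard the whole system $H$ as a single register $B$ and take the companion register $A$ to be trivial. Then every $\rho \in \cd(B)$ is already $k$-extendable with respect to $B$: diagonalizing $\rho = \sum_i p_i \ketbra{e_i}$, the state $\sigma = \sum_i p_i \ketbra{e_i}^{\otimes k}$ is permutation-invariant on $B_1,\dots,B_k$ and has marginal $\rho$ on the first copy. Hence the $k$-extendability promise is vacuous, and the same problem is, verbatim, an optimization over $k$-extendable states.

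For the main direction (extendable $\to$ arbitrary): I would fold the extendability constraint into the objective by symmetrizing. Let $M$ act on $A,B$, and recall $\tau$ is $k$-extendable iff there is a permutation-symmetric $\sigma \in \cd(A,B_1,\dots,B_k)$ with $\Tr_{B_2\cdots B_k}[\sigma] = \tau$. Define the symmetrized observable $\bar M \defeq \frac1k \sum_{i=1}^k M^{(A,B_i)}$, where $M^{(A,B_i)}$ is $M$ acting on $A$ together with the $i$-th copy. For symmetric $\sigma$ all $k$ terms coincide, so $\Tr[\bar M \sigma] = \Tr[M \tau]$, giving $\max_{\tau\ k\text{-ext}}\Tr[M\tau] = \max_{\sigma\ \text{symm}} \Tr[\bar M\sigma]$. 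The crux is to then drop the symmetry constraint: let $T(\sigma) = \frac{1}{k!}\sum_{\pi \in S_k} P_\pi \sigma P_\pi^\dagger$ be the twirl over $B$-permutations, a CPTP map whose image is exactly the symmetric states and which fixes them. Since $\bar M$ is itself permutation-invariant, $\Tr[\bar M T(\sigma)] = \Tr[\bar M \sigma]$ for every $\sigma$, so $\max_{\sigma\ \text{arbitrary}}\Tr[\bar M\sigma] = \max_{\sigma\ \text{symm}} \Tr[\bar M\sigma]$, which equals the extendable-state optimum. Operationally the new verifier receives an arbitrary state on $A,B_1,\dots,B_k$, applies a uniformly random permutation of the $B$-registers (equivalently picks a random $i\in[k]$), and runs the original verifier on $(A,B_i)$; its acceptance probability is precisely $\Tr[\bar M \sigma]$.

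The main subtlety to get right is this twirl step: one must check that $\bar M$ is genuinely permutation-invariant and that $T$ is trace-preserving, so that the maximum over arbitrary states collapses onto the maximum over symmetric states without loss. Granting this, both conversions run in polynomial time, establishing the claimed equivalence — and, in sharp contrast with the pure case where $2$-extendability already forces separability (\Cref{fact:pure_extendable_is_separable}), showing that polynomial extendability imposes essentially no restriction on mixed-state optimization.
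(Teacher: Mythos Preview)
Your argument is correct. Both directions match the paper in spirit, but your handling of the nontrivial direction (extendable $\to$ arbitrary) differs in mechanism. The paper passes to the enlarged system $A,B_1,\dots,B_k$, imposes the constraint $\Tr_A[\rho]\in Sym(B_1,\dots,B_k)$, and then absorbs that constraint into the verifier by projecting onto the symmetric subspace and accepting only on success. You instead symmetrize the \emph{observable}, $\bar M=\tfrac1k\sum_i M^{(A,B_i)}$, and use that the permutation twirl $T$ fixes $\Tr[\bar M\,\cdot\,]$ to collapse the max over arbitrary states onto the max over permutation-invariant states, which equals the $k$-extendable optimum exactly. Your route sidesteps the ambiguity between ``permutation-invariant'' and ``supported in the symmetric subspace'' that lurks in the paper's formulation, and yields an exact value-preserving reduction with a cleaner operational description (pick a random $B_i$ and run the original verifier). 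The paper's route, on the other hand, keeps the original objective $W_{A,B_1}$ and puts all the work into a single projective test, which is closer to how one might actually embed this in a verifier circuit. For the easy direction, your choice (trivial $A$, whole system in $B$) and the paper's choice (trivial $B$, whole system in $A$) are interchangeable.
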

\begin{proof}
    Consider the optimization $\max_{\rho \in \cd(A)} \Tr[V \rho]$. This is trivially equivalent to optimizing $\Tr[(V_A \otimes \mathbb{I}_B) \rho]$ over states $\rho \in \cd(A,B)$ that are $k$-extendable with respect to $B$.
    
    Fix any $k = \poly(n)$. Consider optimizing $\Tr[W \rho]$ over states $\rho \in \cd(A,B)$ that are $k$-extendable with respect to $B$. This is equivalent to the following optimization:
    \begin{align*}
    &\max_{\rho \in \sep(A, B_1, \dots, B_k)} \Tr[(W_{A,B_1} \otimes \mathbb{I}_{B_2,\dots,B_k}) \rho] \nonumber
    \\
     \textnormal{such that }\quad  &\Tr_{A}[\rho] \in Sym(B_1,\dots,B_k)\,.        
    \end{align*}
    Moreover, one can implement the constraint by projecting $\rho$ to the symmetric subspace, and accepting only if the projection succeeds.
\end{proof}
From this perspective, the power of unentanglement in $\QMA(2)$ is exactly the ability to test purity of extendable states. See \Cref{table:pure_extendable} for a summary of these claims.

\begin{table}[h]
    \centering
    \begin{tabular}{|c|c|c|c|}
    \hline
        \emph{optimization over} & arbitrary states & $\poly(n)$-extendable states & separable states \\
             \hline
        mixed states allowed & $\QMA$ &  $\QMA$ &  $\QMA(2)$\\
    \hline
        pure states only &  $\QMA$ &   $\QMA(2)$ &  $\QMA(2)$ \\
        \hline
    \end{tabular}
    \caption{\footnotesize Categorizing quantum verification protocols by the purity and extendability of the quantum proof.}
    \label{table:pure_extendable}
\end{table}

\clearpage
\newpage

\section{On gap amplification of $\QMAP$ and $\QMAPP$} 
\label{sec:gapamplification}

Both $\QMAP$ and $\QMAPP$ can trivially simulate any $\QMA$ protocol by using only part $A$ of the quantum proof, i.e. ignoring the proof's separable subsystem.
But at large enough completeness-soundness gap, both $\QMAP$ and  $\QMAPP$ can be simulated in $\QMA$. This relies on the following claim:
\begin{claim}
    \label{claim:approxlemma}
    Consider any tripartite state $\ket{\psi}$ supported on registers $A,B,C$. Then there exists a state $\ket{\chi} \in \cpp$ such that that $|\bra{\chi}\ket{\psi}|^2 \ge \frac{1}{\dim(C)}$.
\end{claim}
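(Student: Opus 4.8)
The plan is to produce $\ket{\chi}$ by collapsing register $C$ of $\ket{\psi}$ onto a single computational basis element, which automatically forces the $BC$-marginal to be a product state, and then to extract a good candidate via a pigeonhole bound on the collapse probabilities. Concretely, first I would expand $\ket{\psi}$ along the computational basis of register $C$:
\begin{align*}
\ket{\psi} = \sum_{c \in [\dim(C)]} \ket{\eta_c}_{AB}\ket{c}_C,
\end{align*}
where the $\ket{\eta_c}$ are (unnormalized) vectors on $AB$ satisfying $\sum_c \braket{\eta_c}{\eta_c} = 1$. For each $c$ with $\ket{\eta_c} \ne 0$, define the normalized candidate
\begin{align*}
\ket{\chi_c} \defeq \frac{1}{\|\ket{\eta_c}\|}\,\ket{\eta_c}_{AB}\ket{c}_C.
\end{align*}

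Next I would verify $\ket{\chi_c} \in \cpp$. Its $C$ register is the fixed basis state $\ket{c}$, so
\begin{align*}
\Tr_A[\ketbra{\chi_c}] = \frac{1}{\|\ket{\eta_c}\|^2}\,\Tr_A[\ketbra{\eta_c}]_B \otimes \ketbra{c}_C,
\end{align*}
which is a product state across the $B|C$ bipartition and hence separable. Thus each $\ket{\chi_c}$ is a genuine pure proof in $\cpp$ (on the same register dimensions as $\ket{\psi}$, so in $\cpp^{\dim(C)}$).

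Then a one-line computation gives $\bra{\chi_c}\ket{\psi} = \|\ket{\eta_c}\|$, so that $|\bra{\chi_c}\ket{\psi}|^2 = \braket{\eta_c}{\eta_c}$. Since these quantities are exactly the $C$-collapse probabilities and sum to $1$ over the $\dim(C)$ values of $c$, by pigeonhole there is a $c^\star$ with $\braket{\eta_{c^\star}}{\eta_{c^\star}} \ge \tfrac{1}{\dim(C)}$; taking $\ket{\chi} = \ket{\chi_{c^\star}}$ completes the argument.

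There is essentially no hard step here; the only thing worth flagging is that one must collapse the \emph{small} register $C$ rather than $B$. The definition of $\cpp$ treats $C$ asymmetrically, and collapsing $C$ yields a bound that scales with $\dim(C) = O(1)$, which is exactly what is needed for the intended application to gap amplification. Collapsing $B$ instead would also produce a state in $\cpp$, but with a useless overlap bound scaling as $1/\dim(B)$.
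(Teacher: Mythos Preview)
Your proof is correct and follows essentially the same approach as the paper: decompose $\ket{\psi}$ along the $C$ register, apply pigeonhole to the $\dim(C)$ resulting weights, and observe that any pure state of the form $\ket{\cdot}_{AB}\ket{\cdot}_C$ lies in $\cpp$. The only cosmetic difference is that the paper uses the Schmidt decomposition across $AB|C$ rather than the computational basis of $C$, but the argument and bound are identical.
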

\begin{proof}
Schmidt decompose $\ket{\psi}$ on the $AB|C$ bipartition as $\ket{\psi} = \sum_i \alpha_i \ket{i}_{AB}\ket{\phi_i}_{C}$. By pigeonhole principle, there is an $i^\star$ such that $|\alpha_{i^\star}|^2$ is at least $1$ divided by the number of Schmidt vectors. The number of Schmidt vectors is at most $\min(\dim(B), \dim(C)) \le \dim(C)$, so $|\alpha_{i^\star}|^2 \ge \frac{1}{\dim(C)}$. Then the state $\ket{\chi} \defeq \ket{i^\star}_{AB}\ket{\phi_{i^\star}}_{C}$ has at least $\frac{1}{\dim(C)}$ squared overlap with $\ket{\psi}$.
\end{proof}

\begin{claim}
\label{claim:qmapurif_in_qma}
    There exist constants $c,s$ such that $\QMAP^{c,s} \subseteq \QMA$ and $\QMAPP^{c,s} \subseteq \QMA$.
\end{claim}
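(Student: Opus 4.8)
The plan is to keep the verifier circuit of the $\QMAP^{c,s}$ (resp.\ $\QMAPP^{c,s}$) protocol unchanged, but to drop the promise on the proof: in the $\QMA$ simulation Arthur receives an \emph{arbitrary} state on registers $A,B,C$ and runs the original circuit $Q_n$. Writing $M$ for the PSD acceptance operator on the proof registers (so the acceptance probability on $\rho$ is $\Tr[M\rho]$ with $0 \preceq M \preceq \mathbb{I}$) and $\kappa \defeq \dim(C) = O(1)$, completeness is immediate: in the YES case the honest internally separable proof $\rho \in \cm$ (resp.\ the pure proof in $\cpp$) is itself a legal unrestricted $\QMA$ witness, so Arthur still accepts with probability at least $c$.

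The content is in soundness, where I would use \Cref{claim:approxlemma}. Over arbitrary proofs the optimal acceptance probability is $\max_\rho \Tr[M\rho] = \lambda_{\max}(M)$, attained at a top eigenvector $\ket{\psi}$ of $M$. By \Cref{claim:approxlemma} there is a state $\ket{\chi} \in \cpp \subseteq \cm$ with $|\braket{\chi}{\psi}|^2 \ge \tfrac{1}{\kappa}$. Diagonalizing $M = \sum_i \lambda_i \ketbra{e_i}$ with $\lambda_1 = \lambda_{\max}(M)$ and $\ket{e_1} = \ket{\psi}$, positivity of $M$ (i.e.\ every $\lambda_i \ge 0$) lets me discard all but the leading term:
\begin{align*}
  \bra{\chi} M \ket{\chi} = \sum_i \lambda_i\, |\braket{\chi}{e_i}|^2 \ge \lambda_{\max}(M)\,|\braket{\chi}{\psi}|^2 \ge \frac{\lambda_{\max}(M)}{\kappa}\,.
\end{align*}
Since $\ketbra{\chi} \in \cm$ (resp.\ $\ket{\chi} \in \cpp$), the soundness guarantee of the original protocol gives $\bra{\chi}M\ket{\chi} \le s$, and therefore $\lambda_{\max}(M) \le \kappa s$. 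Hence in the NO case \emph{every} unrestricted proof is accepted with probability at most $\kappa s$.

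The constructed $\QMA$ protocol thus has completeness at least $c$ and soundness at most $\kappa s$. Because $\kappa$ is a fixed constant, it suffices to fix constants with $\kappa s < c$ (for instance $c = \tfrac34$ and $s = \tfrac{1}{2\kappa}$, giving gap $\tfrac14$); the resulting constant completeness--soundness gap $c - \kappa s = \Omega(1)$ places the language in $\QMA$. The identical argument, with the identical $M$, handles both classes at once, since $\cpp \subseteq \cm$ and \Cref{claim:approxlemma} already outputs a witness lying in $\cpp$. I expect the only delicate point to be the passage from the overlap bound of \Cref{claim:approxlemma} to an \emph{acceptance} bound: this step relies essentially on $M \succeq 0$, and the constant factor $\kappa = \dim(C)$ it introduces is harmless precisely because register $C$ carries only $O(1)$ qubits --- the same structural feature that drives \Cref{thm:upperbound}.
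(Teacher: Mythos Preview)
Your proof is correct and follows a genuinely different (and arguably cleaner) route than the paper's. Both arguments rely on \Cref{claim:approxlemma} and positive semidefiniteness of the acceptance operator, but they use these ingredients in opposite directions. The paper takes an arbitrary proof $\ket{\psi}$, decomposes it as $\sqrt{\ell}\ket{g}+\sqrt{1-\ell}\ket{b}$ with $\ket{g}\in\cpp$ and $\ell\ge 1/\kappa$, then expands $\bra{\psi}V\ket{\psi}$ and controls the cross terms via Cauchy--Schwarz (using PSD of $V$ through $V=LL^\dagger$), obtaining soundness at most $\ell s + (1-\ell) + 2\sqrt{s\ell(1-\ell)}$. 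You instead go straight to the top eigenvector of $M$, pull a $\cpp$ state $\ket{\chi}$ near it, and use PSD of $M$ to \emph{lower}-bound $\bra{\chi}M\ket{\chi}\ge \lambda_{\max}/\kappa$, forcing $\lambda_{\max}\le\kappa s$. Your bound $\kappa s$ is in fact quantitatively sharper than the paper's (for small $s$ the paper's bound is roughly $1-1/\kappa$, so it needs $c$ close to $1$, whereas yours only needs $s$ of order $1/\kappa$), and the argument avoids the cross-term bookkeeping entirely. Both proofs share the same residual dependence of the allowed $(c,s)$ on $\kappa=\dim(C)$, so your treatment of that point is consistent with the paper's.
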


\begin{proof}
This proof follows \cite{sqma,bassirian2023qmaplus}. We focus on $\QMAP$; the proof for $\QMAPP$ is identical. Consider any $\QMAP^{c,s}$ protocol with verifier $V$. We directly run $V$ in $\QMA$.

By \Cref{claim:approxlemma}, any quantum proof can be written as $\ket{\psi} = \sqrt{\ell} \ket{g} + (\sqrt{1-\ell}) \ket{b}$, where $\ket{g}$ is a unit vector in  $\cpp$, 
$\ket{b}$ is some other unit vector, $0 < \ell \le 1$ is a positive uniform constant,
and $\braket{b}{g} = 0$.
Recall also that $\ccc \subseteq \cpp \subseteq \cm$.
\begin{itemize}
    \item The completeness proof is unchanged, so the simulator accepts with probability at least $c$.
    \item In soundness, the simulator accepts with probability at most
    \begin{align*}
    \bra{\psi} V \ket{\psi} &= \ell \bra{g} V\ket{g} + (1-\ell)\bra{b} V\ket{b} + \sqrt{\ell(1-\ell)} \cdot (\bra{b}V\ket{g} + \bra{g}V\ket{b})
        \\
        &\le \ell \bra{g}V\ket{g} + (1-\ell) \bra{b}V\ket{b} + 2 \sqrt{\ell(1-\ell)} \cdot \sqrt{\bra{b}V\ket{b} \cdot \bra{g}V\ket{g}}
        \\
        &\le \ell \cdot s + (1-\ell) \cdot 1 + 2 \sqrt{\ell(1-\ell)} \cdot \sqrt{s}\,.
    \end{align*}
    The first inequality follows by Cauchy-Schwarz since $V$ is PSD (i.e. $V = L L^\dagger$).
\end{itemize}
For any positive constant $\ell$, there exist constant $c,s$ where $1-\ell + \ell s + 2 \sqrt{s \cdot \ell(1-\ell)} < c$,\footnote{For example, $c = 1 - \frac{\ell^2}{4}$ and $s = \frac{\ell}{4}$ is sufficient.} and so the protocol successfully distinguishes completeness and soundness.
\end{proof}
\begin{remark}
\label{remark:qmapurif2_in_qma2}
    As in \cite{sqma,bassirian2023qmaplus}, \Cref{claim:qmapurif_in_qma} naturally extends to $\QMA(k)$. For any constant $k$, there exist constants $c,s$ such that $\QMAP^{c,s}(k) \subseteq \QMA(k)$.
\end{remark}

We also prove that $\QMAPP = \NEXP$ in \Cref{sec:qmaw_equals_nexp}. As a result, $\QMAPP$ does not have gap amplification unless $\QMA = \NEXP$. Note the same is true for the class $\QMA^+$~\cite{jeronimo2023power,bassirian2023qmaplus}.

Nonetheless, $\QMAP$ may exhibit gap amplification. 
The naive strategy of parallel repetition fails, since a partial measurement can destroy the guarantee of internal separability. But it is possible that other sophisticated methods (e.g.~\cite{marriott2005quantumarthurmerlingames,harrow2013testing}) could be used to prove such a statement.

\clearpage
\newpage
\section{A complete problem for $\QMA$ with a restricted set of proofs}
\label{sec:completeproblem}
Inspired by the work of \cite{Hayden_2013,Gutoski_2015}, we define a complete problem for $\QMA_\cw$. It asks whether an isometry\footnote{An isometry is a map that preserves inner products, but unlike a unitary, may not be surjective.} has an output with good fidelity to a state in $\cw$ that has been padded with extra qubits:\footnote{The works of \cite{Hayden_2013,Gutoski_2015} consider isometry output problems without padded qubits. However, their proofs require a state injection gadget that relies on special properties of $\cw$ and gap amplification of $\QMA_{\cw}$.}

\newcommand{\paddedisometry}{$(\alpha,\beta)$ \textsc{Padded-}$\cw$ \textsc{Isometry Output}}
\begin{problem}[\paddedisometry{}]\phantom{for newline}
\label{padded_problem}
    \begin{itemize}
        \item[] \textbf{Input:} Description of a quantum circuit $C$ implementing an isometry $U$ from $X\defeq \CC^{2^n}$ to $A\otimes B\defeq \CC^{2^{m_1}}\otimes\CC^{2^{m_2}}$; i.e. $U: X \to A \otimes B$ acts as $U(\rho) \defeq C(\rho\otimes\ketbra{0}^{\otimes (m_1 + m_2 - n)})C^\dagger$.
        \item[] \textbf{YES case:} There exists an $n$-qubit state $\rho\in \cd(X)$ and $m_1$-qubit state $\sigma \in \cw$ for which $U(\rho)$ has fidelity\footnote{Following \cite{Hayden_2013,Gutoski_2015}, we use the ``squared'' version of fidelity, i.e. $F(\rho, \sigma) \defeq \left(\Tr[\sqrt{\sqrt{\rho} \sigma \sqrt{\rho}}] \right)^2$.} at least $\alpha$ with $\sigma \otimes \ketbra{0}^{\otimes m_2}$; i.e.
         $F(U(\rho),\sigma\otimes\ketbra{0}^{\otimes m_2})\geq\alpha$.
         
        \item[] \textbf{NO case:} For every $n$-qubit state $\rho\in \cd(X)$ and $m_1$-qubit state $\sigma \in \cw$, $U(\rho)$ has fidelity at most $\beta$ with $\sigma \otimes \ketbra{0}^{\otimes m_2}$; i.e.
        $F(U(\rho),\sigma \otimes\ketbra{0}^{\otimes m_2})\leq\beta$.
    \end{itemize}
\end{problem}
In this section, we prove \Cref{padded_problem} is always a complete problem for $\QMA_{\cw}$:
\begin{theorem}
\label{thm:completeproblem}
    \paddedisometry{} is complete for $\QMA_{\cw}^{\alpha,\beta}$.
\end{theorem}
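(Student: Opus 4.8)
The plan is to prove both halves of completeness: that \paddedisometry{} lies in $\QMA_{\cw}^{\alpha,\beta}$, and that every language in $\QMA_{\cw}^{\alpha,\beta}$ admits a polynomial-time many-one reduction to it. Both directions will hinge on one elementary fact about fidelity and subspaces, which I would isolate and prove first: for any projector $\Pi$ and any state $\tau$, the maximal (squared) fidelity between $\tau$ and a state supported on $\mathrm{Im}\,\Pi$ equals $\Tr[\Pi\tau]$. This follows immediately from Uhlmann's theorem: purifying $\tau$ as $\ket{\Psi}_{RS}$, the purifications of states supported on $\mathrm{Im}\,\Pi$ are exactly the unit vectors in $R\otimes\mathrm{Im}\,\Pi$, so $\max_{\rho}F(\rho,\tau)=\|(I_R\otimes\Pi)\ket{\Psi}\|^2=\Tr[\Pi\tau]$. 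The key point is that this is an equality, not merely an inequality, which is what lets the parameters $\alpha,\beta$ transfer without loss.

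For membership, observe that since $U=C\circ(\text{embed})$ is an isometry, $\{U\rho U^\dagger:\rho\in\cd(X)\}$ is exactly the set of states supported on $\mathrm{Im}\,U$, whose projector is $P=UU^\dagger=C(I_X\otimes\ketbra{0}^{\otimes(m_1+m_2-n)})C^\dagger$. Applying the lemma with $\eta\defeq\sigma\otimes\ketbra{0}^{\otimes m_2}$ for a fixed $\sigma\in\cw$ gives $\max_{\rho}F(U(\rho),\eta)=\Tr[P\eta]=\Tr[(I_X\otimes\ketbra{0}^{\otimes(m_1+m_2-n)})\,C^\dagger\eta\,C]$. This dictates the protocol: Merlin sends a proof $\sigma\in\cw$ on the $m_1$-qubit register $A$; Arthur appends $\ket{0}^{\otimes m_2}$ as scratch, runs $C^\dagger$, and accepts iff the $m_1+m_2-n$ padded qubits all read $\ket{0}$. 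The acceptance probability is exactly $\Tr[P\eta]$, so in the YES case Merlin sends the witnessing $\sigma$ to accept with probability $\ge\alpha$, and in the NO case every $\sigma\in\cw$ yields acceptance probability $\le\beta$. This fits \Cref{defn:qmaw} verbatim with proof register $A$ and scratch register $B$, and is efficient because $C$ is given in the input.

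For hardness, take any $L\in\QMA_{\cw}^{c,s}$ with verifier $V$ acting on a proof register $P$ and scratch register $S$, accepting when a designated output qubit is $\ket{1}$. I would define the isometry $U\ket{\phi}\defeq V^\dagger(\ket{1}\otimes\ket{\phi})$, embedding the accepting outcome, so that $A=P$ and $B=S$. By unitary invariance of fidelity together with the lemma, $\max_{\rho}F(U(\rho),\sigma\otimes\ket{0}^{\otimes q})=\Tr[(\ketbra{1}\otimes I)\,V(\sigma\otimes\ketbra{0}^{\otimes q})V^\dagger]=\Pr[V\text{ accepts }\sigma]$ for every $\sigma\in\cw$. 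Consequently a YES instance of $L$ (some $\sigma\in\cw$ accepted with probability $\ge c$) maps to a YES instance of \paddedisometry{} with $\alpha=c$, and a NO instance (all $\sigma\in\cw$ accepted with probability $\le s$) maps to a NO instance with $\beta=s$. Producing the circuit for $U$ from that of $V$ is a polynomial-time map, so this is a valid many-one reduction preserving the parameters exactly.

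The main obstacle is making the fidelity-versus-projector lemma exact in both orientations—maximizing over the first argument (membership) and over the target via unitary invariance (hardness)—since any slack there would degrade $\alpha,\beta$ and only yield containment/hardness for shifted parameters rather than genuine completeness. The Uhlmann argument above closes this gap cleanly. The remaining work is routine dimension bookkeeping: matching the padded register sizes $m_1,m_2,n$, and using that the relevant families $\cw$ (such as $\cm$ and $\cpp$) are defined across all dimensions so that an $m_1$-qubit restriction $\sigma\in\cw$ is well-defined.
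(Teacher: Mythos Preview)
Your proof is correct and follows the same approach as the paper: for membership you run $C^\dagger$ on $\sigma\otimes\ket{0}^{\otimes m_2}$ and measure the padding qubits, and for hardness you invert the verifier and embed the accepting outcome. The one difference is packaging: you isolate a single equality $\max_{\rho\,\subseteq\,\mathrm{Im}\,\Pi}F(\rho,\tau)=\Tr[\Pi\tau]$ via Uhlmann and apply it uniformly to both directions, whereas the paper argues the two halves of membership separately (data-processing of fidelity under partial trace for completeness, the partial-purification form of Uhlmann for soundness) and then invokes Uhlmann again for hardness. Your lemma is exactly the synthesis of those two arguments, so the content is the same; your version just avoids the asymmetry and transfers the parameters $(\alpha,\beta)$ without any slack in a single stroke. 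One cosmetic point: to match the problem's input format you should note that $U\ket{\phi}=V^\dagger(\ket{1}\otimes\ket{\phi})$ is implemented by the circuit $V^\dagger\circ(X_D\otimes I)$ acting on $\ket{0}\otimes\ket{\phi}$ (the paper makes this $X$ gate explicit), but you correctly flagged this as routine bookkeeping.
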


We require some basic properties about fidelity.

\begin{fact}[{e.g. \cite[Chapter 9]{Nielsen_Chuang_2010}}]\label{fact:fidelity}
The following holds for every $\rho, \sigma \in \cd(X)$:
\begin{enumerate}[label=(\roman*)]
    \item\label{enu:unitary-inv} For any unitary $U$,
        $F(U\rho U^\dagger, U\sigma U^\dagger) = F(\rho, \sigma).$
    \item\label{enu:data-processing} For any CPTP map $\Phi$,
    $F(\Phi(\rho), \Phi(\sigma)) \ge F(\rho, \sigma).$
    \item\label{enu:pure-overlap} If $\rho$ is pure, then
            $F(\rho, \sigma) = \langle \rho |\sigma |\rho \rangle.$
\end{enumerate}
\end{fact}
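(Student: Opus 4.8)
The plan is to dispatch the three items in increasing order of difficulty, working throughout with the squared convention $F(\rho,\sigma) = \left(\Tr\sqrt{\sqrt{\rho}\,\sigma\sqrt{\rho}}\,\right)^2$ and writing $g(\rho,\sigma)\defeq\Tr\sqrt{\sqrt{\rho}\,\sigma\sqrt{\rho}}$ for the (unsquared) root fidelity, so that $F=g^2$. Items (i) and (iii) reduce to one-line manipulations of the operator square root, while item (ii) is the substantive one and needs Uhlmann's theorem.

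First I would prove \ref{enu:pure-overlap} by direct calculation. If $\rho=\ketbra{\rho}$ is a rank-one projector, then $\rho^2=\rho$ forces $\sqrt{\rho}=\rho$, so $\sqrt{\rho}\,\sigma\sqrt{\rho}=\ketbra{\rho}\sigma\ketbra{\rho}=\bra{\rho}\sigma\ket{\rho}\,\rho$, which is a nonnegative scalar times an idempotent. Its square root is therefore $\sqrt{\bra{\rho}\sigma\ket{\rho}}\,\rho$, and taking the trace (using $\Tr[\rho]=1$) and squaring yields $F(\rho,\sigma)=\bra{\rho}\sigma\ket{\rho}$. Next I would prove \ref{enu:unitary-inv} using the fact that conjugation by a unitary commutes with the square root: since $U\rho U^\dagger$ has the same spectrum as $\rho$ with eigenvectors rotated by $U$, we have $\sqrt{U\rho U^\dagger}=U\sqrt{\rho}\,U^\dagger$. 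Substituting gives $\sqrt{U\rho U^\dagger}\,(U\sigma U^\dagger)\,\sqrt{U\rho U^\dagger}=U\sqrt{\rho}\,\sigma\sqrt{\rho}\,U^\dagger$, whose square root is $U\sqrt{\sqrt{\rho}\,\sigma\sqrt{\rho}}\,U^\dagger$; cyclicity of the trace removes the conjugation, so $g(U\rho U^\dagger,U\sigma U^\dagger)=g(\rho,\sigma)$ and hence the fidelities agree.

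The heart of the matter is \ref{enu:data-processing}, which I would route through Uhlmann's theorem and the Stinespring dilation. Uhlmann's theorem states that $g(\rho,\sigma)=\max|\langle\psi\ket{\phi}|$, the maximum taken over all purifications $\ket{\psi}$ of $\rho$ and $\ket{\phi}$ of $\sigma$ into a common space. Fix Uhlmann-optimal purifications $\ket{\psi}_{XR},\ket{\phi}_{XR}$ achieving $g(\rho,\sigma)$, and write $\Phi(\cdot)=\Tr_E[V(\cdot)V^\dagger]$ in Stinespring form, where $V:X\to X'\otimes E$ is an isometry with $V^\dagger V=\mathbb{I}_X$. Then $(V\otimes\mathbb{I}_R)\ket{\psi}$ and $(V\otimes\mathbb{I}_R)\ket{\phi}$ are purifications of $\Phi(\rho)$ and $\Phi(\sigma)$ respectively, both on $X'ER$, and their overlap is $\bra{\psi}(V^\dagger V\otimes\mathbb{I}_R)\ket{\phi}=\langle\psi\ket{\phi}$. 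Since Uhlmann's theorem expresses $g(\Phi(\rho),\Phi(\sigma))$ as a maximum over purifications, exhibiting this particular pair with overlap $|\langle\psi\ket{\phi}|=g(\rho,\sigma)$ gives $g(\Phi(\rho),\Phi(\sigma))\ge g(\rho,\sigma)$, and squaring yields the claimed monotonicity.

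The hard part will be Uhlmann's theorem itself, which is the only genuinely nontrivial ingredient; I would either cite it from \cite[Chapter 9]{Nielsen_Chuang_2010} or establish it by the polar-decomposition argument, noting that for a fixed purification the overlap is optimized over the unitary freedom on the ancilla and that $\max_U|\Tr[UA]|=\|A\|_1$ applied to $A=\sqrt{\rho}\sqrt{\sigma}$ gives exactly $g(\rho,\sigma)$. I would also need existence of the Stinespring dilation, another standard fact. The one subtlety to check is that the ancilla dimensions of the two output purifications match so that they live in a genuinely common space; padding the smaller ancilla with a fixed $\ket{0}$ resolves this. Once Uhlmann's theorem is in hand, every remaining step is a one-line substitution.
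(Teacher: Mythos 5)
Your proof is correct, and it follows exactly the standard route of the source the paper cites: uniqueness of the positive semidefinite square root handles (i) and (iii), while (ii) goes through Uhlmann's theorem plus a Stinespring dilation, with the key (and correctly verified) observation that an isometry $V$ with $V^\dagger V=\mathbb{I}$ maps purifications of $\rho,\sigma$ to purifications of $\Phi(\rho),\Phi(\sigma)$ without changing their overlap. The paper states this as a Fact and cites \cite[Chapter 9]{Nielsen_Chuang_2010} without reproving it, so there is no in-paper proof to compare against; your argument is the textbook one, consistently carried out in the squared-fidelity convention the paper uses.
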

We first show that this problem can be decided in $\QMA_{\cw}$:

\begin{claim}\label{claim:isometry-problem-inclusion}
    For all $\beta\leq s < c \leq \alpha$, \paddedisometry{} is in $\QMA_\cw^{c,s}$.
\end{claim}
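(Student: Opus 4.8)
The plan is to give an explicit $\QMA_\cw$ verifier that runs the supplied circuit ``in reverse.'' Merlin is asked to send the witness $\sigma$ on the $m_1$-qubit register $A$; by the promise, such a $\sigma\in\cw$ exists in the YES case, while in the NO case soundness must hold against \emph{every} $\sigma\in\cw$. Arthur appends $m_2$ fresh scratch qubits $\ket{0}^{\otimes m_2}$ to form register $B$, producing $\tau\defeq\sigma\otimes\ketbra{0}^{\otimes m_2}$, applies the inverse circuit $C^\dagger$, and measures in the computational basis the $m_1+m_2-n$ ancilla qubits $E$ that $C$ introduced; he accepts iff they all read $\ket{0}$. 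This circuit is efficient since $C$ is given as a poly-size circuit. Writing the isometry as $U=C(\mathbb{I}_X\otimes\ket{0}_E)$ and letting $P\defeq UU^\dagger=C(\mathbb{I}_X\otimes\ketbra{0}_E)C^\dagger$ be the projector onto the image of $U$, the acceptance probability is exactly $\Tr[(\mathbb{I}_X\otimes\ketbra{0}_E)\,C^\dagger\tau C]=\Tr[P\tau]$.

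The heart of the argument is the identity
\[
\Tr[P\tau] \;=\; \max_{\rho\in\cd(X)} F\!\left(U(\rho),\tau\right).
\]
Since $U$ is an isometry, $\rho\mapsto U(\rho)$ is a bijection from $\cd(X)$ onto the density operators supported on $\mathrm{range}(P)$, so the right-hand side equals $\max_{\omega:\,\mathrm{supp}(\omega)\subseteq\mathrm{range}(P)}F(\omega,\tau)$. I would prove this maximum is $\Tr[P\tau]$ via Uhlmann's theorem: fixing a purification $\ket{\tau}_{RS}$ of $\tau$, every purification of such an $\omega$ is a unit vector in the subspace $R\otimes\mathrm{range}(P)$, and conversely every unit vector there purifies a state supported on $\mathrm{range}(P)$. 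Hence the optimal overlap equals the squared norm of the projection of $\ket{\tau}$ onto $R\otimes\mathrm{range}(P)$, namely $\bra{\tau}(\mathbb{I}_R\otimes P)\ket{\tau}=\Tr[P\tau]$. The easy inequality $F(U(\rho),\tau)\le\Tr[P\tau]$ also follows directly from the data-processing inequality (\Cref{fact:fidelity}) applied to the two-outcome measurement $\{P,\mathbb{I}-P\}$, for which $U(\rho)$ yields outcome $P$ with certainty.

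With the identity in hand the two cases are immediate. In the YES case some $\sigma\in\cw$ and $\rho$ satisfy $F(U(\rho),\tau)\ge\alpha$, so Arthur accepts this $\sigma$ with probability $\Tr[P\tau]\ge F(U(\rho),\tau)\ge\alpha\ge c$ (using only the easy direction). In the NO case every $\sigma\in\cw$ and every $\rho$ satisfy $F(U(\rho),\tau)\le\beta$, so for every admissible proof Arthur accepts with probability $\Tr[P\tau]=\max_\rho F(U(\rho),\tau)\le\beta\le s$ (using the Uhlmann direction). Thus \paddedisometry{} lies in $\QMA_\cw^{c,s}$ whenever $\beta\le s<c\le\alpha$.

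The one genuinely non-routine point—and the step I expect to require care—is that $\tau$ is generally \emph{mixed}, since $\sigma\in\cw$ need not be pure. Consequently the acceptance probability is not simply the overlap with a single pure target, and the naive reading ``$\Tr[P\tau]=\text{overlap with the image}$'' must be justified against a mixed state. The Uhlmann identity above is exactly what rescues this, converting the mixed-state fidelity optimization into a subspace-projection computation; establishing it cleanly (and in both directions) is the crux. Everything else—efficiency of $C^\dagger$, the bijection induced by the isometry, and the bookkeeping of which qubits constitute $X$, $E$, $A$, and $B$—is routine.
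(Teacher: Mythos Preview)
Your proof is correct and takes essentially the same approach as the paper: identical verifier (append $\ket{0}^{\otimes m_2}$, apply $C^\dagger$, measure the ancillas), identical key identity $\Tr[P\tau]=\max_{\rho}F(U(\rho),\tau)$, and the same appeal to Uhlmann's theorem for the soundness direction. The only cosmetic difference is that the paper derives the identity by tracing out $X$ and invoking the partial-purification form of Uhlmann (Nielsen--Chuang, Exercise~9.15), whereas you purify $\tau$ with an auxiliary register and project onto $\mathrm{range}(P)$; these are dual renderings of the same computation.
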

\begin{proof} 
We describe the $\QMA_\cw^{c,s}$ protocol. Merlin sends an $m_1$-qubit proof $\sigma \in \cw$. Arthur appends $\ket{0}^{\otimes m_2}$ to the proof, applies the inverse of the quantum circuit $C$, and measures the last $m_1 + m_2 - n$ qubits in the computational basis. Arthur accepts if all measured values are $0$.

We now analyze completeness. Let $C$ be the circuit implementing the isometry $U$, and let $\rho \in \cd(X), \sigma \in \cw$ be such that $F(U(\rho), \sigma \otimes \ketbra{0}^{\otimes m_2}) \ge \alpha$. 
Using \Cref{enu:unitary-inv} of \Cref{fact:fidelity},
\begin{align*}
    \alpha &\le F(U(\rho), \sigma \otimes \ketbra{0}^{\otimes m_2})
    \\
    &= F(C(\rho \otimes \ketbra{0}^{\otimes (m_1 + m_2 - n)})C^\dagger, \sigma \otimes \ketbra{0}^{\otimes m_2})
    \\
    &= F(\rho \otimes \ketbra{0}^{\otimes (m_1 + m_2 - n)}, C^\dagger(\sigma \otimes \ketbra{0}^{\otimes m_2})C)\,.
\end{align*}
Imagine tracing out the first $n$ qubits of the states $\rho \otimes \ketbra{0}^{\otimes (m_1 + m_2 - n)}$ and $C^\dagger(\sigma \otimes \ketbra{0}^{\otimes m_2})C$. 
By \Cref{enu:data-processing} of \Cref{fact:fidelity},
these reduced states have fidelity at least $\alpha$. Since the first reduced state is pure ($\ket{0}^{\otimes (m_1 + m_2 - n)}$), \Cref{enu:pure-overlap} of \Cref{fact:fidelity} implies that the fidelity is equal to the probability that measuring the second reduced state in the computational basis gives all $0$'s. This value is the acceptance probability of the protocol, and thus at least $\alpha \ge c$.

Now we analyze soundness. By the same argument, for any $\rho \in \cd(X), \sigma \in \cw$, we have
\begin{align*}
    \beta \ge  F(\rho \otimes \ketbra{0}^{\otimes (m_1 + m_2 - n)}, C^\dagger(\sigma \otimes \ketbra{0}^{\otimes m_2})C)\,.
\end{align*}
From here, we use a variant of Uhlmann's theorem (e.g. \cite[Exercise 9.15]{Nielsen_Chuang_2010}): the maximum fidelity of any two states is achieved by \emph{partial} purifications of the states. Precisely, for any two states $\rho, \eta \in \cd(A)$ and a partial purification of one state ($\rho' \in \cd(A,B)$ where $\rho = \Tr_B[\rho']$), we have  $F(\rho,\eta)=\max_{\eta' \in \cd(A,B); \Tr_B[\eta']=\eta}F(\rho',\eta')$.

Consider tracing out the first $n$ qubits of $C^\dagger(\sigma \otimes \ketbra{0}^{\otimes m_2})C$. Then its fidelity with $\ketbra{0}^{\otimes (m_1 + m_2 - n)}$ is achieved by the maximum fidelity of  $C^\dagger(\sigma \otimes \ketbra{0}^{\otimes m_2})C$ with purifications of $\ketbra{0}^{\otimes (m_1 + m_2 - n)}$, i.e. states of the form $\rho \otimes \ketbra{0}^{\otimes (m_1 + m_2 - n)}$. So, the fidelity of these reduced states is at most $\beta$.

Again, since $\ketbra{0}^{\otimes (m_1 + m_2 - n)}$ is a pure state, \Cref{enu:pure-overlap} of \Cref{fact:fidelity}  implies that the fidelity is equal to the probability measuring the other state gives all $0$'s. This value is Arthur's acceptance probability, and thus at most $\beta \le s$.
\end{proof}

We now show that this problem is hard for $\QMA_{\cw}$:
\begin{claim}\label{claim:general-hard-problem}
    For all $s \le \beta < \alpha\leq c$,  \paddedisometry{} is $\QMA_\cw^{c,s}$-hard.
\end{claim}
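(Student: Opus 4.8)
The plan is to reduce an arbitrary promise problem $A=(A_{yes},A_{no})\in\QMA_{\cw}^{c,s}$ to \paddedisometry{} by turning the verification circuit itself into the isometry. Fix the verifier: on input $x$ it applies a unitary $V_x$ to a $p$-qubit proof register $P$ (holding some $\sigma\in\cw$) together with $q$ ancilla qubits $S$ initialized to $\ket{0}^{\otimes q}$, and accepts according to a measurement of one designated output qubit. First I would put $V_x$ into the normal form in which acceptance corresponds to the \emph{last} qubit being $\ket{0}$, by appending an $X$ gate on the flag qubit if needed; this leaves the acceptance probability $a(\sigma)$ unchanged and equal to the original $\QMA_{\cw}$ acceptance probability on proof $\sigma$.

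Next I would define the padded instance by setting $m_1:=p$, $m_2:=q$, $n:=p+q-1$, and letting the circuit be $C:=V_x^\dagger$ (a polynomial-time computable description, by uniformity of $\{Q_n\}$). Then the isometry acts as $U(\rho)=C(\rho\otimes\ketbra{0}^{\otimes(m_1+m_2-n)})C^\dagger = V_x^\dagger(\rho\otimes\ketbra{0})V_x$, where the single padded qubit is the flag. Writing $\xi:=\sigma\otimes\ketbra{0}^{\otimes m_2}$, so that $\sigma$ sits on $A=P$ and $\ket{0}^{\otimes q}$ on $B=S$ (exactly the honest verifier input), we get $C^\dagger\xi C = V_x(\sigma\otimes\ketbra{0}^{\otimes q})V_x^\dagger$, the post-verification state.

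The crux is the identity $\max_{\rho\in\cd(X)}F(U(\rho),\xi)=a(\sigma)$, which I would establish exactly as in the proof of \Cref{claim:isometry-problem-inclusion}. By unitary invariance of fidelity I apply $V_x$ to both arguments and reduce to $\max_{\rho}F(\rho\otimes\ketbra{0}_{\mathrm{flag}},\,V_x(\sigma\otimes\ketbra{0}^{\otimes q})V_x^\dagger)$, where the flag qubit is the single register traced out in the isometry decomposition. The Uhlmann variant used in that proof (maximal fidelity is attained by a partial purification) shows this maximum equals the fidelity of $\ketbra{0}_{\mathrm{flag}}$ with the flag-marginal of the post-verification state, i.e.\ the probability that the flag is measured to be $0$, which is precisely $a(\sigma)$.

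Finally I would read off the reduction. In the YES case there is $\sigma\in\cw$ with $a(\sigma)\ge c\ge\alpha$, so the maximizing $\rho$ yields $F(U(\rho),\xi)\ge\alpha$; in the NO case every $\sigma\in\cw$ has $a(\sigma)\le s\le\beta$, so for all $\rho$ and all $\sigma\in\cw$ we get $F(U(\rho),\xi)\le\beta$. Hence $x\in A_{yes}$ maps to a YES instance and $x\in A_{no}$ to a NO instance, giving $\QMA_{\cw}^{c,s}$-hardness for all $s\le\beta<\alpha\le c$. I expect the only delicate point to be the fidelity identity of the third step, but since it is literally the computation already carried out for the inclusion direction, the remaining work is just bookkeeping the register sizes $m_1=p$, $m_2=q$, $n=p+q-1$ and the verifier normal form; everything else is routine.
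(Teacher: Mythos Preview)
Your proposal is correct and follows essentially the same approach as the paper: invert the verifier to obtain the isometry, let the single padded qubit be the decision/flag qubit, and use Uhlmann's theorem to identify $\max_{\rho}F(U(\rho),\sigma\otimes\ketbra{0}^{\otimes m_2})$ with the acceptance probability on proof $\sigma$. The only cosmetic difference is that the paper keeps the ``accept $=1$'' convention and inserts an explicit $X$ gate on the decision qubit (defining $\tilde C_x=C_x^\dagger\circ(X_D\otimes\mathbb{I}_G)$), whereas you absorb this into a normal-form assumption that acceptance corresponds to the last qubit being $\ket{0}$; the register bookkeeping and the Uhlmann step are otherwise identical.
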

\begin{proof}
Let $L=(L_{Yes},L_{No})$ be a (decision) promise problem in $\QMA_\cw^{c,s}$. Then for every input $x\in L_{Yes}\cup L_{No}$, there is a quantum circuit $C_x$ uniformly computable in polynomial time in $|x|$ that represents Arthur's verification protocol. This circuit $C_x: M \otimes S \to D \otimes G$ maps Merlin's quantum proof ($M$) and scratch qubits ($S$) initialized at $\ket{0}$, to a decision qubit ($D$) and ``garbage'' output ($G$); see \Cref{fig:hardness-original-circ} for an illustration of the circuit.

We uniformly encode $C_x$ as an instance of \paddedisometry{} when $\alpha \le c$ and $\beta \ge s$. Let $\tilde{C}_x \defeq C_x^\dagger \circ (X_D \otimes \mathbb{I}_G)$ be the unitary implementing an isometry $U: G \to W \otimes S$, where $X_D$ is a Pauli-$X$ gate applied to register $D$. See \Cref{fig:hardness-reduction-circ} for an illustration of the isometry.

    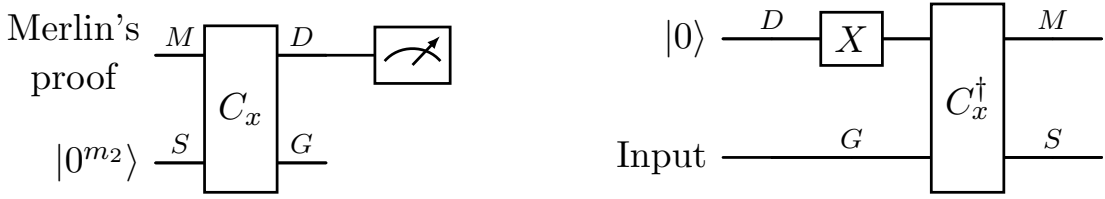
\begin{figure}[t]
        \centering
        \begin{subfigure}[t]{0.45\textwidth}
            \centering
            \begin{tikzpicture}
            \node[scale=1.3] {
            \begin{quantikz}
            \lstick{Merlin's\\proof} &\gate[2]{C_x}\wire[l][1]["M"{above}]{q}&\wire[l][1]["D"{above}]{q}&\meter{}\\
                \lstick{$\ket{0^{m_2}}$}&\wire[l][1]["S"{above}]{q}&\wire[l][1]["G"{above}]{q}
        \end{quantikz} };
    \end{tikzpicture}
            \caption{\footnotesize A verification circuit for a problem in $\QMA_{\cw}$, given the input $x$.}
            \label{fig:hardness-original-circ}
        \end{subfigure}
        \hspace{2em}
        \begin{subfigure}[t]{0.45\textwidth}
            \centering
            \begin{tikzpicture}
            \node[scale=1.3] {
\begin{quantikz}
    \lstick{$\ket{0}$} &&\gate{X}\wire[l][2]["D"{above}]{q}& \gate[2]{C_x^\dagger} \wire[l][1][
    ]{q} &&\wire[l][2]["M"{above}]{q}\\
    \lstick{Input} &&& 
    \wire[l][2]["G"{above,pos=0.5}]{q}&& \wire[l][2]["S"{above,pos=0.5}]{q}
        \end{quantikz} };
    \end{tikzpicture}
            \caption{
            \footnotesize
            The verification circuit recast as an instance of \paddedisometry{}. Note that the Pauli-$X$ gate flips $\ket{0}$ to $\ket{1}$ and vice versa.}
            \label{fig:hardness-reduction-circ}
        \end{subfigure}
        \caption{\footnotesize An illustration of the reduction in \Cref{claim:general-hard-problem}.}
        \label{fig:hardness-reduction}
    \end{figure}

We now verify the correctness of the reduction.
The acceptance probability of $C_x$ is the chance that measuring the decision qubit in the computational basis outputs $1$, which is equal to 
\begin{align*}
    \max_{\sigma \in \cw} F\left(\ketbra{1}, \Tr_G[C_x(\sigma \otimes \ketbra{0}^{\otimes \log |S|})C_x^\dagger]\right)\,.
\end{align*}
By Uhlmann's theorem, the fidelity is achieved by maximizing over partial purifications, so the acceptance probability is equal to
\begin{align*}
     &\max_{\sigma \in \cw} \max_{\rho \in \cd(G)} F\left(\ketbra{1} \otimes \rho, C_x(\sigma \otimes \ketbra{0}^{\otimes \log |S|})C_x^\dagger\right)
     \\
     = 
     &\max_{\sigma \in \cw} \max_{\rho \in \cd(G)} F\left(C_x^\dagger (\ketbra{1} \otimes \rho)C_x, \sigma \otimes \ketbra{0}^{\otimes \log |S|}\right)
     \\
     =
     &\max_{\sigma \in \cw} \max_{\rho \in \cd(G)} F\left(\tilde{C}_x (\ketbra{0} \otimes \rho)\tilde{C}_x^\dagger, \sigma \otimes \ketbra{0}^{\otimes \log |S|}\right)
     \,.
\end{align*}
This is exactly the quantity described in \paddedisometry{}.

Suppose $x \in L_{Yes}$. Then there is a state $\sigma \in \cw$ that Arthur accepts with probability at least $c$. By the above equality, there is some $\rho \in \cd(G)$ such that $U(\rho)$ has fidelity at least $c$ with $\sigma \otimes \ketbra{0}^{\otimes \log |S|}$. So this is a YES instance of \paddedisometry{} for $\alpha \le c$.

Now suppose $x \in L_{No}$. Then for all states $\sigma \in \cw$, Arthur accepts with probability at most $s$. 
By the above equality, for all $\rho \in \cd(G)$ and $\sigma \in \cw$, the state $U(\rho)$ has fidelity at most $s$ with $\sigma \otimes \ketbra{0}^{\otimes \log |S|}$. So this is a NO instance of \paddedisometry{} for $\beta \ge s$. 
\end{proof}

\Cref{thm:completeproblem} follows from \Cref{claim:isometry-problem-inclusion} and  \Cref{claim:general-hard-problem}.

\subsection{Removing the extra qubits}
What happens if we remove the extra qubits from  \Cref{padded_problem}? We show that this modified problem is still complete for $\QMA_{\cw}$ under some technical assumptions.
\newcommand{\isometryprob}{$(\alpha,\beta)$ $\cw$ \textsc{Isometry Output}}
\begin{problem}[\isometryprob{}]\phantom{for newline}
\label{problem_nopad}
    \begin{itemize}
        \item[] \textbf{Input:} Description of a quantum circuit $C$ implementing an isometry $U$ from $X\defeq \CC^{2^n}$ to $A\defeq \CC^{2^{m}}$; i.e. $U: X \to A$ acts as $U(\rho) \defeq C(\rho\otimes\ketbra{0}^{\otimes (m - n)})C^\dagger$.
        \item[] \textbf{YES case:} There exists an $n$-qubit state $\rho\in \cd(X)$ and $m$-qubit state $\sigma \in \cw$ for which $U(\rho)$ has fidelity at least $\alpha$ with $\sigma$; i.e.
         $F(U(\rho),\sigma)\geq\alpha$.
         
        \item[] \textbf{NO case:} For every $n$-qubit state $\rho\in \cd(X)$ and $m$-qubit state $\sigma \in \cw$, $U(\rho)$ has fidelity at most $\beta$ with $\sigma$; i.e.
        $F(U(\rho),\sigma)\leq\beta$.
    \end{itemize}
\end{problem}
Since \Cref{problem_nopad} is a special case of \Cref{padded_problem}, it is also in $\QMA_{\cw}$.
\begin{fact}
    For all $\beta\leq s < c \leq \alpha$, \isometryprob{} is in $\QMA_\cw^{c,s}$.
\end{fact}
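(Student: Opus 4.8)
The plan is to observe that \isometryprob{} (\Cref{problem_nopad}) is nothing but the degenerate case of \paddedisometry{} (\Cref{padded_problem}) in which the second output register is trivial, i.e.\ $m_2 = 0$ and $m_1 = m$. With $m_2 = 0$ the padding factor $\ketbra{0}^{\otimes m_2}$ is the empty tensor product, so $\sigma \otimes \ketbra{0}^{\otimes m_2} = \sigma$, and the target space $A \otimes B$ collapses to $A = \CC^{2^m}$. Under this identification the input circuit, the YES condition $F(U(\rho),\sigma) \ge \alpha$, and the NO condition $F(U(\rho),\sigma) \le \beta$ of \Cref{problem_nopad} coincide verbatim with those of \Cref{padded_problem}. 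Hence every instance of \isometryprob{} is, with no modification whatsoever, an instance of \paddedisometry{} carrying the same parameters $\alpha,\beta$.

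It then suffices to invoke \Cref{claim:isometry-problem-inclusion}, which already places \paddedisometry{} in $\QMA_\cw^{c,s}$ for exactly the parameter range $\beta \le s < c \le \alpha$. Since the identification above preserves $\alpha$ and $\beta$, the same containment transfers immediately to \isometryprob{}, giving the claim.

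As a sanity check one can also write the protocol directly: Merlin sends an $m$-qubit proof $\sigma \in \cw$, Arthur applies $C^\dagger$ to it and measures the last $m-n$ qubits in the computational basis, accepting iff all outcomes are $0$. The completeness and soundness bounds are identical to those in \Cref{claim:isometry-problem-inclusion} --- unitary invariance and data processing of fidelity for completeness, and the partial-purification form of Uhlmann's theorem for soundness --- and none of that machinery changes when $m_2 = 0$, since the $\ket{0}^{\otimes m_2}$ register that Arthur would otherwise append is simply absent. The only point that genuinely needs checking is the bookkeeping that the two problems' parameter intervals line up, so there is no substantive obstacle here; the statement is a clean corollary of the padded case.
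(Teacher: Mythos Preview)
Your proposal is correct and matches the paper's approach exactly: the paper simply notes that \Cref{problem_nopad} is the special case $m_2=0$ of \Cref{padded_problem} and therefore inherits containment in $\QMA_\cw^{c,s}$ from \Cref{claim:isometry-problem-inclusion}. Your additional sanity-check description of the direct protocol is accurate but not needed.
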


However, we can only show \Cref{problem_nopad} is $\QMA_{\cw}$-hard when $\cw$ satisfies some conditions. 
The proof in \Cref{claim:general-hard-problem} creates an isometry by inverting an arbitrary verification circuit in $\QMA_{\cw}$.
But the initial workspace of a $\QMA_{\cw}$ verifier contains a state $\rho \in \cw$, \emph{and} a set of scratch qubits (which we can assume to be all $\ket{0}$).
In order to create an instance of \Cref{problem_nopad}, we must do two things: (1) in completeness, ``absorb'' the scratch qubits into a state in $\cw$, and (2) in soundness, somehow ``poison'' all outputs of the isometry to be far from all states in $\cw$.
The first depends on properties of $\cw$; to do the second, we slightly generalize a technique of \cite{Hayden_2013,Gutoski_2015}, which requires a third assumption on $\cw$. The completeness-soundness gap of the instance depends on the quality of the poisoning operation.

\begin{claim}\label{claim:general-hard-problemNoPad}
    Let $\cw$ be a set of states with the following properties:
    \begin{enumerate}
        \item (Absorbing) For any $n$-qubit $\rho \in \cw$ and polynomials $p_1,p_2$, $\ketbra{0}^{p_1(n)} \otimes \rho \otimes \ketbra{0}^{\otimes p_2(n)} \in \cw$.
        \item (Poisonable) There is a function $0 \le g(n) \le 1$ and an efficient, uniformly preparable family of isometries $\{P_i\}$ such that for any polynomial $q$ and $n+q(n)$-qubit state $\sigma$, $(P_n \otimes \mathbb{I}_{q(n)})(\sigma)$ has fidelity at most $g(n)$ with any state in $\cw$. 
        \item (Projecting) There is a polynomial $r$ such that for each $n$, $r(n) \ge n$, and with the $n$-qubit projector $\Pi_0$ to all $0$'s, if the $r(n)$-qubit state $\rho \in \cw$, then $(\mathbb{I} \otimes \Pi_0) \rho (\mathbb{I} \otimes \Pi_0^\dagger)$ (appropriately normalized) is in $\cw$.
    \end{enumerate} 
    Then for all $\alpha\leq c$ and $\beta\geq s+g+2\sqrt{sg}$, \isometryprob{} is $\QMA_\cw^{c,s}$-hard.
\end{claim}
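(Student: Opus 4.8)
The plan is to mimic the padded reduction of \Cref{claim:general-hard-problem} and then patch the two places where the padding was doing work, invoking the three hypotheses on $\cw$. Recall that that reduction turns a $\QMA_\cw^{c,s}$ verifier $C_x \colon M\otimes S\to D\otimes G$ into the isometry $\tilde C_x = C_x^\dagger\circ(X_D\otimes\mathbb{I}_G)$, whose honest output on $M\otimes S$ is close to $\sigma\otimes\ketbra{0}^{\otimes|S|}$ for some $\sigma\in\cw$. The explicit padding $\ketbra{0}^{\otimes|S|}$ was doing double duty: it certified completeness, and in soundness it restricted the comparison state to the product form $\sigma\otimes\ketbra{0}^{\otimes|S|}$. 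Dropping the padding forces the target to be an \emph{arbitrary} $\tilde\sigma\in\cw$ on the whole output, so a cheating $\tilde\sigma$ could place weight outside the ``scratch $=\ket{0}$'' subspace. I would therefore build a modified isometry $U'$ that first runs $\tilde C_x$ and then applies the poison family $\{P_i\}$ to a fresh ancilla, \emph{controlled} on the scratch register being nonzero.

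For completeness I would use the Absorbing property. On the honest input, the output of $\tilde C_x$ is close to $\sigma\otimes\ketbra{0}^{\otimes|S|}$ with $\sigma\in\cw$; since the scratch is $\ket{0}$ the controlled poison is inert and the fresh ancilla stays $\ket{0}$. Absorbing then certifies that the entire output $\sigma\otimes\ketbra{0}^{\otimes|S|}\otimes\ketbra{0}^{\otimes\cdots}$ lies in $\cw$, so it is a legal target and its fidelity with $U'(\rho)$ is at least $c\ge\alpha$, exactly mirroring \Cref{claim:isometry-problem-inclusion}.

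Soundness is the substance. Fix any input $\rho$ and any $\tilde\sigma\in\cw$, and let $\Pi_0$ project onto ``scratch and poison ancilla equal to $\ket{0}$''. On the range of $\mathbb{I}-\Pi_0$ the output has passed through $P$, so (treating the remaining qubits as the untouched $q(n)$-qubit part in the Poisonable hypothesis) every state in $\cw$ has fidelity at most $g$ with that branch. On the range of $\Pi_0$ the output coincides with the padded output of \Cref{claim:general-hard-problem}. I would then use the Projecting property to replace $\tilde\sigma$ by its renormalized projection $\Pi_0\tilde\sigma\Pi_0\in\cw$, which is supported on the clean subspace, and invoke the original $\QMA_\cw^{c,s}$ soundness---rephrased via Uhlmann's theorem as in \Cref{claim:general-hard-problem}---to bound the clean-branch fidelity by $s$. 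Splitting $F(U'(\rho),\tilde\sigma)$ across $\Pi_0$ and applying the standard Uhlmann/Cauchy--Schwarz fidelity inequality $\sqrt{F}\le\sqrt{p\,s}+\sqrt{(1-p)\,g}\le\sqrt{s}+\sqrt{g}$ (with $p=\Tr[\Pi_0\tilde\sigma]$) gives $F(U'(\rho),\tilde\sigma)\le(\sqrt{s}+\sqrt{g})^2=s+g+2\sqrt{sg}\le\beta$, as required.

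The main obstacle is twofold. First, the controlled poison must be an efficient, uniform isometry that is genuinely inert on the honest branch, so that Absorbing certifies completeness without any loss; keeping the control ancilla disentangled and in $\ket{0}$ on the accepting input is the delicate bookkeeping. Second, and more subtly, the clean-branch estimate must legitimately reduce to the original soundness: after projecting with $\Pi_0$ one holds a $\cw$ state whose scratch is $\ket{0}$, and one must argue that the corresponding marginal on $M$ is a legal proof for $C_x$. Reconciling the projected full-register $\cw$ state with a proof on $M$ alone---exactly where the interplay of Absorbing and Projecting is needed---together with controlling the cross term in the fidelity split, is where I expect the real work to lie. The gap degrades only by the poison quality $g$, which is precisely the source of the threshold $\beta\ge s+g+2\sqrt{sg}$.
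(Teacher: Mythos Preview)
Your proposal is correct and matches the paper's proof: the same controlled-poison construction, Absorbing for completeness, and a $\Pi_0$ split invoking Projecting on the clean branch and Poisonable on the dirty branch to reach $\sqrt F\le\sqrt s+\sqrt g$. One small point worth flagging is that the paper's fidelity split (its Fact~\ref{fact:fidelity_trick}) is \emph{asymmetric}: it projects the witness $\tilde\sigma$ in the clean term but projects the \emph{output} $U'(\rho)$ in the dirty term, precisely because Poisonable only bounds the poisoned output against states in $\cw$, not against the projected witness $(\mathbb{I}-\Pi_0)\tilde\sigma(\mathbb{I}-\Pi_0)$ (which need not lie in $\cw$). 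Your stated intermediate inequality $\sqrt F\le\sqrt{p\,s}+\sqrt{(1-p)\,g}$ with $p=\Tr[\Pi_0\tilde\sigma]$ therefore does not quite follow as written, but since you immediately discard the $p,1{-}p$ factors the conclusion is unaffected.
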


\begin{proof}
We start as in the proof of \Cref{claim:general-hard-problem}.
Let $L=(L_{Yes},L_{No})\in\QMA_\cw^{c,s}$ be a (decision) promise problem in $\QMA_{\cw}^{c,s}$. Then for every input $x\in L_{Yes}\cup L_{No}$, there is a quantum circuit $C_x$ uniformly computable in polynomial time in $|x|$ that represents Arthur's verification protocol.
This circuit $C_x: M \otimes S \to D \otimes G$ maps Merlin's quantum proof ($M$) and scratch qubits ($S$) initialized at $\ket{0}$, to a decision qubit ($D$) and ``garbage'' output ($G$); see \Cref{fig:hardness-original-circ} for an illustration of the circuit.

We now construct the isometry $U$. 
We start with the construction in the proof \Cref{claim:general-hard-problem}, and ``poison'' the output when the scratch qubits are incorrectly set.
Suppose $n$ is the number of qubits in Merlin's proof.
The unitary that implements $U$ maps $Z \otimes D \otimes G \to Z \otimes M \otimes S$, where register $Z$ is used to implement the
poisoning operation $P_n$.
The isometry is defined through the unitary
$Y \circ (\mathbb{I}_Z \otimes C_x^\dagger) 
\circ (\mathbb{I}_Z \otimes X_D \otimes \mathbb{I}_{G})$, where $Y$ applies $P_n$ controlled on the $S$ register being all $0$'s. 
The isometry sets the input qubits of registers $Z$ and $D$ all to $0$.
See \Cref{fig:hardnessNoPad-reduction} for an illustration.

We now verify the correctness of the reduction.
As in the proof of \Cref{claim:general-hard-problem}, the acceptance probability of $C_x$ is equal to 
\begin{align*}
    &\max_{\sigma \in \cw} F\left(\ketbra{1}, \Tr_G[C_x(\sigma \otimes \ketbra{0}^{\otimes \log |S|})C_x^\dagger]\right)\,
    \\
    =
     &\max_{\sigma \in \cw} \max_{\rho \in \cd(G)} F\left(\ketbra{1} \otimes \rho, C_x(\sigma \otimes \ketbra{0}^{\otimes \log |S|})C_x^\dagger\right)
     \\
     = 
     &\max_{\sigma \in \cw} \max_{\rho \in \cd(G)} F\left(C_x^\dagger (\ketbra{1} \otimes \rho)C_x, \sigma \otimes \ketbra{0}^{\otimes \log |S|}\right)\,.
\end{align*}
We now add the register $Z$, and apply the controlled unitary $Y$ to both states. This does nothing to the second state since it has all $0$'s in the $S$ register. So the acceptance probability of $C_x$ equals 
\begin{align*}
         \max_{\sigma \in \cw} \max_{\rho \in \cd(G)} F\left( U(\rho), \ketbra{0}^{\log |Z| } \otimes \sigma \otimes \ketbra{0}^{\otimes \log |S|}\right)\,.
\end{align*}
Suppose $x \in L_{Yes}$. Then there exists a state $\rho \in \cd(G)$ and a quantum proof $\sigma \in \cw$ such that the fidelity $F(U(\rho), \ketbra{0}^{\log |Z| } \otimes \sigma \otimes \ketbra{0}^{\otimes \log |S|}) \ge c$. By the absorbing property of $\cw$, $\ketbra{0}^{\log |Z| } \otimes \sigma \otimes \ketbra{0}^{\otimes \log |S|} \in \cw$. So there is a quantum proof $\sigma' \in \cw$ such that $F(U(\rho), \sigma') \ge c$, which is a YES instance of \Cref{problem_nopad} whenever $c \ge \alpha$.

Suppose $x \in L_{No}$. Then for all states $\rho \in \cd(G)$ and quantum proofs $\sigma \in \cw$, the fidelity $F(U(\rho), \ketbra{0}^{\log |Z| } \otimes \sigma \otimes \ketbra{0}^{\otimes \log |S|}) \le s$.
Now we use the following consequence of Uhlmann's theorem:
\begin{fact}
\label{fact:fidelity_trick}
    Suppose $\ket{\psi}$ purifies a state $\mu$ via register $B$. Then for any projector $\Pi_0$,
\begin{align*}
\sqrt{F(\mu,\nu)}=\max_{ \Tr_B[\ketbra{\zeta}]=\nu}|\braket{\psi}{\zeta}|
&\leq\max_{\Tr_B[\ketbra{\zeta}]=\nu}|\bra{\psi}\Pi_0\ket{\zeta}|+\max_{\Tr_B[\ketbra{\zeta}]=\nu}|\bra{\psi}(\mathbb{I}-\Pi_0)\ket{\zeta}|\,
\\
&\le \sqrt{F\left(\mu, \frac{\Pi_0 \nu \Pi_0^\dagger}{\Tr[\Pi_0 \nu \Pi_0^\dagger]}\right)} +  \sqrt{F\left(\frac{(\mathbb{I} - \Pi_0) \mu (\mathbb{I} - \Pi_0)^\dagger}{\Tr[(\mathbb{I} - \Pi_0) \mu (\mathbb{I} - \Pi_0)^\dagger]}, \nu\right)}
\,.
\end{align*}
\end{fact}
In \Cref{fact:fidelity_trick}, we set $\mu \defeq U(\rho)$ and let $\nu$ be the witness in $\cw$ that maximizes fidelity; then $F(\mu, \nu)$ is the quantity described in \Cref{problem_nopad}. Let $\Pi_0$ select the subspace where the qubits in register $S$ are all set to $0$. 
We analyze the two terms above separately.
\begin{itemize}
\item Because $\nu \in \cw$ and $\cw$ is projecting, $\frac{\Pi_0 \nu \Pi_0^\dagger}{\Tr[\Pi_0 \nu \Pi_0^\dagger]}\in \cw$. Since $\cw$ is absorbing, this value is at most $\max_{\sigma \in \cw} \sqrt{F(U(\rho), \sigma \otimes \ketbra{0}^{\otimes \log |S|}) }$. Since $U(\rho)$ has all $0$'s in register $Z$ when it has all $0$'s in register $S$, this is at most $\max_{\sigma' \in \cw} \sqrt{F(U(\rho), \ketbra{0}^{\otimes \log |Z|} \otimes \sigma' \otimes \ketbra{0}^{\otimes \log |S|})} \le \sqrt{s}$.
\item By the definition of $\mu$ and $\Pi_0$, the projector $\mathbb{I} - \Pi_0$ selects the part of $U(\rho)$ where the controlled poisoning operation $Y$ had an effect. Because $\cw$ is poisonable, the state $(\mathbb{I} - \Pi_0) \mu (\mathbb{I} - \Pi_0)^\dagger$ (appropriately normalized) has fidelity at most $g$ with $\nu$. So this term is at most $\sqrt{g}$.
\end{itemize}
This is a NO instance of \isometryprob{} when $\beta \le (\sqrt{s} + \sqrt{g})^2 = s + g + 2 \sqrt{sg}$.
\end{proof}
    \begin{figure}
        \centering           
        \begin{tikzpicture}
            \node[scale=1.3] {
\begin{quantikz}
    \lstick[2]{S}&\wireoverride{n}\lstick{$\ket{0^\ell}$} && \wire[l][1]["Z"{above,pos=0.8}]{q} & \gate[2]{P_n}  &\wire[l][1]["Z"{above,pos=0.1}]{q}&\rstick[3]{A}\\
    &\wireoverride{n}\lstick{$\ket{0}$} &\gate{X}\wire[l][1]["D"{above}]{q}& \gate[2]{C_x^\dagger} \wire[l][1]["D"{above}]{q} &\wire[l][1]["M"{above}]{q} &\wire[l][1]["M"{above,pos=0.1}]{q}&\\
    \lstick{X}&\wireoverride{n}\midstick{Input} && 
    \wire[l][1]["G"{above,pos=0.8}]{q}& \ctrl{-1} \wire[l][1]["S"{above}]{q} &\wire[l][1]["S"{above,pos=0.1}]{q}&
\end{quantikz} };
\end{tikzpicture}
        \caption{\footnotesize 
        The verification circuit recast as an instance of \isometryprob{} as in \Cref{claim:general-hard-problemNoPad}.
         The poisoning circuit $P_n$ is controlled by register $S$, and applied whenever the qubits in $S$ are not all $0$'s.
        }
        \label{fig:hardnessNoPad-reduction}
    \end{figure}
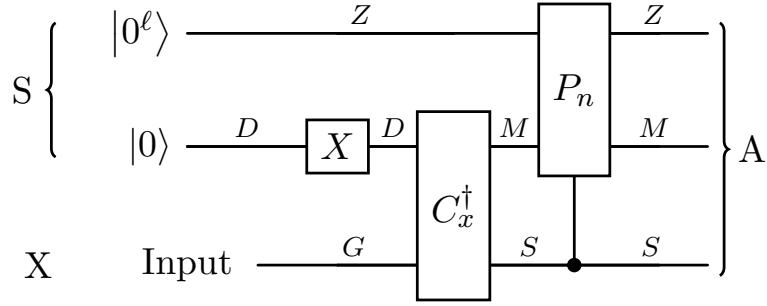

We apply \Cref{claim:general-hard-problemNoPad} to the sets  $\cm$ and $\cpp$ studied in this work. 
\begin{corollary}[A hard problem for $\QMAP,\QMAPP$]\label{cor:hard-for-pure-with-gap}
For any constant $\epsilon > 0$,
$\beta \le s + \epsilon$, and $\alpha \ge c$, $(\alpha,\beta)$ $\cm$ \textsc{Isometry Output} is $\QMAP^{c,s}$-hard, and $(\alpha,\beta)$ $\cpp$ \textsc{Isometry Output} is $\QMAPP^{c,s}$-hard.
\end{corollary}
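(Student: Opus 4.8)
The plan is to obtain both hardness results as instances of the general reduction in \Cref{claim:general-hard-problemNoPad}, taking $\cw = \cm$ and $\cw = \cpp$. The entire task then reduces to verifying the three hypotheses of that claim --- \emph{absorbing}, \emph{poisonable}, and \emph{projecting} --- for these two families, and to tracking how the poisoning parameter $g$ controls the resulting threshold $s + g + 2\sqrt{sg}$. The one idea driving everything is the choice of which internal register absorbs the auxiliary qubits.

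First I would fix the tripartite bookkeeping, since that choice is what makes or breaks the argument. The tempting move is to dump the scratch register $S$ (and the poison ancilla $Z$) into the traced-out register $A$, but this breaks \emph{projecting}: conditioning a subregister of $A$ on $\ket{0}$ can expose entanglement in the $B,C$ marginal. For instance $\tfrac12(\ketbra{0}_S\otimes\ketbra{\Phi^+}_{BC} + \ketbra{1}_S\otimes\ketbra{\Phi^-}_{BC})$ has a separable $BC$ marginal, yet projecting $S$ onto $\ket{0}$ produces a Bell state. The fix is to route $S$ through register $B$, whose dimension is unconstrained. With $S \subseteq B$, \emph{absorbing} holds because tensoring $\Tr_A[\sigma]$ with $\ketbra{0}$ on the $B$-side preserves a product decomposition across $B|C$; and \emph{projecting} holds because projecting $S$ onto $\ket{0}$ is a local operation on $B$ that commutes with $\Tr_A$, sending $\sum_\iota p_\iota\, \beta_\iota\otimes\gamma_\iota$ to $\sum_\iota p_\iota\, (\Pi_0\beta_\iota\Pi_0)\otimes\gamma_\iota$, still separable. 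For $\cpp$ the identical assignments work, with the extra observations that projecting a pure state onto a subspace keeps it pure and that both maps commute with $\Tr_A$, so the separable-marginal condition is preserved.

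The crux is \emph{poisonable} with an arbitrarily small parameter $g$. I would take $P_n$ to be the isometry $\ket{\psi}\mapsto\ket{\Phi_d}\otimes W\ket{\psi}$ that appends a fixed maximally entangled state $\ket{\Phi_d}=\tfrac{1}{\sqrt d}\sum_{i\in[d]}\ket{i}\ket{i}$ across a fresh $d$-dimensional subregister $B'\subseteq B$ and a register $C$ enlarged to dimension $d$, independently of its input; the $\Phi_d$ factor then survives for \emph{every} input state $\sigma$. After tracing out $A$, the $BC$ marginal contains $\ketbra{\Phi_d}_{B'C}$ as a tensor factor, so by monotonicity of fidelity under the partial traces over $A$ and over $B''$ (where $B=B'B''$), together with $\max_{\tau\in\sep(B',C)} F(\Phi_d,\tau)=1/d$, its fidelity with any separable state --- hence with any state in $\cm$ --- is at most $g=1/d$. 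For $\cpp$ the same bound follows, since fidelity is non-decreasing under $\Tr_A$ and $\Tr_A$ of every state in $\cpp$ is separable. Crucially $d$ is a constant, so the enlarged $C$ stays constant-dimensional and the produced instances remain legitimate $\cm$ (resp. $\cpp$) instances.

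Finally I would assemble the pieces: given $\epsilon>0$, choose the constant $d$ large enough that $g + 2\sqrt{sg} = \tfrac1d + 2\sqrt{s/d}\le\epsilon$ (possible since $s\le 1$), so the soundness threshold $s+g+2\sqrt{sg}$ of \Cref{claim:general-hard-problemNoPad} drops within $\epsilon$ of $s$. Invoking that claim with this $g$ then yields the two stated hardness results for $\QMAP^{c,s}$ and $\QMAPP^{c,s}$. The main obstacle is conceptual rather than computational: one must simultaneously place the scratch and ancilla registers in $B$ (so that projecting holds, unlike the traced-out register $A$) while reserving a large but constant-dimensional register $C$ to host the high-dimensional entanglement that forces $g\to 0$.
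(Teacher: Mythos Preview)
Your proposal is correct and follows essentially the same approach as the paper: both invoke \Cref{claim:general-hard-problemNoPad}, verify absorbing and projecting by routing the scratch qubits into register $B$, and poison by planting a constant-dimension maximally entangled state across the $B|C$ cut so that $g$ can be driven below any prescribed $\epsilon$. The paper realizes the poison by preparing $k$ EPR pairs on the ancilla $Z$ and swapping their halves into the existing $B$ and $C$ subregisters of $M$ (citing \cite{Gutoski_2015} for the $2^{-k}$ fidelity bound) rather than appending a fresh $\ket{\Phi_d}$, and it omits your counterexample for why placing $S$ in $A$ would break projecting, but these are cosmetic differences.
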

\begin{proof}
We show $\cm$ and $\cpp$ each satisfy the three requirements in the hypothesis of \Cref{claim:general-hard-problem}.
\begin{enumerate}
    \item (Absorbing) For any state $\rho \in \cm$ or $\rho \in \cpp$,  extra qubits (for example set to 0) can always be absorbed to the $A$ or $B$ register. When the $A$ register is traced out, the state will remain separable.
    \item (Poisonable) For any constant $k$, consider any maximally entangled state over $2k$ qubits, for example $k$ EPR pairs. The poisoning operation prepares these EPR pairs on new qubits (i.e. register $Z$), and then applies SWAP gates to put one half of each EPR pair in register $B$ and the other half in register $C$. This creates nontrivial entanglement even after tracing out register $A$; by \cite[Equation 3.1]{Gutoski_2015}, this state has fidelity at most $2^{-k}$ with any separable state. By monotonicity of fidelity under partial trace, the poisoned state has fidelity at most $g \defeq 2^{-k}$ with any state in $\cm$ or in $\cpp$.
    \item (Projecting) Consider any $\rho \in \cm$. If we project $\Pi_0$ on some qubits in the $B$ register, then after tracing out the $A$ register, the reduced state is still separable. The same is true for $\cpp$.
\end{enumerate}
So the following holds for each $\cw  \in \{\cm, \cpp\}$: Given any $k \in \mathbb{Z}_{\ge 0}$, $\alpha\leq c$ and $\beta\geq s+2^{-k}+2\sqrt{s \cdot 2^{-k}}$, the problem \isometryprob{} is $\QMA_\cw^{c,s}$-hard. Choose $\epsilon \defeq 2^{-k} + 2\sqrt{s \cdot 2^{-k}}$, and the corollary follows.
\end{proof}
\Cref{cor:hard-for-pure-with-gap} implies that \isometryprob{} is complete for $\QMAP$ and $\QMAPP$ whenever there is a small amount of gap amplification. 
As of now, we only know this is true in specific situations. At large enough $(c,s)$, by \Cref{claim:qmapurif_in_qma} both classes are equal to $\QMA$. At small enough $(c,s)$, by \Cref{thm:qmapp-nexp}, $\QMAPP$ is equal to $\NEXP$. As a result, \isometryprob{} is complete for $\QMA$ and for $\NEXP$, depending on the promise gap:
\begin{corollary}
    There exist constants $(\alpha,\beta)$ for which both $(\alpha,\beta)$ $\cm$ \textsc{Isometry Output} and $(\alpha,\beta)$ $\cpp$ \textsc{Isometry Output} are $\QMA$-complete. Moreover, there exist other constants $(\alpha',\beta')$ for which $(\alpha',\beta')$ $\cpp$ \textsc{Isometry Output} is $\NEXP$-complete.
\end{corollary}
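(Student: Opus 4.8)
The plan is to combine two ingredients already in hand: that $\cw$ \textsc{Isometry Output} always sits inside $\QMA_\cw^{c,s}$ (since \Cref{problem_nopad} is the special case of \Cref{padded_problem} with $m_2 = 0$, so the inclusion \Cref{claim:isometry-problem-inclusion} applies whenever $\beta \le s < c \le \alpha$), and that it is $\QMA_\cw^{c,s}$-hard by \Cref{cor:hard-for-pure-with-gap}. The only thing to arrange is a promise gap $(c,s)$ at which $\QMA_\cw^{c,s}$ coincides with a class we recognize, together with a single pair $(\alpha,\beta)$ lying in both the inclusion and hardness windows. The convenient observation is that $\alpha = c$, $\beta = s$ works for both: inclusion needs $\beta \le s < c \le \alpha$, and hardness needs $\alpha \ge c$ and $\beta \le s + \epsilon$ for an arbitrary constant $\epsilon > 0$; both hold with $\alpha = c$, $\beta = s$ as long as $s < c$.

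For $\QMA$-completeness I would first fix constants $(c,s)$ from \Cref{claim:qmapurif_in_qma}, so that $\QMAP^{c,s} \subseteq \QMA$ and $\QMAPP^{c,s} \subseteq \QMA$. At this same gap, running a $\QMA$ verifier on register $A$ alone (ignoring the separable subsystem) yields the reverse containments $\QMA^{c,s} \subseteq \QMAP^{c,s}$ and $\QMA^{c,s} \subseteq \QMAPP^{c,s}$; since $\QMA$ enjoys gap amplification we have $\QMA^{c,s} = \QMA$, and therefore $\QMAP^{c,s} = \QMAPP^{c,s} = \QMA$. Taking $\alpha = c$ and $\beta = s$, the inclusion places both $(\alpha,\beta)$ $\cm$ \textsc{Isometry Output} and $(\alpha,\beta)$ $\cpp$ \textsc{Isometry Output} in $\QMA$, while \Cref{cor:hard-for-pure-with-gap} makes them hard for $\QMAP^{c,s} = \QMA$ and $\QMAPP^{c,s} = \QMA$ respectively. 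Hence each is $\QMA$-complete.

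For $\NEXP$-completeness I would instead take constants $(c',s')$ from \Cref{thm:qmapp-nexp}, for which $\QMAPP^{c',s'} = \NEXP$ and $s' < c'$. Setting $\alpha' = c'$, $\beta' = s'$, the inclusion places $(\alpha',\beta')$ $\cpp$ \textsc{Isometry Output} in $\QMAPP^{c',s'} = \NEXP$, and \Cref{cor:hard-for-pure-with-gap} (with, say, $\epsilon = 1$) makes it hard for $\QMAPP^{c',s'}$, which equals $\NEXP$; hence it is $\NEXP$-complete. Note that the $\cm$ version is not claimed here, since we only know $\QMAP \subseteq \EXP$ rather than a matching $\NEXP$ lower bound.

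The step I expect to demand the most care is the promise-gap bookkeeping: verifying that the single pair $(\alpha,\beta) = (c,s)$ lies in the intersection of the inclusion and hardness parameter windows, and that the collapse $\QMAP^{c,s} = \QMA$ is a genuine equality rather than one-sided containment — this is precisely why the argument must pair the containment of \Cref{claim:qmapurif_in_qma} with the trivial register-$A$ simulation and $\QMA$ gap amplification. Everything else is a direct appeal to the established reductions, and one should only confirm that they are uniform polynomial-time many-one reductions, as completeness requires.
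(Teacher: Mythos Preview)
Your overall strategy---pair the inclusion (the unpadded problem is the $m_2=0$ case of \Cref{padded_problem}) with the hardness of \Cref{cor:hard-for-pure-with-gap}, then pin down a gap where $\QMA_\cw^{c,s}$ collapses to $\QMA$ or $\NEXP$---is exactly the paper's argument. But your parameter bookkeeping inherits a sign error from the stated form of \Cref{cor:hard-for-pure-with-gap}: if you compare that statement with its own proof (and with \Cref{claim:general-hard-problemNoPad}), the correct hardness hypotheses are $\alpha \le c$ and $\beta \ge s + \epsilon$, not $\alpha \ge c$ and $\beta \le s + \epsilon$. Intuitively, the reduction maps a YES instance to fidelity $\ge c$, which must exceed $\alpha$; and a NO instance to fidelity $\le s + (\text{poisoning error})$, which must fall below $\beta$.

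With the corrected inequalities, your choice $(\alpha,\beta)=(c,s)$ does \emph{not} yield hardness for $\QMA_\cw^{c,s}$ at the same $(c,s)$, since $\beta = s$ leaves no $\epsilon$-slack. This is precisely why the paper remarks that completeness needs ``a small amount of gap amplification.'' The fix is short. For the $\QMA$ part, keep $(\alpha,\beta)=(c,s)$ but argue hardness for $\QMA_\cw^{c'',s''}$ with $c'' \ge c$ and $s'' \le s-\epsilon$; your register-$A$ simulation together with $\QMA$ gap amplification already give $\QMA \subseteq \QMA_\cw^{c'',s''}$, so $\QMA$-hardness follows. For the $\NEXP$ part, the analogous move fails because we do not know $\NEXP \subseteq \QMAPP^{c'',s''}$ for $s'' < s'$; instead take $(\alpha',\beta')=(c',\,s'+\epsilon)$ for any small constant $\epsilon < c'-s'$. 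Inclusion in $\NEXP$ is then automatic (any $\QMAPP$ class with constant gap sits in $\NEXP$), and hardness for $\QMAPP^{c',s'}=\NEXP$ now holds since $\beta' \ge s'+\epsilon$.
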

We may also interpret $\QMAP(2)$ as the class $\QMA_{\cm(2)}$, where $\cm(2)$ are all states $\rho_1 \otimes \rho_2$ such that $\rho_1, \rho_2 \in \cm$. The proof of \Cref{cor:hard-for-pure-with-gap} trivially extends to $\cm(2)$. Since at large enough $(c,s)$, $\QMAP(2)$ is equal to $\QMA(2)$, we get the following:
\begin{corollary}
    There exist constants $(\alpha,\beta)$ for which $(\alpha,\beta)$ $\cm(2)$ \textsc{Isometry Output} is $\QMA(2)$-complete. Moreover, there exist other constants $(\alpha',\beta')$ for which $(\alpha',\beta')$ $\cm(2)$ \textsc{Isometry Output} is $\NEXP$-complete.
\end{corollary}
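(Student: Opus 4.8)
The plan is to instantiate the general complete-problem machinery of this section for the set $\cm(2) = \{\rho_1 \otimes \rho_2 : \rho_1, \rho_2 \in \cm\}$, so that $\QMA_{\cm(2)} = \QMAP(2)$ by definition, and then read off the two completeness statements by restricting to the two gap regimes in which $\QMAP(2)$ is known to collapse onto a gap-robust class. Concretely, I would first check that $\cm(2)$ meets the three hypotheses of \Cref{claim:general-hard-problemNoPad}, exactly as \Cref{cor:hard-for-pure-with-gap} does for $\cm$ and $\cpp$. Absorbing holds since extra $\ket{0}$ qubits may be appended to an $A$ register of either tensor factor without disturbing internal separability or the product structure; poisonability holds because applying the EPR-pair gadget of \Cref{cor:hard-for-pure-with-gap} to (say) the first proof already makes that proof far from every internally separable state, and hence, by monotonicity of fidelity under the partial trace onto the second proof, makes the joint state far from every element of $\cm(2)$; and projecting holds since projecting onto $\ket{0}$ in a $B$ register preserves both the internal separability of each factor and the product split. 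Combined with the inclusion direction of \Cref{thm:completeproblem} (the no-pad problem is a special case), this yields that $(\alpha,\beta)$ $\cm(2)$ \textsc{Isometry Output} lies in $\QMAP(2)^{\alpha,\beta}$ whenever $\alpha > \beta$, and is $\QMAP(2)^{c,s}$-hard whenever $\alpha \le c$ and $\beta \ge s + \epsilon$ for a constant $\epsilon > 0$.

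For the $\QMA(2)$ statement, I would fix $(\alpha,\beta)$ with $\alpha - \beta$ a large enough constant that $\QMAP(2)^{\alpha,\beta} \subseteq \QMA(2)$; this inclusion is available by \Cref{remark:qmapurif2_in_qma2}, and the reverse inclusion $\QMA(2) \subseteq \QMAP(2)^{c,s}$ is trivial at every gap (place a $\QMA(2)$ proof in register $A$ and a fixed product state on $B,C$). The problem then lies in $\QMAP(2)^{\alpha,\beta} \subseteq \QMA(2)$. For hardness I would invoke the framework with the slightly larger gap $(c,s) = (\alpha, \beta - \epsilon)$: the problem is $\QMAP(2)^{\alpha,\beta-\epsilon}$-hard, and since $\QMA(2) \subseteq \QMAP(2)^{\alpha,\beta-\epsilon}$, every $\QMA(2)$ problem reduces to it. Hence the problem is $\QMA(2)$-complete for this single $(\alpha,\beta)$, using only the gap-robustness of $\QMA(2)$ \cite{harrow2013testing} to harmonize the two directions.

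For the $\NEXP$ statement I would instead take $(c,s)$ to be the constant gap of \Cref{cor:qmapurif2_is_nexp}, where $\QMAP(2)^{c,s} = \NEXP$, and choose $(\alpha',\beta')$ with $\alpha' \le c$ and $\beta' \ge s + \epsilon$ while keeping $\alpha' > \beta'$ (possible since $c - s$ is a fixed constant and $\epsilon$ may be taken arbitrarily small). Membership $(\alpha',\beta')$ $\cm(2)$ \textsc{Isometry Output} $\in \NEXP$ holds at any constant gap, since a nondeterministic exponential-time machine can guess the (exponential-dimension) product proof together with a certificate of internal separability and then evaluate the fidelity; hardness for $\NEXP = \QMAP(2)^{c,s}$ is exactly the framework hardness at that gap. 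This gives a constant $(\alpha',\beta')$ for which the problem is $\NEXP$-complete.

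The main obstacle is the apparent gap mismatch between the two directions of completeness: the inclusion direction wants a large gap ($\beta \le s < c \le \alpha$), while the framework hardness reduction wants $\alpha \le c$ and $\beta \ge s + \epsilon$, and these cannot be met at a single gap of a class that is not known to be gap-robust — precisely the caveat flagged before \Cref{cor:hard-for-pure-with-gap}. The resolution, which I would state explicitly, is that we only run the argument in the two regimes where $\QMAP(2)$ equals a gap-robust class: there one may legitimately verify membership at one gap and hardness at another, using only the trivial inclusion $\QMA(2) \subseteq \QMAP(2)^{c,s}$ (respectively $\QMAP(2) \subseteq \NEXP$), because the target class is gap-independent. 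The points deserving care are confirming that a single constant $(\alpha,\beta)$ with $\alpha > \beta$ can be chosen consistently (which the choices above achieve, with the hardness gap strictly larger than the inclusion gap so no contradiction arises), and that the poisoning extension genuinely degrades fidelity against products of internally separable states rather than merely against internally separable states.
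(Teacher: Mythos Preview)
Your proposal is correct and follows essentially the same approach as the paper: extend the three hypotheses of \Cref{claim:general-hard-problemNoPad} to $\cm(2)$ exactly as in \Cref{cor:hard-for-pure-with-gap}, then use the two gap regimes where $\QMAP(2)$ collapses to a gap-robust class ($\QMA(2)$ via \Cref{remark:qmapurif2_in_qma2}, and $\NEXP$ via \Cref{cor:qmapurif2_is_nexp}) to resolve the inclusion/hardness gap mismatch. The paper's own argument is a single sentence (``the proof of \Cref{cor:hard-for-pure-with-gap} trivially extends to $\cm(2)$''), so you have in fact supplied more detail than the paper does, and your explicit handling of the gap mismatch and the direct $\NEXP$ membership argument are both sound.
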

\clearpage
\newpage

\clearpage
\newpage
\section{Other computational models where unentanglement is powerful}
\label{sec:more_unentanglement_power}
\subsection{Towards a new understanding of $\QMA^+$}
\Cref{lemma:swaptest_rank1restrict} clarifies recent work on  the complexity class $\QMA^+(k)$~\cite{jeronimo2023power,bassirian2023qmaplus}.
This class is defined as $\QMA^+(k) \defeq \QMA_{\cw_{\ket{+}}}(k)$, where $\cw_{\ket{+}} \defeq \{ \sum_i a_i \ket{i} \ |\  (\forall i)\ a_i \ge 0 \}$ is the set of quantum states with non-negative amplitudes in the computational basis.

Jeronimo and Wu found that $\QMA^+(2)$ at \emph{large} enough completeness-soundness gap can be simulated by $\QMA(2)$, but can decide $\NEXP$ at \emph{small} enough completeness-soundness gap~\cite{jeronimo2023power}. They conjecture that $\QMA(2) = \NEXP$, since this would follow if $\QMA^+(2)$ exhibits gap amplification. However, \cite{bassirian2023qmaplus} found a barrier to this approach, as gap amplification of $\QMA^+$ would imply the unlikely $\QMA = \NEXP$. One interpretation of this result is that the power of $\QMA^+$ comes from restricting quantum proofs to $\cw_{\ket{+}}$. This suggests the following question:
\begin{question}
\label{question:beyondqmaplus}
        Is there a set of quantum proofs $\cw$ where gap amplification of both $\QMA_{\cw}$ and $\QMA_{\cw}(2)$ implies $\QMA(2) = \NEXP$, and no other complexity collapse? 
\end{question}
We remark that \Cref{question:beyondqmaplus} is answered affirmatively by the set $\cm$.  Notably, a slight adjustment of $\QMA^+$ also affirmatively answers \Cref{question:beyondqmaplus}. Expand the set of quantum proofs to \emph{density matrices} with non-negative elements: $\cw_{+} \defeq \{ \rho \ |\ (\forall i,j) \ \rho_{ij} \ge 0 \}$. As before, $\QMA_{\cw_+}(k)$ at large enough completeness-soundness gap can be simulated by $\QMA(k)$. Moreover, $\QMA_{\cw_{+}} \subseteq \EXP$, as the  non-negative constraints can be added to an exponential-size semidefinite program. Finally, we can use \Cref{lemma:swaptest_rank1restrict} to show $\QMA_{\cw_{+}}(2) = \NEXP$:
\begin{claim}
    $\QMA_{\cw_{+}}(2)$ with a small constant completeness-soundness gap can decide $\NEXP$.
\end{claim}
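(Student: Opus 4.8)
The plan is to invoke \Cref{lemma:swaptest_rank1restrict} with $\cw = \cw_{+}$. This requires two ingredients: that $\QMA_{\purerestrict(\cw_{+})}$ already decides $\NEXP$, and that $\cw_{+}$ satisfies the continuity condition of \Cref{def:continuityProp}. Once both are in place, the lemma upgrades the pure-proof protocol to a two-copy protocol over $\cw_{+}$ with only a polynomial loss in the gap.

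First I would identify the pure states in $\cw_{+}$. Writing $\ket{\psi} = \sum_i a_i \ket{i}$, the matrix element $(\ketbra{\psi})_{ij} = a_i \overline{a_j}$ is a non-negative real for all $i,j$ if and only if every pair of nonzero amplitudes shares a common phase; factoring out this global phase shows $\ket{\psi}$ has non-negative amplitudes. Hence $\purerestrict(\cw_{+}) = \cw_{\ket{+}}$, so $\QMA_{\purerestrict(\cw_{+})} = \QMA^{+}$. By \cite{bassirian2023qmaplus}, $\QMA^{+}$ decides $\NEXP$ at some constant completeness-soundness gap $\Delta$; let $V$ be the corresponding verifier.

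The second ingredient is the main obstacle: given a near-pure $\rho \in \cw_{+}$, I must exhibit a pure state that is \emph{itself} in $\cw_{+}$ and has large overlap with $\rho$. The key observation is that $\rho$, being a density matrix with non-negative entries, is a real symmetric, entrywise-non-negative, positive semidefinite matrix. By the Perron-Frobenius theorem, its spectral radius --- which for a PSD matrix equals the largest eigenvalue $\lambda_1 = 1 - x$ --- admits a non-negative eigenvector; after normalization call it $\ket{v}$. Then $\ket{v} \in \cw_{\ket{+}} = \purerestrict(\cw_{+})$, and $\bra{v}\rho\ket{v} = \lambda_1 = 1 - x$. Thus $\cw_{+}$ satisfies the continuity condition with the identity function $f(x) = x$ (and $a = 1$), which is continuous, strictly monotonic, and satisfies $f(0) = 0$.

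Combining these, \Cref{lemma:swaptest_rank1restrict} shows that $\QMA_{\cw_{+}}(2)$ simulates $V$ with completeness-soundness gap $\Omega\!\left(\Delta \cdot f^{-1}(\tfrac{\Delta^2}{4})\right) = \Omega(\Delta^{3})$, a positive constant, giving $\NEXP \subseteq \QMA_{\cw_{+}}(2)$ at a small constant gap. The matching upper bound is routine: an $\NEXP$ machine can guess exponential-precision descriptions of the two proofs, check that they are density matrices with non-negative entries, and estimate the acceptance probability $\Tr[V(\rho_1 \otimes \rho_2)]$ to the required precision, so $\QMA_{\cw_{+}}(2) \subseteq \NEXP$. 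I expect the Perron-Frobenius step to be the only nonroutine point, and its payoff is that the continuity function is the cleanest possible, $f(x)=x$.
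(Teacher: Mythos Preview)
Your proof is correct and follows essentially the same approach as the paper: identify $\purerestrict(\cw_{+}) = \cw_{\ket{+}}$, verify the continuity condition with $f(x)=x$ via the Perron--Frobenius observation that the top eigenvector of an entrywise-non-negative PSD matrix can be taken non-negative, and then invoke \Cref{lemma:swaptest_rank1restrict} together with $\QMA^{+}=\NEXP$. The paper phrases the Perron--Frobenius step as ``a standard fact of linear algebra'' and omits your $\NEXP$ upper-bound paragraph (the claim as stated only asserts the lower bound), but the substance is identical.
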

\begin{proof}
First, note that any rank-1 density matrix with non-negative elements is in fact a non-negative amplitude state; thus, $\cw_{\ket{+}} = \purerestrict(\cw_+)$. By \Cref{lemma:swaptest_rank1restrict}, if $\cw_+$ has the continuity condition, then $\QMA^+ = \QMA_{\cw_{\ket{+}}} \subseteq \QMA_{\cw_+}(2)$. The result follows since $\QMA^+ = \NEXP$~\cite{bassirian2023qmaplus}.

We now verify that $\cw_+$ has the continuity condition. By a standard fact of linear algebra, the maximum eigenvector of any $\rho \in \cw_+$ can be chosen to have non-negative amplitudes, i.e. in $\cw_{\ket{+}} = \purerestrict(\cw_+)$. So then the continuity condition holds with $f(x) = x$. 
\end{proof}
\Cref{lemma:swaptest_rank1restrict} provides an interpretation for the result of \cite{bassirian2023qmaplus}. The complexity class $\QMA_{\cw_+}$ restricts quantum states as \emph{density matrices}. Here, we know that two unentangled proofs are more powerful than one: $\QMA_{\cw_+} \ne \QMA_{\cw_+}(2)$ unless $\EXP = \NEXP$. This directly stems from the SWAP test (i.e. purity test) in \Cref{lemma:swaptest_rank1restrict}. Thus, the power of $\QMA^+$ comes not from restricting quantum proofs to non-negative elements, but from \emph{further} restricting $\cw_+$ to \emph{pure states}. Because of this, $\QMA^+$ cannot have gap amplification (unless $\QMA = \NEXP$), but $\QMA_{\cw_+}$ plausibly can.

\subsection{Proper states}
\emph{Proper states} are states of the form $\ket{\psi} = \sum_{i=1}^d \pm \frac{1}{\sqrt{d}}\ket{i}$. 
It was shown in~\cite{power_of_unentanglement,beigi2009np} that optimizing over proper states of polynomial dimension is $\NP$-hard with a constant gap.
We show how this can be used to build a computational model where having one quantum proof is strictly weaker than having two quantum proofs, assuming $\P \ne \NP$. 

Consider the set of real density matrices with equal values on the diagonal; i.e.
\begin{align*}
    \wproper \defeq  \{ \rho \in \mathbb{R} \ |\ (\forall i,j) \ \rho_{ii} = \rho_{jj} \}\,.
\end{align*}
Consider a variant of $\QMA$ allowing only $O(\log n)$ space, and the quantum proof restricted to states in $\wproper$. Then this class can be decided in $\P$, since there is an efficient semidefinite program to calculate the maximum acceptance probability.  

On the other hand, note that the pure states of $\wproper$ are exactly the proper states of~\cite{power_of_unentanglement,beigi2009np}. By \Cref{lemma:swaptest_rank1restrict}, the associated variant of $\QMA(2)$ can simulate $\NP$, assuming that $\wproper$ satisfies the continuity condition. We verify this now:
\begin{claim}
    $\wproper$ has the continuity condition.
\end{claim}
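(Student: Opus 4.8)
The plan is to exhibit an explicit rounding map together with a suitable $f$. Recall that $\purerestrict(\wproper)$ consists precisely of the \emph{proper states} $\ket{\psi} = \sum_i \pm \tfrac{1}{\sqrt{d}}\ket{i}$, i.e.\ real vectors whose every amplitude has magnitude exactly $1/\sqrt{d}$. Given $\rho \in \wproper$ of dimension $d$ whose largest eigenvalue is $1-x$, with top eigenvector $\ket{v}$ (which I may take real, since $\rho$ is real symmetric), I would round $\ket{v}$ to the proper state $\ket{\psi}$ defined by $\psi_i = \operatorname{sign}(v_i)/\sqrt{d}$, and then argue $\bra{\psi}\rho\ket{\psi} \ge 1-2x$. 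This would establish the continuity condition with $a = 1/2$ and $f(x) = 2x$.

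First I would reduce the statement to lower-bounding the overlap $|\braket{\psi}{v}|^2$. Writing $\rho = \sum_k \lambda_k \ketbra{v_k}$ with all $\lambda_k \ge 0$, every summand of $\bra{\psi}\rho\ket{\psi} = \sum_k \lambda_k |\braket{\psi}{v_k}|^2$ is non-negative, so $\bra{\psi}\rho\ket{\psi} \ge (1-x)\,|\braket{\psi}{v}|^2$. Hence it suffices to show $|\braket{\psi}{v}|^2 \ge 1-x$, since then $\bra{\psi}\rho\ket{\psi} \ge (1-x)^2 \ge 1-2x$.

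The key step, and the main obstacle, is controlling this overlap: for an \emph{arbitrary} unit vector the sign-rounding can have overlap as small as $1/d$ (e.g.\ when $\ket{v}$ is a single computational basis vector), which would be useless. What rescues the argument is the equal-diagonal constraint $\rho_{ii} = 1/d$ defining $\wproper$. Since $\rho \succeq (1-x)\ketbra{v}$, comparing diagonal entries gives $(1-x)|v_i|^2 \le \rho_{ii} = 1/d$, so $|v_i| \le 1/\sqrt{d(1-x)}$ for every $i$; near-purity together with the flat diagonal forces the top eigenvector to be spread out, hence automatically close to a proper state. I would then lower-bound $\braket{\psi}{v} = \tfrac{1}{\sqrt{d}}\sum_i |v_i|$ via the elementary inequality $\sum_i |v_i| \ge \frac{\sum_i |v_i|^2}{\max_i |v_i|} = \frac{1}{\max_i |v_i|} \ge \sqrt{d(1-x)}$, which yields $\braket{\psi}{v} \ge \sqrt{1-x}$ and therefore $|\braket{\psi}{v}|^2 \ge 1-x$, as required.

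Finally I would handle the bookkeeping: $f(x) = 2x$ is continuous, strictly monotonic, and satisfies $f(0) = 0$ with $f(x) \le 1$ on $[0,1/2]$, so taking $a = 1/2$ verifies \Cref{def:continuityProp}. The only minor care is to fix the convention $\operatorname{sign}(0) = +1$ so that $\ket{\psi}$ is genuinely a proper state even when some $v_i$ vanish; this does not affect the bound, as such coordinates contribute $0$ to $\braket{\psi}{v}$ in any case.
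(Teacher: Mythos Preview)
Your proof is correct and is genuinely cleaner than the paper's. Both arguments start from the same observation that $\rho - (1-x)\ketbra{v} \succeq 0$ forces the diagonal constraint $(1-x)|v_i|^2 \le 1/d$, so the top eigenvector must be spread out. From there, however, the paper takes a much more circuitous route: it treats the values $n(1-x)|v_i|^2$ as a random variable, applies Chebyshev to control the fraction of coordinates that are too small, and after optimizing a free parameter ends up with $f(x) = 4x^{1/4}$ on a tiny interval $[0,2^{-8}]$. Your elementary inequality $\sum_i |v_i| \ge \big(\sum_i |v_i|^2\big)/\max_i |v_i| = 1/\max_i |v_i|$ bypasses all of this and yields the sharper linear bound $f(x)=2x$ valid on $[0,1/2]$. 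So your approach is both simpler (no concentration argument, no parameter tuning) and quantitatively stronger; in the context of \Cref{lemma:swaptest_rank1restrict}, the better $f$ translates directly into a larger completeness--soundness gap for the simulating $\QMA_{\wproper}(2)$ protocol.
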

\begin{proof}
    Consider any element $\rho \in \wproper$. Since $\rho$ is real, we can always choose the maximum eigenvector to be real-valued; let it be $\ket{\psi} \defeq \sum_i a_i \ket{i}$. Let the maximum eigenvalue is $\lambda_1$; then $\rho - \lambda_1 \ketbra{\psi}$ is a PSD matrix, and so all diagonal values are non-negative. Thus,
    \begin{align*}
        \sum_{i} |1/n - \lambda_1 a_i^2| = \sum_{i} 1/n - \lambda_1 a_i^2 = \Tr[\rho - \lambda_1 \ketbra{\psi}]  =  1-\lambda_1\,.
    \end{align*}
Let $X$ be the random variable with value $n \lambda_1 a_i^2$. Then $\mathbb{E}[X] = \lambda_1 \sum_i a_i^2 = \lambda_1$. Since $0 \le X \le 1$, $Var(X) \le \lambda_1(1-\lambda_1)$. By Chebyshev, $\Pr[X \le \lambda_1 - k \sqrt{\lambda_1(1-\lambda_1)}] \le \Pr[X \le \lambda_1 - k \sigma] \le \frac{1}{k^2}$.

We investigate the squared overlap of $\ket{\psi}$ with any proper state, which is maximized at value $\frac{1}{n} (\sum_i |a_i|)^2$. We lower-bound this quantity by ignoring the small terms from Chebyshev's inequality:
\begin{align*}
    \frac{1}{n} \left( n \cdot \left(1 - \frac{1}{k^2}\right) \cdot \sqrt{\frac{\lambda_1 - k\sqrt{\lambda_1(1-\lambda_1)}}{n \cdot \lambda_1}} \right)^2 = \left(1 - \frac{1}{k^2}\right)^2 \cdot \left(1 - k\sqrt{\frac{1-\lambda_1}{\lambda_1}}\right)\,.
\end{align*}
Let $c = \frac{1-\lambda_1}{\lambda_1}$. Then choose $k = c^{-1/6}$.
The squared overlap value is then $(1 - c^{1/3})^3$.
So, \Cref{lemma:swaptest_rank1restrict} holds with any polynomial dominating $g(x) = 1 - (1 - (\frac{x}{1 - x})^{1/3})^3$. For example, we may use the Taylor polynomial of $g$ around $0$, or $f(x) = 4x^{1/4}$ for $x \in [0,2^{-8}]$.
\end{proof}

In fact, assuming there is a PCP to scale up the $\NP$-hardness result of~\cite{power_of_unentanglement,beigi2009np} to $\NEXP$-hardness, we have that $\QMA_{\wproper} \subseteq \EXP$, yet $\QMA_{\wproper}(2) = \NEXP$.
Just as before, we can view the power of proper states as arising from restricting $\wproper$ to pure states.

However, this class does not provide a good approach to prove $\QMA(2) = \NEXP$. Unlike $\QMA^+$ and $\QMAP$, it is not clear how to simulate $\QMA_{\wproper}$ in $\QMA$ for any completeness-soundness gap. For example, any computational basis state in $\cd(\mathbb{C}^d)$ has only $\frac{1}{d}$ squared overlap with every proper state.

\clearpage
\newpage
\section{$\QMA$ with a multiseparable quantum proof}
\label{sec:multisep}
We investigate the power of quantum proofs with another notion of multipartite (un)entanglement, called \emph{multiseparability} by \cite{Thapliyal_1999}. These are tripartite states where after tracing \emph{any} register, the reduced density matrix is separable. Label the registers $A,B,C$ as before. Then 
\begin{align*}
\cg = \{ \ket{\psi} \ |\ \Tr_A[\ketbra{\psi}] \in \sep(B,C), \Tr_B[\ketbra{\psi}] \in \sep(A,C), \Tr_C[\ketbra{\psi}] \in \sep(A,B)\}.
\end{align*}
Let $\QMAMS \defeq \QMA_{\cg}$. Then this class can also decide $\NEXP$ at some gap:
\begin{theorem}
\label{thm:qma_multisep_is_nexp}
There exist constants $1 > c > s > 0$ where $\QMAMS^{c,s} = \NEXP$.
\end{theorem}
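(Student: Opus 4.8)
The plan is to prove both inclusions, with all the work in the lower bound. The containment $\QMAMS \subseteq \NEXP$ is the easy direction: any single-proof protocol with a $\poly(n)$-qubit proof can be decided in $\NEXP$ by nondeterministically guessing a classical description of the optimal proof to inverse-exponential precision, checking that it is multiseparable, and estimating the acceptance probability, all in exponential time. The restriction to $\cg$ only shrinks the set of admissible proofs, so it cannot push the class beyond $\NEXP$.

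For $\NEXP \subseteq \QMAMS$, I would adapt the $\QMAPP$ protocol of \Cref{sec:qmaw_equals_nexp} rather than build one from scratch, since a multiseparable proof already satisfies $\Tr_A[\ketbra{\psi}] \in \sep(B,C)$ — precisely the single hypothesis used in \Cref{lemma:semicheck_means_close_to_quasirigid}, \Cref{claim:rigidity}, and \Cref{lemma:quadratic}. Concretely, I would reuse the same $\NEXP$-complete succinct gapped CSP and the same tests ($\semicheck$, $\density$, and the CSP constraint tests after uncomputing copies via $\cnot$ gates), changing only the register layout of the honest proof. The honest $\QMAPP$ proof $\tfrac{1}{\sqrt R}\sum_v \ket{v}_A\ket{c_v,v}_B\ket{c_v}_C$ is \emph{not} multiseparable: tracing out the color copy $C$ leaves $\tfrac1R\sum_c \ket{\Xi_c}\bra{\Xi_c}$ on $A|B$ with $\ket{\Xi_c}=\sum_{v:c_v=c}\ket{v}_A\ket{c,v}_B$, which is entangled. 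I would therefore symmetrize the encoding so a copy of the constraint label $v$ lives in \emph{both} $B$ and $C$, using the honest proof $\ket{\chi}=\tfrac1{\sqrt R}\sum_v \ket{v}_A\ket{c_v,v}_B\ket{c_v,v}_C$. A direct computation then certifies $\ket{\chi}\in\cg$: tracing out $A$ gives $\tfrac1R\sum_v\ketbra{c_v,v}_B\otimes\ketbra{c_v,v}_C$, while tracing out $B$ (resp.\ $C$) gives $\tfrac1R\sum_v\ketbra{v}_A\otimes\ketbra{c_v,v}_C$ (resp.\ its mirror), each a classical mixture of product states and hence separable across the required bipartition. This proof passes $\semicheck$ with probability $1$ and $\density$ with the same probability $\tfrac1\kappa$ as before, so completeness is unchanged.

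The main technical step is re-establishing soundness for this symmetric layout. Since every multiseparable proof has $\rho_{B,C}=\Tr_A[\ketbra{\psi}]$ separable, I would re-run the proof of \Cref{lemma:semicheck_means_close_to_quasirigid} with the new indices: writing the off-$\sigma$ weight $\sum_v\sum_{c\ne\sigma_v}|b_{v,c}|^2$, where $b_{v,c}$ is the amplitude of the $\semicheck$-passing configuration $\ket{v}_A\ket{c,v}_B\ket{c,v}_C$; bounding $|b_{v,c}|^2 \le |b_{v,\sigma_v}\,b_{v,c}^\dagger|$; and writing $b_{v,\sigma_v}\,b_{v,c}^\dagger$ as an off-diagonal entry of $\rho_{B,C}$ minus a cross term over other constraints. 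Applying the second inequality of \Cref{lemma:separable_offdiagonal_cauchy_schwarz} to that entry produces diagonal entries of $\rho_{B,C}$ whose configurations have mismatched colors and therefore fail $\semicheck$. The step I expect to be the main obstacle is verifying that the resulting error stays $O(\kappa\epsilon)$, with $\kappa$ the constant number of colors, and does \emph{not} scale with the now-exponential $\dim C$. This works because the vertex copy $v$ sits inside register $B$: the failing configurations being summed are distinguished by their $B$-label $v$, so no two distinct constraints are conflated and each failing weight is charged at most once. This is exactly the role the single vertex copy played in the original $\QMAPP$ argument; the symmetric layout merely needs it to also appear in $C$ for multiseparability, without disturbing the counting. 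Granting this, \Cref{claim:rigidity} and \Cref{lemma:quadratic} carry over verbatim, and fixing the test probabilities $p_1,p_2$ exactly as in \Cref{sec:probabilities} yields constants $1>c>s>0$ with $\QMAMS^{c,s}=\NEXP$.
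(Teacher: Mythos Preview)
Your proposal is correct and matches the paper's own proof sketch essentially line for line: both symmetrize the honest proof to $\frac{1}{\sqrt R}\sum_v \ket{v}_A\ket{v,c_v}_B\ket{v,c_v}_C$ so that it lies in $\cg$, keep the same tests (with the extra $\cnot$s to uncompute the added vertex copies), and re-verify \Cref{lemma:semicheck_means_close_to_quasirigid} using only the $\Tr_A$-separability hypothesis. You in fact spell out more carefully than the paper does why the error stays $O(\kappa\epsilon)$ despite the enlarged $C$ register---the off-$\sigma$ sum still ranges only over colors $d\in[\kappa]$, and the failing configurations charged are distinct---whereas the paper leaves this to the phrase ``goes through virtually unchanged.''
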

\begin{proof}[Proof sketch]
The proof is almost identical to the proof of \Cref{thm:lowerbound}.

We consider states on \emph{five} registers $\mathbb{C}^R \otimes (\mathbb{C}^R \otimes \mathbb{C}^\kappa) \otimes (\mathbb{C}^R \otimes \mathbb{C}^\kappa)$, where $R = 2^{\poly(n)}$ and $\kappa$ is some constant. Our protocol is as follows:
\begin{itemize}
    \item With probability $p_1$, measure all the registers in the computational basis, and make sure it is of the form $\ket{v}\ket{v}\ket{c}\ket{v}\ket{c}$ (i.e.\ the vertices match and the colors match).
    \item With probability $p_2$, run $\cnot_{1,2}$ (i.e.\ applied to the first and third register), then $\cnot_{1,4}$, then $\cnot_{3,5}$, and check $\density$ on the first and third registers.
    \item Otherwise, run $\cnot_{1,2}$ and $\cnot_{1,4}$ and $\cnot_{3,5}$, and then run the constraint tests of~\cite{bassirian2023qmaplus} on the first and third registers.
\end{itemize}
Our quasirigid state before applying the $\cnot$ gates is of the form $\sum_v a_v \ket{v}\ket{v,\sigma(v)}\ket{v,\sigma(v)}$. We remark that these states are in fact \emph{multiseparable}. Since each basis element is uniquely determined by the value in any one of the registers, tracing out any one register gives a separable state. The rest of the proof goes through virtually unchanged.
\end{proof}

Remarkably, pure states that are multiseparable can be characterized in another way.
\begin{definition}
A tripartite state $\ket{\psi}$ has a \emph{generalized} Schmidt decomposition if it can be written in the form 
\begin{align*}
\ket{\psi} = \sum_i \sqrt{p_i}\ket{a_i}\ket{b_i}\ket{c_i}\,,
\end{align*}
where each set $\{\ket{a_i}\}, \{\ket{b_i}\}, \{\ket{c_i}\}$ is an orthonormal set.
\end{definition}
Note that every tripartite state can be recursively Schmidt-decomposed, but this does not imply that every set $\{\ket{a_i}\}, \{\ket{b_i}\}, \{\ket{c_i}\}$ is simultaneously an orthonormal set.
\begin{fact}[\cite{Thapliyal_1999}]
A state $\ket{\psi}$ has a generalized Schmidt decomposition iff $\ket{\psi} \in \cg$.
\end{fact}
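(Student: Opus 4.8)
The plan is to prove the two implications separately; the forward direction is a short trace-out computation, while the backward direction carries essentially all of the content.

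Forward ($\Rightarrow$). Suppose $\ket{\psi} = \sum_i \sqrt{p_i}\,\ket{a_i}\ket{b_i}\ket{c_i}$ is a generalized Schmidt decomposition, so that $\{\ket{a_i}\}$, $\{\ket{b_i}\}$, $\{\ket{c_i}\}$ are each orthonormal. I would trace out one register at a time and use orthonormality of the discarded register to collapse the cross terms. For example, $\Tr_A[\ketbra{\psi}] = \sum_{i,j}\sqrt{p_ip_j}\,\braket{a_j}{a_i}\,\ket{b_i}\bra{b_j}\otimes\ket{c_i}\bra{c_j} = \sum_i p_i\,\ketbra{b_i}\otimes\ketbra{c_i}$, which is manifestly in $\sep(B,C)$. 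The identical computation using orthonormality of $\{\ket{b_i}\}$ and of $\{\ket{c_i}\}$ gives $\Tr_B[\ketbra{\psi}]\in\sep(A,C)$ and $\Tr_C[\ketbra{\psi}]\in\sep(A,B)$, so $\ket{\psi}\in\cg$.

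Backward ($\Leftarrow$): setup. I would begin from the Schmidt decomposition across the $A|BC$ cut, $\ket{\psi} = \sum_i \sqrt{p_i}\,\ket{a_i}_A\ket{\phi_i}_{BC}$, with $\{\ket{a_i}\}$ and $\{\ket{\phi_i}\}$ orthonormal and $p_i>0$, so that $\Tr_A[\ketbra{\psi}] = \sum_i p_i\,\ketbra{\phi_i}$. A generalized Schmidt decomposition exists precisely when the Schmidt vectors can be chosen simultaneously product, $\ket{\phi_i}=\ket{b_i}\ket{c_i}$, with $\{\ket{b_i}\}$ and $\{\ket{c_i}\}$ orthonormal; the goal thus reduces to producing such a choice from the multiseparability hypothesis.

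Backward: mechanism and main obstacle. The crux — and the step I expect to be hardest — is that \emph{no single} separability condition suffices: separability of $\Tr_A[\ketbra{\psi}]$ alone does not force its eigenvectors $\ket{\phi_i}$ to be product, since a separable bipartite state can have a nondegenerate spectrum with entangled eigenvectors (e.g.\ $\tfrac12\ketbra{00}+\tfrac12\ketbra{++}$). What I would exploit is that the orthonormal markers $\ket{a_i}$ in register $A$ transmit any entanglement hidden inside a $\ket{\phi_i}$ into the other two reduced states, so that separability of $\Tr_B[\ketbra{\psi}]$ and $\Tr_C[\ketbra{\psi}]$ is exactly what pins the structure down. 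Concretely, in the nondegenerate case the $\ket{\phi_i}$ are the unique eigenvectors of $\Tr_A[\ketbra{\psi}]$, and I would argue that if some $\ket{\phi_i}$ were entangled then $\Tr_C[\ketbra{\psi}]$ (resp.\ $\Tr_B[\ketbra{\psi}]$) would fail the PPT test and hence could not be separable; this forces each $\Tr_C[\ketbra{\phi_i}]$ to be rank one, i.e.\ $\ket{\phi_i}=\ket{b_i}\ket{c_i}$. Degenerate Schmidt coefficients are handled by first rotating within each eigenspace of $\Tr_A[\ketbra{\psi}]$ to select product Schmidt vectors — this is what rescues the symmetric Bell-diagonal example, where $\tfrac12(\ketbra{00}+\ketbra{11})$ admits the product eigenbasis $\{\ket{00},\ket{11}\}$. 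Finally, once products are secured, the condition $\braket{\phi_i}{\phi_j}=\braket{b_i}{b_j}\braket{c_i}{c_j}=0$ only forces one factor to vanish per pair, so I would invoke separability of $\Tr_B[\ketbra{\psi}]$ and $\Tr_C[\ketbra{\psi}]$ once more to exclude configurations such as $\ket{0}_B\otimes(\text{entangled }AC)$ and upgrade this to \emph{separate} orthonormality of $\{\ket{b_i}\}$ and of $\{\ket{c_i}\}$. I expect this final bookkeeping — simultaneously enforcing product structure and separate orthonormality while tracking degeneracies — to be the technical heart, and it is exactly the characterization attributed to \cite{Thapliyal_1999}.
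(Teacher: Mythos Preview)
The paper does not prove this statement: it is recorded as a \emph{Fact} with a citation to \cite{Thapliyal_1999} and no argument is given. So there is no in-paper proof to compare your proposal against.

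On its own merits: your forward direction is complete and correct. Your backward direction correctly identifies the central difficulty --- separability of $\Tr_A[\ketbra{\psi}]$ alone does not force its spectral vectors $\ket{\phi_i}$ to be product, so the remaining two separability hypotheses must do real work --- but the mechanism you propose is not substantiated. The key assertion, that an entangled Schmidt vector $\ket{\phi_i}$ forces $\Tr_C[\ketbra{\psi}]$ (or $\Tr_B[\ketbra{\psi}]$) to fail the PPT test, is stated without argument; since $\rho_{AC}=\sum_{i,j}\sqrt{p_ip_j}\,\ket{a_i}\!\bra{a_j}\otimes\Tr_B[\ket{\phi_i}\!\bra{\phi_j}]$ has off-diagonal blocks $\Tr_B[\ket{\phi_i}\!\bra{\phi_j}]$ for $i\neq j$, it is not clear how entanglement of a single diagonal block propagates to a PPT violation of the whole. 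The degenerate case (``rotate within each eigenspace'') and the final step (upgrading $\braket{b_i}{b_j}\braket{c_i}{c_j}=0$ to \emph{separate} orthonormality of $\{\ket{b_i}\}$ and $\{\ket{c_i}\}$) are likewise named but not argued. Your closing sentence effectively says that the technical heart ``is exactly the characterization attributed to \cite{Thapliyal_1999}'', which is the statement you set out to prove --- so the proposal, like the paper, ultimately defers to the reference rather than supplying an independent argument.
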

\begin{corollary}
\label{cor:gsd_is_nphard}
There exists a constant $\alpha < 1$ above which it is $\NP$-hard to $\alpha$-approximately optimize linear functions of $\poly(n)$-dimension states with generalized Schmidt decompositions.
\end{corollary}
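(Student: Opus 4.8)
The plan is to mirror the $\NP$-hardness argument for $\cpp$ (the corollary optimizing $\max_{\rho\in\cpp}\Tr[M\rho]$) by running the multiseparable protocol of \Cref{thm:qma_multisep_is_nexp} at \emph{polynomial} scale instead of exponential scale. By the PCP theorem there is an $\NP$-hard constraint satisfaction problem with a constant promise gap and a polynomial number $R = \poly(n)$ of constraints. I would instantiate the five-register protocol from the proof of \Cref{thm:qma_multisep_is_nexp} on $\mathbb{C}^R \otimes (\mathbb{C}^R \otimes \mathbb{C}^\kappa) \otimes (\mathbb{C}^R \otimes \mathbb{C}^\kappa)$, where $\kappa$ is the constant number of assignment values. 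Since $R$ is now polynomial, the total Hilbert space dimension $d$ is $\poly(n)$, and the acceptance POVM element $M$ (the fixed convex combination of $\semicheck$, $\density$, and the constraint tests, with the calibrated probabilities $p_1,p_2$) is an explicit $d \times d$ Hermitian matrix computable in polynomial time from the CSP instance.

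Next I would observe that the maximum acceptance probability of this protocol over proofs in $\cg$ is exactly the linear optimum $\max_{\ket{\psi}\in\cg}\bra{\psi}M\ket{\psi}$. The structural lemmas underlying \Cref{thm:qma_multisep_is_nexp} — the quasirigidity estimate, the $\density$/$\semicheck$ tradeoff, and the probability calibration of \Cref{sec:probabilities} — are all stated for arbitrary $R$ and never use that $R$ is exponential, so they apply verbatim with $R = \poly(n)$. Hence there are absolute constants $1 > c > s > 0$ for which the optimum is at least $c$ in the YES case and at most $s$ in the NO case. Invoking the characterization that a pure state is multiseparable if and only if it admits a generalized Schmidt decomposition (the Fact of~\cite{Thapliyal_1999}), this is precisely a linear optimization over $\poly(n)$-dimensional states with a generalized Schmidt decomposition.

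Finally I would close the reduction: a multiplicative $\alpha$-approximation returns a value $\tilde v$ with $\alpha\cdot\mathrm{OPT}\le\tilde v\le\mathrm{OPT}$, so whenever $\alpha > s/c$ we have $\tilde v \ge \alpha c > s$ in the YES case and $\tilde v \le s$ in the NO case; thresholding $\tilde v$ at $s$ thus decides the $\NP$-hard CSP. Since a tighter approximation is also a valid looser one, hardness propagates to every ratio above the threshold $\alpha_0 = s/c < 1$, which is exactly the statement of the corollary.

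The main obstacle I anticipate is confirming that the constant completeness–soundness gap genuinely survives the scale-down to polynomial dimension. At the $\NP$ scale the naive Blier–Tapp protocol degrades to an inverse-polynomial gap on sparse instances; the reason this is avoided here is that the constraint tests of~\cite{bassirian2023qmaplus} sample a uniformly random constraint \emph{directly} rather than a random edge and hoping for adjacency, so a constant fraction of violated constraints in the NO case is detected with constant probability. Verifying that this detection probability, combined with the rigidity-enforcement tests, yields a genuinely constant gap — and that the arithmetic choosing $p_1,p_2$ goes through identically with $R=\poly(n)$ — is the crux, though it is essentially the same computation already carried out for \Cref{thm:qma_multisep_is_nexp}.
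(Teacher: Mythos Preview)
Your proposal is correct and is precisely the argument the paper has in mind: the corollary is stated without proof because it follows immediately from scaling the protocol of \Cref{thm:qma_multisep_is_nexp} down to $R=\poly(n)$ via the PCP theorem for $\NP$, combined with the Thapliyal characterization that $\cg$ coincides with the set of states admitting a generalized Schmidt decomposition. Your observation that none of the rigidity lemmas or the probability calibration in \Cref{sec:probabilities} depend on the magnitude of $R$ is exactly the point, and your closing step extracting the threshold $\alpha_0 = s/c$ is the standard way to convert the constant additive gap into a multiplicative inapproximability ratio.
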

It is fruitful to further understand the connection between \Cref{cor:gsd_is_nphard} and other $\NP$-hardness results about testing properties of tensors.

One may also investigate the power of \emph{mixed} multiseparable states $\cmg$. 
It is likely that \Cref{lemma:swaptest_rank1restrict} can be used to show $\QMAMMS(2) = \NEXP$. The one copy-variant  $\QMAMMS$ can simulate $\QMA(2)$ by ignoring any one register of the proof, but its exact power remains unclear.

\clearpage
\newpage

\section{Computing probabilities of the $\QMAPP$ protocol for $\NEXP$}
\label{sec:probabilities}
Here, we set the probability of running each test in the protocol of \Cref{sec:qmaw_equals_nexp}. This largely follows \cite[Section 3.3]{bassirian2024superposition} although with slightly different constants. 
We rely on the following claim, paraphrased from \cite{bassirian2023qmaplus}:
\begin{lemma}[\cite{bassirian2023qmaplus}]
\label{lemma:rigid_means_qma_nexp}
 There exist constants $0 < \xi \le \cyes < 1$  and a $\QMA$ protocol $\mathsf{ConstraintCheck}$ with completeness $\cyes$ and soundness $\cyes-\xi$ that decides $\NEXP$, \emph{assuming} the quantum proof is a \emph{rigid} state.
\end{lemma}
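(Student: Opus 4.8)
The plan is to follow the construction of \cite{bassirian2023qmaplus}, which packages the constraint-checking half of the $\QMA^+ = \NEXP$ proof. The starting point is a succinct constraint satisfaction problem with a constant promise gap, which is $\NEXP$-hard by the (succinct) PCP theorem~\cite{AS92,ALMSS98,harsha2004robust,papad_succinct}: there are $R = 2^{\poly(n)}$ constraints, each on a constant number of variables, and in the NO case every assignment violates at least a constant fraction $\xi'$ of the constraints. The key observation is that a \emph{rigid} state $\frac{1}{\sqrt{R}}\sum_{v\in[R]}\ket{v}\ket{c_v}$ is exactly the quantum encoding of a classical object: the first register indexes constraints uniformly, and the color register $c_v \in [\kappa]$ records the local assignment to the variables appearing in constraint $v$. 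So under the rigidity assumption, optimizing over proofs collapses to optimizing over classical assignments, and the whole problem reduces to testing a classical PCP.

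The protocol $\mathsf{ConstraintCheck}$ then does the natural thing: measure the first register in the computational basis to draw a uniformly random constraint $v$, read off the local assignment $c_v$, and accept iff (i) $c_v$ satisfies constraint $v$ and (ii) the encoded values are \emph{consistent} across constraints sharing a variable. Condition (ii) is enforced using the structure of the succinct CSP inherited from the PCP (e.g.\ its projection/consistency structure), so that a single measurement of one constraint suffices to penalize both local violations and global inconsistencies. I would set $\cyes$ to be the acceptance probability of this test on an honestly encoded satisfying assignment; this value is a fixed constant strictly below $1$ because the measurement samples only one constraint and because the consistency portion of the test accepts with a bounded-away-from-one probability even on perfect inputs.

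For completeness, in a YES instance there is a globally consistent satisfying assignment; encoding it as a rigid state makes every sampled constraint both satisfied and consistent, so $\mathsf{ConstraintCheck}$ accepts with probability exactly $\cyes$. For soundness, in a NO instance the rigid proof still encodes \emph{some} fixed classical assignment (this is precisely what rigidity buys us), and by the PCP gap a constant fraction $\xi'$ of constraints are violated or inconsistent under that assignment. Since constraints are sampled uniformly, the rejection probability is at least a constant, which I would fold into a completeness-soundness gap $\xi \le \cyes$, yielding soundness $\cyes - \xi$. Finally I would verify the protocol runs in $\QMA$: the verifier circuit is uniformly generated in $\poly(n)$ time from the succinct CSP description, so reading a constraint and evaluating it is efficient even though $R$ is exponential.

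The main obstacle is condition (ii): verifying \emph{consistency} between overlapping constraints from a \emph{single} rigid proof while preserving a constant gap. Rigidity alone guarantees one color per constraint, but it does not prevent two constraints from assigning conflicting values to a shared variable; in the Blier--Tapp protocol this was handled by a second unentangled copy, which we no longer have here. The delicate work — and the reason this lemma is quoted rather than reproved from scratch — is the choice of a succinct gapped CSP (and its PCP encoding) whose constraints are structured so that local satisfaction and cross-constraint consistency can both be tested by inspecting a single sampled constraint, without the gap $\xi$ collapsing as $R$ grows. Once that CSP is in hand, the completeness and soundness bookkeeping above is routine.
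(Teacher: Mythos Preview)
The paper does not give its own proof of this lemma: it is stated with attribution to \cite{bassirian2023qmaplus} and used as a black box in \Cref{sec:probabilities}. Your proposal is a faithful high-level sketch of that cited construction, and you correctly isolate the one nontrivial ingredient --- choosing a succinct PCP whose constraint structure lets a single sampled constraint witness both local satisfaction and cross-constraint consistency with a constant gap --- which is precisely what \cite{jeronimo2023power,bassirian2023qmaplus} supply.
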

\Cref{lemma:rigid_means_qma_nexp} implies the existence of absolute constants $\cyes, \xi,\kappa$. We choose \emph{distance thresholds} $\nul, \nuh$ to be small positive constants that satisfy the following conditions:
\begin{align*}
0 \leq \nuh \le \frac{\nul}{\kappa}\,, & &  \frac{\nuh}{\nul} \le \frac{\xi}{6(1-\cyes)}\,, & & \left( \kappa \nul + (\kappa + 1) \sqrt{(\kappa + 1)\nul}\right)^{1/2} \le \frac{\xi}{2}\,. \end{align*}
This can be done by first making both constants equal and small enough to satisfy the last inequality, then reducing $\nuh$ to satisfy the first two inequalities.

Now we choose the probabilities $p_1, p_2, p_3$. Let
\begin{align*}
    p_1= \frac{1}{Z} & & p_2= \frac{(\nul + \nuh) }{\nuh^2  Z} & & p_3 = \frac{\nul}{2(1 - \cyes)Z}\,,
\end{align*}
where $Z \defeq 1 + \frac{(\nul + \nuh)}{\nuh^2}  + \frac{\nul}{2(1 - \cyes)}$, so that the probabilities sum to $1$.

Given a quantum proof $\ket{\psi}$, the verifier accepts with probability 
\begin{align*}
    p_1 \cdot \Pr[\density \text{ succeeds}] + p_2 \Pr[\semicheck \text{ succeeds}] + p_3 \cdot \Pr[\mathsf{ConstraintCheck} \text{ succeeds}]\,.
\end{align*}
In completeness, we consider the quantum proof guaranteed by \Cref{lemma:rigid_means_qma_nexp}. Since this state is rigid, the verifier accepts with probability at least $\pyes \defeq p_1 \cdot \frac{1}{\kappa} + p_2 \cdot 1 + p_3 \cdot \cyes$. For soundness, we analyze four separate cases:
\begin{enumerate}
    \item In the first case, the proof doesn't test well enough on $\density$. If $\Pr[\density \text{ succeeds}] = \frac{1}{\kappa} - d$ for $d \ge \nul$, then the verifier accepts with probability at most
    \begin{align*}
        \pno \le p_1 \cdot (\frac{1}{\kappa} - d) + p_2 \cdot 1 + p_3 \le \pyes - p_1 \cdot \nul + p_3 \cdot (1 - \cyes) = \pyes - \frac{\nul}{2 Z}\,.
    \end{align*}
    \item In the second case, the proof tests too well on $\density$, and so cannot test well enough on $\semicheck$. If $\Pr[\density \text{ succeeds}] =  \frac{1}{\kappa} + d$ for $d \ge \nuh$, then by \Cref{lemma:quadratic} we can upper bound the success probability of $\semicheck$ as $\Pr[\semicheck \text{ succeeds}] \le \frac{k+1-d^2}{k+1}$. Then
    \begin{align*}
        \pno \le p_1 \cdot \left( \frac{1}{\kappa} + d\right) + p_2 \cdot \frac{\kappa+1-d^2}{\kappa+1} + p_3 \le \pyes + p_1 \cdot d - p_2 \cdot \frac{d^2}{\kappa+1} + p_3 \cdot (1 - \cyes) \,.
    \end{align*}
The right-most expression depends on $d$ through the term $p_1 \cdot d - p_2 \cdot \frac{d^2}{\kappa+1} = p_1 \left( d - \frac{(\nul + \nuh) d^2}{(\kappa+1)\nuh^2}  \right)$. 
This term is decreasing with $d$ when $d > \frac{\kappa+1}{2}\cdot\frac{\nuh}{1 + \nul/\nuh} \geq \frac{\nuh}{2}$. So the expression is maximized at $d = \nuh$, and then
\begin{align*}
     \pno \le  \pyes + p_1 \left( \nuh - \frac{(\nul + \nuh)\nuh^2}{\nuh^2} \right)+ p_3 \cdot (1 - \cyes) =  \pyes - \frac{\nul}{2 Z}\,.
\end{align*}
\item In the third case, the proof doesn't test well enough on $\semicheck$, and the acceptance probability of $\density$ is not too high. If $\Pr[\density \text{ succeeds}] = \frac{1}{\kappa} + d$ for $-\nul \le d \le \nuh$ and $\Pr[\semicheck \text{ succeeds}] \le 1-\nul$, then
\begin{align*}
    \pno &\le p_1 \cdot \left( \frac{1}{\kappa} + d \right) + p_2 \cdot \left(1- \nul\right) + p_3 \le   \pyes + p_1 \cdot d - p_2 \cdot \nul +  p_3 \cdot (1 - \cyes)\,.
\end{align*}
By our choice of constants, $d \le \nuh \le \nul$. Moreover, one can verify that $p_2 \ge 2 \cdot p_1$ whenever $\nuh \le 1$. So $p_1 \cdot d - p_2 \cdot \nul \le - p_1\cdot  \nul$, and $\pno \le \pyes - p_1 \cdot \nul +  p_3 \cdot (1 - \cyes) =   \pyes - \frac{\nul}{2 Z}$.

\item In the fourth case, the proof tests well on $\semicheck$, and the acceptance probability of $\density$ is not too low. So the proof must be close to \emph{rigid}, and do worse on $\mathsf{ConstraintCheck}$. If $\Pr[\density \text{ succeeds}] = \frac{1}{\kappa} + d$ for $-\nul \le d \le \nuh$ and $\Pr[\semicheck \text{ succeeds}] \ge 1-\nul$, then by \Cref{claim:rigidity}, there is a rigid state $\ket{\chi}$ such that 
\begin{align*}
    |\braket{\chi}{\psi}|^2 \ge 1 - \kappa d - (\kappa + 1) \sqrt{(\kappa + 1)\nul}\,.
\end{align*}
We use \Cref{fact:fuchs_vdg_implication} with $\Pi$ equal to the accepting projector of $\mathsf{ConstraintCheck}$; then we have $|\Pr[\mathsf{ConstraintCheck} \text{ succeeds}] - \Pr[\mathsf{ConstraintCheck} \text{ succeeds}]| \le \left(\kappa d + (\kappa + 1) \sqrt{(\kappa + 1)\nul}\right)^{1/2}$. Putting this all together,
\begin{align*}
     \pno &\le p_1 \cdot \left( \frac{1}{\kappa} + d \right) + p_2 \cdot 1 + p_3 \cdot\left( \cyes - \xi + \left(\kappa d + (\kappa + 1) \sqrt{(\kappa + 1)\nul}\right)^{1/2}\right)
     \\
     &\le \pyes + p_1 \cdot d + p_3 \cdot \left( - \xi + \left(\kappa d + (\kappa + 1) \sqrt{(\kappa + 1)\nul}\right)^{1/2}\right) 
     \\
     &\le \pyes + p_1 \cdot \nuh - p_3 \cdot \frac{\xi}{2} 
     \\
     &= \pyes + \frac{1}{Z} \left( \nuh - \frac{\nul \cdot \xi}{4(1 - \cyes)} \right)
     \\
     &\le \pyes - \frac{\nuh}{2Z} \,.
\end{align*}
\end{enumerate}
So, this protocol succeeds with completeness $\pyes$ and soundness at most $\pyes - \frac{\nuh}{2Z}$.

\clearpage
\newpage

\section{Padding argument for the proof of $\QMAP(2) = \NEXP$}
\label{sec:padding}
Here, we verify that the protocol in \Cref{sec:qmaw_equals_nexp} goes through even when the quantum proofs are from $\purerestrict(\cm^\star)$; i.e. states in $\cpp$ obeying the register size inequality in the proof of \Cref{cor:qmapurif2_is_nexp}.
We modify the protocol as follows. We prepend a scratch qudit (using index variable $z$) to the first register to satisfy the register size inequality.
In both $\density$ and $\semicheck$, we also measure the scratch qudit in the standard basis, and reject if it does not output $0_z$.
In the other test, we ignore the scratch qudit.

For this modified protocol, we prove that \Cref{lemma:semicheck_means_close_to_quasirigid}, \Cref{claim:rigidity}, and \Cref{lemma:quadratic} hold with the same parameters.
We may then choose the test probabilities exactly as in \Cref{sec:probabilities}, as the analysis only depends on these three statements.

\begin{lemma}[Padded version of \Cref{lemma:semicheck_means_close_to_quasirigid}] \label{lemma:padded_semicheck_means_close_to_quasirigid}
    Suppose $\semicheck$ succeeds with probability $1-\epsilon$, and $\rho_{B, C}$ is separable. Then there is a state $\ket{0_z}\ket{\phi}$ that is \emph{quasirigid} after applying $\cnot_{1,3}\cnot_{2,4}$ such that $|\bra{0_z,\phi}\ket{\psi}|^2 \ge 1 - (\kappa + 1) \epsilon$.
\end{lemma}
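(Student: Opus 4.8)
The plan is to replay the proof of \Cref{lemma:semicheck_means_close_to_quasirigid} essentially verbatim, carrying the extra scratch index $z$ through every step and observing that the only structural change is that the event ``$\semicheck$ succeeds'' now also demands $z = 0_z$. Concretely, I would write the padded proof state as
\begin{align*}
    \ket{\psi} = \sum_{z,v,c,w,d} a_{z,vc,wd}\,\ket{z,v}_A\ket{c,w}_B\ket{d}_C,
\end{align*}
so that $\semicheck$ succeeds exactly when the scratch qudit reads $0_z$ \emph{and} $(v,c) = (w,d)$; its success probability is therefore $\sum_{v,c}|a_{0_z,vc,vc}|^2 = 1-\epsilon$. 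Tracing out register $A$ (now including the scratch qudit) gives $\rho_{B,C}$ with matrix elements $\bra{c,v,d}\rho_{B,C}\ket{e,w,f} = \sum_{z,x} a_{z,xc,vd}\, a_{z,xe,wf}^\dagger$, the only change from the unpadded case being the extra sum over $z$.

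Following the original argument, I would let $\sigma\colon[R]\to[\kappa]$ pick out, for each vertex $v$, the color $\sigma_v$ maximizing $|a_{0_z,v\sigma_v,v\sigma_v}|^2$ \emph{among the $z=0_z$ diagonal amplitudes}, and set
\begin{align*}
    \ket{\phi} = \frac{1}{\sqrt{\gamma}}\sum_v a_{0_z,v\sigma_v,v\sigma_v}\ket{v}\ket{\sigma_v}\ket{v}\ket{\sigma_v}, \qquad \gamma = \sum_v |a_{0_z,v\sigma_v,v\sigma_v}|^2.
\end{align*}
Then $\ket{0_z}\ket{\phi}$ is quasirigid after $\cnot_{1,3}\cnot_{2,4}$, and a direct computation gives $|\braket{0_z,\phi}{\psi}|^2 = \gamma$ exactly as before. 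It thus suffices to show $\gamma \ge 1 - (\kappa+1)\epsilon$, equivalently $\sum_v\sum_{d\ne\sigma_v}|a_{0_z,vd,vd}|^2 \le \kappa\epsilon$, since $\gamma = (1-\epsilon) - \sum_v\sum_{d\ne\sigma_v}|a_{0_z,vd,vd}|^2$.

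The core estimate then proceeds as in \Cref{lemma:semicheck_means_close_to_quasirigid}. I would bound each off-diagonal weight by its maximal amplitude, rewrite the product $a_{0_z,v\sigma_v,v\sigma_v}\,a_{0_z,vd,vd}^\dagger$ as $\bra{\sigma_v,v,\sigma_v}\rho_{B,C}\ket{d,v,d}$ minus the correction sum over $(z,x)\ne(0_z,v)$, and split into a separability term (handled by \Cref{lemma:separable_offdiagonal_cauchy_schwarz}, which applies unchanged since $\rho_{B,C}$ is separable) and a correction term (handled by AM-GM). The observation making this go through without loss is that every index tuple appearing in the correction and off-diagonal sums remains a $\semicheck$-\emph{failure} tuple: the constraint $(z,x)\ne(0_z,v)$ forces either $z\ne0_z$ or a vertex mismatch, and $d\ne\sigma_v$ forces a color mismatch, so in all cases the amplitude lies among those whose squared magnitudes sum to $1-(1-\epsilon)=\epsilon$. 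The same arithmetic then yields a correction term at most $\frac{\kappa+1}{2}\epsilon$ and a separability term at most $\frac{1}{2}\epsilon$, for a total of $\frac{\kappa+2}{2}\epsilon\le\kappa\epsilon$.

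The main thing to get right — and the only place the padding genuinely intervenes — is the bookkeeping that, after enlarging register $A$ by the scratch qudit, the $\semicheck$-failure weight is still exactly $\epsilon$ and that each correction/off-diagonal contribution injects into this failure set without double counting. I would verify this by checking that the maps $(v,d,x,z)\mapsto$ the relevant five-index tuple are injective and land inside the failure set $\{\text{tuples with }z\ne0_z\text{ or }(v,c)\ne(w,d)\}$, whose total squared-amplitude weight is $\epsilon$; the factor of $\kappa$ arises, as before, only from the free color index $d$ in the term that does not depend on $d$. Once this counting is confirmed, the conclusion $|\braket{0_z,\phi}{\psi}|^2 = \gamma \ge 1-(\kappa+1)\epsilon$ follows immediately, matching the parameters of \Cref{lemma:semicheck_means_close_to_quasirigid}.
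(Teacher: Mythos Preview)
Your proposal is correct and follows essentially the same approach as the paper's proof: both expand the amplitudes with the extra index $z$, define $\sigma$ using only the $z=0_z$ amplitudes, express the off-diagonal product via the separable $\rho_{B,C}$ matrix element minus correction terms, and bound everything by the $\semicheck$-failure weight $\epsilon$ to get $\frac{\kappa+2}{2}\epsilon\le\kappa\epsilon$. The only cosmetic difference is that the paper splits the correction into the pieces $z\ne 0_z$ and $\{z=0_z,\,x\ne v\}$ separately before bounding, whereas you keep the single correction sum over $(z,x)\ne(0_z,v)$; both organizations yield the same arithmetic.
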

\begin{proof}
Label the computational basis elements of $\ket{\psi}$ as $a_{zvc,wd}\ket{z}\ket{v}\ket{c,w}\ket{d}$.
Let $\sigma: [R] \to [\kappa]$ be a function that picks out the largest amplitude per vertex when the scratch qudit $z = 0$, i.e. where $|a_{0_z,v\sigma(v), v\sigma(v)}|^2 \ge |a_{0_z,vd, vd}|^2$ for all $d \ne \sigma(v)$. For convenience, denote $\sigma_v \defeq \sigma(v)$.

Consider the matrix elements of the reduced density matrix of $\ket{\psi}$ after tracing out the first register:
\begin{align*}
        \bra{c, v, d}\rho_{B, C}\ket{e, w, f} = \sum_z \sum_x a_{z,xc,vd} a_{z,xe,wf}^\dagger\,.
\end{align*}
We use this expression to upper-bound the weight on colors not chosen by $\sigma$:
\begin{align*}
\sum_v \sum_{d;d \ne \sigma_v}    |a_{0_z,vd, vd}|^2 
\le
\sum_v \sum_{d;d \ne \sigma_v}    |a_{0_z,v\sigma_v, v\sigma_v}| \cdot |a_{0_z,vd, vd}^\dagger |
=
\sum_v \sum_{d;d \ne \sigma_v}    |a_{0_z,v\sigma_v, v\sigma_v} \cdot a_{0_z,vd, vd}^\dagger |\,.
\end{align*}
We break this upper bound into three parts, by rewriting $|a_{0_z,v\sigma_v, v\sigma_v} \cdot a_{0_z,vd, vd}^\dagger |$ as
\begin{align*}
        &|\bra{\sigma_v, v, \sigma_v}\rho_{B, C}\ket{d, v, d} 
        -\sum_{z \ne 0}\sum_x a_{z,x\sigma_v,v\sigma_v} \cdot a_{z,xd,vd}
        - \sum_{x;x \ne v} a_{0_z,x\sigma_v,v\sigma_v} \cdot a_{0_z,xd,vd}^\dagger|
\\
\le\ 
&|\bra{\sigma_v, v, \sigma_v}\rho_{B, C}\ket{d, v, d}| 
+ \sum_{z \ne 0}\sum_x |a_{z,x\sigma_v,v\sigma_v}| \cdot |a_{z,xd,vd}^\dagger|
+ \sum_{x;x \ne v} |a_{0_z,x\sigma_v,v\sigma_v}| \cdot |a_{0_z,xd,vd}^\dagger|
\,.
\end{align*}
\begin{enumerate}
    \item For the first term, since $\rho_{B, C}$ is separable, we use one of the bounds proven in \Cref{lemma:separable_offdiagonal_cauchy_schwarz}:
\begin{align*}
        |\bra{\sigma_v, v, \sigma_v}\rho_{B, C} \ket{d, v, d}| \le \frac{1}{2} \cdot \big(\bra{\sigma_v, v, d}\rho_{B, C}\ket{\sigma_v, v, d} + \bra{d, v, \sigma_v}\rho_{B, C}\ket{d, v, \sigma_v}\big)\,.
\end{align*}
So then
\begin{align*}
    \sum_v \sum_{d;d \ne \sigma_v}  |\bra{\sigma_v, v, \sigma_v}\rho_{B, C} \ket{d, v, d}
    &\le
    \frac{1}{2} \left(\sum_v \sum_{d;d \ne \sigma_v} 
     \bra{\sigma_v, v, d}\rho_{B, C}\ket{\sigma_v, v, d} + \bra{d, v, \sigma_v}\rho_{B, C}\ket{d, v, \sigma_v}\right)
    \\
    &\le 
     \frac{1}{2} \left(\sum_v \sum_{d;d \ne \sigma_v} 
     \sum_z \sum_{x} |a_{z,x\sigma_v,vd}|^2 
     + |a_{z,xd,v\sigma_v}|^2\right)\,.
\end{align*}
\item 
The second term appears because of the scratch qudit. This term is small because $z \ne 0$:
\begin{align*}
    \sum_v \sum_{d; d \ne \sigma_v} \sum_{z \ne 0} \sum_x |a_{z,x \sigma_v,v \sigma_v}| \cdot |a_{z,x d,v d}^\dagger| 
    \le \sum_{z \ne 0} \sum_{v,x,d} |a_{z,x d,v d}|^2
\end{align*}
The sum of the first and second term are upper-bounded by $\sum_{(z,x,c) \ne  (0,v,d)} |a_{z,xc,vd}|^2 = \epsilon$.
\item 
For the third term, we use the AM-GM inequality:
\begin{align*}
    |a_{0_z,x\sigma_v,v\sigma_v}| \cdot |a_{0_z,xd,vd}^\dagger| \le \frac{1}{2} \cdot \left(|a_{0_z,x\sigma_v,v\sigma_v}|^2 +  |a_{0_z,xd,vd}^\dagger|^2 \right)\,.
\end{align*}
Taking $|a_{0_z,xd,vd}|^2$ over all three sums $(v,d,x)$ is at most $\sum_{(z,x,c) \ne  (0,v,d)} |a_{z,xc,vd}|^2 = 1 - (1-\epsilon) = \epsilon$. 
Similarly, summing $|a_{0_z,x\sigma_v,v\sigma_v}|^2$ over sums $(v,x)$ contributes at most $\epsilon$, so at most $(\kappa-1) \epsilon$ over sums $(v,d \ne \sigma_v,x)$.
In total, this term is at most $\frac{\kappa}{2}\epsilon$.
\end{enumerate}
Combining each of the three terms, we have
\begin{align*}
    \sum_v \sum_{d;d \ne \sigma_v}    |a_{0_z,vd, vd}|^2 \le \epsilon + \frac{\kappa}{2}\epsilon = \frac{\kappa + 2}{2}\epsilon \le \kappa \epsilon\,,
\end{align*}
where the last inequality holds because $\kappa \ge 2$ (i.e. we need at least two colors).

Consider the state $\ket{\phi} = \frac{1}{\sqrt{\gamma}} \sum_v a_{0_z,v \sigma_v, v\sigma_v} \ket{v}\ket{\sigma_v}\ket{v}\ket{\sigma_v}$, where $\gamma = \sum_v |a_{0_z,v \sigma_v, v\sigma_v}|^2$ is chosen so that $\braket{\phi}{\phi} = 1$. Note that $\ket{\phi}$ is quasirigid after applying $\cnot_{1,3}$ and $\cnot_{2,4}$. Moreover, $|\braket{0_z,\phi}{\psi}|^2 = \gamma$. Putting everything together,
\begin{align*}
    \gamma = \sum_v |a_{0_z,v \sigma_v, v\sigma_v}|^2 = \Pr[\semicheck \text{ succeeds}] -  \sum_v \sum_{d;d \ne \sigma_v}    |a_{0_z,vd, vd}|^2  \ge 1 - \epsilon - \kappa \epsilon\,.\tag*{\qedhere} 
\end{align*}
\end{proof}

\begin{claim}[Padded version of \Cref{claim:rigidity}]
\label{claim:rigidity_padded}
Suppose $\density$ and $\semicheck$ succeed on $\ket{\psi}$ with probability $\frac{1}{\kappa} - d_{\mathsf{D}}$ and $1- d_\mathsf{M}$, respectively. Then there is a state $\ket{0_z}\ket{\chi}$ that is rigid after applying $\cnot_{1,3}\cnot_{2,4}$ such that 
    \begin{align*}
    |\braket{0_z,\chi}{\psi}|^2 &\ge 1 - \kappa d_{\mathsf{D}}  - (\kappa+1) \sqrt{(\kappa + 1) d_\mathsf{M}}\,.
    \end{align*}
\end{claim}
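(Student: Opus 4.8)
The plan is to mirror the proof of \Cref{claim:rigidity} almost verbatim, exploiting the fact that in the padded protocol both the quasirigid witness and the relevant accepting vectors carry $\ket{0_z}$ on the scratch qudit, so the extra register simply factors out of every inner product and the constants are preserved.

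First I would invoke the padded quasirigidity lemma (\Cref{lemma:padded_semicheck_means_close_to_quasirigid}) to obtain a state $\ket{0_z}\ket{\phi}$, where $\ket{\phi} = \sum_v \alpha_v \ket{v}\ket{c_v,v}\ket{c_v}$ is quasirigid after applying $\cnot_{1,3}\cnot_{2,4}$, satisfying $|\braket{0_z,\phi}{\psi}|^2 \ge 1 - (\kappa+1)d_\mathsf{M}$. By \Cref{fact:fuchs_vdg_implication}, for every unit vector $\ket{\mu}$ we then have $\big||\braket{\mu}{\psi}|^2 - |\braket{\mu}{0_z,\phi}|^2\big| \le \sqrt{(\kappa+1)d_\mathsf{M}}$, which I will apply twice, exactly as in the unpadded proof.

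The only place the scratch qudit enters is the $\density$ accepting vector. In the modified protocol of \Cref{sec:padding}, $\density$ measures the scratch qudit and rejects unless it reads $0_z$, so its accepting vector is $\ket{0_z}\ket{+'}$, where $\ket{+'} \defeq \cnot_{1,3}\cnot_{2,4}\ket{+}\ket{0}\ket{0}$. The hypothesis $\Pr[\density \text{ succeeds}] = \frac{1}{\kappa} - d_\mathsf{D}$ gives $|\braket{0_z,+'}{\psi}|^2 \ge \frac{1}{\kappa} - d_\mathsf{D}$, and Fuchs--van de Graaf then yields $|\braket{0_z,+'}{0_z,\phi}|^2 \ge \frac{1}{\kappa} - d_\mathsf{D} - \sqrt{(\kappa+1)d_\mathsf{M}}$. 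Since both $\ket{0_z,+'}$ and $\ket{0_z,\phi}$ hold $\ket{0_z}$ on the scratch register, this overlap equals $|\braket{+'}{\phi}|^2$, so the scratch qudit disappears from the calculation precisely as intended.

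I then let $\ket{\chi} \defeq \frac{1}{\sqrt{R}}\sum_v \ket{v}\ket{c_v,v}\ket{c_v}$ be the rigid state sharing the computational-basis support of $\ket{\phi}$, so that $\ket{0_z}\ket{\chi}$ is rigid after the CNOTs. The same arithmetic as in \Cref{claim:rigidity} gives $|\braket{\chi}{\phi}|^2 = \kappa\,|\braket{+'}{\phi}|^2 \ge 1 - \kappa\big(d_\mathsf{D} + \sqrt{(\kappa+1)d_\mathsf{M}}\big)$, and hence $|\braket{0_z,\chi}{0_z,\phi}|^2 \ge 1 - \kappa\big(d_\mathsf{D} + \sqrt{(\kappa+1)d_\mathsf{M}}\big)$. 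A final application of \Cref{fact:fuchs_vdg_implication} with $\ket{\mu} = \ket{0_z}\ket{\chi}$ transfers this bound from $\ket{0_z}\ket{\phi}$ to $\ket{\psi}$, losing another $\sqrt{(\kappa+1)d_\mathsf{M}}$ and yielding the claimed $|\braket{0_z,\chi}{\psi}|^2 \ge 1 - \kappa d_\mathsf{D} - (\kappa+1)\sqrt{(\kappa+1)d_\mathsf{M}}$. I anticipate no genuine obstacle: the single point requiring care is verifying that the padded $\density$ projector truly has $\ket{0_z}$ on the scratch qudit, so that the scratch register factors out of $\braket{0_z,+'}{0_z,\phi}$ and $\braket{0_z,\chi}{0_z,\phi}$ — which is exactly guaranteed by the modified protocol that measures and postselects the scratch qudit on $0_z$.
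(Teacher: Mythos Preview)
Your proposal is correct and follows essentially the same argument as the paper's proof. The only difference is cosmetic: the paper absorbs the scratch qudit into its notation (writing $\ket{\phi}=\sum_{z,v}\alpha_{z,v}\ket{z,v}\ket{c_{z,v},v}\ket{c_{z,v}}$ with only the $z=0_z$ terms nonzero, and redefining $\ket{+'}$ to include $\ket{0_z}$), whereas you keep the scratch qudit explicit throughout as $\ket{0_z}\ket{\phi}$ and $\ket{0_z}\ket{+'}$ --- the two applications of \Cref{fact:fuchs_vdg_implication} and the $|\braket{\chi}{\phi}|^2=\kappa|\braket{+'}{\phi}|^2$ step are identical.
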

\begin{proof}
    By \Cref{lemma:padded_semicheck_means_close_to_quasirigid}, there exists a state $\ket{\phi}=\sum_{z,v}\alpha_{z,v}\ket{z,v}\ket{c_{z,v},v}\ket{c_{z,v}}$ such that 
    \begin{align*}
        |\braket{\psi}{\phi}|^2 \ge 1 - (\kappa+1)d_\mathsf{M}\,.
    \end{align*}
    By \Cref{fact:fuchs_vdg_implication}, 
    for any  unit vector $\ket{\mu}$,
    we have
    $\left| |\braket{\mu}{\psi}|^2 - |\braket{\mu}{\phi}|^2 \right| \le \sqrt{(\kappa+1)d_\mathsf{M}}$.
    We use this in two places. 
    Let $\ket{+'} = \ket{0_z} \otimes \left(\cnot_{1,3}\cnot_{2,4}\ket{+}\ket{0}\ket{0}\right)$. First, since  $|\braket{+'}{\psi}|^2 \ge \frac{1}{\kappa} - d_{\mathsf{D}}$, applying \Cref{fact:fuchs_vdg_implication} with $\ket{\mu} = \ket{+'}$ implies $|\braket{+'}{\phi}|^2 \ge \frac{1}{\kappa} - d_{\mathsf{D}} -\sqrt{(\kappa+1)d_\mathsf{M}}$. 
    Now let $\ket{\chi}=\frac{1}{\sqrt{R}}\sum_v \ket{v}\ket{c_{0_z,v},v}\ket{c_{0_z,v}}$ be such that $\ket{0_z}\ket{\chi}$ has exactly the shared computational basis elements of $\ket{+'}$ and $\ket{\phi}$. Then
    \begin{align*}
        |\braket{0_z,\chi}{\phi}|^2 =\left|\frac{1}{\sqrt{R}}\sum_{v\in[R]}\alpha_{0_z,v} \right|^2
        =\kappa \left|\frac{1}{\sqrt{\kappa R}} \sum_{v\in[R]}^R\alpha_{0_z,v}\right|^2
        = \kappa |\braket{+'}{\phi}|^2
        \ge 1 - \kappa\left(d_{\mathsf{D}} +\sqrt{(\kappa+1)d_\mathsf{M}}\right)\,.
    \end{align*}
   Applying \Cref{fact:fuchs_vdg_implication} using $\ket{\mu} = \ket{0_z}\ket{\chi}$ then implies
   \begin{align*}
       |\braket{0_z,\chi}{\psi}|^2  \ge  1 - \kappa\left(d_{\mathsf{D}} +\sqrt{(\kappa+1)d_\mathsf{M}}\right) - \sqrt{(\kappa+1)d_\mathsf{M}}=1 - \kappa d_{\mathsf{D}}  - (\kappa+1) \sqrt{(\kappa + 1) d_\mathsf{M}}\,.\tag*{\qedhere}
   \end{align*}
\end{proof}

\begin{lemma}[Padded version of \Cref{lemma:quadratic}]
\label{lemma:quadratic_padded}
    Suppose $\density$ and $\semicheck$ succeed on state $\ket{\psi}$ with probability $w_{\mathsf{D}} \ge \frac{1}{\kappa}$ and $w_{\mathsf{M}}$, respectively. Then $(w_{\mathsf{D}} -  \frac{1}{\kappa})^2 +  (\kappa+1)w_{\mathsf{M}}\le  \kappa+1$.
\end{lemma}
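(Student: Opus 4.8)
The plan is to follow the proof of \Cref{lemma:quadratic} essentially verbatim, making only the two substitutions forced by the scratch qudit: invoke the padded quasirigidity lemma (\Cref{lemma:padded_semicheck_means_close_to_quasirigid}) in place of \Cref{lemma:semicheck_means_close_to_quasirigid}, and use the padded uniform state $\ket{+'} = \ket{0_z} \otimes (\cnot_{1,3}\cnot_{2,4}\ket{+}\ket{0}\ket{0})$ introduced in \Cref{claim:rigidity_padded}. First I would apply \Cref{lemma:padded_semicheck_means_close_to_quasirigid} with $\epsilon = 1 - w_{\mathsf{M}}$ (since $\semicheck$ succeeds with probability $w_{\mathsf{M}}$) to obtain a state $\ket{0_z}\ket{\phi}$, quasirigid after the CNOTs, with $\ket{\phi} = \sum_v \alpha_v \ket{v}\ket{c_v,v}\ket{c_v}$ and $|\braket{0_z,\phi}{\psi}|^2 \ge 1 - (\kappa+1)(1 - w_{\mathsf{M}})$.

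Next I would record the two relevant overlaps with $\ket{+'}$. By hypothesis the padded $\density$ test succeeds with probability $w_{\mathsf{D}} = |\braket{+'}{\psi}|^2 \ge \frac{1}{\kappa}$, where the factor $\ket{0_z}$ in $\ket{+'}$ encodes the extra check that the scratch qudit reads $0_z$. On the other side, because $\ket{+'}$ carries $\ket{0_z}$, the inner product $\braket{+'}{0_z,\phi}$ only sees the $z=0$ sector, and the same Cauchy-Schwarz estimate as in the unpadded proof gives $|\braket{+'}{0_z,\phi}|^2 = \frac{1}{R\kappa}\big|\sum_v \alpha_v\big|^2 \le \frac{1}{\kappa}\sum_v |\alpha_v|^2 = \frac{1}{\kappa}$.

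I would then apply \Cref{fact:fuchs_vdg_implication} with $\ket{\psi_1} = \ket{\psi}$, $\ket{\psi_2} = \ket{0_z,\phi}$, and $\Pi = \ketbra{+'}$, which yields $\big||\braket{+'}{\psi}|^2 - |\braket{+'}{0_z,\phi}|^2\big| \le \sqrt{(\kappa+1)(1-w_{\mathsf{M}})}$. Combining with the two bounds above gives $w_{\mathsf{D}} - \frac{1}{\kappa} \le \sqrt{(\kappa+1)(1-w_{\mathsf{M}})}$, and squaring and rearranging produces exactly $(w_{\mathsf{D}} - \frac{1}{\kappa})^2 + (\kappa+1) w_{\mathsf{M}} \le \kappa+1$.

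I expect no genuine obstacle: the padding does not interact with the quadratic tradeoff, since the scratch qudit's only effect is to project both overlaps onto the $z=0$ sector, which is precisely where the original Cauchy-Schwarz bound operates. The single point to verify carefully is the bookkeeping claim that the padded $\density$ test's acceptance probability really equals $|\braket{+'}{\psi}|^2$ with the $\ket{0_z}$ factor folded in — i.e. that rejecting on a nonzero scratch qudit is faithfully modeled by projecting onto $\ket{0_z}$ before the Hadamard-basis measurement. Once this matches the convention already fixed in \Cref{claim:rigidity_padded}, the argument is identical to that of \Cref{lemma:quadratic}.
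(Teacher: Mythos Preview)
Your proposal is correct and follows essentially the same approach as the paper's own proof: invoke the padded quasirigidity lemma, apply \Cref{fact:fuchs_vdg_implication} with $\Pi=\ketbra{+'}$, and use Cauchy--Schwarz to bound $|\braket{+'}{0_z,\phi}|^2\le\tfrac{1}{\kappa}$. Your bookkeeping with $\ket{0_z}\ket{\phi}$ indexed only by $v$ is in fact slightly cleaner than the paper's, which writes the quasirigid state as a sum over $z,v$ even though \Cref{lemma:padded_semicheck_means_close_to_quasirigid} actually produces a state supported on $z=0$.
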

\begin{proof}
By \Cref{lemma:padded_semicheck_means_close_to_quasirigid}, there exists a $\ket{\phi} = \sum_{z,v} \alpha_{z,v} \ket{z,v}\ket{c_{z,v},v}\ket{c_{z,v}}$ such that 
\begin{align*}
    |\braket{\psi}{\phi}|^2 \ge 1-(\kappa+1)(1 - w_{\mathsf{M}})\,.
\end{align*}
Using \Cref{fact:fuchs_vdg_implication}, we see that 
$\left| |\braket{+'}{\psi}|^2 - |\braket{+'}{\phi}|^2 \right|$ is at most $\sqrt{(\kappa+1)(1-w_{\mathsf{M}})}$. Note that by assumption, $|\braket{+'}{\psi}|^2 = w_{\mathsf{D}}\ge\frac{1}{\kappa}$, and by Cauchy-Schwarz, $|\braket{+'}{\phi}|^2 = \frac{1}{R \cdot \kappa} \left| \sum_{v \in [R]} \alpha_{0_z,v} \right|^2 \le \frac{1}{\kappa} \sum_{v \in [R]} |\alpha_{0_z,v}|^2 \le \frac{1}{\kappa}$. Combining these claims, we have 
\begin{align*}
    w_{\mathsf{D}} - \frac{1}{\kappa} \le |\braket{+'}{\psi}|^2 - |\braket{+'}{\phi}|^2 \le  \sqrt{(\kappa+1)(1-w_{\mathsf{M}})}\,.
\end{align*}
The lemma follows after rearranging the terms.
\end{proof}
\end{document}